\newtheorem{theorem}{Theorem}[subsection]
\newtheorem{prop}[theorem]{Proposition}
\newtheorem{defn}[theorem]{Definition}
\newtheorem{lemma}[theorem]{Lemma}
\newtheorem{coro}[theorem]{Corollary}
\newtheorem{prop-def}{Proposition-Definition}[section]
\newtheorem{remark}[theorem]{Remark}
\newtheorem{exam}[theorem]{Example}
\begin{document}
\setlength{\oddsidemargin}{0cm} \setlength{\evensidemargin}{0cm}

\title{Double constructions of Frobenius algebras, Connes cocycles and their duality}

\author{Chengming Bai}

\address{Chern Institute of Mathematics \& LPMC, Nankai University,
Tianjin 300071, P.R. China} \email{baicm@nankai.edu.cn}

\def\shorttitle{Double Constructions of Frobenius Algebras and Connes cocycles}

\begin{abstract}

We construct an associative algebra with a decomposition into the
direct sum of the underlying vector spaces of another associative
algebra and its dual space such that both of them are subalgebras
and the natural symmetric bilinear form is invariant or the natural
antisymmetric bilinear form is a Connes cocycle. The former is
called a double construction of Frobenius algebra and the latter is
called a double construction of Connes cocycle which is interpreted
in terms of dendriform algebras. Both of them are equivalent to a
kind of bialgebras, namely, antisymmetric infinitesimal bialgebras
and dendriform D-bialgebras  respectively. In the coboundary cases,
our study leads to what we call associative Yang-Baxter equation in
an associative algebra and $D$-equation in a dendriform algebra
respectively, which are analogues of the classical Yang-Baxter
equation in a Lie algebra. We show that an antisymmetric solution of
associative Yang-Baxter equation corresponds to the antisymmetric
part of a certain operator called ${\mathcal O}$-operator which
gives a double construction of Frobenius algebra, whereas a
symmetric solution of $D$-equation corresponds to the symmetric part
of an ${\mathcal O}$-operator which gives a double construction of
Connes cocycle. By comparing antisymmetric infinitesimal bialgebras
and dendriform D-bialgebras, we observe that there is a clear
analogy between them. Due to the correspondences between certain
symmetries and antisymmetries appearing in the analogy, we regard it
as a kind of duality.

\end{abstract}

\subjclass[2000]{16W30, 17A30, 17B60, 57R56, 81T45}

\keywords{Associative algebra; Frobenius algebra; Connes cocycle;
Yang-Baxter equation}

\maketitle

\tableofcontents \setcounter{section}{0}

\baselineskip=18pt

\section{Introduction}

Throughout this paper, an associative algebra is a nonunital
associative algebra. There are two important (nondegenerate)
bilinear forms on an associative algebra given as follows.

\begin{defn}{\rm
A bilinear form ${\mathcal B}(\;,\;)$ on an associative algebra $A$
is {\it invariant} if
$${\mathcal B}(xy,z)={\mathcal B}(x,yz),\;\;\forall\; x,y,z\in A.\eqno
(1.0.1)$$}
\end{defn}
\begin{defn}
{\rm An antisymmetric bilinear form $\omega(\;,\;)$ on an
associative algebra $A$ is a {\it  cyclic 1-cocycle in the sense of
Connes} if
$$\omega (xy,z)+\omega (yz,x)+\omega(zx,y)=0,\;\;\forall\; x,y,z\in
A.\eqno (1.0.2)$$ We also call $\omega$ a {\it Connes cocycle} for
abbreviation.}
\end{defn}

\subsection{Frobenius algebras}

A Frobenius algebra $(A,{\mathcal B})$ is an associative algebra
$A$ with a nondegenerate invariant bilinear form ${\mathcal
B}(\;,\;)$. It was first studied by Frobenius (\cite{Fro}) in 1903
and then named by Brauer and Nesbitt (\cite{BrN}). In fact,
Frobenius algebras appear in many fields in mathematics and
mathematical physics, such as (modular) representations of finite
groups (\cite{Kap}), Hopf algebras (\cite{LS}), statistical models
over 2-dimensional graphs (\cite{BFN}), Yang-Baxter equation
(\cite{St}), Poisson brackets of hydrodynamic type (\cite{BaN})
and so on. In particular, they play a key role in the study of
topological quantum field theory (\cite{Ko}, \cite{RFFS}, etc.).
There are a lot of references on the study of Frobenius algebras
(for example, see \cite{Kap} or \cite{Y} and the references
therein).

A Frobenius algebra $(A,{\mathcal B})$ is symmetric if $\mathcal B$
is symmetric. In this paper, we mainly consider a class of symmetric
Frobenius algebras $(A,{\mathcal B})$ satisfying the following
conditions:

(1) $A=A_1\oplus A_1^*$ as the direct sum of vector spaces;

(2) $A_1$ and $A_1^*$ are associative subalgebras of $A$;

(3) ${\mathcal B}$ is the natural symmetric bilinear form on
$A_1\oplus A_1^*$ given by
$${\mathcal B}(x+a^*,y+b^*)=\langle  x,b^*\rangle   +\langle  a^*,y\rangle   ,\;\;\forall x,y\in
A_1,\;\;a^*,b^*\in A_1^*,\eqno (1.1.1)$$ where $\langle  ,\rangle
$ is the natural pair between the vector space $A_1$ and its dual
space $A_1^*$. We call it a double construction of Frobenius
algebra.

Such a double construction of Frobenius algebra is quite different
from the ``double extension construction" of a Lie algebra with a
nondegenerate invariant bilinear form (\cite{Kac}, [MR1-2], etc.) or
the ``$T^*$-extension'' of Frobenius algebra given by Bordemann in
\cite{Bo}.

Moreover, the above double constructions of Frobenius algebras were
also considered by Zhelyabin in \cite{Z} and Aguiar in \cite{A3}
(under the name of ``balanced Drinfeld double $D_b(A)$") with
different motivations and approaches respectively. They are closely
related to Lie bialgebras. Lie bialgebras were introduced by
Drinfeld (\cite{D}) and they play a crucial role in symplectic
geometry and quantum groups. They are equivalent to Manin triples
(see \cite{CP} and the references therein or subsection 5.2).

It is easy to show that the commutator of a Frobenius algebra from
the above double construction gives a Manin triple (hence a Lie
bialgebra). Furthermore, such a double construction has many
properties similar to a Lie bialgebra. It is equivalent to an
antisymmetric infinitesimal bialgebra (which is the same structure
under the names of ``associative D-algebra" in \cite{Z} and
``balanced infinitesimal bialgebra" in the sense of the opposite
algebra in \cite{A3}) and under a ``coboundary" condition, it leads
to an analogue of the classical Yang-Baxter equation (\cite{Se}) in
an associative algebra $A_1$
$$r_{12}r_{13}+r_{13}r_{23}-r_{23}r_{12}=0,\eqno (1.1.2)$$
where $r=\sum\limits_i x_i\otimes y_i \in A_1\otimes  A_1$ and
$$r_{12}r_{13}=\sum_{i,j} x_ix_j\otimes  y_i \otimes y_j,\;\;r_{13}r_{23}=\sum_{i,J}
x_i\otimes
 x_j\otimes  y_iy_j,\;\;r_{23}r_{12}=\sum_{i,j}  x_j\otimes  x_iy_j\otimes y_i.\eqno
(1.1.3)$$ In particular, an antisymmetric solution of the above
equation in $A_1$ gives a double construction of Frobenius algebra
$(A=A_1\oplus A_1^*, {\mathcal B})$.

On the other hand, we introduce the new notion of antisymmetric
infinitesimal bialgebra in order to express explicitly its relation
with the known notion of infinitesimal bialgebra, although there are
certain notions for the same or similar structures. An infinitesimal
bialgebra is a triple $(A,m,\Delta)$, where $(A,m)$ is an
associative algebra, $(A,\Delta)$ is a coassociative algebra and
$$\Delta(ab)=\sum ab_1\otimes  b_2+\sum a_1\otimes  a_2b,\;\;\forall a,b\in A.\eqno (1.1.4)$$
It was introduced by Join and Rota (\cite{JR}) in order to provide
an algebraic framework for the calculus of divided difference.
Furthermore, Aguiar studied the cases of principal derivations and
gave the associative Yang-Baxter equation (\cite{A1})
$$r_{13}r_{12}-r_{12}r_{23}+r_{23}r_{13}=0.\eqno (1.1.5)$$
Note that equation (1.1.2) is equation (1.1.5) in the opposite
algebra and, when $r$ is antisymmetric, equation (1.1.5) is just
equation (1.1.2) under the operation $\sigma_{13} (x\otimes y\otimes
z)=z\otimes y\otimes x$.

We would like to point out that although there have been many
results on the double constructions of Frobenius algebras, there has
not been a complete and explicit interpretation yet. In fact, most
of these results were given in a scattered way with different
motivations. For example,  Zhelyabin in \cite{Z} introduced the
notion of associative D-algebra as an important step to develop a
bialgebra theory of Jordan algebras (there was not an explicit study
of coboundary cases for the associative algebras themselves). In
\cite{A3}, Aguiar introduced the notion of balanced infinitesimal
bialgebra and then studied the antisymmetric solution of equation
(1.1.5) in order to compare them with Lie bialgebras and the
classical Yang-Baxter equation in a Lie algebra respectively, and
the balanced Drinfeld double $D_b(A)$ appears as an important
consequence. We will formulate the known results by a different and
systematic approach (for example, the ``invariant" antisymmetry
appears naturally). Moreover such an approach is useful and
convenient for the whole study in this paper.

\subsection{${\mathcal O}$-operators and dendriform algebras}

When $r$ is antisymmetric, besides the standard tensor form (1.1.2)
or (1.1.5), the associative Yang-Baxter equation has an equivalent
operator form, that is, a special case of a certain operator called
${\mathcal O}$-operator.  An ${\mathcal O}$-operator associated to a
bimodule $(l,r,V)$ of an associative algebra $A$ is a linear map
$T:V\rightarrow A$ satisfying
$$T(u)\cdot T(v)=T(l(T(u))v+r(T(v)u)),\;\;\forall\; u,v\in V.\eqno
(1.2.1)$$ In fact, an antisymmetric solution of associative
Yang-Baxter equation is an ${\mathcal O}$-operator associated to the
bimodule $(R^*,L^*)$. The notion of ${\mathcal O}$-operator was
introduced in \cite{BGN1} (such a structure appeared independently in
\cite{U} under the name of generalized Rota-Baxter operator) which
is an analogue of the ${\mathcal O}$-operator defined by Kupershmidt
as a natural generalization of the operator form of the classical
Yang-Baxter equation (\cite{Ku3} and a further study in
\cite{Bai1}). Conversely, the antisymmetric part of an ${\mathcal
O}$-operator satisfies the associative Yang-Baxter equation in a
larger associative algebra.

From an ${\mathcal O}$-operator, one can get a dendriform algebra.
Dendriform algebras are equipped with an associative product which
can be written as a linear combination of nonassociative
compositions. They were introduced by Loday (\cite{Lo1}) with
motivation from algebraic $K$-theory and have been studied quite
extensively with connections to several areas in mathematics and
physics, including operads (\cite{Lo3}), homology ([Fra1-2]), Hopf
algebras (\cite{Cha2}, [H1-2], \cite{Ron}, \cite{LR2}), Lie and
Leibnitz algebras (\cite{Fra2}), combinatorics (\cite{LR1}),
arithmetic(\cite{Lo2}) and quantum field theory (\cite{F1}) and so
on (see \cite{EMP} and the references therein).

Furthermore, there is a compatible dendriform algebra structure on
an associative algebra $A$ if and only if there exists an invertible
${\mathcal O}$-operator  of $A$, or equivalently, there exists an
invertible (usual) 1-cocycle (see equation (3.1.6)) associated to
certain suitable bimodule of $A$ (\cite{BGN2}). Thus a close relation
between the associative Yang-Baxter equation (hence the
antisymmetric infinitesimal bialgebras and the double constructions
of Frobenius algebras) and dendriform algebras is obviously given
(see also \cite{A3}, [E1-2]).

\subsection{Connes cocycles}

 Note that a Connes cocycle given by equation (1.0.2) is in fact a Hochschild 2-cocycle
 which satisfies antisymmetry. It corresponds
to the original definition of cyclic cohomology by Connes ([C]). Also note that in cyclic cohomology a cyclic
$n$-cocycle in the sense of Connes is an ${n+1}$ linear form,
although a Connes cocycle was called a cyclic 2-cocycle in some
references (like \cite{A3}) from some different viewpoints.
Moreover, although Connes used it in the unital framework and
in the nonunital framework cyclic homology has a very different
behavior, we still use the terminology ``Connes cocycle" in this
paper.

In this paper, we will see that, from a nondegenerate Connes cocycle
on an associative algebra $A$, one can get a compatible dendriform
algebra structure on $A$. Moreover, the dendriform algebra
structures play a key role in the following constructions of
nondegenerate Connes cocycles, which is one of the main contents in
this paper. We call $(A,\omega)$ a double construction of Connes
cocycle if it satisfies the following conditions:

(1) $A=A_1\oplus A_1^*$ as the direct sum of vector spaces;

(2) $A$ is an associative algebra and $A_1$ and $A_1^*$ are
associative subalgebras of $A$;

(3) $\omega$ is the natural antisymmetric bilinear form on
$A_1\oplus A_1^*$ given by
$$\omega(x+a^*,y+b^*)=-\langle  x,b^*\rangle   +\langle  a^*,y\rangle   ,\;\;\forall x,y\in
A_1,\;\;a^*,b^*\in A_1^*,\eqno (1.4.1)$$ and $\omega$ is a Connes
cocycle on $A$.

In this paper, the double construction of Connes cocycle is
interpreted in terms of dendriform algebras. We find that such a
structure is quite similar to a double construction of Frobenius
algebra or a Lie bialgebra. Briefly speaking, a double construction
of Connes cocycle is equivalent to a certain bialgebra structure,
namely, a dendriform D-bialgebra  structure. Both antisymmetric
infinitesimal bialgebras and dendriform D-bialgebras have many
similar properties as Lie bialgebras. In particular, there are the
so-called coboundary dendriform D-bialgebras which lead to another
analogue ($D$-equation in a dendriform algebra) of the classical
Yang-Baxter equation. A symmetric solution of the $D$-equation
corresponds to the symmetric part of an ${\mathcal O}$-operator,
which gives a double construction of Connes cocycle.

\subsection{Duality between bialgebras}

By comparing antisymmetric infinitesimal bialgebras and  dendriform
D-bialgebras, we observe that there is a clear analogy between them.
Moreover, due to the correspondences between certain symmetries and
antisymmetries appearing in the analogy, we regard it as a kind of
duality.

There is a similar study in the version of Lie algebras (\cite{CP},
\cite{Bai2}). In fact, there is also a double construction of a Lie
algebra with a nondegenerate invariant bilinear form (Manin triple
or Lie bialgebra) or with a nondegenerate 2-cocycle of Lie algebra
(parak\"ahler Lie algebra or pre-Lie bialgebra). There are the
${\mathcal O}$-operators and a kind of algebras called pre-Lie
algebras (Lie-admissible algebras whose left multiplication
operators form a Lie algebra) which play the same roles of the
${\mathcal O}$-operators and dendriform algebras. And there is a
similar duality between Lie bialgebras and pre-Lie bialgebras.

Moreover, due to Chapoton (\cite{Cha1}), there is a close
relationship among the Lie algebras, associative algebras, pre-Lie
algebras and dendriform algebras as follows (in the sense of
commutative diagram of categories).
$$\begin{matrix} {\rm dendriform\quad algebras} & \longrightarrow & \mbox{pre-Lie algebras} \cr \downarrow & &\downarrow\cr {\rm
associative\quad algebras} & \longrightarrow & {\rm Lie\quad
algebras} \cr\end{matrix}$$ We will extend the above relationship at
the level of bialgebras with the dualities in a commutative diagram.
In particular, the relation between antisymmetric infinitesimal
bialgebras (the special case of infinitesimal Hopf algebras) and Lie
bialgebras have been mentioned in \cite{A3}. Furthermore, these
types of bialgebras fit into the general framework of ``generalized
bialgebras" as introduced by  Loday in \cite{Lo4}.

The paper is organized as follows. In section 2, we give an
explicit and systematic study on the double constructions of
Frobenius algebras and then get the associative Yang-Baxter
equation naturally. In section 3, we introduce the close relations
between ${\mathcal O}$-operators and dendriform algebras. In
section 4, we study the double constructions of Connes cocycles in
terms of dendriform algebras. In section 5, we give the clear
analogy between antisymmetric infinitesimal bialgebras and
dendriform D-bialgebras, which we regard it as a kind of duality.
After recalling a similar duality between Lie bialgebras and
pre-Lie bialgebras, we express a close relationship among
associative algebras, Lie algebras, pre-Lie algebras and
dendriform algebras at the level of bialgebras.

Throughout this paper, all algebras are finite-dimensional, although
many results still hold in the infinite-dimensional case.

\section{Double constructions of Frobenius algebras and another
approach to associative Yang-Baxter equation}

\subsection{Bimodules and matched pairs of associative algebras}

\begin{defn} {\rm Let $A$ be an associative algebra and $V$ be a vector space. Let
$l, r:A\rightarrow \frak g\frak l (V)$ be two linear maps. $V$ (or the pair
$(l,r)$, or $(l,r,V)$) is called a {\it bimodule} of $A$ if}
$$l(xy)v=l(x)l(y)v,\;r(xy)v=r(y)r(x)v,\;l(x)r(y)v=r(y)l(x)v,\;\forall\; x,y\in A, v\in V.\eqno (2.1.1)$$
\end{defn}

In fact, according to \cite{Sc}, $(l,r, V)$ is a bimodule of an
associative algebra $A$ if and only if the direct sum $A\oplus V$ of
vector spaces is turned into an associative algebra ( the semidirect
sum) by defining multiplication in $A\oplus V$ by
$$(x_1+v_1)*(x_2+v_2)=x_1\cdot x_2+(l(x_1)v_2+r(x_2)v_1),\;\;\forall x_1,x_2\in A, v_1,v_2\in
V.
\eqno(2.1.2)$$ We denote it by $A\ltimes_{l,r}V$ or simply $A\ltimes
V$.

The following conclusion is obvious.

\begin{lemma}
Let $(l,r,V)$ be a bimodule of an associative algebra $A$.

{\rm (1)} Let $l^*,r^*:A\rightarrow \frak g\frak l (V^*)$ be the linear maps
given by
$$\langle  l^*(x)u^*,v\rangle   =\langle  l(x)v,u^*\rangle   ,\;\langle  r^*(x)u^*,v\rangle   =\langle  r(x)v,u^*\rangle   ,\;\forall x\in A, u^*\in V^*, v\in V.\eqno (2.1.3)$$
Then $(r^*,l^*,V^*)$ is a bimodule of $A$.

{\rm (2)} $(l,0,V)$, $(0,r,V)$, $(r^*,0,V^*)$ and $(0,l^*,V^*)$ are
bimodules of $A$.
\end{lemma}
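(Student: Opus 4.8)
The plan is to verify the three bimodule axioms directly from the definitions, since the statement is explicitly labeled as ``obvious.'' The key observation is that all the constructions in the lemma are obtained from the given bimodule $(l,r,V)$ either by dualizing (passing to $V^*$ via the transpose maps defined in equation (2.1.3)) or by zeroing out one of the two actions. So the proof splits into two parts, and part (2) will follow almost entirely from part (1) together with a few trivial checks.

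For part (1), I would take the defining relations (2.1.1) for $(l,r,V)$ and translate each one into the corresponding relation for the candidate pair $(r^*, l^*, V^*)$. Concretely, to show that $(r^*, l^*, V^*)$ is a bimodule I must verify three identities: the ``left'' map $r^*$ is an anti-homomorphism turned homomorphism in the right way, i.e. $r^*(xy) = r^*(x)r^*(y)$; the ``right'' map $l^*$ satisfies $l^*(xy) = l^*(y)l^*(x)$; and the two commute, $r^*(x)l^*(y) = l^*(y)r^*(x)$. The strategy for each is to pair both sides against an arbitrary $v \in V$ and $u^* \in V^*$, use (2.1.3) to push everything onto the $V$-side, and then invoke the original axioms (2.1.1). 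For instance, to check $r^*(xy) = r^*(x)r^*(y)$, I would compute $\langle r^*(xy)u^*, v\rangle = \langle u^*, r(xy)v\rangle$ and $\langle r^*(x)r^*(y)u^*, v\rangle$, unwinding the latter through two applications of (2.1.3), and then compare using the relation $r(xy) = r(y)r(x)$ from (2.1.1); the order reversal in the original axiom is exactly what makes the dualized map into a genuine left action. The mixed relation uses $l(x)r(y) = r(y)l(x)$ and is the most symmetric to handle.

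For part (2), the four pairs $(l,0,V)$, $(0,r,V)$, $(r^*,0,V^*)$ and $(0,l^*,V^*)$ are each obtained by discarding one action of a known bimodule. I would note that $(l,0,V)$ is a bimodule precisely because $l$ is an associative-algebra representation (the first identity in (2.1.1)), while the second and third axioms hold trivially as every term involving the zero map vanishes; the same reasoning gives $(0,r,V)$ from the anti-representation property of $r$. The remaining two, $(r^*,0,V^*)$ and $(0,l^*,V^*)$, then follow immediately by applying part (1): since $(r^*,l^*,V^*)$ is a bimodule, dropping either of its two actions yields a bimodule by exactly the same trivial-vanishing argument.

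Honestly there is no real obstacle here; the only thing requiring care is bookkeeping the order of composition when dualizing, since transposition reverses the order of products and this is precisely why $(l,r)$ dualizes to $(r^*, l^*)$ rather than to $(l^*, r^*)$. I would make sure to track that reversal explicitly in the pairing computations so that the homomorphism-versus-antihomomorphism roles come out correctly.
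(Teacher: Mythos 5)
Your proof is correct and is exactly the direct verification that the paper omits (it simply declares the lemma ``obvious''): dualize each axiom of (2.1.1) through the pairing (2.1.3), letting the order reversal under transposition convert the anti-homomorphism property of $r$ into the homomorphism property of $r^*$ and vice versa, then obtain the four pairs in part (2) by the trivial zero-action checks. Nothing is missing.
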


\begin{exam}
{\rm Let $A$ be an associative algebra. Let $L(x)$ and $R(x)$ denote
the left and right multiplication operator respectively, that is,
$L(x)(y)=xy$, $R(x)(y)=yx$ for any $x,y\in A$. Let $L:A\rightarrow
\frak g\frak l (A)$ with $x\rightarrow L(x)$ and $R: A\rightarrow \frak g\frak l (A)$  with
$x\rightarrow R(x)$ (for every $x\in A$) be two linear maps. Then
$(L,0)$, $(0, R)$ and $(L,R)$ are bimodules of $A$. On the other
hand, $(R^*,0)$, $(0,L^*)$ and $(R^*,L^*)$ are bimodules of $A$,
too.}
\end{exam}

\begin{theorem}  Let $(A,\cdot)$ and $(B,\circ)$ be two associative algebras.
Suppose that there are linear maps $l_A,r_A:A\rightarrow \frak g\frak l (B)$ and
$l_B,r_B:B\rightarrow \frak g\frak l (A)$ such that $(l_A,r_A)$ is a bimodule of
$A$ and $(l_B,r_B)$ is a bimodule of $B$ and they satisfy the
following conditions:
$$l_A(x)(a\circ b)=l_A(r_B(a)x)b+(l_A(x)a)\circ b;\eqno (2.1.4)$$
$$r_A(x)(a\circ b)=r_A(l_B(b)x)a+a\circ (r_A(x)b);\eqno (2.1.5)$$
$$l_B(a)(x\cdot y)=l_B(r_A(x)a)y+(l_B(a)x)\cdot y;\eqno (2.1.6)$$
$$r_B(a)(x\cdot y)=r_B(l_A(y)a)x+x\cdot (r_B(a)y);\eqno (2.1.7)$$
$$l_A(l_B(a)x)b+(r_A(x)a)\circ b-r_A(r_B(b)x)a-a\circ
(l_A(x)b)=0;\eqno (2.1.8)$$
$$l_B(l_A(x)a)y+(r_B(a)x)\cdot y-r_B(r_A(y)a)x-x\cdot
(l_B(a)y)=0,\eqno (2.1.9)$$ for any $x,y\in A,a,b\in B$. Then there
is an associative algebra structure on the direct sum $A\oplus B$ of
the underlying vector spaces of $A$ and $B$ given by
$$(x+a)*(y+b)=(x\cdot y+l_B(a)y+r_B(b)x)+(a\circ b+l_A(x)b+r_A(y)a),\;\;\forall x,y\in A,a,b\in B.\eqno (2.1.10)$$
We denote this associative algebra by
$A\bowtie^{l_A,r_A}_{l_B,r_B}B$ or simply $A\bowtie B$.  On the
other hand, every associative algebra with a decomposition into the
direct sum of the underlying vector spaces of two subalgebras can be
obtained from the above way.
\end{theorem}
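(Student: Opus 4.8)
The plan is to prove both assertions by a single symmetric computation: I will show that associativity of the product $*$ defined by (2.1.10) is \emph{equivalent}, case by case, to the six compatibility conditions (2.1.4)--(2.1.9) together with the bimodule axioms (2.1.1), and this equivalence delivers the converse at no extra cost.

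For the first assertion, the key reduction is that $*$ is bilinear, so the identity $(X*Y)*Z=X*(Y*Z)$ need only be checked when each of $X,Y,Z$ lies wholly in $A$ or wholly in $B$. This produces $2^3=8$ homogeneous cases, which the $A\leftrightarrow B$ symmetry of the hypotheses collapses to four genuinely distinct computations. In each case I would expand both sides using (2.1.10) and compare the $A$-component and the $B$-component separately, using that $A,B$ are subalgebras so that $x\cdot y\in A$ and $a\circ b\in B$.

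The cases organize as follows. When all three factors lie in $A$ (resp.\ in $B$) the identity reduces to associativity of $\cdot$ (resp.\ of $\circ$). Each mixed case splits into one identity on the $A$-side and one on the $B$-side, and these are precisely one of (2.1.4)--(2.1.9) paired with one of the three axioms in (2.1.1). For instance, taking $X=x,\,Y=y\in A$ and $Z=c\in B$ gives on the $A$-side $r_B(c)(x\cdot y)=x\cdot(r_B(c)y)+r_B(l_A(y)c)x$, which is exactly (2.1.7), and on the $B$-side $l_A(x\cdot y)c=l_A(x)l_A(y)c$, which is the axiom $l_A(xy)=l_A(x)l_A(y)$; taking $X=x,\,Z=z\in A$ and $Y=b\in B$ reproduces (2.1.9) on the $A$-side and $l_A(x)r_A(z)=r_A(z)l_A(x)$ on the $B$-side; the case $X=x\in A$, $Y=b,\,Z=c\in B$ reproduces (2.1.4) together with $r_B(b\circ c)=r_B(c)r_B(b)$. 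The remaining three mixed cases are obtained by interchanging the roles of $A$ and $B$, and yield (2.1.5), (2.1.6), (2.1.8). Hence (2.1.4)--(2.1.9) and (2.1.1) are exactly what is needed for associativity, proving the first claim.

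For the converse, let $(\mathcal{A},*)$ be associative with $\mathcal{A}=A\oplus B$ as vector spaces and $A,B$ subalgebras; write $\cdot$ and $\circ$ for the restricted products. I would define the four structure maps by projecting the cross products onto their components: for $x\in A$ and $a\in B$, let $r_B(a)x$ and $l_A(x)a$ be the $A$- and $B$-components of $x*a$, and let $l_B(a)x$ and $r_A(x)a$ be the $A$- and $B$-components of $a*x$. Expanding $(x+a)*(y+b)=x*y+x*b+a*y+a*b$ and collecting components then gives exactly formula (2.1.10); running the same eight-case analysis in reverse --- now \emph{reading off} the already-known associativity of $*$ on each type of triple --- forces the bimodule axioms (2.1.1) and the conditions (2.1.4)--(2.1.9), completing the proof. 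The computation is long but entirely mechanical, and I expect the only real obstacle to be bookkeeping. The two structural reductions are what keep it manageable: bilinearity cuts associativity down to eight homogeneous triples, and the $A\leftrightarrow B$ symmetry halves the labour, pinning each of the six conditions to a specific case rather than leaving them to emerge from an undifferentiated expansion.
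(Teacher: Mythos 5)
Your proposal is correct: the paper itself dismisses this proof as ``straightforward,'' and what you describe --- reducing associativity of $*$ by bilinearity to the eight homogeneous triples, matching each mixed case to one of (2.1.4)--(2.1.9) on one component and a bimodule axiom on the other, and obtaining the converse by defining $l_A,r_A,l_B,r_B$ as the projections of the cross products and running the same analysis in reverse --- is precisely the standard verification being alluded to. Your case-by-case identifications (e.g.\ $AAB$ giving (2.1.7) plus $l_A(x\cdot y)=l_A(x)l_A(y)$, $ABA$ giving (2.1.9) plus $l_A(x)r_A(z)=r_A(z)l_A(x)$, $ABB$ giving (2.1.4) plus $r_B(b\circ c)=r_B(c)r_B(b)$) all check out.
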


\begin{proof} It is straightforward.
\end{proof}

\begin{defn}{\rm
Let $(A,\cdot)$ and $(B,\circ)$ be two associative algebras. Suppose
that there are linear maps $l_A,r_A:A\rightarrow \frak g\frak l (B)$ and
$l_B,r_B:B\rightarrow \frak g\frak l (A)$ such that $(l_A,r_A)$ is a bimodule of
$A$ and $(l_B,r_B)$ is a bimodule of $B$. If equations
(2.1.4)-(2.1.9) are satisfied, then $(A,B,l_A,r_A,l_B,r_B)$ is
called a {\it matched pair of associative algebras}.}
\end{defn}

\begin{remark}{\rm
Obviously $B$ is an ideal of $A\bowtie B$ if and only if
$l_B=r_B=0$. If $B$ is a trivial (that is, all the products of $B$
are zero) ideal, then $A\bowtie^{l_A,r_A}_{0,0} B\cong
A\ltimes_{l_A,r_A} B$. Moreover, some other special cases of Theorem
2.1.4 have already been studied. For example, the case that $A$ is
left $B$-module and $B$ is a right $A$-module was considered in
\cite{A1}, that is, $l_A=0$ and $r_B=0$.}
\end{remark}

\subsection{Double constructions of Frobenius algebras and antisymmetric
infinitesimal bialgebras} Recall that a (symmetric) Frobenius
algebra is an associative algebra $A$ with a nondegenerate
(symmetric) invariant bilinear form. Let $(A,\cdot)$ be an
associative algebra. Suppose that there is an associative algebra
structure ``$\circ$" on its dual space $A^*$. We construct an
associative algebra structure on the direct sum $A\oplus A^*$ of the
underlying vector spaces of $A$ and $A^*$ such that $(A,\cdot)$ and
$(A^*,\circ)$ are subalgebras and the symmetric bilinear form on
$A\oplus A^*$ given by equation (1.1.1) is invariant. That is,
$(A\oplus A^*,{\mathcal B})$ is a symmetric Frobenius algebra. Such
a construction is called a double construction of Frobenius algebra
associated to $(A,\cdot)$ and $(A^*,\circ)$ and we denote it by
$(A\bowtie A^*, {\mathcal B})$.

\begin{theorem}
Let $(A,\cdot)$ be an associative algebra. Suppose that there is an
associative algebra structure ``$\circ$" on its dual space $A^*$.
Then there is a double construction of Frobenius algebra associated
to $(A, \cdot)$ and $(A^*,\circ)$ if and only if $(A,A^*,
R_\cdot^*,L_\cdot^*,R_\circ^*,L_\circ^*)$ is a matched pair of
associative algebras.
\end{theorem}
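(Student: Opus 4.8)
The plan is to build on Theorem 2.1.4, which already identifies an associative multiplication on $A\oplus A^*$ making both $A$ and $A^*$ subalgebras with a matched pair $(A,A^*,l_A,r_A,l_B,r_B)$, the product being given by (2.1.10). Since the form $\mathcal{B}$ of (1.1.1) is manifestly symmetric and nondegenerate (it is the hyperbolic pairing between $A$ and $A^*$), the only extra datum distinguishing a double construction of Frobenius algebra from an arbitrary $A\bowtie A^*$ is the \emph{invariance} of $\mathcal{B}$. Thus the whole statement reduces to a single equivalence: given the product (2.1.10), the form $\mathcal{B}$ is invariant if and only if $l_A=R_\cdot^*$, $r_A=L_\cdot^*$, $l_B=R_\circ^*$ and $r_B=L_\circ^*$. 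Once this is established both directions of the theorem follow at once — for the ``if'' direction I would feed the specified maps into Theorem 2.1.4 and invoke the equivalence, and for the ``only if'' direction I would note that any double construction arises from \emph{some} matched pair by Theorem 2.1.4, whereupon invariance forces the maps into the asserted dual form.

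To prove the equivalence I would write out the invariance identity $\mathcal{B}\big((x+a^*)*(y+b^*),\,z+c^*\big)=\mathcal{B}\big(x+a^*,\,(y+b^*)*(z+c^*)\big)$ using (2.1.10) and (1.1.1). Each side expands into six trilinear terms, and because $\mathcal{B}$ pairs only $A$ against $A^*$, every term carries a distinct ``signature'' recording which of $\{x,a^*\}$, $\{y,b^*\}$, $\{z,c^*\}$ it involves; matching the two sides signature by signature decouples the identity into six scalar equations. Four of these directly pin down the bimodule maps: the $(x,y,c^*)$-component reads $\langle x\cdot y,c^*\rangle=\langle x,l_A(y)c^*\rangle$, which by the defining relation (2.1.3) for dual maps says exactly $l_A=R_\cdot^*$; likewise the $(a^*,y,z)$-, $(x,b^*,c^*)$- and $(a^*,b^*,z)$-components give $r_A=L_\cdot^*$, $r_B=L_\circ^*$ and $l_B=R_\circ^*$ respectively. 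The remaining two equations, of signatures $(a^*,y,c^*)$ and $(x,b^*,z)$, impose no new condition: once the four identifications above are in force, both sides of each reduce — via (2.1.3) and the associativity already present in $A$ and $A^*$ — to $\langle c^*\circ a^*,y\rangle$ and $\langle z\cdot x,b^*\rangle$ respectively, so they hold automatically.

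The main thing to watch is precisely this bookkeeping: one must keep the pairing conventions of (2.1.3) straight, consistently identify $A^{**}$ with $A$, and verify that the two ``extra'' equations really are consequences of the other four rather than independent constraints (otherwise the clean biconditional would fail). No genuinely hard analysis is involved — the content lies entirely in organizing the six-term expansion so that each bimodule map is read off from exactly one signature, and in confirming the redundancy of the last two. With the equivalence in hand, the two implications of the theorem are then a direct appeal to Theorem 2.1.4 together with the observation that $\mathcal{B}$ is automatically symmetric and nondegenerate.
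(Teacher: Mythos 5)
Your proposal is correct and takes essentially the same route as the paper's proof: both directions reduce to Theorem 2.1.4, and invariance of $\mathcal{B}$ under the product (2.1.10) is shown to be equivalent to the identifications $l_A=R_\cdot^*$, $r_A=L_\cdot^*$, $l_B=R_\circ^*$, $r_B=L_\circ^*$ by expanding the invariance identity against the pairing and reading off the dual maps. The paper merely records the four determining equations (declaring the ``if'' direction straightforward), whereas you also verify explicitly that the two remaining components are automatic consequences --- a detail worth having, but not a different method.
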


\begin{proof}
If $(A,A^*, R_\cdot^*,L_\cdot^*,R_\circ^*,L_\circ^*)$ is a matched
pair of associative algebras, then it is straightforward to show
that the bilinear form (1.1.1) is invariant on the associative
algebra $A\bowtie^{R_\cdot^*,L_\cdot^*}_{R_\circ^*,L_\circ^*} A^*$
given by equation (2.1.10). Conversely, set
$$x*a^*=l_A(x)a^*+r_{A^*}(a^*)x,\;\;a^**x=l_{A^*}(a^*)x+r_A(x)a^*,\;\;\forall\;
x\in A, a^*\in A^*.$$ Then $(A,A^*,l_A,r_A,l_{A^*},r_{A^*})$ is a
matched pair of associative algebras. Note that
$$\langle  l_A(x)a^*,y\rangle   =\langle  r_A(y)a^*,x\rangle   =\langle  y\cdot x,
a^*\rangle   ,\langle  l_{A^*}(b^*)x,a^*\rangle   =\langle
r_{A^*}(a^*)x,b^*\rangle   =\langle  a^*\circ b^*,x\rangle,$$ where
$x,y\in A, a^*,b^*\in A^*$. Hence,
$l_A=R_\cdot^*,\;r_A=L_\cdot^*,\;l_{A^*}=R_\circ^*,\;r_{A^*}=L_\circ^*$.
\end{proof}

\begin{prop}
Let $(A,\cdot)$ be an associative algebra. Suppose that there is an
associative algebra structure ``$\circ$" on its dual space $A^*$.
Then $(A,A^*, R_\cdot^*,L_\cdot^*,R_\circ^*,L_\circ^*)$ is a matched
pair of associative algebras if and only if for any $x\in
A^*\;a^*,b^*\in A^*$,
$$R_\cdot^*(x)(a^*\circ b^*)=R_\cdot^*(L^*_\circ(a^*)x)b^*+(R_\cdot^*(x)a^*)\circ b^*;\eqno (2.2.1)$$
$$R_\cdot^*(R_\circ^*(a^*)x)b^*+L_\cdot^*(x)a^*\circ b^*=L_\cdot^*(L_\circ^*(b^*)x)a^*+a^*\circ (R_\cdot^*(x)b^*).\eqno (2.2.2)$$
\end{prop}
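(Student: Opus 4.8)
The plan is to first observe that (2.2.1) and (2.2.2) are nothing but two of the six defining equations of a matched pair, and then to show that these two already force the remaining four. Indeed, substituting $l_A=R_\cdot^*,\ r_A=L_\cdot^*,\ l_B=R_\circ^*,\ r_B=L_\circ^*$ (the bimodule data identified in Theorem 2.2.1) into (2.1.4) and (2.1.8) reproduces verbatim (2.2.1) and (2.2.2). Hence the implication from the matched pair condition is immediate, and the entire problem reduces to deriving (2.1.5), (2.1.6), (2.1.7) and (2.1.9) from (2.2.1) and (2.2.2).

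For the converse the key device is that the pairing between $A$ and $A^*$ is nondegenerate and all the operators involved are mutually adjoint through (2.1.3); consequently each of (2.1.4)--(2.1.9) is equivalent to the scalar identity obtained by pairing it against a test vector, an element of $A$ for the equations living in $A^*$ and an element of $A^*$ for those living in $A$. So I would pair each equation with the appropriate test vector and, using only the adjoint relations (2.1.3) together with the definitions of $L_\cdot,R_\cdot,L_\circ,R_\circ$, rewrite it as a trilinear identity in the products $\cdot,\circ$ and the operators $L_\circ^*,R_\circ^*$. The point I expect to emerge is that (2.1.4) and (2.1.5) translate into one and the same identity in $x,y\in A$ and $a^*,b^*\in A^*$, except that the two factors $x,y$ of the $\cdot$-product appear interchanged; since the identity is quantified over all $x,y$, this interchange is harmless, so (2.1.4) and (2.1.5) are equivalent. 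The same pairing identifies (2.1.6) and (2.1.7) as the $A\leftrightarrow A^*$ dual readings of (2.1.5) and (2.1.4), so all four of (2.1.4)--(2.1.7) collapse into the single condition recorded as (2.2.1); an entirely parallel computation shows that (2.1.8) and (2.1.9) collapse into (2.2.2).

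Concretely, pairing (2.1.4) against $y\in A$ and repeatedly applying (2.1.3) turns it into
$$\langle b^*,L_\circ^*(a^*)(y\cdot x)\rangle=\langle b^*,y\cdot(L_\circ^*(a^*)x)\rangle+\langle a^*,(R_\circ^*(b^*)y)\cdot x\rangle,$$
valid for all $x,y\in A,\ a^*,b^*\in A^*$. Interchanging $x$ and $y$ gives exactly the paired form of (2.1.5), while re-reading the same scalar identity against $A^*$ yields the paired forms of (2.1.6) and (2.1.7). Running the identical computation on (2.1.8) produces the companion identity for (2.2.2) and hence (2.1.9). Putting these together shows that (2.2.1) and (2.2.2) are equivalent to the full system (2.1.4)--(2.1.9), which is the assertion.

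I expect the main obstacle to be organizational rather than conceptual: one must track the adjoint relations (2.1.3) carefully so that every operator is transported onto the correct slot of the pairing without a left/right or sign slip, and then recognize that the four ``first-type'' equations genuinely coincide after the harmless relabeling $x\leftrightarrow y$ and the $A\leftrightarrow A^*$ interchange. Once this symmetry is spotted, the reduction of the six matched-pair equations to the two equations (2.2.1) and (2.2.2) is forced.
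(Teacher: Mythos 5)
Your proposal is correct and takes essentially the same approach as the paper: identify (2.2.1) and (2.2.2) with (2.1.4) and (2.1.8) under the substitution $l_A=R_\cdot^*$, $r_A=L_\cdot^*$, $l_B=R_\circ^*$, $r_B=L_\circ^*$, then use the adjoint relations (2.1.3) to show that (2.1.4)--(2.1.7) collapse to a single identity (hence to (2.2.1)) and that (2.1.8) is equivalent to (2.1.9) (hence to (2.2.2)). The paper merely asserts these equivalences as ``easy to show''; your pairing-against-test-vectors computation, with the relabelings $x\leftrightarrow y$ and $a^*\leftrightarrow b^*$, is precisely the verification it has in mind.
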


\begin{proof} Obviously, equation (2.2.1) is just equation (2.1.4) and
equation (2.2.2) is just equation (2.1.8) in the case
$l_A=R^*_\cdot, r_A=L^*_\cdot$,
$l_B=l_{A^*}=R^*_{\circ},r_B=r_{A^*}=L^*_\circ$. By equation
(2.1.3), it is easy to show that in this situation,
$${\rm equation}\;\; (2.1.4) \;\; \Longleftrightarrow {\rm equation}\;\; (2.1.5)\;\;
\Longleftrightarrow\;\; {\rm equation}\;\;
(2.1.6)\Longleftrightarrow {\rm equation}\;\; (2.1.7);$$
$${\rm equation}\;\; (2.1.8) \;\; \Longleftrightarrow {\rm
equation}\;\; (2.1.9).$$ Therefore the conclusion holds.\end{proof}

Before the next study, we give some notations as follows. Let $A$ be
an associative algebra. Let $\sigma:A\otimes A\rightarrow A\otimes
A$ be the exchange operator defined as
$$\sigma(x\otimes y)=y\otimes x,\;\;\forall x,y\in A.\eqno (2.2.3)$$

There are several ways to make $A\otimes A$ into a bimodule of $A$.
For example, let $id$ be the identity map on $A$. Then $(id \otimes
L, R\otimes id)$ given by (for any $x,a,b\in A$)
$$( id \otimes L)(x)(a\otimes b)=( id \otimes L(x))(a\otimes
b)=a\otimes xb,\;(R\otimes  id )(x)(a\otimes b)=(R(x)\otimes
 id )(a\otimes b)= ax\otimes b,\eqno (2.2.4)$$ is a bimodule of $A$.
Similarly, $(L\otimes  id ,  id \otimes R)$ is also a bimodule of
$A$. In fact, equation (1.1.4) given in the introduction can be
rewritten as
$$\Delta(ab)=(L(a)\otimes  id )\Delta(b)+( id \otimes R(b))\Delta(a),\eqno
(2.2.5)$$ which gives the notion of infinitesimal bialgebra
(\cite{JR}).

For a linear map $\phi:V_1\rightarrow V_2$, we denote the dual
(linear) map by $\phi^*:V_2^*\rightarrow V_1^*$ given by
$$\langle  v,\phi^*(u^*)\rangle   =\langle  \phi(v),u^*\rangle   ,\;\;\forall v\in V_1,u^*\in V_2.\eqno (2.2.6)$$

\begin{theorem}
Let $(A,\cdot)$ be an associative algebra. Suppose there is an
associative algebra structure $``\circ"$ on its dual space $A^*$
given by a linear map $\Delta^*: A^*\otimes A^*\rightarrow A^*$.
Then $(A,A^*, R_\cdot^*,L_\cdot^*,R_\circ^*,L_\circ^*)$ is a matched
pair of associative algebras if and only if $\Delta:A\rightarrow
A\otimes A $ satisfies the following two conditions:
$$\Delta(x\cdot y)=( id \otimes L_\cdot (x))\Delta
(y)+(R_\cdot(y)\otimes  id )\Delta (x);\eqno (2.2.7)$$
$$(L_\cdot (y)\otimes  id - id \otimes R_\cdot (y))\Delta(x)+\sigma [(L_\cdot (x)\otimes  id - id \otimes R_\cdot
(x))\Delta(y)]=0,\;\;\forall x,y\in A.\eqno (2.2.8)$$
\end{theorem}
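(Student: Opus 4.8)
The plan is to deduce the result from Proposition 2.2.4, which already identifies the matched pair condition with the conjunction of the two identities (2.2.1) and (2.2.2). Since (2.2.1) and (2.2.2) are identities among elements of $A^*$, parametrized by $x\in A$ and $a^*,b^*\in A^*$, whereas (2.2.7) and (2.2.8) are identities in $A\otimes A$ parametrized by $x,y\in A$, the entire argument is a dualization: I would pair the $A^*$-valued identities against an arbitrary $y\in A$ and compare the resulting scalar identities with the pairings of the tensor identities against an arbitrary $a^*\otimes b^*\in A^*\otimes A^*$. Finite-dimensionality is what makes this legitimate, since an element of $A\otimes A$ vanishes precisely when it annihilates every $a^*\otimes b^*$, and dually for $A^*$.

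First I would record the translation dictionary furnished by (2.1.3), (2.2.6) and the defining relation $\Delta^*=\circ$, namely $\langle \Delta(x),a^*\otimes b^*\rangle=\langle x,a^*\circ b^*\rangle$. Explicitly, $\langle L_\cdot^*(x)a^*,y\rangle=\langle a^*,x\cdot y\rangle$ and $\langle R_\cdot^*(x)a^*,y\rangle=\langle a^*,y\cdot x\rangle$, while $\langle L_\circ^*(a^*)x,b^*\rangle=\langle x,a^*\circ b^*\rangle$ and $\langle R_\circ^*(a^*)x,b^*\rangle=\langle x,b^*\circ a^*\rangle$. These relations let me move every multiplication operator across the bracket $\langle\;,\;\rangle$ and re-express every pairing of a $\circ$-product or of an $R_\cdot^*$- or $L_\cdot^*$-image as a pairing of $\Delta$ or of the products $\cdot$ and $\circ$.

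For the equivalence (2.2.1) $\Leftrightarrow$ (2.2.7), I would pair (2.2.1) with an arbitrary $y\in A$; the dictionary turns it into the scalar identity $\langle y\cdot x,a^*\circ b^*\rangle=\langle x,a^*\circ(L_\cdot^*(y)b^*)\rangle+\langle y,(R_\cdot^*(x)a^*)\circ b^*\rangle$. Pairing (2.2.7) with $a^*\otimes b^*$ and expanding $(id\otimes L_\cdot(x))\Delta(y)$ and $(R_\cdot(y)\otimes id)\Delta(x)$ through the same dictionary gives the identical scalar identity after the interchange $x\leftrightarrow y$ of the two free variables. As both identities are universally quantified over $x,y$, this relabeling is harmless and the equivalence follows.

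The equivalence (2.2.2) $\Leftrightarrow$ (2.2.8) is handled the same way but is where I expect the real bookkeeping. Pairing (2.2.8) with $a^*\otimes b^*$ produces four scalar terms, two from $\Delta(x)$ and two from $\sigma[(L_\cdot(x)\otimes id-id\otimes R_\cdot(x))\Delta(y)]$; the role of the exchange operator $\sigma$ is exactly to swap the two tensor slots, so the $\Delta(y)$-terms emerge as $\circ$-products with the roles of $a^*$ and $b^*$ interchanged relative to the $\Delta(x)$-terms. Pairing (2.2.2) with $y$ produces four corresponding scalar terms, and these match the (2.2.8)-terms after the substitution $a^*\leftrightarrow b^*$. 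The main obstacle is precisely to verify that the sign pattern of (2.2.8), coming from the differences $L_\cdot\otimes id-id\otimes R_\cdot$, is reproduced term-by-term by the difference between the two sides of (2.2.2), and that $\sigma$ faithfully encodes the $a^*\leftrightarrow b^*$ interchange; once this is checked, (2.2.2) and (2.2.8) are literally the same identity, and combining the two equivalences with Proposition 2.2.4 completes the proof.
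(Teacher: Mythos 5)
Your proposal is correct and takes essentially the same route as the paper: the paper likewise reduces the theorem to the proposition characterizing the matched pair by (2.2.1)--(2.2.2) (Proposition 2.2.2 in the paper's numbering, not 2.2.4) and then verifies (2.2.1) $\Leftrightarrow$ (2.2.7) and (2.2.2) $\Leftrightarrow$ (2.2.8) by comparing coefficients with respect to a basis and its dual basis, which is exactly your pairing argument written in coordinates. The bookkeeping you flag for the second equivalence does go through: pairing (2.2.2) against $y$ and rewriting via $\langle T, b^*\otimes a^*\rangle=\langle \sigma(T),a^*\otimes b^*\rangle$ yields $\sigma[(L_\cdot (y)\otimes id - id \otimes R_\cdot (y))\Delta(x)]+(L_\cdot (x)\otimes id - id \otimes R_\cdot(x))\Delta(y)=0$, which is precisely (2.2.8) after applying the involution $\sigma$ to both sides.
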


\begin{proof} Let $\{e_1,\cdots,e_n\}$ be a basis of $A$ and $\{
e_1^*,\cdots, e_n^*\}$ be its dual basis. Set $e_i\cdot
e_j=\sum\limits_{k=1}^nc_{ij}^ke_k$ and $e_i^*\circ
e_j^*=\sum\limits_{k=1}^nf_{ij}^k e_k^*$. Therefore, we have $\Delta
(e_k)=\sum\limits_{i,j=1}^nf_{ij}^ke_i\otimes e_j$ and
$$R^*_\cdot (e_i)e_j^*=\sum_{k=1}^nc_{ki}^je_k^*,\;
L^*_\cdot (e_i)e_j^*=\sum_{k=1}^nc_{ik}^je_k^*,\;\; R^*_\circ
(e_i^*)e_j=\sum_{k=1}^nf_{ki}^je_k,\; L^*_\circ
(e_i^*)e_j=\sum_{k=1}^nf_{ik}^je_k.$$ Hence the coefficient of
$e_j\otimes e_k$ in
$$\Delta(e_i\cdot e_m)=( id \otimes L_\cdot (e_i))\Delta
(e_m)+(R_\cdot(e_m)\otimes  id )\Delta (e_i)$$ gives the following
relation (for any $i,j,k,m$)
$$\sum_{l=1}^n c_{mi}^lf_{jk}^l=\sum_{l=1}^n(
c_{ml}^kf_{jl}^i+c_{li}^jf_{lk}^m)$$ which is just the relation
given by the coefficient of $e_m^*$ in
$$R_\cdot^*(e_i)(e_j^*\circ e_k^*)=R_\cdot^*(L^*_\circ(e_j^*)e_i)e_k^*+(R_\cdot^*(e_i)e_j^*)\circ e_k^*.$$
Similarly, equation (2.2.8) corresponds to equation (2.2.2).
\end{proof}

\begin{remark} {\rm From the symmetry of the associative algebras $(A,\cdot)$ and
$(A^*,\circ)$ appearing in the double construction, we also can
consider the operation $\beta: A^*\rightarrow A^*\otimes A^*$ such
that $\beta^*: A\otimes A \rightarrow A$ gives an associative
algebra structure on $A$. It is easy to show that $\Delta$ satisfies
equations (2.2.7) and (2.2.8) if and only if $\beta$ satisfies
$$\beta(a^*\circ b^*)=( id \otimes L_\circ (a^*))\beta
(b^*)+(R_\circ(b^*)\otimes  id )\beta (a^*);\eqno (2.2.9)$$
$$(L_\circ (b^*)\otimes  id - id \otimes R_\circ (b^*))\beta(a^*)+\sigma [(L_\circ (a^*)\otimes  id - id \otimes R_\circ
(a^*))\beta(b^*)]=0,\;\;\forall a^*,b^*\in A.\eqno (2.2.10)$$}
\end{remark}

\begin{defn}
{\rm  Let $A$ be an associative algebra. An {\it antisymmetric
infinitesimal bialgebra} structure on $A$ is a linear map
$\Delta:A\rightarrow A\otimes A$ such that

(a) $\Delta^*:A^*\otimes A^*\rightarrow A^*$ defines an associative
algebra structure on $A^*$;

(b) $\Delta$ satisfies equations (2.2.7) and (2.2.8).

\noindent We denote it by $(A,\Delta)$ or $(A,A^*)$.}
\end{defn}

\begin{coro}
Let $(A,\cdot)$ and $(A^*,\circ)$ be two associative algebras. Then
the following conditions are equivalent.

{\rm (1)} There is a double construction of Frobenius algebra
associated to $(A,\cdot)$ and $(A^*,\circ)$;

{\rm (2)} $(A,A^*, R_\cdot^*,L_\cdot^*,R_\circ^*,L_\circ^*)$ is a
matched pair of associative algebras;

{\rm (3)} $(A,A^*)$ is an antisymmetric infinitesimal bialgebra.
\end{coro}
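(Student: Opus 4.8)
The plan is to recognize that this corollary is nothing more than the concatenation of the two equivalences already established, glued together by the definition just stated. The equivalence (1) $\Leftrightarrow$ (2) is precisely the content of Theorem 2.2.1, so no new argument is needed on that front; I would simply cite it.

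For (2) $\Leftrightarrow$ (3), I would unwind Definition 2.2.6. The associative product $\circ$ on $A^*$ is, by dualization, the very same datum as a linear map $\Delta : A \to A \otimes A$ with $\Delta^* = \circ$, using the dual-map convention (2.2.6) together with the pairing between $A^* \otimes A^*$ and $A \otimes A$. Thus the statement that $(A^*,\circ)$ is an associative algebra is exactly condition (a) in the definition of an antisymmetric infinitesimal bialgebra. This is the one piece of bookkeeping that deserves care: one must confirm that the coproduct $\Delta$ extracted from $\circ$ coincides with the $\Delta$ appearing in equations (2.2.7)--(2.2.8). It does, by the correspondence $e_i^* \circ e_j^* = \sum_k f_{ij}^k e_k^*$ and $\Delta(e_k) = \sum_{i,j} f_{ij}^k e_i \otimes e_j$ already used in the proof of Theorem 2.2.5.

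Granting this identification, Theorem 2.2.5 asserts that the matched pair condition (2) holds if and only if $\Delta$ satisfies (2.2.7) and (2.2.8), which is precisely condition (b) of Definition 2.2.6. Therefore (2) holds if and only if conditions (a) and (b) both hold, i.e. if and only if $(A,A^*)$ is an antisymmetric infinitesimal bialgebra, which gives (2) $\Leftrightarrow$ (3) and completes the cycle of equivalences.

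I do not expect any genuine obstacle here, since the substantive work has already been absorbed into Theorems 2.2.1 and 2.2.5 and into Definition 2.2.6. The only point requiring attention is the duality dictionary relating the product $\circ$ on $A^*$ to the coproduct $\Delta$ on $A$, and the observation that condition (a) of the bialgebra definition is automatically in force because $(A^*,\circ)$ was assumed to be an associative algebra from the outset; with that in hand the proof is a one-line assembly of the preceding results.
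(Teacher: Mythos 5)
Your proposal is correct and is essentially the paper's own proof, which reads in full ``It follows from Theorems 2.2.1 and 2.2.3'': (1) $\Leftrightarrow$ (2) is Theorem 2.2.1, and (2) $\Leftrightarrow$ (3) is Theorem 2.2.3 combined with the observation that condition (a) of the bialgebra definition holds automatically because $(A^*,\circ)$ is assumed associative. The only blemish is labelling: what you call Theorem 2.2.5 and Definition 2.2.6 are Theorem 2.2.3 and Definition 2.2.5 in the paper's numbering, but the content you invoke is the right one.
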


\begin{proof}
It follows from Theorems 2.2.1 and 2.2.3.
\end{proof}

\begin{remark}{\rm As we have pointed out in the introduction, an antisymmetric
infinitesimal bialgebra is exactly an associative D-algebra in
\cite{Z} where the above equivalence between (1) and (3) was given
and a balanced infinitesimal bialgebra in the sense of the opposite
algebra in \cite{A3} where the corresponding double construction of
Frobenius algebra was called a balanced Drinfeld double as an
important consequence. On the other hand, the notion of
antisymmetric infinitesimal bialgebra is due to the fact that
equation (2.2.7) (in the sense of the opposite algebra) corresponds
to equation (2.2.5) which gives the notion of infinitesimal
bialgebra and equation (2.2.8) expresses certain antisymmetry.}
\end{remark}

\begin{defn}{\rm Let $(A,\Delta_A)$ and $(B,\Delta_B)$ be two antisymmetric
infinitesimal bialgebras.  A {\it homomorphism of antisymmetric
infinitesimal bialgebras} $\varphi:A\rightarrow B$ is a homomorphism
of associative algebras such that
$$(\varphi\otimes \varphi) \Delta_A (x)=\Delta_B(\varphi(x)),\;\;
\forall x\in A.\eqno (2.2.11)$$  An {\it isomorphism of
antisymmetric infinitesimal bialgebras} is an invertible
homomorphism of antisymmetric infinitesimal bialgebras.}\end{defn}

\begin{defn}{\rm
 Let $(A_1\bowtie A_1^*,{\mathcal B}_1)$
and $(A_2\bowtie A_2^*,{\mathcal B}_2)$ be two double constructions
of Frobenius algebras. They are {\it isomorphic} if and only if
there exists an isomorphism of associative algebras $\varphi:
A_1\bowtie A_1^*\rightarrow A_2\bowtie A_2^*$ such that
$$\varphi(A_1)=A_2,\;\varphi (A_1^*)=A_2^*,\;{\mathcal B}_1(x,y)=\varphi^*{\mathcal B}_2(x,y)
={\mathcal B}_2(\varphi(x),\varphi (y)),\;\forall x,y\in A_1\bowtie
A_1^*.\eqno (2.2.12)$$}
\end{defn}

\begin{prop}
Two double constructions of Frobenius algebras are isomorphic if and
only if their corresponding antisymmetric infinitesimal bialgebras
are isomorphic.\end{prop}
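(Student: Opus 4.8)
The plan is to observe that, because of the rigidity forced by the invariant form $\mathcal B$, an isomorphism of double constructions is completely encoded by a single associative algebra isomorphism $\varphi\colon A_1\to A_2$ together with its contragredient $(\varphi^*)^{-1}$ on the dual spaces; conversely such a $\varphi$ can be reassembled into an isomorphism of double constructions. Throughout I write $\varphi^*$ for the dual map of (2.2.6) and use the identity $\langle\varphi(z),(\varphi^*)^{-1}(c^*)\rangle=\langle z,c^*\rangle$ repeatedly.

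\emph{From double constructions to bialgebras.} Suppose $\Phi\colon A_1\bowtie A_1^*\to A_2\bowtie A_2^*$ is an isomorphism of double constructions in the sense of (2.2.12). Since $\Phi(A_1)=A_2$, $\Phi(A_1^*)=A_2^*$, and each summand is a subalgebra, the restrictions $\varphi:=\Phi|_{A_1}$ and $\psi:=\Phi|_{A_1^*}$ are isomorphisms of associative algebras for $\cdot$ and for $\circ$ respectively. The first step is to exploit (2.2.12): pairing $x\in A_1$ with $b^*\in A_1^*$ in (1.1.1) gives $\langle x,b^*\rangle=\mathcal B_1(x,b^*)=\mathcal B_2(\varphi(x),\psi(b^*))=\langle\varphi(x),\psi(b^*)\rangle$, so by (2.2.6) we get $\varphi^*\psi=\mathrm{id}$, that is $\psi=(\varphi^*)^{-1}$. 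It then remains only to show that $\varphi$ is a homomorphism of antisymmetric infinitesimal bialgebras, i.e.\ $(\varphi\otimes\varphi)\Delta_1=\Delta_2\varphi$. Pairing this identity with an arbitrary $a^*\otimes b^*\in A_2^*\otimes A_2^*$ and using $\Delta_i^*=\circ_i$ (Definition 2.2.4(a)), it becomes equivalent to $\varphi^*(a^*\circ_2 b^*)=\varphi^*(a^*)\circ_1\varphi^*(b^*)$, which holds because $\varphi^*=\psi^{-1}$ is an algebra isomorphism for $\circ$. Hence $\varphi$ is an isomorphism of antisymmetric infinitesimal bialgebras.

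\emph{From bialgebras to double constructions.} Conversely, let $\varphi\colon(A_1,\Delta_1)\to(A_2,\Delta_2)$ be an isomorphism of antisymmetric infinitesimal bialgebras, put $\psi:=(\varphi^*)^{-1}$, and define the linear isomorphism $\Phi(x+a^*):=\varphi(x)+\psi(a^*)$. Then $\Phi(A_1)=A_2$ and $\Phi(A_1^*)=A_2^*$ by construction, and the relation $\langle\varphi(x),\psi(b^*)\rangle=\langle x,b^*\rangle$ shows immediately that $\mathcal B_1(u,v)=\mathcal B_2(\Phi(u),\Phi(v))$, so (2.2.12) holds once $\Phi$ is shown to be an algebra homomorphism. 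Reversing the previous computation, the bialgebra condition on $\varphi$ makes $\psi$ an algebra isomorphism for $\circ$, while $\varphi$ is one for $\cdot$. Writing out the product (2.1.10) in the case $l_A=R_\cdot^*,\ r_A=L_\cdot^*,\ l_{A^*}=R_\circ^*,\ r_{A^*}=L_\circ^*$, checking that $\Phi$ preserves $*$ reduces to four intertwining identities, one for each structure map, of which a typical one is $\psi(R_{\cdot_1}^*(x)b^*)=R_{\cdot_2}^*(\varphi(x))\psi(b^*)$. I would verify this by pairing both sides against $\varphi(z)$ for $z\in A_1$: the left side equals $\langle z\cdot_1 x,b^*\rangle$ and the right side equals $\langle\varphi(z\cdot_1 x),\psi(b^*)\rangle=\langle z\cdot_1 x,b^*\rangle$, using that $\varphi$ preserves $\cdot$. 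The two identities involving $R_\circ^*,L_\circ^*$ follow symmetrically from $\psi$ preserving $\circ$. Thus $\Phi$ is the required isomorphism of double constructions.

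\emph{Main obstacle.} The only genuinely computational part is the verification of the four intertwining identities for $R_\cdot^*,L_\cdot^*,R_\circ^*,L_\circ^*$ in the converse direction; everything else is forced by the single relation $\psi=(\varphi^*)^{-1}$ extracted from the bilinear form. I expect each of these to be a routine dualization, reducing to the fact that $\varphi$ preserves $\cdot$ and $\psi$ preserves $\circ$, so no essential difficulty beyond careful bookkeeping with the pairing $\langle\,,\,\rangle$ arises.
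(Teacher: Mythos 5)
Your proposal is correct and follows essentially the same route as the paper: in both directions the isomorphism of double constructions is identified with the pair $\varphi$, $(\varphi^*)^{-1}$ via the invariance of the natural bilinear form, and the converse is the map $\Phi=\varphi\oplus(\varphi^*)^{-1}$. In fact your write-up is somewhat more complete than the paper's, which verifies the relation $\varphi|_{A_1^*}=\bigl((\varphi|_{A_1})^*\bigr)^{-1}$ by a basis computation and leaves the remaining checks (the comultiplication compatibility and the four intertwining identities for $R_\cdot^*,L_\cdot^*,R_\circ^*,L_\circ^*$) as "easy to show," whereas you carry them out explicitly by dualizing against the pairing.
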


\begin{proof}
Let $(A_1\bowtie A_1^*,{\mathcal B}_1)$ and
$(A_2\bowtie A_2^*,{\mathcal B}_2)$ be two double constructions of
Frobenius algebras. Let $\{e_1,\cdots,e_n\}$ be a basis of $A_1$ and
$\{e_1^*,\cdots, e_n^*\}$ be its dual basis. If $\varphi: A_1\bowtie
A_1^*\rightarrow A_2\bowtie A_2^*$ is an isomorphism of double
constructions of Frobenius algebras, then
$\varphi|_{A_1}:A_1\rightarrow A_2$ and
$\varphi|_{A_1^*}:A_1^*\rightarrow A_2^*$ are isomorphisms of
associative algebras. Moreover,
$\varphi|_{A_1^*}={(\varphi|_{A_1})^*}^{-1}$ since
\begin{eqnarray*}
\langle  \varphi|_{A_1^*}(e_i^*), \varphi (e_j)\rangle &=&{\mathcal
B}_2(\varphi|_{A_1^*}(e_i^*),\varphi(e_j)) ={\mathcal
B}_1(e_i^*,e_j)=\delta_{ij}=\langle  e_i^*,e_j\rangle\\
&=&\langle
\varphi^*{(\varphi|_{A_1})^*}^{-1}(e_i^*),e_j\rangle=\langle
{(\varphi|_{A_1})^*}^{-1}(e_i^*),\varphi(e_j)\rangle.
\end{eqnarray*}
Hence $(A_1,A_1^*)$ and $(A_2,A_2^*)$ are isomorphic as
antisymmetric infinitesimal bialgebras. Conversely, let
$\varphi':A_1\rightarrow A_2$ be an isomorphism between two
antisymmetric infinitesimal bialgebras $(A_1,A_1^*)$ and
$(A_2,A_2^*)$. Set $\varphi:A_1\oplus A_1^*\rightarrow A_2\oplus
A_2^*$ be a linear map given by
$$\varphi(x)=\varphi'(x),\varphi(a^*)=(\varphi'^*)^{-1}(a^*),\;\;\forall x\in A_1,a^*\in A_1^*.$$
Then it is easy to show that $\varphi$ is an isomorphism of double
constructions of Frobenius algebras between $(A_1\bowtie
A_1^*,{\mathcal B}_1)$ and $(A_2\bowtie A_2^*,{\mathcal
B}_2)$.\end{proof}

\begin{exam} {\rm Let $(A,\Delta)$ be an antisymmetric
infinitesimal bialgebra. Then its dual $(A^*,\beta)$ given in Remark
2.2.4 is also an antisymmetric infinitesimal bialgebra.}
\end{exam}

\begin{exam} {\rm Let $A$ be an associative algebra. If the
associative algebra structure on $A^*$ is trivial, then either
$(A,0)$ or $(A,A^*)$ is an antisymmetric infinitesimal bialgebra.
Moreover, its corresponding Frobenius algebra is given by the
semidirect sum $A\ltimes_{R^*, L^*}A^*$ with the natural invariant
bilinear form ${\mathcal  B}$  given by equation (1.1.1). Dually, if
$A$ is a trivial associative algebra, then the antisymmetric
infinitesimal bialgebra structures on $A$ are in one-to-one
correspondence with the associative algebra structures on $A^*$.}
\end{exam}

\begin{exam} {\rm Let $(A,A^*)$ be an antisymmetric
infinitesimal bialgebra. In the next subsection, we will prove that
there exists a canonical antisymmetric infinitesimal bialgebra
structure on the direct sum $A\oplus A^*$ of the underlying vector
spaces of $A$ and $A^*$.}
\end{exam}

\subsection{Coboundary (principal) antisymmetric infinitesimal bialgebras}

In fact, for an associative algebra $A$, $\Delta:A\rightarrow
A\otimes A$ satisfying equation (2.2.7) is a 1-cocycle or a
derivation of $A$ associated to the bimodule $( id \otimes L,
R\otimes  id )$. So it is natural to consider the special case that
$\Delta$ is a 1-coboundary or a principal derivation.

\begin{defn} {\rm An antisymmetric infinitesimal bialgebra $(A,\Delta)$ is
called {\it coboundary} if there exists a $r\in A\otimes A$ such
that
$$\Delta (x)=( id \otimes L(x)-R(x)\otimes  id )r,\;\;\forall x\in A.\eqno
(2.3.1)$$}\end{defn}

Let $A$ be an associative algebra and $r\in A\otimes A$. If
$\Delta:A\rightarrow A\otimes A$ is given by equation (2.3.1), then
it is obvious that $\Delta$ satisfies equation (2.2.7). Therefore,
$(A,\Delta)$ is an antisymmetric infinitesimal bialgebra if and only
if the following two conditions are satisfied:

(1) $\Delta^*:A^*\otimes A^*\rightarrow A^*$ defines an associative
algebra structure on $A^*$.

(2) $\Delta$ satisfies equation (2.2.8).


\begin{lemma} {\rm ([A1, Proposition 5.1])}\quad Let $A$ be an associative
algebra and $r\in A\otimes A$. Define $\Delta:A\rightarrow A\otimes
A$ by
$$\Delta(a)=[L(x)\otimes  id - id \otimes R(x)] r,\;\;\forall x\in A.\eqno (2.3.2)$$
Then $\Delta^*:A^*\otimes A^*\rightarrow A^*$ defines an associative
algebra structure on $A^*$ if and only if
$$(L(x)\otimes  id \otimes  id - id \otimes  id \otimes
R(x)) (r_{13}r_{12}+r_{23}r_{13}-r_{12}r_{23})=0,\;\;\forall x\in
A,\eqno (2.3.3)$$ where the notations
$r_{13}r_{12},r_{23}r_{13},r_{12}r_{23}$ are given similarly as
equation {\rm (1.1.3)}.
\end{lemma}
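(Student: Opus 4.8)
The plan is to reduce the statement to coassociativity of $\Delta$ and then carry out one direct tensor computation. First I would observe that the product on $A^*$ defined by $\Delta^*$ is, unwinding the definition, $\langle a^*\circ b^*, x\rangle = \langle a^*\otimes b^*, \Delta(x)\rangle$, and that dualizing the associativity law $(a^*\circ b^*)\circ c^* = a^*\circ(b^*\circ c^*)$ through the pairing between $A^*\otimes A^*\otimes A^*$ and $A\otimes A\otimes A$ turns it into exactly the coassociativity identity $(\Delta\otimes id)\Delta = (id\otimes\Delta)\Delta$. Thus $\Delta^*$ is associative if and only if the coassociator $C(x) := (\Delta\otimes id)\Delta(x) - (id\otimes\Delta)\Delta(x)$ vanishes for every $x$, and the whole lemma reduces to the single identity $C(x) = (L(x)\otimes id\otimes id - id\otimes id\otimes R(x))(r_{13}r_{12}+r_{23}r_{13}-r_{12}r_{23})$.

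To establish this identity I would write $r=\sum_i a_i\otimes b_i$, so that (2.3.2) becomes $\Delta(x)=\sum_i(x a_i\otimes b_i - a_i\otimes b_i x)$. Applying $\Delta$ a second time --- in the first leg for $(\Delta\otimes id)\Delta(x)$ and in the second leg for $(id\otimes\Delta)\Delta(x)$ --- produces four terms in each, so their difference $C(x)$ is an explicit sum of eight terms. On the other side, spelling out the products in the triple tensor product gives $r_{13}r_{12}=\sum_{i,j}a_ia_j\otimes b_j\otimes b_i$, $r_{23}r_{13}=\sum_{i,j}a_j\otimes a_i\otimes b_ib_j$, and $r_{12}r_{23}=\sum_{i,j}a_i\otimes b_ia_j\otimes b_j$, and applying $L(x)\otimes id\otimes id - id\otimes id\otimes R(x)$ produces a six-term expression for the right-hand side.

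The matching is then pure bookkeeping, but it is where the one genuine subtlety lies, and I expect it to be the step to watch. Three of the six right-hand terms agree with three terms of $C(x)$ on the nose; two more agree only after relabelling the dummy indices $i\leftrightarrow j$, which is legitimate since both sums range over all pairs. The crux is the remaining terms of $C(x)$: two of them have the shape $a\otimes(bxa)\otimes b$, in which $x$ is forced into the \emph{middle} tensor slot, and these have no counterpart at all on the right-hand side. The resolution is that precisely these two terms cancel each other after the swap $i\leftrightarrow j$, leaving a single surviving term of the form $a\otimes ba\otimes bx$ that matches the last right-hand term. It is exactly this cancellation that makes the identity close, so I would present it carefully rather than absorbing it into ``it is straightforward.''

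Finally I would conclude: since $C(x)$ equals $(L(x)\otimes id\otimes id - id\otimes id\otimes R(x))(r_{13}r_{12}+r_{23}r_{13}-r_{12}r_{23})$ for every $x\in A$, the coassociativity of $\Delta$ --- equivalently the associativity of the product $\Delta^*$ on $A^*$ --- holds for all $x$ if and only if equation (2.3.3) holds for all $x\in A$, which is precisely the assertion of the lemma.
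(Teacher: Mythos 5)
Your proof is correct, and I verified the computation: writing $r=\sum_i a_i\otimes b_i$, the coassociator $(\Delta\otimes id)\Delta(x)-(id\otimes\Delta)\Delta(x)$ expands into eight terms, the two middle-slot terms $-\sum_{i,j}a_j\otimes b_jxa_i\otimes b_i$ and $+\sum_{i,j}a_i\otimes b_ixa_j\otimes b_j$ do cancel after the relabelling $i\leftrightarrow j$, and the surviving six terms match, exactly as you describe, the expansion of $(L(x)\otimes id\otimes id-id\otimes id\otimes R(x))(r_{13}r_{12}+r_{23}r_{13}-r_{12}r_{23})$ with $r_{13}r_{12}=\sum_{i,j}a_ia_j\otimes b_j\otimes b_i$, $r_{23}r_{13}=\sum_{i,j}a_j\otimes a_i\otimes b_ib_j$, $r_{12}r_{23}=\sum_{i,j}a_i\otimes b_ia_j\otimes b_j$. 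Also your reduction of associativity of $\Delta^*$ to coassociativity of $\Delta$ is exactly right under the paper's duality convention (2.2.6). There is, however, nothing in the paper to compare against: the author states this lemma with only a citation to [A1, Proposition 5.1] and no proof, and handles the parallel Proposition 2.3.3 by "a similar discussion". Your argument is the standard direct verification (essentially the one in the cited source), and it actually establishes the stronger identity that the coassociator equals the displayed trilinear expression, from which both directions of the "if and only if" follow at once; making the index-swap cancellation explicit, rather than hiding it in "straightforward", is a genuine improvement over a bare citation.
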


Therefore for (1), we use a similar discussion to get the following
conclusion.

\begin{prop}
Let $A$ be an associative algebra and $r\in A\otimes A$. Define
$\Delta:A\rightarrow A\otimes A$ by equation {\rm (2.3.1)}. Then
$\Delta^*:A^*\otimes A^*\rightarrow A^*$ defines an associative
algebra structure on $A^*$ if and only if
$$( id \otimes  id \otimes L(x)-R(x)\otimes  id \otimes
 id )(r_{12}r_{13}+r_{13}r_{23}-r_{23}r_{12})=0.\;\;\forall x\in A.\eqno (2.3.4)$$
\end{prop}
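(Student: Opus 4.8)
The plan is to reduce the statement to Lemma 2.3.2, which already settles the analogous condition for the coboundary map (2.3.2), by passing to the opposite algebra. Write $A^{\mathrm{op}}$ for the associative algebra with product $a\cdot^{\mathrm{op}}b=b\cdot a$, so that its multiplication operators satisfy $L^{\mathrm{op}}=R$ and $R^{\mathrm{op}}=L$. Given $r\in A\otimes A$, let $\Delta$ be the map (2.3.1) built from $A$ and let $\tilde\Delta$ be the map (2.3.2) built from $A^{\mathrm{op}}$. Substituting $L^{\mathrm{op}}=R$ and $R^{\mathrm{op}}=L$ into (2.3.2) gives $\tilde\Delta(x)=(R(x)\otimes\mathrm{id}-\mathrm{id}\otimes L(x))r=-\Delta(x)$. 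Thus $\Delta$ and $\tilde\Delta$ differ only by an overall sign, and hence $\Delta^*=-\tilde\Delta^*$ as linear maps $A^*\otimes A^*\to A^*$. Since a linear map $\mu$ is associative if and only if $-\mu$ is, and since $\tilde\Delta^*$ depends only on $\tilde\Delta$ as a linear map (not on the algebra structure used to build it), the map $\Delta^*$ defines an associative product on $A^*$ precisely when $\tilde\Delta^*$ does.

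Next I would apply Lemma 2.3.2 to $A^{\mathrm{op}}$ in place of $A$. It asserts that $\tilde\Delta^*$ is associative if and only if the operator $L^{\mathrm{op}}(x)\otimes\mathrm{id}\otimes\mathrm{id}-\mathrm{id}\otimes\mathrm{id}\otimes R^{\mathrm{op}}(x)$ annihilates $r_{13}r_{12}+r_{23}r_{13}-r_{12}r_{23}$, where all products are taken in $A^{\mathrm{op}}$. Translating back to $A$ does two things. First, the prefactor becomes $R(x)\otimes\mathrm{id}\otimes\mathrm{id}-\mathrm{id}\otimes\mathrm{id}\otimes L(x)$, which is exactly the negative of the operator appearing in (2.3.4); since we only require that it annihilate a tensor, the sign is immaterial. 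Second, writing $r=\sum_i x_i\otimes y_i$ and using $a\cdot^{\mathrm{op}}b=ba$, one checks componentwise that, after relabelling the two summation indices, $r_{13}\cdot^{\mathrm{op}}r_{12}=r_{12}r_{13}$, $r_{23}\cdot^{\mathrm{op}}r_{13}=r_{13}r_{23}$ and $r_{12}\cdot^{\mathrm{op}}r_{23}=r_{23}r_{12}$, with the products on the right taken in $A$. Hence the $A^{\mathrm{op}}$-expression $r_{13}r_{12}+r_{23}r_{13}-r_{12}r_{23}$ equals $r_{12}r_{13}+r_{13}r_{23}-r_{23}r_{12}$ in $A$, and Lemma 2.3.2 for $A^{\mathrm{op}}$ becomes precisely equation (2.3.4).

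The only genuine work lies in the componentwise identifications of the second paragraph: one must confirm that each of the three paired products transforms as claimed under the opposite multiplication together with the transposition of tensor slots implicit in the $r_{ij}$ notation, and that the reindexing $i\leftrightarrow j$ is what converts, for example, $\sum_{i,j}x_jx_i\otimes y_j\otimes y_i$ into $\sum_{i,j}x_ix_j\otimes y_i\otimes y_j$. This is the step where a sign or a transposition is easiest to lose. An alternative, entirely self-contained route avoids $A^{\mathrm{op}}$ and mimics the proof of Lemma 2.3.2 (that is, [A1, Proposition 5.1]) directly: one writes the candidate product on $A^*$ via $\langle a^*\circ b^*,x\rangle=\langle a^*\otimes b^*,\Delta(x)\rangle$ with $\Delta$ given by (2.3.1), expands the associator $(a^*\circ b^*)\circ c^*-a^*\circ(b^*\circ c^*)$ paired against $x$, and identifies it with the pairing of $a^*\otimes b^*\otimes c^*$ against $(\mathrm{id}\otimes\mathrm{id}\otimes L(x)-R(x)\otimes\mathrm{id}\otimes\mathrm{id})(r_{12}r_{13}+r_{13}r_{23}-r_{23}r_{12})$, so that associativity for all $a^*,b^*,c^*$ is equivalent to (2.3.4). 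I would present the opposite-algebra reduction as the main argument, since it reuses Lemma 2.3.2 verbatim and confines all the combinatorics to one short mechanical check.
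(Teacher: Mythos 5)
Your proof is correct. The opposite-algebra reduction is sound: with $L^{\mathrm{op}}=R$, $R^{\mathrm{op}}=L$ one indeed gets $\tilde\Delta=-\Delta$, associativity of a bilinear map is preserved under an overall sign change, and your componentwise identifications check out — for instance $r_{13}\cdot^{\mathrm{op}}r_{12}=\sum_{i,j}x_jx_i\otimes y_j\otimes y_i$, which after relabelling $i\leftrightarrow j$ is exactly $r_{12}r_{13}=\sum_{i,j}x_ix_j\otimes y_i\otimes y_j$ in $A$, and similarly for the other two terms; the prefactor from Lemma 2.3.2 in $A^{\mathrm{op}}$ is the negative of the one in (2.3.4), which is harmless in an annihilation condition. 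Where you differ from the paper: the paper gives no reduction at all — it simply says that ``a similar discussion'' to the proof of Lemma 2.3.2 (Aguiar's computation) yields the result, i.e.\ it implicitly asks the reader to redo the cocycle/associator computation with the modified coboundary formula (2.3.1), which is the self-contained alternative you sketch in your final paragraph. Your main argument instead reuses Lemma 2.3.2 verbatim as a black box and confines all new work to the short combinatorial check that the three products transform correctly under the opposite multiplication and reindexing. This is cleaner and arguably more complete than the paper's deferral, and it is consistent with the paper's own viewpoint (cf.\ Remark 2.4.3 and the introduction, where equation (2.4.1) is identified as equation (2.4.2) in the opposite algebra); what the paper's route buys is independence from the precise statement of Aguiar's lemma, at the cost of repeating a longer calculation.
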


\begin{prop}
 Let $A$ be an associative algebra and
$r\in A\otimes A$. Define $\Delta:A\rightarrow A\otimes A$ by
equation {\rm (2.3.1)}. Then $\Delta$ satisfies equation {\rm
(2.2.8)} if and only if $r$ satisfies
$$ [L(x)\otimes  id - id \otimes R(x)][ id \otimes L(y)-R(y)\otimes
 id ](r+\sigma(r))=0,\;\;\forall x,y\in A.\eqno (2.3.5)$$
\end{prop}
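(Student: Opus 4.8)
The plan is to substitute the coboundary formula (2.3.1) directly into the left-hand side of (2.2.8) and show that the result is exactly $\sigma$ applied to the left-hand side of (2.3.5). Since $\sigma$ is a linear involution of $A\otimes A$, the two expressions then vanish simultaneously for every pair $x,y$, which is precisely the asserted equivalence. No coboundary subtlety is involved: this is a purely bilinear substitution, so the only real work is bookkeeping.

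To organize the computation I would introduce the two families of operators on $A\otimes A$
$$\mathcal{L}_z = L(z)\otimes id - id\otimes R(z),\qquad \mathcal{M}_z = id\otimes L(z) - R(z)\otimes id,\;\;z\in A,$$
so that (2.3.1) reads $\Delta(z)=\mathcal{M}_z r$. Then the left-hand side of (2.2.8) becomes $\mathcal{L}_y\Delta(x)+\sigma[\mathcal{L}_x\Delta(y)]=\mathcal{L}_y\mathcal{M}_x r+\sigma\,\mathcal{L}_x\mathcal{M}_y\,r$, while the left-hand side of (2.3.5) is $\mathcal{L}_x\mathcal{M}_y(r+\sigma r)$. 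The goal is thus to prove the operator identity $\mathcal{L}_y\mathcal{M}_x r+\sigma\,\mathcal{L}_x\mathcal{M}_y r=\sigma\,\mathcal{L}_x\mathcal{M}_y(r+\sigma r)$, i.e.\ $\mathcal{L}_y\mathcal{M}_x r=\sigma\,\mathcal{L}_x\mathcal{M}_y(\sigma r)$.

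The heart of the argument is then two elementary facts. First, conjugation by $\sigma$ interchanges the two families: checking on a decomposable tensor gives $\sigma(L(z)\otimes id)\sigma = id\otimes L(z)$ and $\sigma(id\otimes R(z))\sigma = R(z)\otimes id$, whence $\sigma\,\mathcal{L}_z\,\sigma=\mathcal{M}_z$ and $\sigma\,\mathcal{M}_z\,\sigma=\mathcal{L}_z$. Second, $\mathcal{M}_x$ and $\mathcal{L}_y$ commute; expanding both products, this reduces exactly to $L(x)R(y)=R(y)L(x)$ and $R(x)L(y)=L(y)R(x)$, which hold in any associative algebra since $x(zy)=(xz)y$. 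Combining these, $\sigma\,\mathcal{L}_x\mathcal{M}_y\,\sigma=(\sigma\mathcal{L}_x\sigma)(\sigma\mathcal{M}_y\sigma)=\mathcal{M}_x\mathcal{L}_y=\mathcal{L}_y\mathcal{M}_x$, giving the required $\mathcal{L}_y\mathcal{M}_x r=\sigma\,\mathcal{L}_x\mathcal{M}_y(\sigma r)$; adding $\sigma\,\mathcal{L}_x\mathcal{M}_y r$ to both sides completes the identity.

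I do not expect a genuine obstacle: the statement is a bilinear substitution, and the only content is recognizing the two structural facts above. The subtle point to flag is that associativity enters precisely through the commutativity of left and right multiplications, which is exactly what makes the two mixed terms cancel. If one prefers to bypass the operator formalism, the same conclusion follows by writing $r=\sum_i a_i\otimes b_i$ and expanding both sides into their eight terms; there the only nuisance is tracking signs and which tensor slot each factor occupies after $\sigma$, but every term on one side matches a term on the other after relabeling, again hinging on $L(x)R(y)=R(y)L(x)$.
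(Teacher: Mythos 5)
Your proof is correct. The paper offers no argument for this proposition beyond declaring it straightforward, so there is nothing to diverge from; your write-up simply supplies the verification the paper omits. Your two structural observations both check out: conjugation by $\sigma$ indeed sends $\mathcal{L}_z=L(z)\otimes id-id\otimes R(z)$ to $\mathcal{M}_z=id\otimes L(z)-R(z)\otimes id$ and vice versa, and $\mathcal{L}_y\mathcal{M}_x-\mathcal{M}_x\mathcal{L}_y$ collapses to $-[L(y)R(x)-R(x)L(y)]\otimes id-id\otimes[R(y)L(x)-L(x)R(y)]$, which vanishes exactly by associativity; together these give $\mathcal{L}_y\mathcal{M}_x r=\sigma\mathcal{L}_x\mathcal{M}_y(\sigma r)$ and hence the pointwise identity $\mathcal{L}_y\Delta(x)+\sigma[\mathcal{L}_x\Delta(y)]=\sigma\bigl[\mathcal{L}_x\mathcal{M}_y(r+\sigma(r))\bigr]$, from which the equivalence follows since $\sigma$ is injective. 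This is tidier than the brute-force eight-term expansion, because it isolates where associativity actually enters.
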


\begin{proof}
It is straightforward.\end{proof}

Combining Proposition 2.3.3 and Proposition 2.3.4, we have the
following conclusion.

\begin{theorem} Let $A$ be an associative algebra and $r\in A\otimes A$. Then
the linear map $\Delta$ defined by equation {\rm (2.3.1)} induces an
associative algebra structure on $A^*$ such that $(A,A^*)$ is an
antisymmetric infinitesimal bialgebra if and only if equations {\rm
(2.3.4)} and {\rm (2.3.5)} are satisfied.
\end{theorem}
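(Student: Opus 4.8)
The plan is to recognize that this theorem is simply the conjunction of the two immediately preceding propositions, so the real task is only to unwind the definition of an antisymmetric infinitesimal bialgebra in the coboundary case. Recall from Definition 2.2.5 that $(A,\Delta)$ is an antisymmetric infinitesimal bialgebra precisely when (a) $\Delta^*$ defines an associative product on $A^*$, and (b) $\Delta$ satisfies both equation (2.2.7) and equation (2.2.8). When $\Delta$ has the coboundary form (2.3.1), the 1-cocycle identity (2.2.7) holds automatically, as already observed in the discussion following Definition 2.3.1 (indeed $\Delta$ is then a principal derivation for the bimodule $( id \otimes L, R\otimes id)$). Hence only two conditions remain to be verified: condition (1), that $\Delta^*$ is associative, and condition (2), that $\Delta$ satisfies (2.2.8).

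First I would invoke Proposition 2.3.3, which asserts that condition (1) is equivalent to equation (2.3.4). Next I would invoke Proposition 2.3.4, which asserts that condition (2) is equivalent to equation (2.3.5). Since these two equivalences constrain $r$ independently, conjoining them gives that $\Delta$ of the form (2.3.1) makes $(A,A^*)$ an antisymmetric infinitesimal bialgebra if and only if both (2.3.4) and (2.3.5) hold, which is exactly the assertion. The argument is therefore purely formal once the two propositions are in hand.

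There is essentially no genuine obstacle at this step, as all of the computational content has been absorbed into Propositions 2.3.3 and 2.3.4. The only conceptual point worth flagging is that one must confirm that the cocycle condition (2.2.7) contributes no further constraint in the coboundary setting, so that the characterization really does split cleanly into the two stated equations. If one instead had to establish those propositions directly, the work would lie in expanding the dual product $\Delta^*(a^*\otimes b^*)$ and the associativity defect of $\Delta^*$ in terms of $r$, and in matching the resulting tensor expressions against the combinations $r_{12}r_{13}+r_{13}r_{23}-r_{23}r_{12}$ and $r+\sigma(r)$; there the care needed is in bookkeeping the left and right multiplication operators $L(x)$, $R(x)$ and the exchange operator $\sigma$ when transposing to $A^*$. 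Granting those results, the present theorem follows immediately by combination.
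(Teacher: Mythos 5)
Your proposal matches the paper's own argument exactly: the paper notes that $\Delta$ of the coboundary form (2.3.1) automatically satisfies (2.2.7), and then obtains the theorem by "combining Proposition 2.3.3 and Proposition 2.3.4," which is precisely your decomposition into conditions (1) and (2). The proof is correct and takes essentially the same (purely formal) route.
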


\begin{theorem} Let $(A,\Delta_A)$ be an antisymmetric
infinitesimal bialgebra. Then there is a canonical antisymmetric
infinitesimal bialgebra structure on the direct sum $A\oplus A^*$ of
the underlying vector spaces of $A$ and $A^*$ such that both the
inclusions $i_1:A\rightarrow A\oplus A^*$ and $i_2:A^*\rightarrow
A\oplus A^*$ into the two summands are homomorphisms of
antisymmetric infinitesimal bialgebras.  Here the antisymmetric
infinitesimal bialgebra structure on $A^*$ is $(A^*,-\beta_{A^*})$,
where $\beta_{A^*}:A^*\rightarrow A^*\otimes A^*$ is given in {\rm
Remark 2.2.4.}\end{theorem}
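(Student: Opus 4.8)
The plan is to realize the desired structure on $D:=A\oplus A^*$ as a \emph{coboundary} antisymmetric infinitesimal bialgebra. First I would equip $D$ with the associative algebra structure coming from the double construction of Frobenius algebra: since $(A,\Delta_A)$ is an antisymmetric infinitesimal bialgebra, Corollary 2.2.5 supplies the matched pair $(A,A^*,R_\cdot^*,L_\cdot^*,R_\circ^*,L_\circ^*)$ and hence the Frobenius algebra $(A\bowtie A^*,\mathcal B)$, in which $\mathcal B$ is the nondegenerate symmetric invariant form (1.1.1) and $A$, $A^*$ are subalgebras. Fixing dual bases $\{e_i\}$ of $A$ and $\{e_i^*\}$ of $A^*$, I would then define the comultiplication on $D$ by the coboundary formula (2.3.1) applied to the canonical element
$$r=\sum_i e_i\otimes e_i^*\in D\otimes D,$$
that is $\Delta_D(u)=(\mathrm{id}\otimes L(u)-R(u)\otimes\mathrm{id})r$, with $L,R$ the multiplications of $D$. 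By Theorem 2.3.5 it then suffices to check that this $r$ satisfies equations (2.3.4) and (2.3.5).

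Equation (2.3.5) I expect to be the easy half. The point is that $r+\sigma(r)=\sum_i(e_i\otimes e_i^*+e_i^*\otimes e_i)$ is precisely the Casimir element of the nondegenerate symmetric form $\mathcal B$ on $D$, because $\{e_i\}\cup\{e_j^*\}$ is self-dual with respect to $\mathcal B$ by (1.1.1). The invariance $\mathcal B(uv,w)=\mathcal B(u,vw)$ is equivalent to the standard Casimir identity $(\mathrm{id}\otimes L(u)-R(u)\otimes\mathrm{id})(r+\sigma(r))=0$ for all $u\in D$, obtained by pairing one tensor slot with an arbitrary element and using invariance. Since the inner operator occurring in (2.3.5) already annihilates $r+\sigma(r)$, equation (2.3.5) holds trivially.

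The main obstacle is equation (2.3.4), which by Proposition 2.3.3 amounts to controlling $P:=r_{12}r_{13}+r_{13}r_{23}-r_{23}r_{12}$. Here I would prove the stronger statement that $r$ is an actual solution of the associative Yang--Baxter equation in $D$, i.e. $P=0$; equation (2.3.4) is then immediate. This is where the real work lies: one expands each of the three products in $D\otimes D\otimes D$ using the multiplication (2.1.10) of $A\bowtie A^*$ — which mixes $\cdot$, $\circ$ and the four actions $R_\cdot^*,L_\cdot^*,R_\circ^*,L_\circ^*$ — and collects the terms lying in each graded component of $D^{\otimes 3}$. The cancellations in each component are exactly the matched pair relations (2.2.1)--(2.2.2), equivalently the defining conditions (2.2.7)--(2.2.8) of $(A,\Delta_A)$ together with their duals from Remark 2.2.4. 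I expect this bookkeeping, rather than any conceptual difficulty, to be the bulk of the proof.

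Finally I would verify the homomorphism property by computing $\Delta_D$ on each summand. For $x\in A$ one finds $\Delta_D(x)=\sum_i e_i\otimes L_\circ^*(e_i^*)x$, since the term $\sum_i e_i\otimes R_\cdot^*(x)e_i^*$ cancels against $\sum_i(e_i\cdot x)\otimes e_i^*$ by the duality of the bases; a short computation in coordinates then identifies this with $\Delta_A(x)$, so $i_1$ is a homomorphism. Dually, for $a^*\in A^*$ the two terms valued in $A\otimes A^*$ cancel and one is left with $\Delta_D(a^*)=-\sum_i R_\cdot^*(e_i)a^*\otimes e_i^*$, which pairs against $x\otimes y$ to give $-\langle a^*,x\cdot y\rangle$; hence $\Delta_D|_{A^*}=-\beta_{A^*}$ with $\beta_{A^*}$ as in Remark 2.2.4, so $i_2$ is a homomorphism onto the antisymmetric infinitesimal bialgebra $(A^*,-\beta_{A^*})$, which is indeed one by Example 2.2.8 since scaling the comultiplication by $-1$ preserves the linear conditions (2.2.7)--(2.2.8). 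This produces the canonical structure asserted in the theorem.
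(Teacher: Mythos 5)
Your proposal is correct and takes essentially the same route as the paper's own proof: the same canonical element $r=\sum_i e_i\otimes e_i^*$ in the double $A\bowtie A^*$, the same coboundary comultiplication (2.3.1), verification of (2.3.5) via the stronger identity $(\mathrm{id}\otimes L(u)-R(u)\otimes \mathrm{id})(r+\sigma(r))=0$ and of (2.3.4) by showing $r$ actually solves the associative Yang--Baxter equation in the double, and the same computations identifying $\Delta_D|_A=\Delta_A$ and $\Delta_D|_{A^*}=-\beta_{A^*}$. The only deviation is that you derive the identity for $r+\sigma(r)$ conceptually from the invariance of $\mathcal{B}$ (a Casimir-element argument), where the paper does a direct basis computation; this is a harmless, arguably cleaner, substitution.
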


\begin{proof} Let $r\in A\otimes A^*\subset (A\oplus A^*)\otimes (A\oplus
A^*)$ correspond to the identity map $id:A\rightarrow A$. Let $\{
e_1,\cdots,e_n\}$ be a basis of $A$ and $\{ e_1^*,\cdots, e_n^*\}$
be its dual basis. Then $r=\sum\limits_{i=1}^n e_i\otimes e_i^*$.
Suppose that the associative algebra structure ``$*$'' on $A\oplus
A^*$ is given by ${\mathcal A}{\mathcal
D}(A)=A\bowtie^{R_\cdot^*,L_\cdot^*}_{R_\circ^*,L^*_\circ}A^*$. Then
by Theorem 2.1.4, we have (for any $x,y\in A, a^*,b^*\in A^*$)
$$x*y=x\cdot y,\;\;a^**b^*=a^*\circ b^*,\;x*a^*=R_\cdot^*(x)a^*+L_\circ^*(a^*)x,\;
a^**x=R_\circ (a^*)x+L^*_\cdot(x)a^*.$$  If $r$ satisfies equations
(2.3.4) and (2.3.5), then
$$\Delta_{{\mathcal A}{\mathcal D}}(u)=( id \otimes L(u)-R(u)\otimes  id )r,\;\;\forall\; u\in {\mathcal A}{\mathcal D}(A),$$
induces an antisymmetric infinitesimal bialgebra structure on
${\mathcal A}{\mathcal D}(A)$.

In fact, for equation (2.3.5), we prove a little stronger conclusion
(for any $\mu\in {\mathcal A}{\mathcal D}(A)$) :
$$( id \otimes L(\mu)-R(\mu)\otimes  id )(r+\sigma(r))=\sum_i (e_i\otimes \mu*e_i^*+e_i\otimes \mu*e_i-e_i*\mu\otimes
e_i^*-e_i*\mu\otimes e_i)=0. \eqno (2.3.6)$$ If $\mu=e_j$, then
\begin{eqnarray*}
\sum_i e_i\otimes e_j*e_i^*&=&\sum_{m}e_m\cdot e_j\otimes
e_m^*+\sum_{i,m}\langle  e_i^*\circ e_m^*,e_j\rangle   e_i\otimes
e_m;\;
\sum_{i} e_i^*\otimes e_j*e_i=\sum_i e_i^*\otimes e_j\cdot e_i;\\
\sum_i e_i*e_j\otimes e_i^*&=&\sum_i e_i\cdot e_j\otimes e_i^*;\;
e_i^**e_j\otimes e_i=\sum_{i,m} \langle  e_j,e_m^*\circ e_i^*\rangle
e_m\otimes e_i+\sum_{m} e_m^*\otimes e_j\cdot e_m.
\end{eqnarray*}
Hence equation (2.3.6) holds for $\mu=e_j$ by exchanging some
indices. Similarly, equation (2.3.6) holds for $\mu=e_j^*$.
Therefore equation (2.3.5) holds.  Furthermore,
$$r_{12}r_{13}+r_{13}r_{23}-r_{23}r_{12}= \sum_{i,j}\{e_j\otimes
e_i*e_j^*\otimes e_i^*-e_j\cdot e_i\otimes e_j^*\otimes
e_i^*-e_i\otimes e_j\otimes e_i^*\circ e_j^*\}.$$ Since
$e_i*e_j^*=\sum\limits_m (\langle  e_j^*,e_m\cdot e_i\rangle
e_m^*+\langle e_j^*\circ e_m^*,e_i\rangle   e_m),$ we show that
$r_{12}r_{13}+r_{13}r_{23}-r_{23}r_{12}=0$. So ${\mathcal
A}{\mathcal D}(A)$ is an antisymmetric infinitesimal bialgebra.

For $e_i\in A$, we have
\begin{eqnarray*}
\Delta_{{\mathcal A}{\mathcal D}}(e_i)
&=&\sum_{m,k}\{  \langle  e_m^*,e_k\cdot e_i\rangle   e_m\otimes e_k^* + \langle  e_m^*\circ
e_k^*,e_i\rangle   e_m\otimes e_k -\langle  e_m^*.e_k\cdot e_i\rangle   e_m\otimes e_k^*\}\\
&=&\sum_{m,k}\langle  e_m^*\circ e_k^*,e_i\rangle   e_m\otimes
e_k=\Delta_A (e_i).
\end{eqnarray*}
Therefore the inclusion $i_1:A\rightarrow A\oplus A^*$ is a
homomorphism of antisymmetric infinitesimal bialgebras. Similarly,
the inclusion $i_2:A^*\rightarrow A\oplus A^*$ is also a
homomorphism of antisymmetric infinitesimal bialgebras since
$\Delta_{{\mathcal A}{\mathcal D}}(e_i^*)=-\beta_{A^*}(e_i^*)$,
where $\beta_{A^*}$ is given in Remark 2.2.4.\end{proof}

\begin{defn}
{\rm Let $(A,A^*)$ be an antisymmetric infinitesimal bialgebra. With
the antisymmetric infinitesimal bialgebra structure given in Theorem
2.3.6, $A\oplus A^*$ is called  an {\it associative double} of $A$.
We denote it by ${\mathcal A}{\mathcal D}(A)$.}
\end{defn}

\begin{remark}{\rm If we use the opposite algebra, then Theorem 2.3.6 and its proof
overlap [A3, Theorem 5.9 and Proposition 5.10] partly. Moreover, the
associative double ${\mathcal A}{\mathcal D}(A)$ is a balanced
Drinfeld double which was denoted by $D_b(A)$ in \cite{A3}. }
\end{remark}

\begin{coro}
Let $(A,A^*)$ be an antisymmetric infinitesimal bialgebra. Then the
associative double ${\mathcal A}{\mathcal D}(A)$ of $A$ is an
antisymmetric infinitesimal bialgebra and it is a symmetric
Frobenius algebra with the bilinear form given by equation {\rm
(1.1.1)}.\end{coro}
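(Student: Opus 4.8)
The plan is to obtain both assertions as immediate consequences of results already in hand, with essentially no new computation: the first from Theorem 2.3.6 together with Definition 2.3.7, and the second from Corollary 2.2.5.

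For the first assertion, I would simply invoke Theorem 2.3.6. That theorem equips the vector-space direct sum $A\oplus A^*$ with a canonical comultiplication $\Delta_{{\mathcal A}{\mathcal D}}$ under which it becomes an antisymmetric infinitesimal bialgebra, and Definition 2.3.7 is precisely the declaration that this bialgebra is what we denote ${\mathcal A}{\mathcal D}(A)$. Hence this half of the corollary is a restatement of Theorem 2.3.6 and requires nothing further.

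For the second assertion, I would recall from the proof of Theorem 2.3.6 that the associative algebra underlying ${\mathcal A}{\mathcal D}(A)$ is, by construction, $A\bowtie^{R_\cdot^*,L_\cdot^*}_{R_\circ^*,L_\circ^*}A^*$, the product (2.1.10) built from the matched-pair data $(A,A^*,R_\cdot^*,L_\cdot^*,R_\circ^*,L_\circ^*)$. Since by hypothesis $(A,A^*)$ is an antisymmetric infinitesimal bialgebra, condition (3) of Corollary 2.2.5 holds, and therefore so does condition (1): there is a double construction of Frobenius algebra associated to $(A,\cdot)$ and $(A^*,\circ)$. By the definition of such a double construction given just before Theorem 2.2.1, this is exactly the statement that $A\bowtie A^*$, endowed with the natural symmetric bilinear form ${\mathcal B}$ of equation (1.1.1), is a symmetric Frobenius algebra, which is the claim.

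The one point deserving attention, and the only place a genuine gap could hide, is the coincidence of the two ``doubles'': the associative product on $A\oplus A^*$ produced by Theorem 2.3.6 must be literally the algebra $A\bowtie A^*$ whose Frobenius property Corollary 2.2.5 supplies. This is immediate, since both are the product (2.1.10) attached to the same bimodule maps $R_\cdot^*,L_\cdot^*,R_\circ^*,L_\circ^*$; once this identification is noted, the corollary follows at once, and I expect no real obstacle.
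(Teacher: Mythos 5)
Your proposal is correct and matches the paper's intent exactly: the corollary is stated there without proof, as an immediate consequence of Theorem 2.3.6 (which supplies the antisymmetric infinitesimal bialgebra structure named ${\mathcal A}{\mathcal D}(A)$ in Definition 2.3.7) and of Corollary 2.2.5 together with Theorem 2.2.1 (which give invariance of the form (1.1.1) on $A\bowtie^{R_\cdot^*,L_\cdot^*}_{R_\circ^*,L_\circ^*}A^*$). Your observation that the underlying associative product in Theorem 2.3.6 is literally this same matched-pair product (2.1.10) is precisely the identification the paper takes for granted, so nothing is missing.
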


\subsection{Associative Yang-Baxter equation and its properties}

\begin{coro} Let $A$ be an associative algebra and $r\in A\otimes A$.
Suppose that $r$ is antisymmetric. Then the map $\Delta$ defined by
equation {\rm (2.3.1)} induces an associative algebra structure on
$A^*$ such that $(A,A^*)$ is an antisymmetric infinitesimal
bialgebra if
$$r_{12}r_{13}+r_{13}r_{23}-r_{23}r_{12}=0.\eqno (2.4.1)$$
\end{coro}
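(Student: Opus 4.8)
The plan is to reduce everything to Theorem 2.3.5, which already establishes the equivalence: the map $\Delta$ of equation (2.3.1) makes $(A,A^*)$ an antisymmetric infinitesimal bialgebra if and only if equations (2.3.4) and (2.3.5) both hold. Thus it suffices to verify that the two hypotheses here---antisymmetry of $r$ together with equation (2.4.1)---imply each of (2.3.4) and (2.3.5) separately.

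First I would dispose of equation (2.3.5). Since $r$ is antisymmetric, by definition $\sigma(r)=-r$, so $r+\sigma(r)=0$. But the left-hand side of (2.3.5) is obtained by applying the linear operator $[L(x)\otimes id-id\otimes R(x)][id\otimes L(y)-R(y)\otimes id]$ to $r+\sigma(r)$, and applying any linear map to the zero element yields zero. Hence (2.3.5) holds automatically for all $x,y\in A$, using no hypothesis beyond antisymmetry.

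For equation (2.3.4), I would observe that (2.4.1) is strictly stronger than what (2.3.4) requires: (2.3.4) only asks that the operator $id\otimes id\otimes L(x)-R(x)\otimes id\otimes id$ annihilate the element $r_{12}r_{13}+r_{13}r_{23}-r_{23}r_{12}$, whereas (2.4.1) asserts that this element is itself zero. Applying the operator to zero again gives zero, so (2.3.4) follows immediately. With both (2.3.4) and (2.3.5) in hand, Theorem 2.3.5 finishes the argument.

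There is essentially no hard step here; the only point to be careful about is the sign convention in the definition of antisymmetry, namely confirming that ``$r$ antisymmetric'' really means $\sigma(r)=-r$ so that $r+\sigma(r)$ vanishes, and matching the precise bracketing of the operators in (2.3.4)--(2.3.5) against their definitions. The substance of the corollary is simply that antisymmetry trivializes the symmetric-part condition (2.3.5), while the associative Yang-Baxter equation (2.4.1) is exactly the pointwise vanishing that forces (2.3.4). Note also that the statement claims only a sufficient condition (``if''), so I need not, and should not, attempt to reverse either implication.
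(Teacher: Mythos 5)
Your proof is correct and is precisely the argument the paper intends: the corollary follows immediately from Theorem 2.3.5, since antisymmetry of $r$ gives $r+\sigma(r)=0$ (so equation (2.3.5) holds trivially) and equation (2.4.1) makes the element annihilated in (2.3.4) identically zero. The paper gives no separate proof for this corollary exactly because this two-line reduction is the intended one, so your write-up matches it.
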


\begin{defn}{\rm Let $A$ be an associative algebra and $r\in A\otimes A$.
Equation (2.4.1) is called {\it associative Yang-Baxter equation in
$A$}.}\end{defn}

\begin{remark}{\rm In \cite{A1} and \cite{A3}, the associative Yang-Baxter
equation is given as
$$r_{13}r_{12}+r_{23}r_{13}-r_{12}r_{23}=0.\eqno (2.4.2)$$
Note that equation (2.4.1) is equation (2.4.2) in the opposite
algebra. Moreover, if $r$ satisfies $(L(x)\otimes id \otimes  id -
id \otimes  id \otimes R(x))(r_{12}+r_{21})=0$, then ([A3, Lemma
3.4])
$$\sigma_{13}
(r_{12}r_{13}+r_{13}r_{23}-r_{23}r_{12})=r_{13}r_{12}+r_{23}r_{13}-r_{12}r_{23},\eqno
(2.4.3)$$ where the linear map $\sigma_{13}:A\otimes A\otimes
A\rightarrow A\otimes A\otimes A$ is given by $\sigma_{13}(x\otimes
y\otimes z)=z\otimes y\otimes x$ for any $x,y,z\in A$. In
particular, when $r$ is antisymmetric, the above two associative
Yang-Baxter equations are equivalent.}\end{remark}

In order to be self-contained, in the following we give some
properties of associative Yang-Baxter equation from the point of
view of Frobenius algebras, although some of them have already been
given in \cite{A3}. Let $A$ be a vector space. For any $r\in
A\otimes A$, $r$ can be regarded as a map from $A^*$ to $A$ in the
following way:
$$\langle  u^*\otimes v^*,r\rangle   =\langle  u^*,r(v^*)\rangle   ,\;\;\forall\; u^*,v^*\in A^*.\eqno (2.4.4)$$

\begin{prop} Let $(A,\cdot)$ be an associative algebra and $r\in A\otimes
A$ be an antisymmetric solution of associative Yang-Baxter equation
in $A$. Then the associative algebra structure on the associative
double ${\mathcal A}{\mathcal D}(A)$ is given from the products in
$A$ as follows.

{\rm (a)} $a^**b^*=a^*\circ b^*=R^*_\cdot
(r(a^*))b^*+L_\cdot^*(r(b^*))a^*$, for any $a^*,b^*\in A^*$;\hfill
{\rm (2.4.5)}

{\rm (b)} $x*a^*=x\cdot r(a^*)-r(R_\cdot^*(x)a^*) + R_\cdot^*
(x)a^*$, for any $x\in A$, $a^*\in A^*$;\hfill {\rm (2.4.6)}

{\rm (c)} $a^**x=r(a^*)\cdot x-r(L_\cdot^* (x)
a^*)+L_\cdot^*(x)a^*$, for any $x\in A$, $a^*\in A^*$.\hfill {\rm
(2.4.7)}
\end{prop}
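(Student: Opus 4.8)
The plan is to prove all three formulas by one essentially bookkeeping computation: unwind each product in ${\mathcal A}{\mathcal D}(A)=A\bowtie^{R_\cdot^*,L_\cdot^*}_{R_\circ^*,L_\circ^*}A^*$ according to the matched-pair product (2.1.10) as recalled in the proof of Theorem 2.3.6, and then rewrite everything in terms of the map $r:A^*\to A$ of (2.4.4) using the coboundary formula (2.3.1) together with the antisymmetry $\sigma(r)=-r$. Throughout I would write $r=\sum_i x_i\otimes y_i$, so that $r(a^*)=\sum_i\langle a^*,y_i\rangle x_i$, and I would use that the hypothesis that $r$ solves the associative Yang-Baxter equation (2.4.1) is exactly what guarantees, via Corollary 2.4.1, that $(A,A^*)$ is an antisymmetric infinitesimal bialgebra and hence that ${\mathcal A}{\mathcal D}(A)$ and its product ``$*$'' are defined; the identities (a)--(c) themselves are then consequences of the definitions once the setting is in place.

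For (a), the product $a^*\circ b^*$ is by construction $\Delta^*(a^*\otimes b^*)$, where $\Delta$ is given by (2.3.1). First I would evaluate $\langle a^*\circ b^*,z\rangle=\langle a^*\otimes b^*,\Delta(z)\rangle$ for an arbitrary $z\in A$, using
\[ \Delta(z)=( id \otimes L(z)-R(z)\otimes id )r=\sum_i\big(x_i\otimes z y_i-x_i z\otimes y_i\big). \]
This produces two scalar sums. Then I would recognize, using the defining relations $\langle R_\cdot^*(w)b^*,z\rangle=\langle zw,b^*\rangle$ and $\langle L_\cdot^*(w)a^*,z\rangle=\langle wz,a^*\rangle$ coming from (2.1.3), that $\langle R_\cdot^*(r(a^*))b^*+L_\cdot^*(r(b^*))a^*,z\rangle$ expands into the same two sums once one of them is rewritten by applying the antisymmetry of $r$, which exchanges the two tensor legs at the cost of a sign. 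Matching the expressions for all $z$ gives (a).

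For (b) and (c) I would start from the mixed products recalled in Theorem 2.3.6, namely $x*a^*=R_\cdot^*(x)a^*+L_\circ^*(a^*)x$ and $a^**x=R_\circ^*(a^*)x+L_\cdot^*(x)a^*$. In each case the $A^*$-component ($R_\cdot^*(x)a^*$, respectively $L_\cdot^*(x)a^*$) already appears unchanged in the claimed formula, so the only content is to identify the $A$-component. For (b) I would compute $L_\circ^*(a^*)x\in A$ by pairing against an arbitrary $b^*\in A^*$: by (2.1.3) applied to the algebra $(A^*,\circ)$ one has $\langle L_\circ^*(a^*)x,b^*\rangle=\langle a^*\circ b^*,x\rangle$, which by part (a), or directly by the formula for $\Delta$ above, becomes an explicit scalar expression in $r$; the same antisymmetry manipulation then shows this equals $\langle x\cdot r(a^*)-r(R_\cdot^*(x)a^*),b^*\rangle$, and since $b^*$ is arbitrary this yields $L_\circ^*(a^*)x=x\cdot r(a^*)-r(R_\cdot^*(x)a^*)$. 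Part (c) is entirely parallel, using $\langle R_\circ^*(a^*)x,b^*\rangle=\langle b^*\circ a^*,x\rangle$ together with $r(L_\cdot^*(x)a^*)=\sum_i\langle a^*,x y_i\rangle x_i$.

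I expect no conceptual obstacle: the whole argument is the careful unwinding of the several dual-module definitions (2.1.3) and the systematic use of the antisymmetry $\sigma(r)=-r$ to bring the raw $\Delta^*$-expressions into the symmetric operator form on the right-hand sides. The one place demanding care is precisely this sign and index bookkeeping, that is, keeping track of which tensor leg of $r$ is contracted against which functional when antisymmetry is invoked, and making sure the two scalar sums produced by $\Delta$ are matched with the correct terms of $R_\cdot^*(r(a^*))b^*+L_\cdot^*(r(b^*))a^*$. Once the pairing conventions are fixed consistently, (a)--(c) fall out simultaneously from the single computation.
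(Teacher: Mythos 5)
Your proposal is correct and follows essentially the same route as the paper: both arguments simply unwind the coboundary formula (2.3.1), the dualization conventions (2.1.3) and (2.4.4), and the matched-pair product (2.1.10), using the antisymmetry of $r$ to exchange tensor legs and match terms. The only difference is presentational --- the paper fixes a basis and computes with structure constants (disposing of (b) and (c) with ``similarly''), while you work basis-free by pairing against arbitrary elements and deduce (b), (c) from (a); the underlying computation is the same.
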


\begin{proof} Let $\{ e_1,\cdots,e_n\}$ be a basis of $A$ and $\{
e_1^*,\cdots, e_n^*\}$ be its dual basis. Suppose that $e_i\cdot
e_j=\sum_{k}c_{ij}^ke_k$ and $r=\sum_{i,j}a_{ij}e_i\otimes e_j$,
where $a_{ij}=-a_{ji}$. Then for any $i$, we have
$$\Delta(e_i)=\sum_{\alpha,\beta,l}a_{\alpha\beta}(c_{i\beta}^le_\alpha\otimes
e_l-c_{\alpha i}^l e_l\otimes
e_\beta)=\sum_{\alpha,\beta}\sum_l(a_{\alpha
l}c_{il}^\beta-a_{l\beta}c_{li}^\alpha)e_\alpha\otimes e_\beta.$$
Therefore we show that (for any $i,j$)
\begin{eqnarray*}
e_i^*\circ e_j^*&=&\sum_{l,t}(a_{il}c_{tl}^j-a_{lj}c_{lt}^i)e_t^*=
\sum_{l,t}( a_{il}\langle  e_t\cdot e_l,e_j^*\rangle   -a_{lj}\langle  e_l\cdot e_t,e_i^*\rangle   )e_t^*\\
&=& \sum_{t}(\langle  e_t\cdot r(e_i^*),e_j^*\rangle   +\langle
r(e_j^*)\cdot e_t,e_i^*\rangle   )e_t^*=R^*_\cdot
(r(e_i^*))e_j^*+L_\cdot^*(r(e_j^*))e_i^*.
\end{eqnarray*}
Similarly, equations (2.4.6) and (2.4.7) hold.
\end{proof}

\begin{theorem} {\rm ([A3, Proposition 2.1])}\quad Let $A$ be an associative
algebra and $r\in A\otimes A$. Suppose that $r$ is antisymmetric and
nondegenerate. Then $r$ is a solution of associative Yang-Baxter
equation in $A$ if and only if the inverse of the isomorphism
$A^*\rightarrow A$ induced by $r$, regarded as a bilinear form
$\omega$ on $A$ (that is, $\omega(x,y)=\langle  r^{-1}x,y\rangle$
for any $x,y\in A$), is a Connes cocycle.\end{theorem}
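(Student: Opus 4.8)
The plan is to use the nondegeneracy of $r$ to convert the tensor identity (2.4.1) into a scalar identity obtained by pairing against three arbitrary elements of $A^*$, and then to recognize the resulting expression as, up to an overall sign, the Connes cocycle condition (1.0.2).

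First I would record the basic dictionary. Since $r$ is nondegenerate, the induced map $r:A^*\to A$ of (2.4.4) is an isomorphism, so $r^{-1}:A\to A^*$ exists and $\omega(x,y)=\langle r^{-1}x,y\rangle$ is well defined. Before anything else I would verify that $\omega$ is antisymmetric: writing $x=r(u^*),\,y=r(v^*)$ gives $\omega(x,y)=\langle u^*,r(v^*)\rangle$, and the antisymmetry of $r$ (that is, $r+\sigma(r)=0$) translates into $\langle r(u^*),v^*\rangle=-\langle r(v^*),u^*\rangle$, whence $\omega(x,y)=-\omega(y,x)$. This is needed for $\omega$ to even be a candidate Connes cocycle.

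The heart of the argument is a single contraction computation. Writing $r=\sum_i x_i\otimes y_i$, I would pair the element $P:=r_{12}r_{13}+r_{13}r_{23}-r_{23}r_{12}$ against an arbitrary $u^*\otimes v^*\otimes w^*\in A^*\otimes A^*\otimes A^*$, using the explicit formulas (1.1.3). The key observation that keeps this clean is that contracting a $y$-slot of $r$ against a functional reproduces $r$ of that functional, $\sum_i\langle y_i,u^*\rangle x_i=r(u^*)$, whereas contracting an $x$-slot reproduces $-r$, namely $\sum_i\langle x_i,u^*\rangle y_i=-r(u^*)$, this last identity being exactly the antisymmetry of $r$. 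Carrying out the three contractions and setting $a=r(u^*),\,b=r(v^*),\,c=r(w^*)$, I expect to obtain
$$\langle P,\,u^*\otimes v^*\otimes w^*\rangle=\langle bc,r^{-1}a\rangle+\langle ab,r^{-1}c\rangle+\langle ca,r^{-1}b\rangle=\omega(a,bc)+\omega(c,ab)+\omega(b,ca).$$
Applying the antisymmetry of $\omega$ to each term rewrites this as $-\bigl(\omega(ab,c)+\omega(bc,a)+\omega(ca,b)\bigr)$, which is precisely minus the left-hand side of the Connes cocycle identity (1.0.2) with $(x,y,z)=(a,b,c)$.

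Finally I would invoke nondegeneracy twice. Since $r$ is an isomorphism, the pairing $\langle P,\,u^*\otimes v^*\otimes w^*\rangle$ vanishes for all $u^*,v^*,w^*$ if and only if $P=0$ in $A\otimes A\otimes A$, i.e. if and only if $r$ solves the associative Yang--Baxter equation; and as $u^*,v^*,w^*$ range over $A^*$ the elements $a,b,c$ range over all of $A$, so the vanishing of the displayed expression for all of them is equivalent to $\omega$ being a Connes cocycle. This yields the desired equivalence. The only genuine obstacle is bookkeeping: keeping the three contractions and their signs straight, in particular the sign flip coming from the $x$-slot contraction, and making sure the cyclic pattern $(ab,c),(bc,a),(ca,b)$ emerges correctly once the antisymmetry of $\omega$ is applied; everything else is formal.
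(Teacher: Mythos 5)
Your proposal is correct. Note that the paper itself gives no proof of this statement: it is quoted from Aguiar ([A3, Proposition 2.1]), so there is no internal argument to compare against; your write-up supplies a self-contained proof of exactly the kind one would expect. I checked the computation against the paper's conventions and it goes through: with $r(v^*)=\sum_i\langle v^*,y_i\rangle x_i$ from (2.4.4), antisymmetry gives $\sum_i\langle u^*,x_i\rangle y_i=-r(u^*)$, and pairing $P=r_{12}r_{13}+r_{13}r_{23}-r_{23}r_{12}$ against $u^*\otimes v^*\otimes w^*$ yields, with $a=r(u^*)$, $b=r(v^*)$, $c=r(w^*)$,
$$\langle P,\,u^*\otimes v^*\otimes w^*\rangle=\omega(a,bc)+\omega(c,ab)+\omega(b,ca)=-\bigl(\omega(ab,c)+\omega(bc,a)+\omega(ca,b)\bigr),$$
where the third term of $P$ contributes $-(-\omega(b,ca))$ because the $x$-slot contraction produces one sign flip. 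Your two uses of nondegeneracy (the pairing of $A^{\otimes 3}$ with $(A^*)^{\otimes 3}$ detects $P=0$, legitimate here since the paper assumes finite dimension, and surjectivity of $a,b,c$ over $A$) close the equivalence, and the preliminary check that $\omega$ is antisymmetric is exactly what makes the cocycle condition meaningful. The argument is also consistent with the surrounding text: it is the basis-free counterpart of the coordinate computations the paper does for the neighboring results (Theorem 2.4.7, Proposition 2.4.4), and it matches the route taken in [A3] up to the passage between the two equivalent forms (2.4.1) and (2.4.2) of the associative Yang-Baxter equation noted in Remark 2.4.3.
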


\begin{coro} Let $(A,\cdot)$ be an associative algebra and $r\in A\otimes A$
be a nondegenerate antisymmetric solution of associative Yang-Baxter
equation in $A$. Suppose the associative algebra structure
$``\circ"$ on $A^*$ is induced by $r$ from equation {\rm (2.4.5)}.
Then we have
$$a^*\circ b^*=r^{-1}(r(a^*)\cdot r(b^*)),\;\;\forall a^*,b^*\in A^*.\eqno (2.4.8)$$
Therefore $r:A^*\rightarrow A$ is an isomorphism of associative
algebras.
\end{coro}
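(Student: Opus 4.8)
The plan is to reduce the corollary to the single homomorphism identity
$$r(a^*\circ b^*)=r(a^*)\cdot r(b^*),\qquad \forall\, a^*,b^*\in A^*.$$
Once this is established, the invertibility of $r$ (which holds since $r$ is nondegenerate, so that $r:A^*\to A$ is a linear isomorphism) allows me to apply $r^{-1}$ to both sides and obtain $a^*\circ b^*=r^{-1}(r(a^*)\cdot r(b^*))$, which is exactly equation (2.4.8). Moreover the homomorphism identity says precisely that the bijective linear map $r:(A^*,\circ)\to(A,\cdot)$ is a homomorphism of associative algebras, hence an isomorphism, which is the remaining assertion. So everything rests on proving the homomorphism identity.

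To prove it, I start from the explicit product formula (2.4.5), namely $a^*\circ b^*=R_\cdot^*(r(a^*))b^*+L_\cdot^*(r(b^*))a^*$. Applying $r$ to this equality shows that the homomorphism identity is equivalent to
$$r(a^*)\cdot r(b^*)=r\bigl(R_\cdot^*(r(a^*))b^*+L_\cdot^*(r(b^*))a^*\bigr),\qquad\forall\, a^*,b^*\in A^*.$$
This last equation is exactly the statement that $r$, regarded as a linear map $A^*\to A$, is an ${\mathcal O}$-operator of $(A,\cdot)$ associated to the bimodule $(R_\cdot^*,L_\cdot^*,A^*)$ (the latter is a bimodule by Lemma 2.1.2 and Example 2.1.3): it is the defining identity (1.2.1) with left action $l=R_\cdot^*$ and right action $L_\cdot^*$. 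As recalled in the introduction, an antisymmetric solution of the associative Yang-Baxter equation is precisely such an ${\mathcal O}$-operator, so this equation holds and the homomorphism identity follows.

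The one step that is not purely formal is the verification that an antisymmetric solution of (2.4.1) indeed satisfies the ${\mathcal O}$-operator identity above, and this is where the real content lies. I would carry it out in coordinates, reusing the computation already made in the proof of Proposition 2.4.5. Writing $r=\sum_{i,j}a_{ij}e_i\otimes e_j$ with $a_{ij}=-a_{ji}$ and $e_i\cdot e_j=\sum_k c_{ij}^k e_k$, one has $r(e_i^*)=\sum_\alpha a_{\alpha i}e_\alpha$, so the left-hand side of the ${\mathcal O}$-operator identity for $a^*=e_i^*,\ b^*=e_j^*$ expands into $\sum a_{\alpha i}a_{\beta j}c_{\alpha\beta}^{l}e_l$, while the right-hand side expands, through the formula $e_i^*\circ e_j^*=\sum_{l,t}(a_{il}c_{tl}^{j}-a_{lj}c_{lt}^{i})e_t^*$ obtained in Proposition 2.4.5 together with $r(e_t^*)=\sum_s a_{st}e_s$, into a cubic expression in the $a_{ij}$ involving one structure constant $c$. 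Matching the coefficient of each basis vector and using the associativity of ``$\cdot$'' together with the antisymmetry $a_{ij}=-a_{ji}$ reduces the required equality to the vanishing of $r_{12}r_{13}+r_{13}r_{23}-r_{23}r_{12}$, i.e. to the associative Yang-Baxter equation (2.4.1) itself. Thus the main (although essentially routine) obstacle is this index bookkeeping, which shows that antisymmetry of $r$ and equation (2.4.1) are exactly what force the homomorphism identity.
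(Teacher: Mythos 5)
Your proof is correct, but it takes a genuinely different route from the paper's own proof of this corollary. The paper argues via the Connes-cocycle characterization: by Theorem 2.4.5 ([A3, Proposition 2.1]), the nondegenerate antisymmetric solution $r$ gives a Connes cocycle $\omega(x,y)=\langle r^{-1}(x),y\rangle$, and then a short duality computation, $\langle a^*\circ b^*,x\rangle=\omega(r(a^*),r(b^*)\cdot x)+\omega(r(b^*),x\cdot r(a^*))=-\omega(x,r(a^*)\cdot r(b^*))=\langle r^{-1}(r(a^*)\cdot r(b^*)),x\rangle$, yields (2.4.8) directly; nondegeneracy is used from the very first step to define $\omega$. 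You instead reduce everything to the homomorphism identity $r(a^*\circ b^*)=r(a^*)\cdot r(b^*)$ and identify it, via (2.4.5), with the ${\mathcal O}$-operator identity for $r$ associated to $(R_\cdot^*,L_\cdot^*)$; that identity holds for \emph{every} antisymmetric solution of (2.4.1), which is precisely Theorem 2.4.7 of the paper, proved there by the same coordinate computation you sketch. There is no circularity, since Theorem 2.4.7 and Proposition 2.4.4 are established by independent coefficient computations. In effect you prove the paper's later Corollary 2.4.8 first (the paper itself notes that it ``extends Corollary 2.4.6'') and then invoke nondegeneracy only in the final inversion. What each approach buys: yours gives the stronger, more general statement (homomorphism for arbitrary, possibly degenerate, antisymmetric solutions) and avoids relying on the quoted result of Aguiar; the paper's is shorter granted Theorem 2.4.5 and makes transparent how the cyclic cocycle condition produces (2.4.8). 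Two cosmetic slips in your sketch: the explicit formula for $e_i^*\circ e_j^*$ comes from Proposition 2.4.4, not 2.4.5, and both sides of the ${\mathcal O}$-operator identity are quadratic, not cubic, in the coefficients $a_{ij}$ (each term carries exactly two factors of $a$ and one structure constant $c$).
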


\begin{proof} Set $\omega(x,y)=\langle
r^{-1}(x),y\rangle $ for any $x,y\in A$. Then $\omega$ is a Connes
cocycle of $A$. Hence
\begin{eqnarray*}
\langle  a^*\circ b^*,x\rangle   &=&\langle  r(b^*)\cdot x,
a^*\rangle   +\langle  x\cdot r(a^*),b^*\rangle
=\omega(r(a^*), r(b^*)\cdot x)+\omega(r(b^*),x\cdot r(a^*))\\
&=&-\omega(x,r(a^*)\cdot r(b^*))= \langle  r^{-1}(r(a^*)\cdot
r(b^*)),x\rangle,\;\;\forall\;  a^*,b^*\in A^*, x\in A.
\end{eqnarray*}
So equation (2.4.8) holds. Therefore $r$ is an isomorphism of
associative algebras. \end{proof}

Next we turn to the general antisymmetric solutions of associative
Yang-Baxter equation.

\begin{theorem} Let $(A,\cdot)$ be an associative algebra and $r\in A\otimes
A$ be antisymmetric. Then $r$ is a solution of associative
Yang-Baxter equation in $A$ if and only if $r$ satisfies
$$r(a^*)\cdot r(b^*)=r(R_\cdot^*(r(a^*))b^*+L_\cdot^*(r(b^*))a^*),\;\;\forall a^*,b^*\in A^*.\eqno (2.4.9)$$
\end{theorem}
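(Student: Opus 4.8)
The plan is to prove Theorem 2.4.8, namely that for an antisymmetric $r\in A\otimes A$, the associative Yang-Baxter equation $r_{12}r_{13}+r_{13}r_{23}-r_{23}r_{12}=0$ is equivalent to the operator identity (2.4.9). Let me work out the key correspondences before sketching the argument.

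Let me set up coordinates. With a basis $\{e_1,\ldots,e_n\}$ of $A$, dual basis $\{e_1^*,\ldots,e_n^*\}$, structure constants $e_i\cdot e_j=\sum_k c_{ij}^k e_k$, and $r=\sum_{i,j}a_{ij}e_i\otimes e_j$ with $a_{ij}=-a_{ji}$. The map $r:A^*\to A$ is $r(e_i^*)=\sum_j a_{ij}e_j$ by convention (2.4.4). The operators act by $R_\cdot^*(x)$ and $L_\cdot^*(x)$ as in the proof of Proposition 2.4.4.

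The cleanest approach is to unpack both sides of (2.4.9) in coordinates and show it reproduces (2.4.1) tensor-by-tensor. First, I would compute the left side $r(e_i^*)\cdot r(e_j^*)=\sum_{p,q,k}a_{ip}a_{jq}c_{pq}^k e_k$. Next, using $\langle R_\cdot^*(r(e_i^*))e_j^*,e_m\rangle=\langle e_m\cdot r(e_i^*),e_j^*\rangle$ and the analogous formula for $L_\cdot^*$, I would expand the right side $r(R_\cdot^*(r(a^*))b^*+L_\cdot^*(r(b^*))a^*)$ in the same basis. The point is that these expansions are exactly the coefficient expressions one obtains by pairing the three-tensor $r_{12}r_{13}+r_{13}r_{23}-r_{23}r_{12}$ against $e_i^*\otimes e_j^*\otimes e_k^*$, after using antisymmetry $a_{ij}=-a_{ji}$ to fold the appropriate indices together. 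Concretely, I would verify the identity

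\begin{equation*}
\langle (r_{12}r_{13}+r_{13}r_{23}-r_{23}r_{12}),\,e_i^*\otimes e_j^*\otimes e_k^*\rangle
=\langle r(e_i^*)\cdot r(e_j^*)-r(R_\cdot^*(r(e_i^*))e_j^*+L_\cdot^*(r(e_j^*))e_i^*),\,e_k^*\rangle
\end{equation*}

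coefficient by coefficient; since $r$ is nondegenerate on its image and the $e_k^*$ separate points, vanishing of the left side for all $i,j,k$ is equivalent to (2.4.9) holding for all $a^*,b^*$. A conceptually lighter alternative is to appeal directly to the general theory cited in the introduction: an antisymmetric $r$ satisfies the associative Yang-Baxter equation if and only if $r:A^*\to A$ is an $\mathcal{O}$-operator associated to the bimodule $(R_\cdot^*,L_\cdot^*)$, which by the defining equation (1.2.1) with $(l,r)=(R_\cdot^*,L_\cdot^*)$ reads precisely $r(a^*)\cdot r(b^*)=r(R_\cdot^*(r(a^*))b^*+L_\cdot^*(r(b^*))a^*)$, i.e.\ equation (2.4.9). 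I would present the direct coordinate computation as the self-contained proof and remark on this $\mathcal{O}$-operator interpretation afterward, since the introduction explicitly advertises that an antisymmetric solution of associative Yang-Baxter equation is an $\mathcal{O}$-operator associated to $(R_\cdot^*,L_\cdot^*)$.

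The main obstacle I anticipate is purely bookkeeping: keeping the three tensor slots of $r_{12}r_{13}$, $r_{13}r_{23}$, $r_{23}r_{12}$ aligned with the correct argument slots of $R_\cdot^*$ and $L_\cdot^*$, and invoking antisymmetry $a_{ij}=-a_{ji}$ in exactly the right places so that the $-r_{23}r_{12}$ term matches the sign and index pattern produced by the $L_\cdot^*$ contribution. Because $r$ is antisymmetric one does \emph{not} need nondegeneracy here (unlike Theorem 2.4.6), so the equivalence is a genuine bilinear identity rather than an inversion statement; the separation argument only uses that the dual basis vectors $e_k^*$ detect all components. Once the index matching is done correctly the equivalence is immediate, so the entire difficulty is confined to carrying out this one careful expansion, which parallels the computation already performed in the proof of Proposition 2.4.4.
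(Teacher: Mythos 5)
Your proposal is correct and follows essentially the same route as the paper: the paper's proof is exactly the coordinate computation you describe, expanding $r_{12}r_{13}+r_{13}r_{23}-r_{23}r_{12}$ against basis triples and identifying the resulting coefficient identity with the $e_m$-coefficients of $r(e_i^*)\cdot r(e_j^*)-r(R_\cdot^*(r(e_i^*))e_j^*+L_\cdot^*(r(e_j^*))e_i^*)$, with the $\mathcal{O}$-operator reading of (2.4.9) appearing afterward (Example 2.5.4) as a reformulation rather than an ingredient, just as you propose. One small caution: your convention $r(e_i^*)=\sum_j a_{ij}e_j$ has the opposite sign to the paper's convention (2.4.4), which gives $r(e_i^*)=\sum_k a_{ki}e_k$; this is harmless here because both (2.4.1) and (2.4.9) are invariant under $r\mapsto -r$, but the indices in your expansion should be kept consistent with whichever convention you fix.
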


\begin{proof} Let $\{ e_1,\cdots,e_n\}$ be a basis of $A$ and $\{
e_1^*,\cdots,e_n^*\}$ be its dual basis. Suppose that $e_i\cdot
e_j=\sum_k c_{ij}^k e_k$ and $r=\sum_{i,j}a_{ij}e_i\otimes e_j$,
$a_{ij}=-a_{ji}$. Hence $r(e_i^*)=\sum_k a_{ki}e_k$. Then $r$ is a
solution of  associative Yang-Baxter equation in $A$ if and only if
(for any $i,j,k$)
$$\sum_{m,l}\{ c_{kl}^ma_{ik}a_{jl}-c_{lk}^ia_{jl}a_{km}-c_{lk}^ja_{lm}a_{ik}\}=0.$$
The left-hand side of the above equation is just the coefficient of
$e_m$ in
$$r(e_i^*)\cdot r(e_j^*)-r(R_\cdot^*(r(e_i^*))e_j^*+L_\cdot^*(r(e_j^*))e_i^*).$$ Therefore the conclusion
follows.\end{proof}

Combining Proposition 2.4.4 and Theorem 2.4.7, we have the following
conclusion which extends Corollary 2.4.6.

\begin{coro} Let $(A,\cdot)$ be an associative algebra and $r\in A\otimes
A$ be an antisymmetric solution of associative Yang-Baxter equation
in $A$. Suppose the associative algebra structure $``\circ"$ on
$A^*$ is induced by $r$ from equation {\rm (2.4.5)}. Then we have
$$r(a^*\circ b^*)=r(a^*)\cdot r(b^*),\;\;\forall a^*,b^*\in A^*.\eqno (2.4.10)$$
Therefore $r:A^*\rightarrow A$ is an homomorphism of associative
algebras.
\end{coro}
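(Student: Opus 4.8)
The plan is to obtain (2.4.10) by simply applying the map $r$ to the defining formula (2.4.5) for the product $\circ$ and then recognizing the resulting expression via Theorem 2.4.7. Concretely, Proposition 2.4.4 tells us that the associative product on $A^*$ induced by $r$ is
$$a^*\circ b^*=R_\cdot^*(r(a^*))b^*+L_\cdot^*(r(b^*))a^*,\quad \forall\, a^*,b^*\in A^*.$$
Applying the linear map $r:A^*\to A$ to both sides gives
$$r(a^*\circ b^*)=r\bigl(R_\cdot^*(r(a^*))b^*+L_\cdot^*(r(b^*))a^*\bigr).$$

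Next I would invoke Theorem 2.4.7. Since $r$ is by hypothesis an antisymmetric solution of the associative Yang-Baxter equation, that theorem asserts precisely equation (2.4.9), namely
$$r(a^*)\cdot r(b^*)=r\bigl(R_\cdot^*(r(a^*))b^*+L_\cdot^*(r(b^*))a^*\bigr),\quad \forall\, a^*,b^*\in A^*.$$
Comparing the right-hand sides of the last two displays yields $r(a^*\circ b^*)=r(a^*)\cdot r(b^*)$, which is exactly (2.4.10). This says that $r$ intertwines the product $\circ$ on $A^*$ with the product $\cdot$ on $A$, so $r:(A^*,\circ)\to(A,\cdot)$ is a homomorphism of associative algebras, as claimed.

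There is essentially no obstacle here: the entire computational content has already been packaged into the two cited results, and the present corollary is a one-line substitution combining them. The only point requiring the slightest care is to confirm that the bracketed expression on the right of (2.4.5) is literally the argument of $r$ on the right of (2.4.9), so that the two may be identified termwise; this is immediate from the way both equations are written, and in particular no further use of the antisymmetry of $r$ is needed beyond what Proposition 2.4.4 and Theorem 2.4.7 already consume. Finally, I would remark on how this extends Corollary 2.4.6: there the additional hypothesis that $r$ is nondegenerate forces $r$ to be an isomorphism of associative algebras via (2.4.8), whereas here, dropping nondegeneracy, we retain exactly the homomorphism property (2.4.10) without invertibility.
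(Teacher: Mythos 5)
Your proof is correct and is exactly the paper's argument: the paper derives this corollary precisely by combining Proposition 2.4.4 (equation (2.4.5)) with Theorem 2.4.7 (equation (2.4.9)), just as you do. Nothing further is needed.
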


Recall that two Frobenius algebras $(A_1,{\mathcal B}_1)$ and
$(A_2,{\mathcal B}_2)$ are isomorphic if and only if there exists an
isomorphism of associative algebras $\varphi: A_1\rightarrow A_2$
such that
$${\mathcal B}_1(x,y)=\varphi^*{\mathcal B}_2(x,y)
={\mathcal B}_2(\varphi(x),\varphi (y)),\;\;\forall x,y\in A_1.\eqno
(2.4.11)$$

\begin{theorem} Let $(A,\cdot)$ be an associative algebra. Then as
Frobenius algebras, the Frobenius algebra
$(A\bowtie^{R^*_\cdot,L^*_\cdot}_{R^*_\circ, L_\circ^*}
A^*,{\mathcal B})$ given by an antisymmetric solution $r$ of
associative Yang-Baxter equation in $A$ is isomorphic to the
Frobenius algebra $(A\ltimes_{R_\cdot^*,L_\cdot^*}A^*,{\mathcal
B})$, where ${\mathcal B}$ is given by equation {\rm (1.1.1)}.
However, in general, they are not isomorphic as the double
constructions of Frobenius algebras (or equivalently, as
antisymmetric infinitesimal bialgebras).
\end{theorem}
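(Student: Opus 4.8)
The plan is to prove the first (positive) half by writing down one explicit isomorphism of Frobenius algebras that is \emph{not} required to preserve the decomposition $A\oplus A^*$, and then to prove the second (negative) half by passing to the corresponding antisymmetric infinitesimal bialgebras, where the two structures are visibly distinguished by their comultiplications. For the positive half I would define the linear map
$$\varphi:A\bowtie^{R^*_\cdot,L^*_\cdot}_{R^*_\circ,L^*_\circ}A^*\longrightarrow A\ltimes_{R^*_\cdot,L^*_\cdot}A^*,\qquad \varphi(x+a^*)=\bigl(x+r(a^*)\bigr)+a^*,$$
where $r:A^*\rightarrow A$ is the map induced by $r$ through equation (2.4.4). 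This is plainly a linear bijection, with inverse $x+a^*\mapsto (x-r(a^*))+a^*$. The first thing to check is invariance of the bilinear form: expanding $\mathcal B(\varphi(x+a^*),\varphi(y+b^*))$ using (1.1.1) produces the desired terms $\langle x,b^*\rangle+\langle a^*,y\rangle$ together with an extra contribution $\langle r(a^*),b^*\rangle+\langle a^*,r(b^*)\rangle$, and this last sum is zero precisely because $r$ is antisymmetric. Hence $\varphi$ satisfies the requirement (2.4.11).

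Next I would verify that $\varphi$ is a homomorphism of associative algebras, treating the four blocks $A*A$, $A*A^*$, $A^**A$, $A^**A^*$ separately; write $\bullet$ for the product of the semidirect sum, so that $x\bullet y=x\cdot y$, $x\bullet a^*=R^*_\cdot(x)a^*$, $a^*\bullet x=L^*_\cdot(x)a^*$ and $a^*\bullet b^*=0$. On $A*A$ both products equal $x\cdot y$ and there is nothing to do. On the mixed block $A*A^*$ the product in the source is given by equation (2.4.6) of Proposition 2.4.4; after applying $\varphi$, the term $-r(R^*_\cdot(x)a^*)$ is cancelled exactly by the $A$-component that $\varphi$ attaches to $R^*_\cdot(x)a^*\in A^*$, leaving $x\cdot r(a^*)+R^*_\cdot(x)a^*=\varphi(x)\bullet\varphi(a^*)$; the block $A^**A$ is identical using (2.4.7). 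On $A^**A^*$ the source product is $a^*\circ b^*$ given by (2.4.5), and here the essential input is Corollary 2.4.8, i.e.\ $r(a^*\circ b^*)=r(a^*)\cdot r(b^*)$: this matches the $A$-component produced by $\varphi$ with the term $r(a^*)\cdot r(b^*)$ arising in $\varphi(a^*)\bullet\varphi(b^*)$, while the $A^*$-components coincide by the very definition (2.4.5). Thus $\varphi$ is an isomorphism of Frobenius algebras.

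For the negative half I would pass to antisymmetric infinitesimal bialgebras on the common underlying algebra $(A,\cdot)$. By Example 2.2.8 the semidirect sum $A\ltimes_{R^*_\cdot,L^*_\cdot}A^*$ corresponds to the trivial structure $(A,0)$, since the product on $A^*$ is zero and therefore $\Delta=0$, while the double construction built from $r$ corresponds to $(A,\Delta_r)$ with $\Delta_r(x)=(\mathrm{id}\otimes L(x)-R(x)\otimes\mathrm{id})r$ as in (2.3.1). By Proposition 2.2.11 the two double constructions are isomorphic if and only if $(A,\Delta_r)\cong(A,0)$ as antisymmetric infinitesimal bialgebras; but by Definition 2.2.9 such an isomorphism $\psi$ obeys $(\psi\otimes\psi)\Delta_r=0$, and invertibility of $\psi\otimes\psi$ forces $\Delta_r=0$. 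It therefore suffices to produce one algebra with an antisymmetric solution $r$ for which $\Delta_r\neq0$. I would take the two-dimensional algebra with basis $\{p,q\}$, nonzero products $p^2=p$ and $pq=q$ (all others zero), and $r=p\otimes q-q\otimes p$: one checks directly from equation (2.4.9) that $r$ solves the associative Yang-Baxter equation, whereas $\Delta_r(p)=-q\otimes p\neq0$. This establishes that the two Frobenius algebras are, in general, not isomorphic as double constructions.

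The homomorphism verification is mechanical once Corollary 2.4.8 is available, so I expect the real content to be conceptual rather than computational. The two load-bearing observations are that form-preservation in the first part uses nothing beyond the antisymmetry of $r$, so $\varphi$ need not — and in general does not — send $A^*$ into $A^*$; and that the second part reduces entirely to exhibiting a nonzero $\Delta_r$. The main obstacle is thus recognizing and exploiting the asymmetry between the two notions of isomorphism: an isomorphism of Frobenius algebras is free to mix the two Lagrangian subalgebras $A$ and $A^*$, while an isomorphism of double constructions must respect them, and it is exactly this extra rigidity that the comultiplication $\Delta_r$ detects.
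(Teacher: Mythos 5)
Your proposal is correct, and its positive half is essentially the paper's own argument run in the opposite direction: the paper defines $\varphi:A\ltimes_{R^*,L^*}A^*\rightarrow A\bowtie^{R^*_\cdot,L^*_\cdot}_{R^*_\circ,L^*_\circ}A^*$ by $\varphi(x)=x$, $\varphi(a^*)=-r(a^*)+a^*$, which is exactly the inverse of your map $x+a^*\mapsto (x+r(a^*))+a^*$; the form computation via antisymmetry of $r$ is identical. Where the paper simply asserts ``it is straightforward to show that $\varphi$ is an isomorphism of associative algebras,'' you actually carry out the block-by-block verification, correctly isolating Corollary 2.4.8 (i.e.\ $r(a^*\circ b^*)=r(a^*)\cdot r(b^*)$) as the one nontrivial input on the $A^**A^*$ block. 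For the negative half the two arguments exploit the same rigidity but on dual sides: the paper notes that an isomorphism of double constructions must carry $A^*$ onto $A^*$, hence would be an algebra isomorphism from the trivial product onto $\circ$, ``which is not zero in general''; you instead pass through Proposition 2.2.11 to the bialgebras $(A,0)$ and $(A,\Delta_r)$ and observe that an isomorphism forces $\Delta_r=0$. These are equivalent obstructions since $\circ=\Delta_r^*$. The genuine added value of your version is the explicit witness: the two-dimensional algebra $p^2=p$, $pq=q$ with $r=p\otimes q-q\otimes p$, for which one checks the associative Yang-Baxter equation holds and $\Delta_r(p)=-q\otimes p\neq 0$. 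The paper's proof leaves the phrase ``in general'' unsubstantiated by any example, so your argument is not only correct but, strictly speaking, more complete on this point.
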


\begin{proof} Let $r$ be an antisymmetric solution of associative
Yang-Baxter equation in $A$. Define a linear map $\varphi:
A\ltimes_{R^*,L^*}A^*\rightarrow
A\bowtie^{R^*_\cdot,L^*_\cdot}_{R^*_\circ, L_\circ^*} A^*$
satisfying
$$\varphi(x)=x,\;\;\varphi(a^*)=-r(a^*)+a^*,\;\;\forall x\in A,
a^*\in A^*.$$ It is straightforward to show that $\varphi$ is an
isomorphism of associative algebras. Moreover,
$$\varphi^*{\mathcal
B}(x+a^*,y+b^*)=\langle  a^*,-r(b^*)+y\rangle   +\langle
x-r(a^*),b^*\rangle
 =\langle  a^*,y\rangle   +\langle  x,b^*\rangle={\mathcal B}(x+a^*,y+b^*)   .$$
Therefore $\varphi$ is an isomorphism of Frobenius algebras. However
in general, as antisymmetric infinitesimal bialgebras, they are not
isomorphic. In fact, if $\psi$ is an isomorphism of antisymmetric
infinitesimal bialgebras between $A\ltimes_{R^*,L^*}A^*$ and
$A\bowtie^{R^*_\cdot,L^*_\cdot}_{R^*_\circ, L_\circ^*} A^*$, then
for any $u^*,v^*\in A^*$, there exist $a^*,b^*\in A^*$ such that
$\psi(a^*)=u^*,\psi(b^*)=v^*$. However, $\psi(a^*\circ b^*)=0$ and
$\psi(a^*)*\psi(b^*)=u^**v^*=R^*(r(a^*))b^*+L^*(r(b^*))a^*$ is not
zero in general, which is an contradiction.\end{proof}

\begin{coro}  Let $(A,\cdot)$ be an associative algebra. Then as Frobenius
algebras, the Frobenius algebras
$(A\bowtie^{R^*_\cdot,L^*_\cdot}_{R^*_\circ, L_\circ^*}
A^*,{\mathcal B})$ given by all antisymmetric solutions of
associative Yang-Baxter equation in $A$ are isomorphic to the
Frobenius algebra $(A\ltimes_{R_\cdot^*,L_\cdot^*}A^*,{\mathcal B})$
given by the zero solution.\end{coro}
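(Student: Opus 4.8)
The plan is to derive this corollary directly from Theorem 2.4.9, together with the observation that the zero solution reproduces the semidirect sum. First I would specialize the product formulas (2.4.5)--(2.4.7) to $r=0$. Since $r=0$ as a map $A^*\to A$, part (a) gives $a^*\circ b^*=0$, so the induced multiplication on $A^*$ is trivial; while parts (b) and (c) collapse to $x*a^*=R_\cdot^*(x)a^*$ and $a^**x=L_\cdot^*(x)a^*$. Thus the double construction attached to $r=0$ is exactly the semidirect sum $A\ltimes_{R_\cdot^*,L_\cdot^*}A^*$ equipped with the natural bilinear form $\mathcal B$ of (1.1.1), which is precisely the distinguished Frobenius algebra appearing in the statement.

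Next I would invoke Theorem 2.4.9. For an arbitrary antisymmetric solution $r$ of the associative Yang--Baxter equation in $A$, that theorem produces an explicit isomorphism of Frobenius algebras, namely $\varphi(x)=x$ and $\varphi(a^*)=-r(a^*)+a^*$, between $A\ltimes_{R_\cdot^*,L_\cdot^*}A^*$ and $A\bowtie^{R^*_\cdot,L^*_\cdot}_{R^*_\circ,L^*_\circ}A^*$, compatible with $\mathcal B$ in the sense of (2.4.11). Hence every Frobenius algebra arising from an antisymmetric solution is isomorphic, as a Frobenius algebra, to the fixed one coming from $r=0$.

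Since isomorphism of Frobenius algebras is an equivalence relation, these two steps already yield the conclusion; in fact all such Frobenius algebras are mutually isomorphic. I do not expect any genuine obstacle here: the only substantive input is Theorem 2.4.9, and the remaining work is the routine specialization of (2.4.5)--(2.4.7) to $r=0$ that identifies the zero-solution double construction with the semidirect sum. It is worth emphasizing, by contrast with Theorem 2.4.9, that this isomorphism is only asserted at the level of Frobenius algebras and not at the level of double constructions (equivalently, antisymmetric infinitesimal bialgebras), where distinct antisymmetric solutions generally yield non-isomorphic structures.
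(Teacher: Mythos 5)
Your proposal is correct and follows exactly the route the paper intends: the corollary is stated without proof precisely because it is the combination of Theorem 2.4.9 (every antisymmetric solution $r$ yields a Frobenius algebra isomorphic to the semidirect sum via $\varphi(x)=x$, $\varphi(a^*)=-r(a^*)+a^*$) with the routine check that setting $r=0$ in (2.4.5)--(2.4.7) recovers $(A\ltimes_{R_\cdot^*,L_\cdot^*}A^*,{\mathcal B})$ itself. Your closing remark that the isomorphism holds only at the level of Frobenius algebras, not of double constructions, also matches the caveat the paper makes in Theorem 2.4.9.
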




\subsection{Associative Yang-Baxter equation and ${\mathcal
O}$-operators}


\begin{defn} {\rm Let $(A,\cdot)$ be an associative algebra and $(l,r, V)$ be
a bimodule. A linear map $T:V\rightarrow A$ is called  an {\it
${\mathcal O}$-operator associated to $(l,r,V)$} if $T$ satisfies
$$T(u)\cdot T(v)=T(l(T(u))v+r(T(v)u)),\;\;\forall u,v\in V.\eqno
(2.5.1)$$}
\end{defn}

\begin{exam} {\rm Let $(A,\cdot)$ be an associative
algebra.
 Then the identity map $id$ is an ${\mathcal O}$-operator
associated to the bimodule $(L,0)$ or $(0,R)$.}\end{exam}

\begin{exam} {\rm Let $(A,\cdot)$ be an associative algebra. A linear map $R:
A\rightarrow A$ is called a Rota-Baxter operator on $A$ of weight
zero  (\cite{Bax}, \cite{Rot}) if $R$ satisfies
$$R(x)\cdot R(y)=R(R(x)\cdot y+x\cdot R(y)),\;\;\forall\; x,y\in A.\eqno (2.5.2)$$
In fact, a Rota-Baxter operator on $A$ is just an ${\mathcal
O}$-operator associated to the bimodule $(L,R)$.}\end{exam}

\begin{exam} {\rm Let $(A,\cdot)$ be an associative algebra and $r\in A\otimes
A$ be antisymmetric. Then $r$ is a solution of associative
Yang-Baxter equation in $A$ if and only if $r$ is an ${\mathcal
O}$-operator associated to the bimodule $(R^*,L^*)$.}\end{exam}

\begin{theorem} {\rm (\cite{BGN1})}\quad Let $(A,\cdot)$ be an associative
algebra and $(l,r,V)$ be a bimodule. Let $(r^*,l^*,V^*)$ be the
bimodule of $A$ given by {\rm Lemma 2.1.2}. Let $T:V\rightarrow A$
be a linear map which is identified as an element in
$(A\ltimes_{r^*,l^*}V^*)\otimes (A\ltimes_{r^*,l^*}V^*)$. Then
$r=T-\sigma(T)$ is an antisymmetric solution of the associative
Yang-Baxter equation in $A\ltimes_{r^*,l^*} V^*$ if and only if $T$
is an ${\mathcal O}$-operator associated to the bimodule $(l,r,V)$.
\end{theorem}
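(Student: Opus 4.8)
The plan is to unwind the definitions on both sides and reduce the statement to a direct identification of tensor expressions. The key object is $T\in V\otimes A^*$, viewed as an element of the larger algebra $\mathfrak{g}=A\ltimes_{r^*,l^*}V^*$ via the inclusion $A\oplus V^*\hookrightarrow \mathfrak{g}$; more precisely, choosing a basis $\{v_i\}$ of $V$ with dual basis $\{v_i^*\}$ of $V^*$ and writing $T(v_i)=\sum_j T_{ji}e_j$ in terms of a basis $\{e_j\}$ of $A$, we have $T=\sum_i T(v_i)\otimes v_i^*\in \mathfrak{g}\otimes\mathfrak{g}$. Then $r=T-\sigma(T)=\sum_i\bigl(T(v_i)\otimes v_i^*-v_i^*\otimes T(v_i)\bigr)$ is manifestly antisymmetric, so the content of the theorem is the equivalence of ``$r$ solves the associative Yang-Baxter equation in $\mathfrak{g}$'' with ``$T$ is an $\mathcal{O}$-operator.''

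\textbf{Reducing AYBE to the operator identity.}
First I would use Example 2.5.4: since $r$ is antisymmetric, $r$ solves the associative Yang-Baxter equation in $\mathfrak{g}$ if and only if $r$ is an $\mathcal{O}$-operator associated to the bimodule $(R^*,L^*)$ of $\mathfrak{g}$, i.e.\ $r:\mathfrak{g}^*\to\mathfrak{g}$ satisfies equation (2.4.9) with the products taken in $\mathfrak{g}$. This converts the tensor equation into the operator equation
$$r(\xi)* r(\eta)=r\bigl(R^*_*(r(\xi))\eta+L^*_*(r(\eta))\xi\bigr),\quad\forall\,\xi,\eta\in\mathfrak{g}^*,$$
where $*$ is the product of $\mathfrak{g}$ and $R^*_*,L^*_*$ are the coadjoint actions of $\mathfrak{g}$ on $\mathfrak{g}^*$. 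The natural decomposition $\mathfrak{g}^*\cong A^*\oplus V$ lets me test this identity on the four types of pairs $\xi,\eta\in\{A^*,V\}$. The crucial observation is that $r$, as a map $\mathfrak{g}^*\to\mathfrak{g}$, sends $V\subset\mathfrak{g}^*$ into $A$ via $T$ (up to sign) and sends $A^*\subset\mathfrak{g}^*$ into $V^*$ via $-T^*$, so the image of $r$ is controlled entirely by $T$.

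\textbf{Matching against the $\mathcal{O}$-operator condition.}
Next I would substitute the explicit semidirect-product multiplication (2.1.2) on $\mathfrak{g}=A\ltimes_{r^*,l^*}V^*$, namely $(x+u^*)*(y+w^*)=xy+\bigl(r^*(y)u^*+l^*(x)w^*\bigr)$, together with the defining relations (2.1.3) for $l^*,r^*$ and the duality (2.2.6). Evaluating the reduced identity on a pair $\xi=v_i,\eta=v_j$ with both arguments in $V$ should produce exactly $T(v_i)\cdot T(v_j)$ on the left and $T\bigl(l(T(v_i))v_j+r(T(v_j))v_i\bigr)$ on the right, which is precisely the $\mathcal{O}$-operator equation (2.5.1); here is where the bimodule axioms (2.1.1) and the compatibility of $(l,r,V)$ with its dual $(r^*,l^*,V^*)$ enter. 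The remaining three types of test pairs should reduce to identities that hold automatically once (2.5.1) holds, because they express the same data transported by the dual maps; verifying this symmetry is the one place that requires care rather than mere bookkeeping.

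\textbf{Expected obstacle.}
The main obstacle is organizational rather than conceptual: keeping the bookkeeping of the two coadjoint actions straight when restricted to the summands $A^*$ and $V$ of $\mathfrak{g}^*$, and confirming that the ``mixed'' test pairs do not impose conditions beyond (2.5.1). Concretely, one must check that the terms coming from $r^*(x)u^*$ and $l^*(x)w^*$ reassemble into the combination $l(T(u))v+r(T(v)u)$ rather than some genuinely different expression; the antisymmetrization $r=T-\sigma(T)$ is what guarantees the left- and right-action contributions pair up correctly. Since both the statement and its converse are read off from the same basis computation, I expect the proof to be a single straightforward (if somewhat lengthy) coefficient comparison, and I would present it as such, citing \cite{BGN1} for the original argument.
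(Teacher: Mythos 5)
Your proposal is correct, but it follows a genuinely different route from the one the paper relies on. Note first that the paper does not prove this statement at all: Theorem 2.5.5 is quoted from \cite{BGN1}, and the proof the paper actually writes out for the parallel statement --- Theorem 4.4.13, the symmetric/$D$-equation analogue --- is a direct tensor computation: writing $T=\sum_i T(v_i)\otimes v_i^*$, one expands the three triple products $r_{12}r_{13}$, $r_{13}r_{23}$, $r_{23}r_{12}$ inside the third tensor power of $A\ltimes_{r^*,l^*}V^*$ and uses transport identities such as $\sum_{i,j} l^*(T(v_j))v_i^*\otimes T(v_i)\otimes v_j^*=\sum_{i,j} v_i^*\otimes T(l(T(v_j))v_i)\otimes v_j^*$ to match coefficients, so that the alternating sum vanishes exactly when (2.5.1) holds. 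You instead invoke Theorem 2.4.7 (equivalently Example 2.5.4) for the antisymmetric tensor $r=T-\sigma(T)$ in the algebra $E=A\ltimes_{r^*,l^*}V^*$, converting the Yang-Baxter equation into the operator identity (2.4.9) on $E^*\cong A^*\oplus V$, and then test that bilinear identity on the four types of pairs. This plan is sound, and the facts it needs do check out: as a map $E^*\rightarrow E$, $r$ restricts to $T$ on $V$ and to $-T^*$ on $A^*$; the dual actions restrict as $R^*_E(T(u))v=l(T(u))v$ and $L^*_E(T(v))u=r(T(v))u$, where the $l$/$r$ crossing produced by the dual bimodule $(r^*,l^*)$ is exactly what makes the right-hand side come out as $T(l(T(u))v+r(T(v))u)$ rather than with $l$ and $r$ interchanged; the two mixed types of pairs, after pairing against an arbitrary element of $A^*$, are each equivalent to (2.5.1) itself rather than to a new condition; and the $A^*\times A^*$ pairs give $0=0$ because $V^*$ is a square-zero subalgebra of $E$ and the dual actions of $V^*$ annihilate $A^*$. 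What your route buys is the elimination of all triple-tensor bookkeeping and a structural explanation of why the antisymmetrization $T-\sigma(T)$ and the dual bimodule $(r^*,l^*,V^*)$ are forced; what the direct computation buys is self-containedness (it does not pass through Theorem 2.4.7 applied to the big algebra) and a template that transfers verbatim to the symmetric case of Theorem 4.4.13. One slip worth fixing: a linear map $T:V\rightarrow A$ is an element of $A\otimes V^*$, not of $V\otimes A^*$ as you wrote at the outset; your displayed formula $T=\sum_i T(v_i)\otimes v_i^*$ is the correct identification and is the one you actually use.
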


\begin{coro} {\rm (cf. Corollary 3.1.5)}\quad Let $(A,\cdot)$ be an associative
algebra. Then
$$r=\sum_{i}^n (e_i\otimes e_i^*-e_i^*\otimes e_i)\eqno (2.5.3)$$
is a solution of the associative Yang-Baxter equation in $A
\ltimes_{R^*,0} A^*$ or $A\ltimes_{0,L^*}A^*$, where $\{e_1,\cdots,
e_n\}$ is a basis of $A$ and $\{e_1^*,\cdots, e_n^*\}$ is its dual
basis. Moreover there is a natural Connes cocycle $\omega$  on
$A\ltimes_{R^*,0} {A}^*$ or $A\ltimes_{0,L^*}$ induced by $r^{-1}:
A\oplus A^*\rightarrow (A\oplus A^*)^*$, which is given by equation
{\rm (1.4.1).}
\end{coro}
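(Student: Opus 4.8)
The plan is to recognize the element $r$ in (2.5.3) as the antisymmetrization $T-\sigma(T)$ of the identity operator and then invoke Theorem 2.5.5 together with Example 2.5.2, rather than verifying the associative Yang-Baxter equation by a direct computation of triple products in the semidirect sum. Concretely, take $V=A$ and regard the identity map $id:A\rightarrow A$ as an element $T$ of $(A\ltimes V^*)\otimes (A\ltimes V^*)$ via the identification $\mathrm{Hom}(A,A)\cong A\otimes A^*$; in the given basis this reads $T=\sum_i e_i\otimes e_i^*$, so that $T-\sigma(T)=\sum_i(e_i\otimes e_i^*-e_i^*\otimes e_i)=r$.

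By Example 2.5.2 the identity map is an ${\mathcal O}$-operator associated with the bimodule $(0,R)$ (and also with $(L,0)$). Applying Theorem 2.5.5 to $(l,r,V)=(0,R,A)$, whose dual bimodule is $(r^*,l^*,V^*)=(R^*,0,A^*)$, shows immediately that $r=T-\sigma(T)$ is an antisymmetric solution of the associative Yang-Baxter equation in $A\ltimes_{R^*,0}A^*$. Running the same argument with $(L,0)$, whose dual is $(0,L^*,A^*)$, yields the solution in $A\ltimes_{0,L^*}A^*$. This disposes of the first assertion with essentially no computation.

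For the second assertion I would first note that $r$ is manifestly antisymmetric and then check that it is nondegenerate: viewing $r$ as the map $(A\oplus A^*)^*\rightarrow A\oplus A^*$ through (2.4.4), one finds that it carries the functional dual to $e_j$ to $-e_j^*$ and the functional dual to $e_j^*$ to $e_j$, hence is invertible. Theorem 2.4.5 then guarantees that the inverse $r^{-1}$, read as a bilinear form $\omega(X,Y)=\langle r^{-1}(X),Y\rangle$, is a Connes cocycle on the relevant semidirect sum. It remains to identify $\omega$ explicitly: computing $r^{-1}(e_i)$ and $r^{-1}(e_i^*)$ and evaluating on the basis gives $\omega(e_i,e_j^*)=\pm\delta_{ij}$ and $\omega(e_i,e_j)=\omega(e_i^*,e_j^*)=0$, which upon assembling reproduces the natural antisymmetric form (1.4.1).

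The only genuinely delicate point is bookkeeping rather than mathematics: one must keep straight which of the two dual bimodules $(R^*,0,A^*)$ and $(0,L^*,A^*)$ produces which semidirect sum in Theorem 2.5.5, and one must handle the sign conventions in (2.4.4) and in the definition $\omega(X,Y)=\langle r^{-1}(X),Y\rangle$ carefully so that the resulting bilinear form matches (1.4.1) exactly. Both the Yang-Baxter property and the cocycle property are then immediate consequences of the already-established Theorems 2.5.5 and 2.4.5.
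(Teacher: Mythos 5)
Your proposal is correct and takes essentially the same approach as the paper: the paper's own proof consists precisely of noting that $id$ is an ${\mathcal O}$-operator associated to the bimodule $(L,0,A)$ or $(0,R,A)$ and then citing Theorems 2.5.5 and 2.4.5. Your write-up simply makes explicit the dual-bimodule bookkeeping ($(0,R,A)\mapsto(R^*,0,A^*)$, $(L,0,A)\mapsto(0,L^*,A^*)$) and the nondegeneracy and sign checks that the paper leaves implicit.
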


\begin{proof} Note that ${id}$ is an ${\mathcal O}$-operator associated to the
bimodule $(L,0,A)$ or $(0,R,A)$. Then the conclusion follows from
Theorems 2.5.5 and 2.4.5.
\end{proof}

\section{Dendriform algebras}

\subsection{${\mathcal O}$-operators and dendriform algebras}

\mbox{}

There are close relations between ${\mathcal O}$-operators and a
class of algebras, namely, dendriform algebras, which are given in
\cite{BGN2}. In order to be self-contained, we list them in this
subsection.

\begin{defn} {\rm (\cite{Lo1})\quad Let $A$ be a vector space over a field {\bf F} with two
bilinear products denoted by $\prec$ and $\succ $. $(A, \prec,
\succ)$ is called  a {\it dendriform algebra} if for any $x,y,z\in
A$,
$$(x\prec y)\prec z=x\prec (y*z),\;\;(x\succ y)\prec z=x\succ (y\prec
z),\;\;x\succ (y\succ z)=(x*y)\succ z,\eqno (3.1.1)$$ where
$x*y=x\prec y+x\succ y$.}\end{defn}

Let $(A, \prec, \succ)$ be a dendriform algebra. For any $x\in A$,
let $L_\succ(x), R_\succ(x)$ and $L_\prec (x), R_\prec(x)$ denote
the left and right multiplication operators of $(A,\prec)$ and
$(A,\succ)$ respectively, that is, $$L_\succ (x)(y)=x\succ y,
\;\;R_\succ(x)y=y\succ x,\;\;L_\prec(x)y= x\prec y,\;\;
R_\prec(x)(y)=y\prec x,\;\;\forall\;x, y\in A.$$ Moreover, let
$L_\succ, R_\succ, L_\prec, R_\prec: A\rightarrow \frak g\frak l (A)$ be four
linear maps with $x\rightarrow L_\succ(x)$, $x\rightarrow
R_\succ(x), x\rightarrow L_\prec(x)$ and $x\rightarrow R_\prec (x)$
respectively. It is known that the product given by (\cite{Lo1})
$$x*y=x\prec y+x\succ y,\;\;\forall x,y\in A,\eqno (3.1.2)$$
defines an associative algebra. We call $(A,*)$ the associated
associative algebra of $(A,\succ,\prec)$ and $(A,\succ,\prec)$ is
called a compatible dendriform algebra structure on the
associative algebra $(A,*)$. Moreover, $(L_\succ, R_\prec)$ is a
bimodule of the associated associative algebra $(A,*)$.

\begin{theorem} {\rm ([BGN2])}\quad Let $A$ be an associative algebra and $(l,r,V)$ be
a bimodule. Let $T:V\rightarrow A$ be an ${\mathcal O}$-operator
associated to $(l,r,V)$. Then there exists a dendriform algebra
structure on $V$ given by
$$u\succ v=l(T(u))v,\;\;u\prec v=r(T(v))u,\;\;\forall\; u,v\in V.\eqno (3.1.3)$$
So there is an associated associative algebra structure on $V$ given
by equation {\rm (3.1.2)} and $T$ is a homomorphism of associative
algebras. Moreover, $T(V)=\{T(v)|v\in V\}\subset A$ is an
associative subalgebra of $A$ and there is an induced dendriform
algebra structure on $T(V)$ given by
$$T(u)\succ T(v)=T(u\succ v),\;\;T(u)\prec T(v)=T(u\prec v),\;\;\forall\; u,v\in V.\eqno (3.1.4)$$
Its corresponding associated associative algebra structure on $T(V)$
given by equation {\rm (3.1.2)} is just the associative subalgebra
structure of $A$ and $T$ is a homomorphism of dendriform
algebras.\end{theorem}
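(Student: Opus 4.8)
The plan is to verify the dendriform axioms directly from the $\mathcal{O}$-operator identity, then establish the homomorphism and subalgebra claims by straightforward computation. First I would define the two products $\succ$ and $\prec$ on $V$ by equation (3.1.3) and check the three defining relations of a dendriform algebra in (3.1.1). For instance, to verify $(x\prec y)\prec z = x\prec(y*z)$, I would expand the left side as $r(T(z))(r(T(y))x)$ and the right side as $r(T(y*z))x$, where $y*z = y\succ z + y\prec z$; using the bimodule axiom $r(xy)v = r(y)r(x)v$ from (2.1.1) together with the $\mathcal{O}$-operator relation $T(y)\cdot T(z) = T(l(T(y))z + r(T(z))y) = T(y*z)$, both sides become $r(T(y)\cdot T(z))x$. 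The mixed relation $(x\succ y)\prec z = x\succ(y\prec z)$ would follow from the commuting condition $l(x)r(y)v = r(y)l(x)v$, and the third relation $x\succ(y\succ z) = (x*y)\succ z$ from $l(xy)v = l(x)l(y)v$ combined again with the $\mathcal{O}$-operator identity.

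Next I would confirm that the associated product $u*v = u\succ v + u\prec v = l(T(u))v + r(T(v))u$ is precisely the expression appearing inside $T$ in the defining identity (2.5.1), so that $T(u*v) = T(u)\cdot T(v)$ holds by definition; this is exactly the statement that $T$ is a homomorphism of associative algebras from $(V,*)$ to $(A,\cdot)$. Consequently $T(V)$ is closed under $\cdot$ and is therefore an associative subalgebra of $A$.

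For the induced dendriform structure on $T(V)$, the subtle point is \emph{well-definedness}: the formulas (3.1.4) prescribe $T(u)\succ T(v) := T(u\succ v)$ and $T(u)\prec T(v) := T(u\prec v)$, but $T$ need not be injective, so I must check these do not depend on the choice of preimages $u,v$. This is where I would expect the only real obstacle, and I would resolve it by rewriting the right-hand sides intrinsically in terms of the images: since $T(u\succ v) = T(l(T(u))v)$ and $T(u\prec v) = T(r(T(v))u)$, and since for an $\mathcal{O}$-operator one has $T(l(T(u))v) = T(u)\cdot T(v) - T(r(T(v))u)$ from (2.5.1), the two products on $T(V)$ sum to $T(u)\cdot T(v)$, which depends only on the images $T(u), T(v)$; establishing that each summand individually depends only on the images requires verifying that whenever $T(u)=T(u')$ and $T(v)=T(v')$ one has $T(l(T(u))v)=T(l(T(u'))v')$, which follows because $l(T(u))$ and the argument $v$ enter only through $T(u)$ and through $T$ applied to expressions determined by the image (one checks this using the linearity of $T$ and the bimodule structure).

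Finally, with well-definedness in hand, that (3.1.4) defines a dendriform structure on $T(V)$ is automatic since $T$ transports the products by construction, and its associated associative product is $T(u)\succ T(v) + T(u)\prec T(v) = T(u*v) = T(u)\cdot T(v)$, matching the subalgebra structure of $A$; the relation (3.1.4) then says exactly that $T$ is a homomorphism of dendriform algebras. The main work is therefore the initial axiom verification in the first paragraph and the well-definedness check in the third, both being essentially bookkeeping with the bimodule identities (2.1.1) and the defining relation (2.5.1).
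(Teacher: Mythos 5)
The paper states this theorem as a quotation from [BGN2] and gives no proof of its own, so your proposal must stand on its own merits. Your first two paragraphs are correct and are the natural argument: each dendriform axiom reduces, exactly as you say, to one of the bimodule identities (2.1.1) once one replaces $T(y*z)$ by $T(y)\cdot T(z)$ via the $\mathcal{O}$-operator identity (2.5.1); the same identity, read as $T(u*v)=T(u)\cdot T(v)$, gives that $T$ is an associative homomorphism and that $T(V)$ is a subalgebra.

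The genuine gap is in your well-definedness argument for (3.1.4). You isolate the right issue (the sum $T(u\succ v)+T(u\prec v)=T(u)\cdot T(v)$ is manifestly independent of the preimages, but each summand must be treated separately), yet the justification you give --- that ``$l(T(u))$ and the argument $v$ enter only through $T(u)$ and through $T$ applied to expressions determined by the image,'' checked ``using the linearity of $T$ and the bimodule structure'' --- is circular: that $T(l(T(u))v)$ depends on $v$ only through $T(v)$ is precisely the claim to be proved, and linearity together with the bimodule axioms cannot prove it, since for a generic linear map $T$ (not an $\mathcal{O}$-operator) the claim is false. The missing step is one further application of (2.5.1), to kernel elements: if $T(w)=0$, then
$$0=T(u)\cdot T(w)=T\bigl(l(T(u))w+r(T(w))u\bigr)=T(l(T(u))w)+T(r(0)u)=T(l(T(u))w),$$
and symmetrically
$$0=T(w)\cdot T(v)=T(l(0)v)+T(r(T(v))w)=T(r(T(v))w).$$
Now writing $u'=u+w_1$ and $v'=v+w_2$ with $w_1,w_2\in\ker T$ and expanding by linearity, these two identities give exactly $T(l(T(u'))v')=T(l(T(u))v)$ and $T(r(T(v'))u')=T(r(T(v))u)$, i.e.\ both summands depend only on $T(u)$ and $T(v)$. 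With this inserted, the remainder of your argument (transport of the dendriform axioms along $T$, identification of the associated associative product on $T(V)$ with the subalgebra structure of $A$, and $T$ being a dendriform homomorphism) goes through as written.
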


\begin{coro}
{\rm (\cite{BGN2})}\quad Let $(A,*)$ be an associative algebra. There
is a compatible dendriform algebra structure on $A$ if and only if
there exists an invertible $\mathcal O$-operator of $(A,
*)$.\end{coro}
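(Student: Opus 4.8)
The plan is to read the corollary as a direct corollary of Theorem 3.1.2, with the invertibility of the $\mathcal O$-operator serving as exactly the hypothesis that lets a dendriform structure living on the bimodule space $V$ be transported onto $A$ itself. Both directions are obtained by recognizing the right bimodule and the right map.

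First I would treat the forward direction. Suppose $(A,\prec,\succ)$ is a compatible dendriform algebra structure on $(A,*)$. As recalled just before the theorem, $(L_\succ,R_\prec)$ is a bimodule of the associated associative algebra $(A,*)$. I claim the identity map $\mathrm{id}:A\to A$ is an $\mathcal O$-operator associated to this bimodule. Indeed, substituting $T=\mathrm{id}$, $l=L_\succ$, $r=R_\prec$ into the defining equation (2.5.1) reduces it to
$$u * v = L_\succ(u)v + R_\prec(v)u = u\succ v + u\prec v,$$
which holds precisely because $*$ is the associated product $\prec+\succ$ of equation (3.1.2). Since $\mathrm{id}$ is invertible, this exhibits an invertible $\mathcal O$-operator of $(A,*)$.

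For the converse, suppose $T:V\to A$ is an invertible $\mathcal O$-operator associated to some bimodule $(l,r,V)$. By Theorem 3.1.2 there is a dendriform algebra structure on $V$ given by equation (3.1.3), its associated associative product is the one from equation (3.1.2), and $T$ is a homomorphism of associative algebras from $(V,*)$ to $(A,*)$. Because $T$ is invertible it is in particular surjective, so $T(V)=A$, and the induced dendriform structure of equation (3.1.4) therefore lives on all of $A$. Concretely, for $x,y\in A$ I would set
$$x\succ y = T\big(l(x)\,T^{-1}(y)\big),\qquad x\prec y = T\big(r(y)\,T^{-1}(x)\big),$$
which is well defined by invertibility. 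By the last part of Theorem 3.1.2 this is a dendriform algebra structure on $A$ whose associated associative algebra is the subalgebra structure on $T(V)=A$, namely $(A,*)$ itself; hence it is compatible.

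The one point needing care is checking that the associated associative product of the transported structure is exactly $*$: writing $x=T(u)$, $y=T(v)$, one computes $x\succ y + x\prec y = T(u\succ v)+T(u\prec v) = T(u*v) = T(u)*T(v) = x*y$, using that $T$ is an algebra homomorphism. This is routine, and indeed the whole argument is a straightforward application of Theorem 3.1.2, so I do not expect a genuine obstacle; the real content of the corollary is simply the observation that invertibility of the $\mathcal O$-operator is the precise condition allowing one to pull the dendriform structure back onto the associative algebra.
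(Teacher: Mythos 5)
Your proof is correct and follows essentially the same route the paper takes: immediately after the corollary the paper notes that the identity map is an $\mathcal O$-operator associated to $(L_\succ,R_\prec)$ for the forward direction, and gives precisely your transported formula $x\succ y=T(l(x)T^{-1}(y))$, $x\prec y=T(r(y)T^{-1}(x))$ (equation (3.1.5)) for the converse, both being direct consequences of Theorem 3.1.2. Your additional verification that the associated product of the transported structure is $*$ (via $T$ being an algebra homomorphism) is exactly the detail the paper leaves implicit by citing [BGN2].
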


In fact, if $T$ is an invertible $\mathcal O$-operator associated to
a bimodule $(l,r,V)$, then the compatible dendriform algebra
structure on $A$ is given by
$$x\succ y=T(l(x)T^{-1}(y)),\;\; x\prec y=T(r(y)T^{-1}(x)),\;\;\forall x,y\in A.\eqno (3.1.5)$$
Conversely, let $(A,\succ, \prec)$ be a dendriform algebra and
$(A,*)$ be the associated associative algebra. Then the identity map
$id$ is an $\mathcal O$-operator associated to the bimodule
$(L_\succ, R_\prec)$ of $(A,*)$.

\begin{remark}{\rm If $T$ is an invertible $\mathcal O$-operator associated to
a bimodule $(l,r,V)$, then the linear map $f=T^{-1}:A\rightarrow V$
satisfies
$$f(x*y)=l(x)f(y)+r(y)f(x),\;\;\forall x,y\in A.\eqno (3.1.6)$$
Such a linear map is a 1-cocycle of $(A,*)$ associated to the
bimodule $(l,r,V)$.}
\end{remark}

\begin{coro}
{\rm ([BGN2])}\quad Let $(A,\succ,\prec)$ be a dendriform algebra.
Then
$$r=\sum_{i}^n (e_i\otimes e_i^*-e_i^*\otimes e_i)\eqno (3.1.7)$$
is a solution of the associative Yang-Baxter equation in $A
\ltimes_{R_\prec^*,L_\succ^*} A^*$, where $\{e_1,\cdots, e_n\}$ is a
basis of $A$ and $\{e_1^*,\cdots, e_n^*\}$ is its dual basis.
Moreover there is a natural Connes cocycle $\omega$  on $A
\ltimes_{R_\prec^*,L_\succ^*} A^*$ induced by $r^{-1}: A\oplus
A^*\rightarrow (A\oplus A^*)^*$, which is given by equation {\rm
(1.4.1)}.\end{coro}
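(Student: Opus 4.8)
The plan is to treat this statement as the dendriform-algebra analogue of Corollary 2.5.6 and prove it by the same two-step mechanism: first produce the antisymmetric solution of the associative Yang-Baxter equation via Theorem 2.5.5, and then extract the Connes cocycle via Theorem 2.4.5. The whole argument reduces to recognizing the correct $\mathcal{O}$-operator together with the correct bimodule. First I would invoke the observation recorded in the discussion following Corollary 3.1.3: for a dendriform algebra $(A,\succ,\prec)$ with associated associative algebra $(A,*)$, the identity map $id:A\to A$ is an $\mathcal{O}$-operator associated to the bimodule $(L_\succ,R_\prec,A)$. Setting $(l,r,V)=(L_\succ,R_\prec,A)$, Lemma 2.1.2 shows that the dual bimodule $(r^*,l^*,V^*)$ is exactly $(R_\prec^*,L_\succ^*,A^*)$, so the ambient associative algebra in which I will work is precisely $A\ltimes_{R_\prec^*,L_\succ^*}A^*$, matching the statement.

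Next I would identify $T=id$ with an element of $(A\ltimes_{R_\prec^*,L_\succ^*}A^*)\otimes(A\ltimes_{R_\prec^*,L_\succ^*}A^*)$. In the basis $\{e_i\}$ with dual basis $\{e_i^*\}$, the identity map corresponds to $T=\sum_i e_i\otimes e_i^*\in A\otimes A^*$, whence $r=T-\sigma(T)=\sum_i(e_i\otimes e_i^*-e_i^*\otimes e_i)$, which is exactly the $r$ of the statement. Theorem 2.5.5 then applies verbatim and gives that this $r$ is an antisymmetric solution of the associative Yang-Baxter equation in $A\ltimes_{R_\prec^*,L_\succ^*}A^*$, establishing the first assertion with no further computation.

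For the ``Moreover'' part I would first check that $r$ is nondegenerate. Viewing $r$ as a map $(A\oplus A^*)^*\to A\oplus A^*$ via equation (2.4.4), a direct basis computation shows it carries the dual basis of $A\oplus A^*$ bijectively (up to sign) onto the basis $\{e_i,e_i^*\}$, so $r^{-1}$ exists. Then Theorem 2.4.5 applies directly and yields that the bilinear form $\omega$ induced by $r^{-1}$ is a Connes cocycle on $A\ltimes_{R_\prec^*,L_\succ^*}A^*$. It remains only to identify this $\omega$ explicitly: computing $r^{-1}$ on the basis (it sends $e_i$ to the functional dual to $e_i^*$ and $e_i^*$ to the negative of the functional dual to $e_i$) and pairing as prescribed, one recovers the natural antisymmetric form of equation (1.4.1). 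The sole delicate point, and hence the only ``obstacle'' (which is purely bookkeeping), is tracking the sign conventions in (2.4.4) and in the definition of the induced form so that the outcome is (1.4.1) rather than its negative; since the negative of a Connes cocycle is again a Connes cocycle, this does not affect the substance. No genuinely hard step arises, as the content is entirely carried by Theorems 2.5.5 and 2.4.5.
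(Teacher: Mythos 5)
Your proof is correct and is essentially the paper's own argument: the paper states Corollary 3.1.5 with only a citation to [BGN2], but its proof of the special case Corollary 2.5.6 (which Remark 3.1.6 identifies as the trivial-dendriform-structure instance of this result) uses exactly your mechanism — recognize $id$ as an ${\mathcal O}$-operator (here associated to $(L_\succ,R_\prec,A)$, as recorded after Corollary 3.1.3, with dual bimodule $(R_\prec^*,L_\succ^*,A^*)$ by Lemma 2.1.2) and then apply Theorems 2.5.5 and 2.4.5. The sign bookkeeping you flag in identifying the induced form with equation (1.4.1) is indeed the only delicate point and is harmless, as you note.
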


\begin{remark}  {\rm It is easy to see that Corollary 2.5.6 is just a special case
of the above conclusion, that is, the former corresponds to the
trivial dendriform algebra structure on an associative algebra
$(A,\cdot)$ given by $\succ=\cdot, \prec =0$ or $\succ=0,
\prec=\cdot$.}\end{remark}

\subsection{Bimodules and matched pairs of dendriform algebras}

\mbox{}

\begin{defn} {\rm (\cite{A4})\quad Let $(A,\succ, \prec)$ be a dendriform
algebra and $V$ be a vector space. Let $l_\succ, r_\succ,
l_\prec,r_\prec:A\rightarrow \frak g\frak l (V)$ be four linear maps. $V$ (or
$(l_\succ,r_\succ,l_\prec,r_\prec)$, or
$(l_\succ,r_\succ,l_\prec,r_\prec,V)$) is called  a {\it bimodule}
of $A$ if the following equations hold (for any $x,y\in A$).
$$l_\prec(x\prec
y)=l_\prec(x)l_*(y);\;\;r_\prec(x)l_\prec(y)=l_\prec(y)r_*(x);\;\;
r_\prec(x)r_\prec(y)=r_\prec(y*x);\eqno (3.2.1)$$
$$l_\prec(x\succ y)=l_\succ(x)l_\prec(y);\;\;r_\prec(x)l_\succ (y)=l_\succ(y)r_\prec(x);\;\;
r_\prec(x)r_\succ(y)=r_\succ(y\prec x);\eqno (3.2.2)$$
$$l_\succ (x*y)=l_\succ(x)l_\succ (y);\;\;r_\succ(x)l_*(y)=l_\succ(y)r_\succ(x);\;\;
r_\succ (x)r_*(y)=r_\succ(y\succ x),\eqno (3.2.3)$$ where
$x*y=x\succ y+x\prec y$, $l_*=l_\succ+l_\prec$,
$r_*=r_\succ+r_\prec$. }\end{defn}

By a direct computation or according to \cite{Sc}, $(l_\succ,
r_\succ, l_\prec, r_\prec, V)$ is a bimodule of a dendriform algebra
$(A,\succ, \prec)$ if and only if there exists a dendriform algebra
structure on  the direct sum $A\oplus V$ of the underlying vector
spaces of $A$ and $V$ given by ($\forall x,y\in A, u,v\in V$)
$$(x+u)\succ (y+v)=x\succ y+l_\succ(x)v+r_\succ (y)u,\;\;
(x+u)\prec (y+v)=x\prec y+l_\prec(x)v+r_\prec(y)u.\eqno (3.2.4)$$ We
denote it by $A\ltimes_{l_\succ,r_\succ, l_\prec,r_\prec}V$.

\begin{prop} Let $(l_\succ,r_\succ,l_\prec,r_\prec,V)$ be a bimodule of a
dendriform algebra $(A,\succ,\prec)$. Let $(A,*)$ be the associated
associative algebra. Then we have the following results.

{\rm (1)} Both $(l_\succ, r_\prec, V)$ and $(l_\succ+l_\prec,
r_\succ+r_\prec, V)$ are bimodules of $(A,*)$.

{\rm (2)} For any bimodule $(l,r, V)$ of $(A,*)$, $(l,0,0,r, V)$ is
a bimodule of $(A,\succ,\prec)$.

{\rm (3)} Both $(l_\succ+l_\prec, 0,0,r_\succ+r_\prec, V)$ and
$(l_\succ, 0,0,r_\prec, V)$ are bimodules of $(A,\succ,\prec)$.

{\rm (4)} The dendriform algebras
$A\ltimes_{l_\succ,r_\succ,l_\prec,r_\prec} V$ and
$A\ltimes_{l_\succ+l_\prec, 0,0,r_\succ+r_\prec}V$ have the same
associated associative algebra
$A\ltimes_{l_\succ+l_\prec,r_\succ+r_\prec} V$.

{\rm (5)} Let $l_\succ^*,r_\succ^*,l_\prec^*,r_\prec^*:A\rightarrow
\frak g\frak l (V^*)$ be the linear maps given by
$$\langle l_\succ^*(x)a^*,y\rangle=\langle l_\succ(x)y,
a^*\rangle,\;\;\langle r_\succ^*(x)a^*,y\rangle=\langle r_\succ(x)y,
a^*\rangle,\eqno (3.2.5)$$
$$\langle l_\prec^*(x)a^*,y\rangle=\langle l_\prec(x)y,
a^*\rangle,\;\;\langle r_\prec^*(x)a^*,y\rangle=\langle r_\prec(x)y,
a^*\rangle.\eqno (3.2.6)$$ Then $(r_\succ^*+r_\prec^*,
-l_\prec^*,-r_\succ^*,l_\succ^*+l_\prec^*, V^*)$ is a bimodule of
$(A,\succ,\prec)$.

{\rm (6)} Both $(r_\succ^*+r_\prec^*, 0,0,l_\succ^*+l_\prec^*, V^*)$
and $(r_\prec^*, 0,0,l_\succ^*, V^*)$ are bimodules of
$(A,\succ,\prec)$.

{\rm (7)} Both $(r_\succ^*+r_\prec^*, l_\succ^*+l_\prec^*,V^*)$ and
$(r_\prec^*, l_\succ^*, V^*)$ are bimodules of $(A,*)$.

{\rm (8)} The dendriform algebras
$A\ltimes_{r_\succ^*+r_\prec^*,-l_\prec^*,
-r_\succ^*,l_\succ^*+l_\prec^*} V^*$ and $A\ltimes_{r_\prec^*,
0,0,l_\succ^*}V^*$ have the same associative algebra
$A\ltimes_{r_\prec^*,l_\succ^*}V^*$.
\end{prop}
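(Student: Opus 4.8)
The plan is to treat the eight parts not as eight independent verifications but as a small dependency tree, doing genuine work only at (1), (2), and (5) and deriving the rest. Parts (1) and (2) are direct checks against the defining axioms. For the claim in (1) that $(l_\succ,r_\prec,V)$ is a bimodule of $(A,*)$, I would simply read off the three associative bimodule axioms (2.1.1): the identity $l_\succ(x*y)=l_\succ(x)l_\succ(y)$ is the first equation of (3.2.3), $r_\prec(x*y)=r_\prec(y)r_\prec(x)$ is the third equation of (3.2.1) with $x,y$ swapped, and the commuting relation $l_\succ(x)r_\prec(y)=r_\prec(y)l_\succ(x)$ is the second equation of (3.2.2). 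For the claim that $(l_*,r_*,V)=(l_\succ+l_\prec,r_\succ+r_\prec,V)$ is a bimodule, the cleanest route is conceptual: by the Schafer-type correspondence quoted before the proposition, $A\ltimes_{l_\succ,r_\succ,l_\prec,r_\prec}V$ is a dendriform algebra, and summing the two products in (3.2.4) shows its associated associative algebra, via (3.1.2), is exactly the semidirect sum $A\ltimes_{l_*,r_*}V$; associativity of this algebra is, by (2.1.2), equivalent to $(l_*,r_*,V)$ being a bimodule of $(A,*)$. This same count simultaneously yields (4), since $A\ltimes_{l_*,0,0,r_*}V$ also has associated associative algebra $A\ltimes_{l_*,r_*}V$.

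For (2), given a bimodule $(l,r,V)$ of $(A,*)$, I would substitute $l_\succ=l,\ r_\succ=0,\ l_\prec=0,\ r_\prec=r$ into the nine equations (3.2.1)--(3.2.3). Six collapse to $0=0$, and the three survivors, namely $r(x)r(y)=r(y*x)$, $r(x)l(y)=l(y)r(x)$, and $l(x*y)=l(x)l(y)$, are precisely the three associative bimodule axioms (2.1.1) (the second read backwards). Part (3) is then immediate: apply (2) to the two associative bimodules produced in (1) to conclude that $(l_*,0,0,r_*,V)$ and $(l_\succ,0,0,r_\prec,V)$ are bimodules of $(A,\succ,\prec)$.

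The substantive step is (5). Writing the proposed maps as $\tilde l_\succ=r_\succ^*+r_\prec^*$, $\tilde r_\succ=-l_\prec^*$, $\tilde l_\prec=-r_\succ^*$, $\tilde r_\prec=l_\succ^*+l_\prec^*$, one computes $\tilde l_*=r_\prec^*$ and $\tilde r_*=l_\succ^*$. I would verify the nine bimodule equations (3.2.1)--(3.2.3) for $(\tilde l_\succ,\tilde r_\succ,\tilde l_\prec,\tilde r_\prec)$ by dualizing the original nine. The mechanism is that taking adjoints reverses composition order, $(\phi\psi)^*=\psi^*\phi^*$, so each required identity among operators on $V^*$ is the $*$-image of one original identity on $V$, after the bookkeeping substitution $\succ\leftrightarrow\prec$, $l\leftrightarrow r$ together with the signs built into $\tilde r_\succ$ and $\tilde l_\prec$. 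For instance, the required $\tilde l_\prec(x\prec y)=\tilde l_\prec(x)\tilde l_*(y)$ reduces to $r_\succ^*(x\prec y)=r_\succ^*(x)r_\prec^*(y)$, which is the adjoint of the third relation of (3.2.2) with $x,y$ interchanged. I expect matching all nine to be the main obstacle: it is entirely mechanical but demands care with composition order and the minus signs, and no deeper idea is needed beyond fixing the dictionary once and applying it uniformly.

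The remaining parts follow formally. Part (7) is quickest via Lemma 2.1.2(1): since (1) gives that $(l_*,r_*,V)$ and $(l_\succ,r_\prec,V)$ are bimodules of $(A,*)$, their duals $(r_*^*,l_*^*,V^*)$ and $(r_\prec^*,l_\succ^*,V^*)$ are bimodules of $(A,*)$, which are exactly the two tuples in (7). Part (6) then follows by applying (2) to these two associative bimodules (equivalently, by applying (1) to the dendriform bimodule of (5)). Finally, (8) is the associated-associative-algebra computation already used for (4): using (5) to know that $A\ltimes_{\tilde l_\succ,\tilde r_\succ,\tilde l_\prec,\tilde r_\prec}V^*$ is a dendriform algebra, both it and $A\ltimes_{r_\prec^*,0,0,l_\succ^*}V^*$ have, by (3.1.2) and (3.2.4), associated associative algebra $A\ltimes_{\tilde l_*,\tilde r_*}V^*=A\ltimes_{r_\prec^*,l_\succ^*}V^*$.
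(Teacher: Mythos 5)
Your proposal is correct. The paper itself offers no argument here (its proof reads ``It is straightforward''), i.e.\ it implicitly leaves all eight parts to direct verification against the defining identities; what you do differently is compress that work into three genuine checks, (1), (2) and (5), and derive everything else formally: (3) by specializing (2) to the associative bimodules of (1), (7) from Lemma 2.1.2(1) applied to the two bimodules of (1), (6) from (2) applied to (7), and (4), (8) from the observation that summing the two products in (3.2.4) yields exactly the semidirect product (2.1.2) for $(l_*,r_*)$, whose associativity is guaranteed by Loday's theorem --- this also gives the second half of (1) without computation. All of these reductions are sound, and they buy a proof that is both shorter and more reusable than brute-force checking. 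One small inaccuracy worth fixing in (5): it is not true that every one of the nine required identities on $V^*$ is the adjoint of a \emph{single} original identity. Six of them are (with the $x\leftrightarrow y$ swap you describe), but three --- namely $\tilde l_\succ(x*y)=\tilde l_\succ(x)\tilde l_\succ(y)$, $\tilde r_\prec(x)\tilde r_\prec(y)=\tilde r_\prec(y*x)$, and the commutation $\tilde r_\prec(x)\tilde l_\succ(y)=\tilde l_\succ(y)\tilde r_\prec(x)$ --- dualize to $l_*(x*y)=l_*(x)l_*(y)$, $r_*(x*y)=r_*(y)r_*(x)$ and $l_*(x)r_*(y)=r_*(y)l_*(x)$, which are not among the nine dendriform bimodule axioms but are precisely the statement that $(l_*,r_*,V)$ is an associative bimodule, i.e.\ the second assertion of (1). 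Since you establish (1) first, the argument still closes, but your ``dictionary'' must be amended to allow these three composite identities; as stated, a reader following it literally would fail to match those three equations.
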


\begin{proof}
It is straightforward.\end{proof}

\begin{exam}
{\rm Let $(A,\succ,\prec)$ be a dendriform algebra. Then
$$(L_\succ,R_\succ, L_\prec,R_\prec, A),\;\;\;(L_\succ,0,0,R_\prec, A)\;\;
{\rm and}\;\;(L_\succ+L_\prec,0,0,R_\succ+R_\prec, A)$$ are
bimodules of $(A,\prec, \succ)$. On the other hand,
$$(R_\succ^*+R_\prec^*,-L_\prec^*, -R_\succ^*,L_\succ^*+L_\prec^*,
A^*),\;\;(R_\prec^*,0,0,L_\succ^*, A^*) \;\;{\rm
and}\;\;(R_\succ^*+R_\prec^*,0,0,L_\succ^*+L_\prec^*,A^*)$$ are
bimodules of $(A,\succ, \prec)$, too. There are two compatible
dendriform algebra structures
$$A\ltimes_{R_\succ^*+R_\prec^*,-L_\prec^*,
-R_\succ^*,L_\succ^*+L_\prec^*} A^*\;\; {\rm  and}\;\;
A\ltimes_{R_\prec^*, 0,0,L_\succ^*}A^*$$ on the same associative
algebra $A\ltimes_{R_\prec^*,L_\succ^*}A^*$.}\end{exam}

\begin{theorem}
Let $(A,\succ_A,{\prec_A})$ and $(B,{\succ_B},{\prec_B})$ be two
dendriform algebras. Suppose that there are linear maps
$l_{\succ_A},r_{\succ_A},l_{\prec_A},r_{\prec_A}:A\rightarrow \frak g\frak l (B)$
and $l_{\succ_B},r_{\succ_B},l_{\prec_B},r_{\prec_B}:B\rightarrow
\frak g\frak l (A)$ such that $(l_{\succ_A},r_{\succ_A},l_{\prec_A},r_{\prec_A})$
is a bimodule of $A$ and
$(l_{\succ_B},r_{\succ_B},l_{\prec_B},r_{\prec_B})$ is a bimodule of
$B$ and they satisfy the following 18 equations:
$$r_{\prec_A}(x)(a\prec_Bb)=a\prec_B(r_A(x)b)+r_{\prec_A}(l_B(b)x)a;\eqno (3.2.7)$$
$$l_{\prec_A}(l_{\prec_B}(a)x)b+(r_{\prec_A}(x)a)\prec_B b=a\prec_B
(l_A(x)b)+r_{\prec_A}(r_B(b)x)a;\eqno (3.2.8)$$
$$l_{\prec_A}(x)(a*_Bb)=(l_{\prec_A}(x)a)\prec_B
b+l_{\prec_A}(r_{\prec_B}(a)x)b;\eqno (3.2.9)$$
$$r_{\prec_A}(x)(a\succ_Bb)=r_{\succ_A}(l_{\prec_B}(b)x)a+a\succ_B
(r_{\prec_A}(x)b);\eqno (3.2.10)$$
$$l_{\prec_A}(l_{\succ_B}(a)x)b+(r_{\succ_A}(x)a)\prec_B b=a\succ_B
(l_{\prec_A}(x)b)+r_{\succ_A}(r_{\prec_B}(b)x)a;\eqno (3.2.11)$$
$$l_{\succ_A}(x)(a\prec_Bb)=(l_{\succ_A}(x)a)\prec_B
b+l_{\prec_A}(r_{\succ_B}(a)x)b;\eqno (3.2.12)$$
$$r_{\succ_A}(x)(a*_Bb)=a\succ_B(r_{\succ_A}(x)b)+r_{\succ_A}(l_{\succ_B}(b)x)a;\eqno
(3.2.13)$$
$$a\succ_B
(l_{\succ_A}(x)b)+r_{\succ_A}(r_{\succ_B}(b)x)a=l_{\succ_A}(l_B(a)x)b+(r_A(x)a)\succ_Bb;\eqno
(3.2.14)$$
$$l_{\succ_A}(x)(a\succ_B b)=(l_A(x)a)\succ_B
b+l_{\succ_A}(r_B(a)x)b;\eqno (3.2.15)$$
$$r_{\prec_B}(a)(x\prec_Ay)=x\prec_A(r_B(a)y)+r_{\prec_B}(l_A(y)a)x;\eqno (3.2.16)$$
$$l_{\prec_B}(l_{\prec_A}(x)a)y+(r_{\prec_B}(a)x)\prec_A y=x\prec_A
(l_B(a)y)+r_{\prec_B}(r_A(y)a)x;\eqno (3.2.17)$$
$$l_{\prec_B}(a)(x*_Ay)=(l_{\prec_B}(a)x)\prec_A
y+l_{\prec_B}(r_{\prec_A}(x)a)y;\eqno (3.2.18)$$
$$r_{\prec_B}(a)(x\succ_Ay)=r_{\succ_B}(l_{\prec_A}(y)a)x+x\succ_A
(r_{\prec_B}(a)y);\eqno (3.2.19)$$
$$l_{\prec_B}(l_{\succ_A}(x)a)y+(r_{\succ_B}(a)x)\prec_A y=x\succ_A
(l_{\prec_B}(a)y)+r_{\succ_B}(r_{\prec_A}(y)a)x;\eqno (3.2.20)$$
$$l_{\succ_B}(a)(x\prec_Ay)=(l_{\succ_B}(a)x)\prec_A
y+l_{\prec_B}(r_{\succ_A}(x)a)y;\eqno (3.2.21)$$
$$r_{\succ_B}(a)(x*_Ay)=x\succ_A(r_{\succ_B}(a)y)+r_{\succ_B}(l_{\succ_A}(y)a)x;\eqno
(3.2.22)$$
$$x\succ_A
(l_{\succ_B}(a)y)+r_{\succ_B}(r_{\succ_A}(y)a)x=l_{\succ_B}(l_A(x)a)y+(r_B(a)x)\succ_Ay;\eqno
(3.2.23)$$
$$l_{\succ_B}(a)(x\succ_A y)=(l_B(a)x)\succ_A
y+l_{\succ_B}(r_A(x)a)y,\eqno (3.2.24)$$ for any $x,y\in A,a,b\in B$
and $l_A=l_{\succ_A}+l_{\prec_A},r_A=r_{\succ_A}+r_{\prec_A}$,
$l_B=l_{\succ_B}+l_{\prec_B},r_B=r_{\succ_B}+r_{\prec_B}$. Then
there is a dendriform algebra structure on the direct sum $A\oplus
B$ of the underlying vector spaces of $A$ and $B$ given by
$$(x+a)\succ(y+b)=(x{\succ_A} y+r_{\succ_B}(b)x+l_{\succ_B}(a)y)+(l_{\succ_A}
(x)b+r_{\succ_A}(y)a+a{\succ_B} b),\eqno (3.2.25)$$
$$(x+a)\prec(y+b)=(x{\prec_A} y+r_{\prec_B}(b)x+l_{\prec_B}(a)y)+(l_{\prec_A}
(x)b+r_{\prec_A}(y)a+a{\prec_B} b),\eqno (3.2.26)$$ for any $x,y\in
A,a,b\in B$. We denote this dendriform algebra by
$A\bowtie^{l_{\succ_A},r_{\succ_A},l_{\prec_A},r_{\prec_A}}_{l_{\succ_B},r_{\succ_B},l_{\prec_B},r_{\prec_B}}B$
or simply $A\bowtie B$.  On the other hand, every dendriform algebra
which is the direct sum of the underlying vector spaces of two
subalgebras can be obtained from the above way.
\end{theorem}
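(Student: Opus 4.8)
The plan is to prove this exactly as the associative analogue, Theorem 2.1.4: both directions reduce to a direct verification, the only new feature being that the single associativity axiom is replaced by the three dendriform axioms of (3.1.1), and accordingly the six compatibility equations of the associative matched pair are replaced by eighteen. The guiding principle, recorded just before (3.2.4), is that a tuple of maps is a bimodule of a dendriform algebra precisely when the associated semidirect sum is dendriform; the present statement is the matched-pair refinement of that fact.

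For the forward direction I would check each of the three identities in (3.1.1) for the products $\succ$ and $\prec$ defined by (3.2.25) and (3.2.26) on $D = A\oplus B$. By multilinearity it suffices to test each identity on homogeneous triples $X,Y,Z$, where each argument lies entirely in $A$ or entirely in $B$, giving $2^3=8$ cases per identity. The two homogeneous cases ($X,Y,Z$ all in $A$, respectively all in $B$) reduce, since $A$ and $B$ are dendriform subalgebras, to the corresponding identity of (3.1.1) inside $A$ (respectively $B$), which holds by hypothesis. This leaves the six genuinely mixed cases per identity. For each I would expand both sides using (3.2.25)--(3.2.26) and then project the resulting equality onto the two summands. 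On projecting, one component reproduces one of the bimodule axioms (3.2.1)--(3.2.3) --- which holds because $(l_{\succ_A},r_{\succ_A},l_{\prec_A},r_{\prec_A})$ and $(l_{\succ_B},r_{\succ_B},l_{\prec_B},r_{\prec_B})$ are assumed to be bimodules --- while the other component is precisely one of the compatibility equations (3.2.7)--(3.2.24). Running over the three identities and the six mixed configurations, each of the eighteen equations occurs exactly once, so (3.1.1) holds on all of $D$, i.e. (3.2.25)--(3.2.26) define a dendriform algebra.

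For the converse, suppose $(D,\succ,\prec)$ is a dendriform algebra with $D=A\oplus B$ as vector spaces, $A$ and $B$ being subalgebras. Writing $\pi_A$ and $\pi_B$ for the projections onto the two summands, I would define the eight actions by decomposing the cross products: for $x\in A$, $a\in B$ set $l_{\succ_B}(a)x=\pi_A(a\succ x)$, $r_{\succ_A}(x)a=\pi_B(a\succ x)$, $r_{\succ_B}(b)x=\pi_A(x\succ b)$, $l_{\succ_A}(x)b=\pi_B(x\succ b)$, and analogously for $\prec$. Since $A$ and $B$ are subalgebras, the products of $D$ then take exactly the form (3.2.25)--(3.2.26) with $\succ_A,\prec_A$ and $\succ_B,\prec_B$ the restricted products. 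Applying the three axioms (3.1.1) of $D$ to the mixed homogeneous triples and projecting onto $A$ and $B$ --- the same expansions as above, read in reverse --- yields both the bimodule axioms (3.2.1)--(3.2.3) for the constructed maps and the eighteen compatibility equations (3.2.7)--(3.2.24). Hence $(A,B,l_{\succ_A},\ldots)$ is a matched pair and $D=A\bowtie B$.

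I expect no conceptual obstacle. The genuine difficulty is purely organizational: carrying out the eighteen expansions and correctly matching each mixed configuration, and each component ($A$ or $B$), to the right member of (3.2.1)--(3.2.3) or (3.2.7)--(3.2.24) without error. There is a clean bookkeeping scheme that makes this tractable --- each of the eighteen mixed (identity, configuration) pairs contributes exactly one bimodule axiom in one projection and exactly one compatibility equation in the other, matching the $9+9$ bimodule axioms and the eighteen compatibility equations --- but it is precisely this volume of routine casework that the associative case compressed into the phrase ``It is straightforward.''
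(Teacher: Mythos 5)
Your proposal is correct and is essentially the paper's own proof: the paper disposes of this theorem with the single line ``It is straightforward,'' i.e.\ precisely the direct case-by-case verification you describe. Your bookkeeping is accurate --- the $3\times 6=18$ mixed homogeneous cases each yield one bimodule axiom from (3.2.1)--(3.2.3) in one projection and exactly one of the compatibility equations (3.2.7)--(3.2.24) in the other, and the converse follows by defining the actions via the projections of the cross products and reading the same expansions in reverse.
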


\begin{proof} It is straightforward.\end{proof}

\begin{defn}
{\rm Let $(A,\succ_A,{\prec_A})$ and $(B,{\succ_B},{\prec_B})$ be
two dendriform algebras. Suppose that there are linear maps
$l_{\succ_A},r_{\succ_A},l_{\prec_A},r_{\prec_A}:A\rightarrow \frak g\frak l (B)$
and $l_{\succ_B},r_{\succ_B},l_{\prec_B},r_{\prec_B}:B\rightarrow
\frak g\frak l (A)$ such that $(l_{\succ_A},r_{\succ_A},l_{\prec_A},r_{\prec_A})$
is a bimodule of $A$ and
$(l_{\succ_B},r_{\succ_B},l_{\prec_B},r_{\prec_B})$ is a bimodule of
$B$. If equations (3.2.7)-(3.2.24) are satisfied, then
$(A,B,l_{\succ_A},r_{\succ_A},l_{\prec_A},r_{\prec_A},l_{\succ_B},r_{\succ_B},l_{\prec_B},r_{\prec_B})$
is called  a {\it matched pair of dendriform algebras}.}\end{defn}

\begin{remark}
{\rm  Obviously $B$ is an ideal of $A\bowtie B$ if and only if
$l_{\succ_B}=r_{\succ_B}=l_{\prec_B}=r_{\prec_B}=0$. If $B$ is a
trivial ideal, then
$A\bowtie^{l_{\succ_A},r_{\succ_A},l_{\prec_A},r_{\prec_A}}_{0,0,0,0}
B\cong A\ltimes_{l_{\succ_A},r_{\succ_A},l_{\prec_A},r_{\prec_A}}
B.$}\end{remark}

\begin{coro} Let
$(A,B,l_{\succ_A},r_{\succ_A},l_{\prec_A},r_{\prec_A},l_{\succ_B},r_{\succ_B},l_{\prec_B},r_{\prec_B})$
be a matched pair of dendriform algebras. Then $(A,B,
l_{\succ_A}+l_{\prec_A},
r_{\succ_A}+r_{\prec_A},l_{\succ_B}+l_{\prec_B},
r_{\succ_B}+r_{\prec_B})$ is a matched pair of the associated
associative algebras $(A,*_A)$ and $(B,*_B)$.\end{coro}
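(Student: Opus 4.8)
The plan is to verify directly that, with $l_A=l_{\succ_A}+l_{\prec_A}$, $r_A=r_{\succ_A}+r_{\prec_A}$, $l_B=l_{\succ_B}+l_{\prec_B}$ and $r_B=r_{\succ_B}+r_{\prec_B}$, the tuple $(A,B,l_A,r_A,l_B,r_B)$ meets the definition of a matched pair of the associated associative algebras $(A,*_A)$ and $(B,*_B)$ (Definition 2.1.5). Two things are required: that $(l_A,r_A)$ is a bimodule of $(A,*_A)$ acting on $B$ and $(l_B,r_B)$ a bimodule of $(B,*_B)$ acting on $A$; and that the six compatibility equations (2.1.4)--(2.1.9) hold. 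The first requirement is immediate from Proposition 3.2.2(1): applied to the dendriform bimodule $(l_{\succ_A},r_{\succ_A},l_{\prec_A},r_{\prec_A})$ of $A$ on $B$ it gives that $(l_{\succ_A}+l_{\prec_A},r_{\succ_A}+r_{\prec_A})$ is a bimodule of $(A,*_A)$, and symmetrically for $B$. Hence only (2.1.4)--(2.1.9) remain.

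The key observation, which makes the whole computation transparent, is that each associative-level equation is the sum of exactly three of the eighteen dendriform-level equations (3.2.7)--(3.2.24), the grouping being dictated by $*_A=\succ_A+\prec_A$ and $*_B=\succ_B+\prec_B$. For example, to obtain (2.1.4) I would add (3.2.15), (3.2.12) and (3.2.9). Writing $l_A(x)(a*_Bb)=l_{\succ_A}(x)(a\succ_Bb)+l_{\succ_A}(x)(a\prec_Bb)+l_{\prec_A}(x)(a*_Bb)$ and substituting these three identities, the terms of the forms $(l_A(x)a)\succ_Bb$ and $(l_A(x)a)\prec_Bb$ combine into $(l_A(x)a)*_Bb$, while the leftover terms $l_{\succ_A}(r_B(a)x)b+l_{\prec_A}(r_{\succ_B}(a)x)b+l_{\prec_A}(r_{\prec_B}(a)x)b$ collapse to $l_A(r_B(a)x)b$ by means of $r_B=r_{\succ_B}+r_{\prec_B}$; the result is precisely (2.1.4).

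Repeating this bookkeeping, I expect the full correspondence to be the following: (2.1.4) is the sum of (3.2.9), (3.2.12) and (3.2.15); (2.1.5) the sum of (3.2.7), (3.2.10) and (3.2.13); and (2.1.8) the sum of (3.2.8), (3.2.11) and (3.2.14); together with their $A\leftrightarrow B$ mirror images, namely (2.1.6) from (3.2.18), (3.2.21), (3.2.24); (2.1.7) from (3.2.16), (3.2.19), (3.2.22); and (2.1.9) from (3.2.17), (3.2.20), (3.2.23). The latter three need no separate argument, being the reflections of the former three. In every case the cancellation is of the same telescoping kind: the ``action'' terms recombine through $l_A=l_{\succ_A}+l_{\prec_A}$ or $r_A=r_{\succ_A}+r_{\prec_A}$ and their $B$-analogues, while the ``product'' terms recombine through $*_A$ or $*_B$. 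The only real difficulty is the bookkeeping itself---pairing each of the six associative equations with the correct triple among the eighteen dendriform equations and keeping straight which products are $\succ$, $\prec$ or $*$ so that all cross terms vanish. Once the sums are taken each combination is an identity, so the verification is routine and the corollary follows.
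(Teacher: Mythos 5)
Your proof is correct, but it follows a genuinely different route from the paper's. You verify the definition head-on: the bimodule conditions via Proposition 3.2.2(1), and then each of the six associative matched-pair identities (2.1.4)--(2.1.9) as the sum of exactly three of the eighteen dendriform identities, with the groupings (2.1.4)$=$(3.2.9)$+$(3.2.12)$+$(3.2.15), (2.1.5)$=$(3.2.7)$+$(3.2.10)$+$(3.2.13), (2.1.8)$=$(3.2.8)$+$(3.2.11)$+$(3.2.14), plus the $A\leftrightarrow B$ mirrors --- and these groupings do check out (e.g.\ for (2.1.4), adding (3.2.15), (3.2.12), (3.2.9) makes the product terms recombine into $(l_A(x)a)*_Bb$ and the action terms into $l_A(r_B(a)x)b$, exactly as you say). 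The paper instead argues structurally: the dendriform matched pair yields the dendriform algebra $A\bowtie B$ on $A\oplus B$ by Theorem 3.2.4; passing to its associated associative algebra, the product computed from (3.2.25)--(3.2.26) is precisely formula (2.1.10) with $l_A=l_{\succ_A}+l_{\prec_A}$, $r_A=r_{\succ_A}+r_{\prec_A}$, $l_B=l_{\succ_B}+l_{\prec_B}$, $r_B=r_{\succ_B}+r_{\prec_B}$, and $(A,*_A)$, $(B,*_B)$ sit inside as subalgebras; the converse clause of Theorem 2.1.4 (every associative algebra splitting as a direct sum of two subalgebras arises from a matched pair) then finishes the proof in one line. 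The paper's route is shorter and makes explicit the fact actually used later (e.g.\ in Theorem 4.1.7): the associative double construction underlying the dendriform double construction. Your route buys something the paper's proof hides: it does not rely on the converse part of Theorem 2.1.4 (whose proof the paper leaves as ``straightforward''), and it exhibits the precise three-to-one correspondence between the dendriform and associative compatibility equations, which is useful bookkeeping in its own right.
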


\begin{proof} In fact, the associated associative algebra $(A\bowtie B, *)$ is exactly the
associative algebra obtained from the matched pair $(A, B,l_A,r_A,
l_B,r_B)$ of associative algebras:
$$(x+a)*(y+b)=x*_Ay+l_B(a)y+r_B(b)x+a*_Bb+l_A(x)b+r_A(y)a,\;\;\forall
x,y\in A,a,b\in B,$$ where $l_A=l_{\succ_A}+l_{\prec_A},
r_A=r_{\succ_A}+r_{\prec_A},l_B=l_{\succ_B}+l_{\prec_B},
r_B=r_{\succ_B}+r_{\prec_B}$.\end{proof}

\section{Double constructions of Connes cocycles and
an analogue of the classical Yang-Baxter equation}

\subsection{Connes cocycles and dendriform algebras}

\begin{theorem}  Let $(A,*)$ be an associative algebra and $\omega$ be a
nondegenerate Connes cocycle. Then there exists a compatible
dendriform algebra structure $\succ,\prec$ on $A$ given by
$$\omega(x\succ y,z)=\omega(y, z*x),\;\; \omega(x\prec
y,z)=\omega(x,y*z),\;\; \forall x,y,z\in A.\eqno (4.1.1)$$
\end{theorem}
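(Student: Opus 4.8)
The plan is to make the nondegeneracy of $\omega$ the organizing principle. Since $\omega$ is nondegenerate and $A$ is finite-dimensional, the map $x\mapsto\omega(x,-)$ is a linear isomorphism $A\to A^*$, so any element of $A$ is determined by its $\omega$-pairing against all of $A$. First I would use this to \emph{define} the two products: for fixed $x,y$ the assignment $z\mapsto\omega(y,z*x)$ is a linear functional on $A$, hence there is a unique element, call it $x\succ y$, with $\omega(x\succ y,z)=\omega(y,z*x)$ for all $z$; likewise $z\mapsto\omega(x,y*z)$ produces a unique $x\prec y$ with $\omega(x\prec y,z)=\omega(x,y*z)$. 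Bilinearity of $\succ$ and $\prec$ then follows from bilinearity of $\omega$ and $*$ together with the uniqueness.

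Next I would establish compatibility, namely that the associated dendriform product $x\bullet y:=x\succ y+x\prec y$ coincides with the original product $x*y$. Pairing against an arbitrary $z$ and using antisymmetry of $\omega$, I would rewrite $\omega(x\bullet y,z)=\omega(y,z*x)+\omega(x,y*z)=-\omega(z*x,y)-\omega(y*z,x)$. The Connes cocycle identity $(1.0.2)$ applied to the triple $(x,y,z)$ reads $\omega(x*y,z)+\omega(y*z,x)+\omega(z*x,y)=0$, so the right-hand side is exactly $\omega(x*y,z)$. As this holds for every $z$, nondegeneracy yields $x\bullet y=x*y$. This is the single step where the cocycle condition, rather than mere associativity, is genuinely used, and it is precisely what justifies writing both products with the same symbol $*$ as in $(3.1.1)$.

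With compatibility in hand I would verify the three dendriform axioms of $(3.1.1)$, in each case by pairing both sides against an arbitrary $w$, unwinding the definitions of $\succ$ and $\prec$, and invoking associativity of $*$; nondegeneracy then upgrades each resulting scalar identity to an equality of elements. For instance, for $(x\prec y)\prec z=x\prec(y\bullet z)$ one finds $\omega((x\prec y)\prec z,w)=\omega(x,y*(z*w))$ while $\omega(x\prec(y\bullet z),w)=\omega(x,(y\bullet z)*w)$, and these agree after replacing $y\bullet z$ by $y*z$ and reassociating. The mixed axiom $(x\succ y)\prec z=x\succ(y\prec z)$ reduces, by the same unwinding, to $(z*w)*x=z*(w*x)$ and needs associativity alone, while the third axiom $x\succ(y\succ z)=(x\bullet y)\succ z$ reduces to $(w*x)*y=w*(x\bullet y)$ and again closes up only after substituting $x\bullet y=x*y$. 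Thus the first and third axioms explicitly consume compatibility, which is why that step must be secured first.

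I expect no serious obstacle here: once nondegeneracy converts every assertion into an identity of $\omega$-pairings, the dendriform axioms collapse to reassociations of $*$, and the only nonformal input is the short cocycle computation giving compatibility. The main thing requiring care will be the bookkeeping — keeping straight which argument slot of $\omega$ is being tested in each definition, and consistently reading the product inside the axioms as the dendriform-associated product $\succ+\prec$ so that its identification with $*$ is invoked exactly where needed.
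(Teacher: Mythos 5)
Your proof is correct, but it takes a more elementary, self-contained route than the paper. The paper's own proof is two lines: it defines $T:A\rightarrow A^*$ by $\langle T(x),y\rangle=\omega(x,y)$, asserts that $T^{-1}$ is an invertible ${\mathcal O}$-operator of $(A,*)$ associated to the bimodule $(R^*_*,L^*_*)$, and then invokes Corollary 3.1.3 (resting on Theorem 3.1.2, imported from [BGN2]) to produce a compatible dendriform structure, which by formula (3.1.5) is exactly (4.1.1). Note that the one assertion the paper leaves unproven --- that $T^{-1}$ satisfies the ${\mathcal O}$-operator identity --- is, after substituting $x=T^{-1}(a^*)$, $y=T^{-1}(b^*)$, literally the statement $x*y=x\succ y+x\prec y$ with $\succ,\prec$ defined by (4.1.1); so your compatibility computation via the cocycle identity and antisymmetry is precisely the hidden content of the paper's claim, made explicit. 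Where the two arguments genuinely diverge is afterward: the paper gets the three dendriform axioms for free from the general ${\mathcal O}$-operator machinery, whereas you verify them by hand, pairing against an arbitrary element and reducing each to associativity of $*$ (correctly noting that the first and third axioms also consume compatibility, while the middle one needs associativity alone). Your approach buys independence from [BGN2] and makes transparent exactly where the Connes cocycle condition versus mere associativity is used; the paper's approach buys brevity and situates the theorem inside the ${\mathcal O}$-operator framework that organizes the rest of the paper (e.g., Theorem 4.4.13 and the duality of Table 1), so the axiom-checking is done once in general rather than repeated here.
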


\begin{proof} Define a linear map $T:A\rightarrow A^*$ by
$\langle T(x),y\rangle=\omega (x,y),\;\;\forall x,y\in A$. Then $T$
is invertible and $T^{-1}$ is an $\mathcal O$-operator of the
associative algebra $(A,*)$ associated to the bimodule
$(R^*_*,L_*^*)$. By Corollary 3.1.3, there is a compatible
dendriform algebra structure $\succ,\prec$ on $(A,*)$ given by
$$x\succ y=T^{-1}R_*^*(x)T(y),\;\;x\prec y=T^{-1}L_*^*(y)T(x),\;\;\forall
x,y\in A,$$ which gives exactly equation (4.1.1).
\end{proof}

Next, we turn to the double construction of Connes cocycles. Let
$(A,*_A)$ be an associative algebra and suppose that there is a
associative algebra structure $*_{A^*}$ on its dual space $A^*$. We
construct an associative algebra structure on the direct sum
$A\oplus A^*$ of the underlying vector spaces of $A$ and $A^*$ such
that both $A$ and $A^*$ are subalgebras and the antisymmetric
bilinear form on $A\oplus A^*$ given by equation (1.4.1) is a
Connes cocycle on $A\oplus A^*$. Such a construction is called {a
double construction of Connes cocycle associated to $(A,*_A)$ and
$(A^*,*_{A^*})$} and we denote it by $(T(A)=A\bowtie A^*, \omega)$.

\begin{coro}Let $(T(A)=A\bowtie A^*,\omega)$ be a double construction of
Connes cocycle. Then there exists a compatible dendriform algebra
structure $\succ,\prec$ on $T(A)$ defined by equation {\rm (4.1.1)}.
Moreover, $A$ and $A^*$ are dendriform subalgebras with this
product.\end{coro}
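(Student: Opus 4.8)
The plan is to get the dendriform structure for free from Theorem 4.1.1 and then to prove the two subalgebra assertions by an orthogonality argument that exploits the very special shape of the form (1.4.1).

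First I would note that the natural antisymmetric form $\omega$ defined by equation (1.4.1) is nondegenerate: if $\omega(x+a^*,\cdot)=0$, then pairing against $A$ forces $\langle a^*,z\rangle=0$ for all $z\in A$, hence $a^*=0$, and pairing against $A^*$ forces $\langle x,z^*\rangle=0$ for all $z^*\in A^*$, hence $x=0$. Since $(T(A),*)$ is an associative algebra and $\omega$ is a nondegenerate Connes cocycle on it, Theorem 4.1.1 applies directly and yields a compatible dendriform structure $\succ,\prec$ on $T(A)$ determined by equation (4.1.1). This settles the first assertion with no extra work, so the whole content of the corollary is the ``Moreover'' part.

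The key observation for the subalgebra claim is that (1.4.1) exhibits $A$ and $A^*$ as the two Lagrangian (maximal isotropic) subspaces of the symplectic form $\omega$: one reads off directly that $\omega(x,y)=0$ for all $x,y\in A$ and $\omega(a^*,b^*)=0$ for all $a^*,b^*\in A^*$. Together with nondegeneracy this gives the orthogonality characterizations I will use, namely $w\in A$ if and only if $\omega(w,z)=0$ for all $z\in A$, and $w\in A^*$ if and only if $\omega(w,z^*)=0$ for all $z^*\in A^*$; both follow immediately from the explicit formula and the nondegeneracy of the pairing $\langle\,,\,\rangle$.

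With these in hand the closure is a one-line computation in each case. For $x,y\in A$ and arbitrary $z\in A$, equation (4.1.1) gives $\omega(x\succ y,z)=\omega(y,z*x)$, and since $A$ is an associative subalgebra $z*x\in A$, whence $\omega(y,z*x)=0$ by isotropy of $A$; as $z\in A$ was arbitrary, the orthogonality characterization forces $x\succ y\in A$, and $\omega(x\prec y,z)=\omega(x,y*z)$ treats $\prec$ identically. Interchanging the roles of $A$ and $A^*$, for $a^*,b^*,z^*\in A^*$ one gets $\omega(a^*\succ b^*,z^*)=\omega(b^*,z^**a^*)=0$ and $\omega(a^*\prec b^*,z^*)=\omega(a^*,b^**z^*)=0$, so $a^*\succ b^*,\,a^*\prec b^*\in A^*$. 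I expect no genuine obstacle here; the only step demanding any care is setting up the two orthogonality characterizations correctly, after which closure is completely forced by (4.1.1) and the hypothesis that $A$ and $A^*$ are associative subalgebras.
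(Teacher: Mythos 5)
Your proposal is correct and takes essentially the same approach as the paper: the paper likewise gets the dendriform structure from Theorem 4.1.1 and then, writing $x\succ y=a+b^*$ with $a\in A$, $b^*\in A^*$, uses the isotropy $\omega(A,A)=\omega(A^*,A^*)=0$, equation (4.1.1), and closure of $*$ on the subalgebras to show $\omega(b^*,\cdot)$ vanishes on all of $T(A)$, whence $b^*=0$ by nondegeneracy. Your packaging of that last step as the Lagrangian property $A^\perp=A$ (and $(A^*)^\perp=A^*$) is only a cosmetic reorganization of the same argument.
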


\begin{proof} The first half follows from Theorem 4.1.1. Let
$x,y\in A$. Set $x\succ y=a+b^*$, where $a\in A$, $b^*\in A^*$.
Since $A$ is an associative subalgebra of $T(A)$ and $\omega
(A,A)=\omega (A^*,A^*)=0$, we have
$$\omega (b^*,A^*)=\omega (b^*, A)=\omega (x\succ y, A)=\omega (y,
A*x)=0.$$ Therefore $b^*=0$ due to the nondependence of $\omega$.
Hence $x\succ y=a\in A$. Similarly, $x\prec y\in A$. Thus $A$ is a
dendriform subalgebra of $T(A)$ with the product $\succ,\prec$. By
symmetry of $A$ and $A^*$, $A^*$ is also a dendriform
subalgebra.\end{proof}

\begin{defn}{\rm Let $(T(A_1)=A_1\bowtie
A_1^*,\omega_1)$ and $(T(A_2)=A_2\bowtie A_2^*,\omega_2)$ be two
double constructions of Connes cocycles. They are {\it isomorphic}
if there exists an isomorphism of associative algebras
$\varphi:T(A_1) \rightarrow T(A_2)$ satisfying the following
conditions:
$$\varphi(A_1)=A_2,\;\;\varphi(A_1^*)=A_2^*,\;\;\omega_1(x,y)=\varphi^*\omega_2(x,y)=
\omega_2(\varphi(x),\varphi(y)),\forall\; x,y\in A_1.
\eqno(4.1.3)$$}\end{defn}

\begin{prop} Two double constructions of Connes cocycles
$(T(A_1)=A_1\bowtie A_1^*,\omega_1)$ and $(T(A_2)=A_2\bowtie
A_2^*,\omega_2)$ are isomorphic if and only if there exists a
dendriform algebra isomorphism $\varphi:T(A_1)\rightarrow T(A_2)$
satisfying equation {\rm (4.1.3)}, where the dendriform algebra
structures on $T(A_1)$ and $T(A_2)$ are given by equation {\rm
(4.1.1)} respectively.\end{prop}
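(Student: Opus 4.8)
The plan is to show that, under the correspondence of Theorem 4.1.1, an associative algebra isomorphism that intertwines the two Connes cocycles is the \emph{same} map as a dendriform algebra isomorphism that intertwines them, so that the two notions of ``isomorphic'' coincide. The point is that the products $\succ,\prec$ on each $T(A_i)$ are defined by equation (4.1.1) purely from the associative product and the nondegenerate form $\omega_i$; hence any map preserving both the associative product and the cocycle must preserve $\succ$ and $\prec$, and conversely a dendriform isomorphism automatically preserves the associated associative product because of the compatibility $x*y=x\succ y+x\prec y$. Throughout I read (4.1.3) as the genuine intertwining relation $\omega_1(x,y)=\omega_2(\varphi(x),\varphi(y))$ for all $x,y\in T(A_1)$ (on $A_1$ alone both sides vanish, exactly as in the Frobenius analogue of Definition 2.2.10), together with $\varphi(A_1)=A_2$ and $\varphi(A_1^*)=A_2^*$.

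For the forward direction I would assume the two double constructions are isomorphic, so by Definition 4.1.3 there is an associative algebra isomorphism $\varphi:T(A_1)\to T(A_2)$ satisfying (4.1.3), and then prove $\varphi(x\succ y)=\varphi(x)\succ\varphi(y)$ by testing against an arbitrary element of $T(A_2)$ and invoking the nondegeneracy of $\omega_2$. Writing any element of $T(A_2)$ as $\varphi(z)$ by surjectivity, I would compute
$$\omega_2(\varphi(x\succ y),\varphi(z))=\omega_1(x\succ y,z)=\omega_1(y,z*x)=\omega_2(\varphi(y),\varphi(z*x))=\omega_2(\varphi(y),\varphi(z)*\varphi(x)),$$
using (4.1.3), the definition (4.1.1) of $\succ$ on $T(A_1)$, (4.1.3) again, and the fact that $\varphi$ is an associative homomorphism; by (4.1.1) on $T(A_2)$ the last term equals $\omega_2(\varphi(x)\succ\varphi(y),\varphi(z))$. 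Since $\varphi(z)$ ranges over all of $T(A_2)$ and $\omega_2$ is nondegenerate, this forces $\varphi(x\succ y)=\varphi(x)\succ\varphi(y)$; the identical argument applied to the $\prec$-defining identity of (4.1.1) gives $\varphi(x\prec y)=\varphi(x)\prec\varphi(y)$. Thus $\varphi$ is a dendriform algebra isomorphism still satisfying (4.1.3).

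For the converse I would take a dendriform algebra isomorphism $\varphi:T(A_1)\to T(A_2)$ satisfying (4.1.3) and use $x*y=x\succ y+x\prec y$ on both sides to get $\varphi(x*y)=\varphi(x\succ y)+\varphi(x\prec y)=\varphi(x)\succ\varphi(y)+\varphi(x)\prec\varphi(y)=\varphi(x)*\varphi(y)$, so $\varphi$ is an invertible associative algebra homomorphism. Together with $\varphi(A_1)=A_2$, $\varphi(A_1^*)=A_2^*$ and the intertwining of the cocycles already contained in (4.1.3), this is precisely the data of an isomorphism of double constructions of Connes cocycles in the sense of Definition 4.1.3, which closes the equivalence.

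I do not expect a serious obstacle here: once (4.1.3) is correctly understood as a real intertwining of $\omega_1$ and $\omega_2$ on all of $T(A_1)$, both implications are forced by the defining identities (4.1.1) and the nondegeneracy of $\omega_2$. The only place demanding a little care is the forward direction, where one must not try to evaluate $\varphi(x\succ y)$ directly but instead characterize it weakly through pairings and then strip off the test vector $\varphi(z)$ via nondegeneracy and surjectivity; treating the $\succ$ and $\prec$ identities symmetrically then completes the proof.
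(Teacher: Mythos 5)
Your proof is correct, and since the paper disposes of this proposition with the single line ``It is straightforward,'' your argument is exactly the intended one: use nondegeneracy of $\omega_2$ and surjectivity of $\varphi$ to transfer the defining identities (4.1.1) across the isomorphism for the forward direction, and use $x*y=x\succ y+x\prec y$ for the converse. Your reading of (4.1.3) as an intertwining on all of $T(A_1)$ (rather than the literal, vacuous restriction to $A_1$) is also the right one, matching the Frobenius analogue (2.2.12).
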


\begin{proof} It is straightforward.\end{proof}

\begin{theorem} Let $(A,\succ_A,\prec_A)$ be a dendriform algebra and
$(A,*_A)$ be the associated associative algebra. Suppose that
there is a dendriform algebra structure
$``\succ_{A^*},\prec_{A^*}"$ on its dual space $A^*$ and
$(A^*,*_{A^*})$ is the associated associative algebra. Then there
exists a double construction of Connes cocycle associated to
$(A,*_A)$ and $(A,*_{A^*})$ if and only if
$(A,A^*,R^*_{\prec_A},L^*_{\succ_A},R^*_{\prec_{A^*}},L^*_{\succ_{A^*}})$
is a matched pair of the associative algebras. Moreover, every
double construction of Connes cocycle can be obtained from the
above way.\end{theorem}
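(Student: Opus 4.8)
The plan is to mimic the proof of Theorem 2.2.1, replacing the invariance of $\mathcal B$ by the Connes cocycle condition (1.0.2) and using Corollary 4.1.2 to supply the dendriform products that connect $\omega$ to the operators $R^*_{\prec_A},L^*_{\succ_A},R^*_{\prec_{A^*}},L^*_{\succ_{A^*}}$. Throughout I will exploit the two projections built into (1.4.1): for $w\in A\oplus A^*$ with components $w_A,w_{A^*}$ one has $\omega(w,z)=\langle w_{A^*},z\rangle$ when $z\in A$ and $\omega(w,z)=-\langle w_A,z\rangle$ when $z\in A^*$, so that pairing through $\omega$ against an element of $A$ (resp. of $A^*$) reads off the $A^*$-component (resp. the $A$-component).

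For the ``if'' direction I would first invoke Theorem 2.1.4: the matched pair $(A,A^*,R^*_{\prec_A},L^*_{\succ_A},R^*_{\prec_{A^*}},L^*_{\succ_{A^*}})$ yields an associative product $*$ on $A\oplus A^*$ making $A,A^*$ subalgebras, with mixed products $(x*a^*)_{A^*}=R^*_{\prec_A}(x)a^*$ and $(a^**x)_{A^*}=L^*_{\succ_A}(x)a^*$ by (2.1.10) (and symmetrically for $A^*$ acting on $A$). Then I would check (1.0.2) on arguments drawn from $A$ and $A^*$. Since $\omega$ vanishes on $A\times A$ and on $A^*\times A^*$ and both summands are subalgebras, the all-$A$ and all-$A^*$ cases are immediate, and by the symmetry of the construction only the case $x,y\in A$, $a^*\in A^*$ needs work. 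There
$$\omega(x*y,a^*)+\omega(y*a^*,x)+\omega(a^**x,y)=-\langle x*_Ay,a^*\rangle+\langle a^*,x\prec_A y\rangle+\langle a^*,x\succ_A y\rangle,$$
which telescopes to $0$ precisely because $x\prec_A y+x\succ_A y=x*_A y$. Hence $\omega$ is a Connes cocycle and $(A\bowtie A^*,\omega)$ is the desired double construction.

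For the ``only if'' direction (which simultaneously proves the final sentence) I would start from an arbitrary double construction $(A\bowtie A^*,\omega)$ and apply Corollary 4.1.2 to obtain the compatible dendriform structure $\succ,\prec$ on $A\bowtie A^*$ defined by (4.1.1), with $A,A^*$ as dendriform subalgebras; its restrictions are $\succ_A,\prec_A$ and $\succ_{A^*},\prec_{A^*}$, whose associated associative products are $*_A,*_{A^*}$. Theorem 2.1.4 writes the product through some matched pair $(A,A^*,l_A,r_A,l_{A^*},r_{A^*})$, and the key step is to identify these actions. Feeding $z=c^*\in A^*$ into $\omega(x\prec y,z)=\omega(x,y*z)$ (with $x,y\in A$) and using $(y*c^*)_{A^*}=l_A(y)c^*$ yields $\langle x\prec_A y,c^*\rangle=\langle x,l_A(y)c^*\rangle$ for all $c^*$, so $R_{\prec_A}(y)=l_A(y)^*$ and hence $l_A=R^*_{\prec_A}$; feeding $c^*$ into $\omega(x\succ y,z)=\omega(y,z*x)$ gives $r_A=L^*_{\succ_A}$ in the same way. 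The $A\leftrightarrow A^*$ symmetry then delivers $l_{A^*}=R^*_{\prec_{A^*}}$ and $r_{A^*}=L^*_{\succ_{A^*}}$, identifying the matched pair as claimed.

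I expect the main obstacle to be the bookkeeping in the ``only if'' step rather than any conceptual difficulty: one must track which summand each mixed product falls into, carry the signs dictated by (1.4.1), and perform the transpose and double-dual identifications (legitimate because all spaces are finite-dimensional) that convert the associative actions into duals of the dendriform multiplications. A related point to state carefully is that the dendriform structures occurring in the matched pair are exactly those recovered from $\omega$ via (4.1.1), so that the two directions are mutually inverse; this, together with the identifications above, is what yields both the equivalence and the surjectivity asserted in the last sentence.
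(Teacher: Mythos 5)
Your proposal is correct and follows essentially the same route as the paper, whose entire proof is the remark that "the conclusion can be obtained by a similar proof as of Theorem 2.2.1": namely, use Theorem 2.1.4 to pass between the product on $A\oplus A^*$ and a matched pair, verify the cyclic condition (1.0.2) directly in the "if" direction, and in the "only if" direction identify the matched-pair actions $l_A,r_A,l_{A^*},r_{A^*}$ with $R^*_{\prec_A},L^*_{\succ_A},R^*_{\prec_{A^*}},L^*_{\succ_{A^*}}$ by pairing through $\omega$, with the dendriform products supplied by Corollary 4.1.2. Your closing caveat---that the dendriform structures appearing in the recovered matched pair are precisely those induced by $\omega$ via (4.1.1)---is exactly the point that makes the "moreover" clause work, and it is handled the same way (implicitly) in the paper.
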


\begin{proof} The conclusion can be obtained by a similar proof as
of Theorem 2.2.1.
\end{proof}

\begin{coro} Let $(A,\succ,\prec)$ be a dendriform algebra and
$(R^*_\prec, L^*_\succ)$ be the bimodule of the associated
associative algebra $(A,*)$. Then
$(T(A)=A\ltimes_{R^*_\prec,L_\succ^*}A^*,\omega)$ is a double
construction of Connes cocycle. Conversely, let $(T(A)=A\bowtie
A^*,\omega)$ be a double construction of Connes cocycle. If $A^*$ is
an ideal of $T(A)$, then $A^*$ is a trivial associative algebra and
hence $T(A)$ is isomorphic to the semidirect $A\ltimes_{L_{T(A)},
R_{T(A)}}A^*$. Furthermore, this double construction of Connes
cocycle is isomorphic to the double construction of Connes cocycle
$(T(A)=A\ltimes_{R^*_\prec,L_\succ^*}A^*,\omega)$ which the
dendriform algebra structure on $A$ is given by $\omega$ from
equation {\rm (4.1.1)}.\end{coro}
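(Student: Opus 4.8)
The plan is to establish the three assertions in turn, drawing on the dendriform-algebra results of Section 3 together with Corollaries 4.1.2 and 3.1.5.

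For the first statement, I would verify the three defining conditions of a double construction of Connes cocycle for $(T(A)=A\ltimes_{R^*_\prec,L^*_\succ}A^*,\omega)$. Conditions (1) and (2) are immediate: the underlying space is $A\oplus A^*$, the base $A$ is a subalgebra of the semidirect sum, and $A^*$ (with the trivial product) is an ideal, hence a subalgebra. For condition (3), Corollary 3.1.5 directly asserts that the natural antisymmetric form $\omega$ of (1.4.1) is a Connes cocycle on $A\ltimes_{R^*_\prec,L^*_\succ}A^*$: it is the form induced by $r^{-1}$ for the nondegenerate antisymmetric solution $r=\sum_i(e_i\otimes e_i^*-e_i^*\otimes e_i)$ of the associative Yang-Baxter equation supplied by that corollary (cf.\ Theorem 2.4.5). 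Hence $(T(A),\omega)$ is a double construction of Connes cocycle.

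For the converse, suppose $(T(A)=A\bowtie A^*,\omega)$ is a double construction of Connes cocycle in which $A^*$ is an ideal. To see that $A^*$ is a trivial associative algebra, I would apply the Connes cocycle identity (1.0.2) to a triple $(a^*,b^*,x)$ with $a^*,b^*\in A^*$, $x\in A$:
$$\omega(a^**b^*,x)+\omega(b^**x,a^*)+\omega(x*a^*,b^*)=0.$$
Since $A^*$ is an ideal, $b^**x$ and $x*a^*$ lie in $A^*$, so the last two terms vanish because $\omega$ is zero on $A^*\times A^*$ by (1.4.1). Thus $\omega(a^**b^*,x)=0$ for all $x\in A$; as $a^**b^*\in A^*$ (a subalgebra) and $\omega(a^**b^*,x)=\langle a^**b^*,x\rangle$ by (1.4.1), nondegeneracy of the natural pairing forces $a^**b^*=0$. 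Being a trivial ideal, $A^*$ then yields the decomposition $T(A)\cong A\ltimes_{L_{T(A)},R_{T(A)}}A^*$ by Remark 2.1.6 (a special case of Theorem 2.1.4), where $L_{T(A)},R_{T(A)}$ denote the left and right multiplications of $T(A)$ restricted to $A^*$.

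It remains to identify this semidirect sum with the standard form. By Corollary 4.1.2, $\omega$ endows $T(A)$ with a compatible dendriform structure via (4.1.1) for which $A$ is a dendriform subalgebra; write $\succ,\prec$ for the resulting operations on $A$, whose associated associative product is the original one. I would then match the bimodule actions: for $x\in A$, $a^*\in A^*$, $y\in A$, using $x*a^*\in A^*$ and (1.4.1), then (4.1.1) and the antisymmetry of $\omega$,
$$\langle x*a^*,y\rangle=\omega(x*a^*,y)=-\omega(y\prec x,a^*)=\langle y\prec x,a^*\rangle=\langle R^*_\prec(x)a^*,y\rangle,$$
so $L_{T(A)}(x)=R^*_\prec(x)$; the symmetric computation built on $\omega(x\succ y,a^*)=\omega(y,a^**x)$ gives $R_{T(A)}(x)=L^*_\succ(x)$. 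Hence $T(A)=A\ltimes_{R^*_\prec,L^*_\succ}A^*$ with $\omega$ its natural form, i.e.\ exactly the double construction attached to the dendriform algebra $(A,\succ,\prec)$, with the identity map serving as the isomorphism. The main obstacle is precisely this final matching of the two bimodule actions $(L_{T(A)},R_{T(A)})=(R^*_\prec,L^*_\succ)$, which requires carefully threading the defining relations (4.1.1) of the $\omega$-dendriform structure through the explicit pairing (1.4.1) and the antisymmetry of $\omega$; once the displayed identities are set up correctly the rest is bookkeeping. By contrast, the triviality of $A^*$ is a short consequence of the cocycle identity and nondegeneracy, and the forward direction is essentially a repackaging of Corollary 3.1.5.
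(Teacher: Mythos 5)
Your proposal is correct, and its middle step coincides exactly with the paper's: the triviality of the product on $A^*$ is obtained from the cyclic identity (1.0.2) applied to $(a^*,b^*,x)$, using that $A^*$ is an ideal and that $\omega$ vanishes on $A^*\times A^*$, followed by nondegeneracy. Where you diverge is in the other two steps. For the forward direction the paper argues via matched pairs: by Remark 2.1.6 the tuple $(A,A^*,R^*_\prec,L^*_\succ,0,0)$ with the trivial algebra structure on $A^*$ is automatically a matched pair, so Theorem 4.1.5 yields the double construction; you instead invoke Corollary 3.1.5, realizing $\omega$ as the form induced by the nondegenerate antisymmetric Yang--Baxter solution $r=\sum_i(e_i\otimes e_i^*-e_i^*\otimes e_i)$ together with Theorem 2.4.5. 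Both are legitimate one-line citations of earlier results; the paper's route stays inside the matched-pair formalism, while yours exploits the $r$-matrix machinery of Section 3. For the final identification the paper is terse, saying only ``by Remark 2.1.6 again,'' which implicitly appeals to Theorem 4.1.5 to know that the bimodule actions of the double construction are $(R^*_{\prec_A},L^*_{\succ_A})$ for the $\omega$-induced dendriform structure, with the trivial actions of $A^*$ forced by the ideal hypothesis. You instead verify this identification by hand, computing $x*a^*=R^*_\prec(x)a^*$ and $a^**x=L^*_\succ(x)a^*$ directly from (4.1.1), (1.4.1) and antisymmetry of $\omega$; this makes explicit precisely the content that the paper leaves to the reader, at the cost of a longer argument, and it is a worthwhile clarification since the paper's citation of Remark 2.1.6 alone does not by itself pin down which bimodule appears.
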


\begin{proof} By Remark 2.1.6, $(A,A^*,R^*_\prec,L_\succ^*,0,0)$ with the
associative algebra structure on $A^*$ being trivial is always a
matched pair of associative algebras, the first half follows
immediately. Conversely, if $A^*$ is an ideal, then for any
$a^*,b^*\in A^*$, we have
$$T(A)*a^*,\;\; b^**T(A)\in A^*\Longrightarrow
\omega(a^**b^*,T(A))=-\omega(T(A)*a^*,b^*)-\omega(b^**T(A),
a^*)=0.$$ Thus $a^**b^*=0$. Then $T(A)$ is isomorphic to
$A\ltimes_{L_{T(A)}, R_{T(A)}}A^*$. By Remark 2.1.6 again, we show
that $(T(A)=A\bowtie A^*,\omega)$ is isomorphic to the double
construction of Connes cocycle
$(T(A)=A\ltimes_{R^*_\prec,L_\succ^*}A^*,\omega)$. \end{proof}

\begin{theorem} Let $(A,\succ_A,\prec_A)$ be a dendriform algebra and
$(A,*_A)$ be the associated associative algebra. Suppose that there
is a dendriform algebra structure $``\succ_{A^*},\prec_{A^*}"$ on
its dual space $A^*$ and $(A^*,*_{A^*})$ is the associated
associative algebra. Then $(A,A^*,R^*_{\prec_A},L^*_{\succ_A}$,
$R^*_{\prec_{A^*}},L^*_{\succ_{A^*}})$ is a matched pair of
associative algebras  if and only if
$$(A,A^*,
R_{\succ_A}^*+R_{\prec_A}^*,-L_{\prec_A}^*,-R_{\succ_A}^*,L_{\succ_A}^*+L_{\prec_A}^*,
R_{\succ_{A^*}}^*+R_{\prec_{A^*}}^*,-L_{\prec_{A^*}}^*,-R_{\succ_{A^*}}^*,L_{\succ_{A^*}}^*+L_{\prec_{A^*}}^*)$$
is a matched pair of dendriform algebras.
\end{theorem}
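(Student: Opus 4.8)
The plan is to avoid verifying by hand that the eighteen identities (3.2.7)--(3.2.24) for the displayed operators are equivalent to the associative matched-pair conditions, and instead to route both implications through structures carried by the single vector space $A\oplus A^*$. Recall that a matched pair of associative algebras is precisely an associative product on $A\oplus A^*$ having $A,A^*$ as subalgebras (Theorem 2.1.4), while a matched pair of dendriform algebras is precisely a dendriform product on $A\oplus A^*$ having $A,A^*$ as subalgebras (Theorem 3.2.4). That the displayed ten operators are legitimate bimodule actions of the two dendriform algebras is guaranteed by Proposition 3.2.2(5). The bridge between the two conditions is that the associative matched pair named in the statement is exactly the one governing a double construction of Connes cocycle (Theorem 4.1.5), and that a nondegenerate Connes cocycle induces a compatible dendriform structure (Theorem 4.1.1, Corollary 4.1.2).

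For the ``if'' direction I would apply Corollary 3.2.7. If the displayed data form a matched pair of dendriform algebras, then the associated associative algebras form a matched pair whose bimodule operators are the sums $l_A=l_{\succ_A}+l_{\prec_A}$, $r_A=r_{\succ_A}+r_{\prec_A}$, and similarly for $A^*$. A one-line computation gives $l_A=(R_{\succ_A}^*+R_{\prec_A}^*)+(-R_{\succ_A}^*)=R_{\prec_A}^*$ and $r_A=(-L_{\prec_A}^*)+(L_{\succ_A}^*+L_{\prec_A}^*)=L_{\succ_A}^*$, and symmetrically $l_{A^*}=R_{\prec_{A^*}}^*$, $r_{A^*}=L_{\succ_{A^*}}^*$. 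Hence $(A,A^*,R_{\prec_A}^*,L_{\succ_A}^*,R_{\prec_{A^*}}^*,L_{\succ_{A^*}}^*)$ is a matched pair of associative algebras, which is the stated condition.

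For the ``only if'' direction I would run the bridge in the other order. Assuming $(A,A^*,R_{\prec_A}^*,L_{\succ_A}^*,R_{\prec_{A^*}}^*,L_{\succ_{A^*}}^*)$ is a matched pair of associative algebras, Theorem 4.1.5 yields a double construction of Connes cocycle $(A\bowtie A^*,\omega)$ with $\omega$ the form (1.4.1). By Theorem 4.1.1 and Corollary 4.1.2, $\omega$ induces a compatible dendriform product $\succ,\prec$ on $A\bowtie A^*$ through (4.1.1), for which $A$ and $A^*$ are dendriform subalgebras; hence by Theorem 3.2.4 this dendriform algebra comes from a matched pair of dendriform algebras. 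It remains to identify the pieces of that matched pair. Pairing $x\succ y$ and $x\prec y$, for $x,y\in A$, against an arbitrary $z^*\in A^*$ via (4.1.1) and (1.4.1) gives $\omega(x\succ y,z^*)=-\langle x\succ_A y,z^*\rangle$ and the analogue for $\prec$, so $\succ|_A=\succ_A$ and $\prec|_A=\prec_A$ (symmetrically on $A^*$). Next, testing each mixed product $x\succ a^*$, $a^*\succ x$ and their $\prec$ analogues separately against $z\in A$ and against $z^*\in A^*$ isolates its two components and reads off that the four actions of $A$ are $R_{\succ_A}^*+R_{\prec_A}^*,\,-L_{\prec_A}^*,\,-R_{\succ_A}^*,\,L_{\succ_A}^*+L_{\prec_A}^*$, and symmetrically for the actions of $A^*$. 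These coincide with the displayed operators, so the matched pair of dendriform algebras obtained is exactly the one in the statement, and its defining identities (3.2.7)--(3.2.24) therefore hold.

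The step I expect to be the main obstacle is this final identification in the ``only if'' direction: one must track how each mixed product splits into its $A$- and $A^*$-components through the defining relation (4.1.1) and the nondegeneracy of $\omega$, while keeping straight the many dual maps of $\succ_A,\prec_A,\succ_{A^*},\prec_{A^*}$, since it is exactly the signs and the $\succ/\prec$ crossings (such as $r_{\succ_{A^*}}=-L_{\prec_{A^*}}^*$) that must emerge as displayed. Once each mixed product is tested against $A$ and against $A^*$ separately, however, the two tests decouple the two components and the computation reduces to routine bookkeeping.
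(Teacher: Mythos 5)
Your proposal is correct and follows essentially the same route as the paper: the "if" direction via Corollary 3.2.7 with the computation $l_{\succ_A}+l_{\prec_A}=R_{\prec_A}^*$, $r_{\succ_A}+r_{\prec_A}=L_{\succ_A}^*$, and the "only if" direction by passing through the double construction of Connes cocycle (Theorem 4.1.5), inducing the compatible dendriform structure via (4.1.1) (Theorem 4.1.1, Corollary 4.1.2), and identifying the mixed products to read off the displayed dendriform matched pair. The final identification you flag as the main obstacle is precisely the "simple and direct computation" the paper invokes, and your pairing argument against $A$ and $A^*$ separately is the correct way to carry it out.
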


\begin{proof} The ``if" part follows from Corollary 3.2.7. We need to
prove the ``only if" part. If
$(A,A^*,R^*_{\prec_A},L^*_{\succ_A},R^*_{\prec_{A^*}},L^*_{\succ_{A^*}})$
is a matched pair of associative algebras, then
$(A\bowtie^{R^*_{\prec_A},L^*_{\succ_A}}_{R^*_{\prec_{A^*}},L^*_{\succ_{A^*}}}A^*,\omega)$
is a double construction of Connes cocycle. Hence there exists a
compatible dendriform algebra structure on
$A\bowtie^{R^*_{\prec_A},L^*_{\succ_A}}_{R^*_{\prec_{A^*}},L^*_{\succ_{A^*}}}A^*$
given by equation (4.1.1). By a simple and direct computation, we
show that $A$ and $A^*$ are its subalgebras and the other products
are given by
$$x\succ a^*=(R_{\succ_A}^*+R_{\prec_A}^*)(x)
a^*-L_{\prec_{A^*}}^*(a^*)x,\;\;x\prec
a^*=-R_{\succ_A}^*(x)a^*+(L_{\succ_{A^*}}^*+L_{\prec_{A^*}}^*)(a^*)x,\;\;$$
$$a^*\succ
x=(R_{\succ_{A^*}}^*+R_{\prec_{A^*}}^*)(a^*)x-L_{\prec_A}^*(x)a^*,\;\;
a^*\prec
x=-R_{\succ_{A^*}}^*(a^*)x+(L_{\succ_A}^*+L_{\prec_A}^*)(x)a^*,$$
for any $x\in A,a^* \in A^*$.  Therefore
$$(A,A^*,
R_{\succ_A}^*+R_{\prec_A}^*,-L_{\prec_A}^*,-R_{\succ_A}^*,L_{\succ_A}^*+L_{\prec_A}^*,
R_{\succ_{A^*}}^*+R_{\prec_{A^*}}^*,-L_{\prec_{A^*}}^*,-R_{\succ_{A^*}}^*,L_{\succ_{A^*}}^*+L_{\prec_{A^*}}^*)$$
is a matched pair of dendriform algebras. \end{proof}

\subsection{Dendriform D-bialgebras}

\begin{theorem} Let $(A,\succ_A,\prec_A)$ be a dendriform algebra whose
products are given by two linear maps $\beta_\succ^*, \beta_\prec^*
:A\otimes A\rightarrow A$. Suppose that there is a dendriform
algebra structure $``\succ_{A^*}$,$\prec_{A^*}"$ on its dual space
$A^*$ given by two linear maps $\Delta_\succ^*,\Delta_\prec^*:
A^*\otimes A^*\rightarrow A^*$. Then
$(A,A^*,R^*_{\prec_A},L^*_{\succ_A},R^*_{\prec_{A^*}},L^*_{\succ_{A^*}})$
is a matched pair of associative algebras if and only if the
following equations hold (for any $x,y\in A$ and $a^*,b^*\in A^*$):

{\rm (1)} $\Delta_\prec(x*_Ay)=( id\otimes
L_{\prec_A}(x))\Delta_\prec(y)+(R_A(y) \otimes id  ) \Delta_\prec
(x)$;\hfill {\rm (4.2.1)}

{\rm (2)} $\Delta_\succ (x*_Ay)=( id\otimes
L_A(x))\Delta_\succ(y)+(R_{\prec_A}(y) \otimes id
)\Delta_\succ(x)$;\hfill {\rm (4.2.2)}

{\rm (3)} $\beta_\prec(a^**_{A^*}b^*)=( id\otimes
L_{\prec_{A^*}}(a^*))\beta_\prec(b^*)+(R_{A^*}(b^*) \otimes id  )
\beta_\prec (a^*)$;\hfill {\rm (4.2.3)}

{\rm (4)} $\beta_\succ (a^**_{A^*}b^*)=( id\otimes
L_{A^*}(a^*))\beta_\succ(b^*)+(R_{\prec_{A^*}}(b^*) \otimes id
)\beta_\succ(a^*)$;\hfill {\rm (4.2.4)}

{\rm (5)} $(L_A(x) \otimes id  - id\otimes  R_{\prec_A}
(x))\Delta_\prec(y)+\sigma [(L_{\succ_A}(y)\otimes - id\otimes
R_A(y))\Delta_\succ(x)]=0$;\hfill {\rm (4.2.5)}

{\rm (6)} $(L_{A^*}(a^*) \otimes id  - id\otimes  R_{\prec_{A^*}}
(a^*))\beta_\prec(b^*)+\sigma [(L_{\succ_{A^*}}(b^*)\otimes -
id\otimes R_{A^*}(b^*))\beta_\succ(a^*)]=0$,\hfill {\rm (4.2.6)}

\noindent where
$L_A=L_{\succ_A}+L_{\prec_A},\;\;R_A=R_{\succ_A}+R_{\prec_A},
L_{A^*}=L_{\succ_{A^*}}+L_{\prec_{A^*}},\;\;R_{A^*}=R_{\succ_{A^*}}+R_{\prec_{A^*}}$.\end{theorem}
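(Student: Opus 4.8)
The plan is to imitate the proof of Theorem 2.2.3 — pass to a basis and match coordinates — but now carrying along the finer data of the two one-sided products on each of $A$ and $A^*$. Before any computation I would discharge the bimodule axioms. By Proposition 3.2.2(7) the pair $(R^*_{\prec_A},L^*_{\succ_A})$ is automatically a bimodule of the associated associative algebra $(A,*_A)$, and likewise $(R^*_{\prec_{A^*}},L^*_{\succ_{A^*}})$ is a bimodule of $(A^*,*_{A^*})$. Hence the claim that $(A,A^*,R^*_{\prec_A},L^*_{\succ_A},R^*_{\prec_{A^*}},L^*_{\succ_{A^*}})$ is a matched pair of associative algebras reduces to verifying only the six compatibility conditions (2.1.4)--(2.1.9) for these particular operators.

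Next, fix a basis $\{e_1,\dots,e_n\}$ of $A$ with dual basis $\{e_1^*,\dots,e_n^*\}$ and record the structure constants of all four products $\succ_A,\prec_A$ and $\succ_{A^*},\prec_{A^*}$; these determine $\Delta_\succ,\Delta_\prec$ and $\beta_\succ,\beta_\prec$ by duality. I would then pair each of (2.1.4)--(2.1.9) against a test vector and read off the resulting scalar identity among the structure constants, exactly as the single relation in the proof of Theorem 2.2.3 was extracted. The mechanism that makes the \emph{one-sided} coproducts appear at all is that the cross-operators $l_{A^*}=R^*_{\prec_{A^*}}$ and $r_{A^*}=L^*_{\succ_{A^*}}$ (and $l_A=R^*_{\prec_A}$, $r_A=L^*_{\succ_A}$) each see only one of $\succ$ or $\prec$, so the dualized identities genuinely carry $\Delta_\succ,\Delta_\prec,\beta_\succ,\beta_\prec$ separately rather than only their sums. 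I would also exploit the $A\leftrightarrow A^*$ symmetry of Remark 2.2.4: the three conditions valued in $A^*$, namely (2.1.4), (2.1.5), (2.1.8), together yield the three $\Delta$-identities (4.2.1), (4.2.2), (4.2.5), whereupon (2.1.6), (2.1.7), (2.1.9) give (4.2.3), (4.2.4), (4.2.6) by interchanging the roles of $A$ and $A^*$; so only one side has to be computed.

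The step I expect to be the main obstacle is reconciling the \emph{transposed} way the one-sided products enter the two formulations. Dualizing a single compatibility equation produces, on the left, the \emph{full} coproduct of a \emph{one-sided} product (schematically $\Delta_*(x\prec_A y)$), whereas the target equations (4.2.1)--(4.2.2) display a \emph{one-sided} coproduct of the \emph{full} product (schematically $\Delta_\prec(x*_A y)$). Consequently no single member of (2.1.4)--(2.1.9) is literally one of (4.2.1)--(4.2.6); the latter emerge only as prescribed linear combinations, and separating the $\succ$- and $\prec$-components on the left-hand sides forces one to feed in the dendriform bimodule relations (3.2.1)--(3.2.3) to cancel redundant terms. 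This is the same collapsing phenomenon by which the six associative compatibility equations reduce to two in Proposition 2.2.2, only carried out one level finer.

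If the direct combinations prove unwieldy, I would instead organize the bookkeeping through Theorem 4.1.7: first replace the associative matched pair by the equivalent dendriform matched pair with the explicit dual maps. Its defining equations (3.2.7)--(3.2.24) already contain $\succ_{A^*},\prec_{A^*}$ (and $\succ_A,\prec_A$) as separate \emph{outer} operations, so pairing them against a test vector deposits $\Delta_\succ,\Delta_\prec$ (respectively $\beta_\succ,\beta_\prec$) directly on the left-hand sides, with no transposition to undo. The dual-bimodule reductions then collapse the eighteen equations to the six equations (4.2.1)--(4.2.6), which I regard as the cleanest route to the statement.
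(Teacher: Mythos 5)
Your skeleton (reduce the bimodule axioms via Proposition 3.2.2(7), then coordinatize and dualize each of (2.1.4)--(2.1.9), computing only one side and getting the other by the $A\leftrightarrow A^*$ symmetry) is exactly the paper's strategy, but the paragraph you single out as the ``main obstacle'' misdescribes the central step, and the proposal commits to that misdescription. You claim that dualizing a matched-pair condition yields ``the full coproduct of a one-sided product'' (schematically $\Delta_*(x\prec_A y)$), so that no single member of (2.1.4)--(2.1.9) can literally be one of (4.2.1)--(4.2.6), and that one must form linear combinations and feed in the dendriform bimodule relations (3.2.1)--(3.2.3) to cancel terms. This is false, and it is the opposite of what the paper's proof shows. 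In the matched pair $(A,A^*,R^*_{\prec_A},L^*_{\succ_A},R^*_{\prec_{A^*}},L^*_{\succ_{A^*}})$ the products occurring inside (2.1.4)--(2.1.9) are the \emph{full} associative products $*_A$ and $*_{A^*}$; the one-sided dendriform operations enter only through the acting operators. If you pair against a dual basis vector while keeping the two arguments on the other side free, each one-sided operator turns into the corresponding one-sided coproduct and the full product stays a full product: for instance $\langle R^*_{\prec_{A^*}}(a^*)(x*_Ay),b^*\rangle=\langle x*_Ay,\,b^*\prec_{A^*}a^*\rangle=\langle \Delta_\prec(x*_Ay),\,b^*\otimes a^*\rangle$, which is already the left-hand side of (4.2.1). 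Carried through all terms, this gives a term-by-term, one-to-one correspondence -- precisely the content of the paper's proof: (2.1.4)$\Leftrightarrow$(4.2.3), (2.1.5)$\Leftrightarrow$(4.2.4), (2.1.6)$\Leftrightarrow$(4.2.1), (2.1.7)$\Leftrightarrow$(4.2.2), (2.1.8)$\Leftrightarrow$(4.2.6), (2.1.9)$\Leftrightarrow$(4.2.5) -- with no linear combinations and no appeal to (3.2.1)--(3.2.3). Your alternative reading of the same scalar (dualizing the product instead of the operator, which produces $(\Delta_\succ+\Delta_\prec)(y\prec_A x)$) is what creates your phantom transposition problem; the cure is to choose the other, equally legitimate, dualization, not to invoke extra identities.

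Two further concrete errors. First, your symmetry bookkeeping is inverted: the conditions valued in $A^*$, namely (2.1.4), (2.1.5), (2.1.8), pair against elements of $A$ and therefore produce the $\beta$-identities (4.2.3), (4.2.4), (4.2.6), while (2.1.6), (2.1.7), (2.1.9) produce the $\Delta$-identities (4.2.1), (4.2.2), (4.2.5) -- the reverse of what you assert. Second, the fallback through Theorem 4.1.7 is not circular (that theorem is proved independently of 4.2.1), but it is a detour, not a simplification: it replaces six conditions by the eighteen equations (3.2.7)--(3.2.24), whose collapse back down to the six equations (4.2.1)--(4.2.6) is itself a substantial computation that your proposal only asserts. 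Since the direct route is both shorter and free of the obstacle you anticipate, the proposal as written would need its central paragraph replaced before it could be turned into a correct proof.
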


\begin{proof}Let $\{e_1,\cdots,e_n\}$ be a basis of $A$ and $\{
e_1^*,\cdots, e_n^*\}$ be its dual basis. Set
$$e_i\succ_A e_j=\sum\limits_{k=1}^na_{ij}^ke_k,\;\;
e_i\prec_A e_j=\sum\limits_{k=1}^nb_{ij}^ke_k,\;\; e_i^*\succ_{A^*}
e_j^*=\sum\limits_{k=1}^nc_{ij}^k e_k^*,\;\; e_i^*\prec_{A^*}
e_j^*=\sum\limits_{k=1}^nd_{ij}^k e_k^*.$$ Therefore the coefficient
of $e_l^*$ in
$$R_{\prec_A}^*(e_i)(e_j^**_{A^*}e_k^*)=R_{\prec_A}^*(L_{\succ_{A^*}}^*(e_j^*)e_i)e_k^*+R_{\prec_A}^*(e_i)e_j^**_{A^*}e_k^*
$$
gives the following relation (for any $i,j,k,l$)
$$\sum_{m=1}^n b_{li}^m(c_{jk}^m+d_{jk}^m)=\sum_{m=1}^n
[c_{jm}^ib_{lm}^k+b_{mi}^j(c_{mk}^l+d_{mk}^l)]$$ which is precisely
the relation given by the coefficient of $e_l^*\otimes e_i^*$ in
$$\beta_\prec(e_j^**_{A^*}e_k^*)=(R_{A^*}(e_k^*) \otimes id  )\beta_\prec(e_j^*)+( id\otimes
L_{\succ_{A^*}}(e_j^*))\beta_\prec(e_k^*).$$ So equation (2.1.4) in
the case $l_A=R^*_{\prec_A}, r_A=L^*_{\succ_A}$,
$l_B=l_{A^*}=R^*_{\prec_{A^*}},r_B=r_{A^*}=L^*_{\succ_{A^*}}$ is
equation (4.2.3). Similarly, in this situation, we have the
following correspondences:

equation (2.1.5)$\Longleftrightarrow$ equation (4.2.4);\quad
equation (2.1.6)$\Longleftrightarrow$ equation (4.2.1);

equation (2.1.7)$\Longleftrightarrow$ equation (4.2.2);\quad
equation (2.1.8)$\Longleftrightarrow$ equation (4.2.6);

equation (2.1.9)$\Longleftrightarrow$ equation (4.2.5).

\noindent Therefore the conclusion holds due to Theorem
2.1.4.\end{proof}

\begin{defn}{\rm  Let $A$ be a vector space.  A {\it dendriform D-bialgebra
}structure on $A$ is a set of linear maps
$(\Delta_\prec,\Delta_\succ,\beta_\prec,\beta_\succ)$ such that
$\Delta_\prec,\Delta_\succ:A\rightarrow A\otimes A$,
$\beta_\prec,\beta_\succ: A^*\rightarrow A^*\otimes A^*$ and

(a) $(\Delta_\prec^*,\Delta_\succ^*):A^*\otimes A^*\rightarrow
A^*$ defines a dendriform algebra structure
$(\succ_{A^*},\prec_{A^*})$ on $A^*$;

(b) $(\beta_\prec^*,\beta_\succ^*):A\otimes A\rightarrow A$ defines
a dendriform algebra structure $(\succ_{A},\prec_A)$ on $A$;

(c) Equations (4.2.1-4.2.6) are satisfied.

\noindent We also denote it  by
$(A,A^*,\Delta_\succ,\Delta_\prec,\beta_\succ,\beta_\prec)$ or
simply $(A,A^*)$.}\end{defn}

\begin{remark} {\rm In fact, there have already been the notions
of dendriform bialgebra ([LR1-2], \cite{Ron}, \cite{A4}) and
bidendriform bialgebras (\cite{F2}) which are the special dendriform
bialgebras. We use the terminology ``D-bialgebra" in order to
express its relation with the double construction. All of these
bialgebras are dendriform algebras equipped with coassociative
cooperations verifying some (different) compatibility relations. We
would like to point out that the dendriform D-bialgebras are quite
different from the other types of bialgebras. For example, one of
the differences is that the term $a\otimes b$ appears in both
$\Delta_\prec (a*b)$ and $\Delta_\succ(a*b)$ in a bidendriform
bialgebra, whereas it does not appear in a dendriform
D-bialgebra.}\end{remark}

\begin{theorem} Let $(A,\prec_A,\succ_A)$  and
$(A^*,\prec_{A^*},\succ_{A^*})$ be two dendriform algebras. Let
$(A,*_A)$ and $(A^*,*_{A^*})$ be the associated associative
algebras respectively. Then the following conditions are
equivalent.

{\rm (1)} There is a double construction of Connes cocycle
associated to $(A,*_A)$ and $(A,*_{A^*})$.

{\rm (2)}  $(A, A^*, R_{\prec_A}^*, L_{\succ_A}^*,
R_{\prec_{A^*}}^*, L_{\succ_{A^*}}^*)$ is a matched pair of the
 associative algebras.

{\rm (3)} $(A,A^*,
R_{\succ_A}^*+R_{\prec_A}^*,-L_{\prec_A}^*,-R_{\succ_A}^*,L_{\succ_A}^*+L_{\prec_A}^*,
R_{\succ_{A^*}}^*+R_{\prec_{A^*}}^*,-L_{\prec_{A^*}}^*,-R_{\succ_{A^*}}^*,L_{\succ_{A^*}}^*+L_{\prec_{A^*}}^*)$
is a matched pair of dendriform algebras.

{\rm (4)} $(A,A^*)$ is a dendriform D-bialgebra.\end{theorem}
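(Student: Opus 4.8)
The plan is to treat condition (2) as a central hub and connect each of the remaining three conditions to it, since every one of these links has already been forged in a previously established result; no new computation is required, and the whole argument is an assembly of prior equivalences. A preliminary observation that underlies everything is that the standing hypotheses fix the dendriform structures $(\succ_A,\prec_A)$ on $A$ and $(\succ_{A^*},\prec_{A^*})$ on $A^*$ together with their associated associative products $*_A$ and $*_{A^*}$, so none of these objects has to be reconstructed in the course of the proof.

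First I would record that (1)$\Leftrightarrow$(2) is exactly the content of Theorem 4.1.5: a double construction of Connes cocycle associated to $(A,*_A)$ and $(A^*,*_{A^*})$ exists precisely when $(A,A^*,R^*_{\prec_A},L^*_{\succ_A},R^*_{\prec_{A^*}},L^*_{\succ_{A^*}})$ is a matched pair of associative algebras. Next, (2)$\Leftrightarrow$(3) is Theorem 4.1.7 verbatim, the dictionary being the passage from the bimodule $(R^*_\prec,L^*_\succ)$ of the associated associative algebra to the dendriform bimodule $(R^*_\succ+R^*_\prec,-L^*_\prec,-R^*_\succ,L^*_\succ+L^*_\prec)$ of Proposition 3.2.2(5), so that the associative matched pair in (2) corresponds to the dendriform matched pair displayed in (3).

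For (2)$\Leftrightarrow$(4) I would combine Theorem 4.2.1 with Definition 4.2.2. By Theorem 4.2.1, the associative matched pair in (2) is equivalent to the six compatibility equations (4.2.1)--(4.2.6). Since the hypotheses already guarantee conditions (a) and (b) of Definition 4.2.2 (the dendriform algebra structures on $A^*$ and on $A$ are given, so that $\Delta_\prec^*,\Delta_\succ^*$ and $\beta_\prec^*,\beta_\succ^*$ do define dendriform products), asserting (2) is the same as asserting all three of (a), (b), (c), which is precisely the statement that $(A,A^*)$ is a dendriform D-bialgebra, i.e.\ condition (4).

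I expect no genuine obstacle here, because the analytic work has all been discharged in Theorems 4.1.5, 4.1.7 and 4.2.1. The one point demanding a moment of care is the remark that conditions (a)--(b) of the D-bialgebra definition hold automatically under the standing hypotheses, so that the equation system (4.2.1)--(4.2.6) (condition (c)) alone carries the force of the equivalence with (2); with this in hand the four conditions close into a single cycle through (2).
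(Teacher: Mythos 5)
Your proposal is correct and is essentially identical to the paper's own proof, which cites exactly Theorems 4.1.5, 4.1.7 and 4.2.1 with condition (2) serving as the hub. Your added remark that conditions (a) and (b) of Definition 4.2.2 hold automatically under the standing hypotheses, so that only the equation system (4.2.1)--(4.2.6) is at stake in the equivalence (2)$\Leftrightarrow$(4), is a correct and worthwhile clarification of a point the paper leaves implicit.
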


\begin{proof}
It follows from Theorems 4.1.5, 4.1.7 and 4.2.1. \end{proof}

\begin{defn}{\rm Let
$(A,A^*,\Delta_\succ,\Delta_\prec,\beta_\succ,\beta_\prec)$ and
$(B,B^*,\Delta_\succ,\Delta_\prec,\beta_\succ,\beta_\prec)$ be two
dendriform D-bialgebras. A  {\it homomorphism of dendriform
D-bialgebras} $\varphi:A\rightarrow B$ is a homomorphism of
dendriform algebras such that $\varphi^*:B^*\rightarrow A^*$ is also
a homomorphism of dendriform algebras, that is, $\varphi$ satisfies
$$(\varphi\otimes \varphi) \Delta_\succ (x)=\Delta_\succ(\varphi(x)),\;\;
(\varphi\otimes \varphi) \Delta_\prec
(x)=\Delta_\prec(\varphi(x)),$$
$$(\varphi^*\otimes
\varphi^*)\beta_\succ(a^*)=\beta_\succ(\varphi^*(a^*)),\;\;
(\varphi^*\otimes
\varphi^*)\beta_\prec(a^*)=\beta_\prec(\varphi^*(a^*)),\eqno
(4.2.7)$$ for any $x\in A,a^*\in B^*$.  An {\it isomorphism of
dendriform D-bialgebras} is an invertible homomorphism of dendriform
D-bialgebras.}\end{defn}

\begin{prop} Two double constructions of Connes cocycles are isomorphic
if and only if their corresponding dendriform D-bialgebras are
isomorphic.\end{prop}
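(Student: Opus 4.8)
The plan is to run the argument of Proposition 2.2.9 with the symmetric invariant form $\mathcal B$ replaced by the antisymmetric Connes cocycle $\omega$, and the associative/antisymmetric-infinitesimal data replaced by the dendriform/dendriform-D data. Throughout I pass between ``double construction of Connes cocycle'' and ``dendriform algebra structure on $T(A)$ compatible with $\omega$'' by Proposition 4.1.4, and I use Corollary 4.1.2, which says that $A$ and $A^*$ are dendriform subalgebras of $T(A)$ for the product (4.1.1). Thus it suffices to show: a dendriform algebra isomorphism $\varphi:T(A_1)\to T(A_2)$ respecting the decompositions and satisfying $\omega_1=\varphi^*\omega_2$ exists if and only if $A_1$ and $A_2$ are isomorphic as dendriform D-bialgebras.

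For the forward direction, suppose $(T(A_1),\omega_1)$ and $(T(A_2),\omega_2)$ are isomorphic via $\varphi$ as in Definition 4.1.3. By Proposition 4.1.4, $\varphi$ is a dendriform algebra isomorphism for the structures (4.1.1); since it carries $A_1\to A_2$ and $A_1^*\to A_2^*$ and these are dendriform subalgebras, the restrictions $\varphi':=\varphi|_{A_1}:A_1\to A_2$ and $\varphi|_{A_1^*}:A_1^*\to A_2^*$ are dendriform algebra isomorphisms. The key step is to identify $\varphi|_{A_1^*}=(\varphi'^*)^{-1}$. Fixing a basis $\{e_i\}$ of $A_1$ and its dual basis $\{e_i^*\}$, the relation $\omega_1=\varphi^*\omega_2$ together with the shape of $\omega$ in (1.4.1) gives $\langle \varphi|_{A_1^*}(e_i^*),\varphi'(e_j)\rangle=\omega_2(\varphi(e_i^*),\varphi(e_j))=\omega_1(e_i^*,e_j)=\langle e_i^*,e_j\rangle=\langle (\varphi'^*)^{-1}(e_i^*),\varphi'(e_j)\rangle$ for all $i,j$; the antisymmetry of $\omega$ is irrelevant here because one factor lies in $A^*$ and the other in $A$. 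As $\{\varphi'(e_j)\}$ is a basis of $A_2$, this forces $\varphi|_{A_1^*}=(\varphi'^*)^{-1}$, so $\varphi'^*=(\varphi|_{A_1^*})^{-1}$ is a dendriform algebra homomorphism; by Definition 4.2.5, $\varphi'$ is an isomorphism of dendriform D-bialgebras.

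For the converse, given a dendriform D-bialgebra isomorphism $\varphi':A_1\to A_2$, define $\varphi:A_1\oplus A_1^*\to A_2\oplus A_2^*$ by $\varphi(x)=\varphi'(x)$ and $\varphi(a^*)=(\varphi'^*)^{-1}(a^*)$. The identity $\omega_1=\varphi^*\omega_2$ is immediate from (1.4.1) and the definition of $\varphi$. By Proposition 4.1.4 it then remains to verify that $\varphi$ is a dendriform algebra homomorphism for (4.1.1). Since $\varphi'$ and $(\varphi'^*)^{-1}$ already intertwine the dendriform structures on $A_1,A_1^*$ with those on $A_2,A_2^*$, one only needs to check the four mixed products $x\succ a^*$, $x\prec a^*$, $a^*\succ x$, $a^*\prec x$ computed explicitly in the proof of Theorem 4.1.7; this is a direct check using that $\varphi'$ intertwines $L_{\succ_A},R_{\prec_A},\dots$ while $(\varphi'^*)^{-1}$ intertwines the starred operators $R_{\succ_A}^*,L_{\prec_{A^*}}^*,\dots$ on the duals, which is exactly the content of the D-bialgebra conditions (4.2.1)--(4.2.6).

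The step I expect to require the most care is this last mixed-product verification in the converse: unlike Proposition 2.2.9, where a single product had to be matched, here there are four half-products, and the bookkeeping of how the starred operators transform under $\varphi'$ and $(\varphi'^*)^{-1}$ must be tracked carefully. Everything else follows the template of Proposition 2.2.9 and is routine, so the proposition should follow at once from Proposition 4.1.4 and Definition 4.2.5.
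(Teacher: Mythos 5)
Your proposal is correct and follows essentially the same route as the paper, whose entire proof is the remark that the argument of Proposition 2.2.10 (the Frobenius-algebra analogue) carries over: you restrict $\varphi$ to the two summands, use the dual-basis pairing against $\omega_1=\varphi^*\omega_2$ to force $\varphi|_{A_1^*}=((\varphi|_{A_1})^*)^{-1}$, and in the converse transport the structure by $\varphi'\oplus(\varphi'^*)^{-1}$ and verify the mixed half-products. The only slip is at the very end: the intertwining of the starred operators by $\varphi'$ and $(\varphi'^*)^{-1}$ is the content of the homomorphism conditions (4.2.7) of Definition 4.2.5 (i.e.\ that both $\varphi'$ and $\varphi'^*$ are dendriform homomorphisms), not of the compatibility equations (4.2.1)--(4.2.6), which define the D-bialgebra structure itself.
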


\begin{proof}
It follows from a similar proof as of Proposition 2.2.10.
\end{proof}

\begin{exam} {\rm Let
$(A,A^*,\Delta_\succ,\Delta_\prec,\beta_\succ,\beta_\prec)$ be a
dendriform D-bialgebra. Then its dual
$(A^*,A,\beta_\succ,\beta_\prec,\Delta_\succ,\Delta_\prec)$ is also
a dendriform D-bialgebra.}\end{exam}

\begin{exam}
{\rm Let $(A,\succ_A,\prec_A)$ be a dendriform algebra. If the
dendriform algebra structure on $A^*$ is trivial, then
$(A,A^*,0,0,\beta_\succ,\beta_\prec)$ is a dendriform D-bialgebra.
And its corresponding dendriform algebra is
$A\ltimes_{R_\succ^*+R_\prec^*,-L_\prec^*,
-R_\succ^*,L_\succ^*+L_\prec^*} A^*$. Moreover, its corresponding
double construction of Connes cocycle is just the semidirect sum
$A\ltimes_{R_{\prec_A}^*,L_{\succ_A}^*} A^*$ with the bilinear form
$\omega$ given by equation (1.4.1). Dually, if $A$ is a trivial
dendriform algebra, then the dendriform D-bialgebra structures on
$A$ are in one-to-one correspondence with the dendriform algebra
structures on $A^*$.}\end{exam}

\begin{exam} {\rm Let $(A,A^*)$ be a dendriform D-bialgebra.  In the next
subsection, we will prove that there exists a canonical dendriform
D-bialgebra structure on the direct sum $A\oplus A^*$ of the
underlying vector spaces of $A$ and $A^*$.}\end{exam}

\subsection{Coboundary dendriform D-bialgebras}

By Theorem 4.2.1, we have shown that both $\Delta_\succ$ and
$\Delta_\prec$ ($\beta_\succ$ and $\beta_\prec$ respectively ) are
the 1-cocycles of the associated associative algebra $(A,*_A)$
($(A^*,*_{A^*})$ respectively). So it is natural to consider the
special case that they are 1-coboundaries or principal
derivations, as we have done in subsection 2.3.

Let $(A, \succ,\prec)$ be a dendriform algebra and
$r_\succ,r_\prec\in A\otimes A$. Set
$$\Delta_\succ(x)=( id\otimes  L(x)-R_\prec(x) \otimes id  )r_\succ;\eqno (4.3.1)$$
$$\Delta_\prec(x)=( id\otimes  L_\succ(x)-R(x) \otimes id  )r_\prec,\eqno (4.3.2)$$ for any $x\in A$. It is obvious that
$\Delta_\succ$ satisfies equation (4.2.1) and $\Delta_\prec$
satisfies equation (4.2.2). Moreover, by equation (4.2.5), we show
that
$$(L(x) \otimes id  - id\otimes  R_\prec(x))( id\otimes
L_\succ(y)-R(y) \otimes id  )(r_\prec+\sigma
(r_\succ))=0,\;\;\forall x,y\in A.\eqno (4.3.3)$$ Therefore
$(A,\Delta_\succ,\Delta_\prec,\beta_\succ,\beta_\prec)$ is a
dendriform D-bialgebra  if and only if the following conditions are
satisfied:

(1) $\Delta_\succ^*,\Delta_\prec^*:A^*\otimes A^*\rightarrow A^*$
defines a dendriform algebra structure on $A^*$.

(2) $\beta_\succ,\beta_\prec$ satisfy equations (4.2.3-4.2.4) and
(4.2.6), where the dendriform algebra structure on $A^*$ is given
by (1).

\begin{prop} Let $(A,\succ,\prec)$ be a dendriform algebra whose products
are given by two linear maps $\beta_\succ^*,\beta_\prec^*:A\otimes
A\rightarrow A$ and $r_\succ,r_\prec\in A\otimes A$. Suppose there
exists a dendriform algebra structure $``\succ_{A^*},\prec_{A^*}"$
on $A^*$ given by $\Delta_\succ^*,\Delta_\prec^*:A^*\otimes
A^*\rightarrow A^*$, where $\Delta_\succ$ and $\Delta_\prec$ are two
linear maps given by equations {\rm (4.3.1)} and {\rm (4.3.2)}
respectively. Then

{\rm (1)} Equation {\rm (4.2.3)} holds if and only if $r_\succ,
r_\prec$ satisfy
$$[R_\prec(x)\otimes L_\succ (y)- id\otimes  L_\succ (y\prec
x)-R_\prec(y\succ x) \otimes id  ](r_\succ+r_\prec)=0,\;\forall
x,y\in A.\eqno (4.3.4)$$

{\rm (2)} Equation {\rm (4.2.4)} holds if and only if $r_\succ,
r_\prec$ satisfy equation {\rm (4.3.4)}.

{\rm (3)} Equation {\rm (4.2.6)} holds if and only if $r_\succ,
r_\prec$ satisfy (for any $x,y\in A$)
$$[L_\succ(x) \otimes id  - id\otimes  R_\prec(x)][- id\otimes
L_\succ(y)+R_\prec(y) \otimes id  ](r_\prec+r_\succ)$$
$$+[L_\succ(x) \otimes id  - id\otimes  R_\prec(x)][R_\succ(y) \otimes id
(r_\prec+\sigma(r_\succ))- id\otimes  L_\prec(y)
(\sigma(r_\prec)+r_\succ)]=0.\eqno (4.3.5)$$
\end{prop}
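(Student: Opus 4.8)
The plan is to treat this as the dendriform refinement of Proposition 2.3.4. Since $\Delta_\succ$ and $\Delta_\prec$ are defined by the principal (coboundary) formulas (4.3.1) and (4.3.2), they are automatically $1$-cocycles, so equations (4.2.1) and (4.2.2) hold and the compatibility equation (4.2.5) has already been reduced to (4.3.3). It therefore remains to analyse the three ``dual'' conditions (4.2.3), (4.2.4), (4.2.6), each of which is an identity in $A^*\otimes A^*$. Because $A$ is finite dimensional, such an identity holds if and only if it pairs to zero against every $x\otimes y\in A\otimes A$; so the first step is, for each equation, to pair it with an arbitrary $x\otimes y$ and read off the resulting scalar identity, which---being required for all $a^*,b^*\in A^*$---becomes an equality of two elements of $A\otimes A$ depending on $x,y$.

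To carry out this pairing I would use only two dictionaries. First, $\beta_\prec$ and $\beta_\succ$ are dual to the products of $A$, so $\langle\beta_\prec(\xi),x\otimes y\rangle=\langle\xi,x\prec y\rangle$ and $\langle\beta_\succ(\xi),x\otimes y\rangle=\langle\xi,x\succ y\rangle$. Second, every left and right multiplication on $A^*$, and the associated product $*_{A^*}$, are dual to $\Delta_\succ,\Delta_\prec$, hence computable from $r_\succ,r_\prec$ through (4.3.1)--(4.3.2); for instance $\langle a^*\prec_{A^*}\eta,w\rangle=\langle a^*\otimes\eta,\Delta_\prec(w)\rangle$ and $\langle a^**_{A^*}b^*,w\rangle=\langle a^*\otimes b^*,(\Delta_\succ+\Delta_\prec)(w)\rangle$. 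Substituting these and writing $r_\succ=\sum u\otimes v$, $r_\prec=\sum u'\otimes v'$, the pairing of (4.2.3) turns into an equality in $A\otimes A$ between $(\Delta_\succ+\Delta_\prec)(x\prec y)$ and a sum of terms $u\otimes(\cdots)$, $(\cdots)\otimes v$, $u'\otimes(\cdots)$, $(\cdots)\otimes v'$ whose inner slots are iterated $\succ,\prec,*$ products of $x,y$ with the components of $r_\succ,r_\prec$.

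The crux is then a rearrangement governed by the three dendriform axioms (3.1.1) on $A$ together with $x*y=x\succ y+x\prec y$. Applying, e.g., $(u\prec x)\prec y=u\prec(x*y)$ and $x*y-x\prec y=x\succ y$ repeatedly, the $*$-products and the ``wrong-handed'' compositions cancel or recombine, and what survives is organized entirely by the bimodule operators $L_\succ$ and $R_\prec$ acting on the single combination $r_\succ+r_\prec$; this is precisely (4.3.4). The same collapse occurs when the identical procedure is run on (4.2.4): the multiplications appearing there ($L_{A^*},R_{\prec_{A^*}}$, versus $L_{\prec_{A^*}},R_{A^*}$ in (4.2.3)) differ only by the $\succ/\prec$ split, which the axioms reabsorb, so (4.2.4) yields the very same (4.3.4), establishing parts (1) and (2) at once. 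For (4.2.6) the procedure is unchanged, but now the exchange operator $\sigma$ is present; dually it transposes the tensor factors and so introduces $\sigma(r_\succ)$ and $\sigma(r_\prec)$, while the mismatched operators $R_\succ$ and $L_\prec$ no longer cancel. This is why no collapse to $r_\succ+r_\prec$ takes place and one is left with the longer identity (4.3.5), proving part (3).

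I expect the main obstacle to be the term bookkeeping in the middle step rather than any conceptual difficulty: one must expand four coboundary contributions on each side, keep the $\succ/\prec/*$ labels straight, and invoke the correct one of the three dendriform relations at each cancellation. A fully mechanical alternative, in the spirit of the proof of Theorem 4.2.1, is to fix dual bases, set $e_i\succ_A e_j=\sum_k a_{ij}^k e_k$, $e_i\prec_A e_j=\sum_k b_{ij}^k e_k$, $r_\succ=\sum p_{ij}e_i\otimes e_j$, $r_\prec=\sum q_{ij}e_i\otimes e_j$, compute the structure constants of $\succ_{A^*},\prec_{A^*}$ from (4.3.1)--(4.3.2), and match the coefficient of each basis tensor on the two sides of (4.2.3), (4.2.4) and (4.2.6); the resulting polynomial identities in $a,b,p,q$ are exactly the coordinate forms of (4.3.4) and (4.3.5).
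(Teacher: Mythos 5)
Your proposal is correct and takes essentially the same approach as the paper: pairing the identities in $A^*\otimes A^*$ against arbitrary $x\otimes y\in A\otimes A$ is just the basis-free form of the paper's coefficient-matching in dual bases (the mechanical alternative you describe at the end is literally the paper's proof), and your crux step---expanding the coboundary formulas (4.3.1)--(4.3.2) and regrouping via the dendriform axioms so that (4.2.3) and (4.2.4) collapse onto $L_\succ$, $R_\prec$ acting on $r_\succ+r_\prec$, while the $\sigma$ in (4.2.6) blocks the collapse and yields (4.3.5)---is exactly how the paper organizes and cancels its groups of terms. The difference is purely notational.
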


\begin{proof} Let $\{ e_1,\cdots, e_n\}$ be a basis of $A$ and $\{
e_1^*,\cdots,e_n^*\}$ be its dual basis. Set
$$r_\prec=\sum\limits_{i,j} a_{ij}e_i\otimes e_j,\;\;
r_\succ=\sum\limits_{i,j} b_{ij}e_i\otimes e_j.$$
$$e_i\succ e_j=\sum\limits_{k=1}^na_{ij}^ke_k,\;\;
e_i\prec e_j=\sum\limits_{k=1}^nb_{ij}^ke_k,\;\; e_i^*\succ
e_j^*=\sum\limits_{k=1}^nc_{ij}^k e_k^*,\;\; e_i^*\prec
e_j^*=\sum\limits_{k=1}^nd_{ij}^k e_k^*.$$ By equations (4.3.1) and
(4.3.2), we have (for any $i,k,l$)
$$c_{kl}^i=\sum_{m=1}^n[b_{km}(a_{im}^l+b_{im}^l)-b_{ml}b_{mi}^k],\;\;
d_{kl}^i=\sum_{m=1}^n[a_{km}a_{im}^l-a_{ml}(a_{mi}^k+b_{mi}^k)].\eqno
(4.3.6)$$

(1) Equation (4.2.3) holds (taking $a^*=e_i^*, b^*=e_j^*$) if and
only if (for any $i,j,m,t$)
$$\sum_{k=1}^n(c_{ij}^k+d_{ij}^k)b_{mt}^k=\sum_{k=1}^n[b_{mk}^jc_{ik}^t+b_{kt}^i(c_{kj}^m+d_{kj}^m)].$$
Substituting equation (4.3.6) into the above equation and after
rearranging the terms suitably, we have
$$(F1)+(F2)+(F3)+(F4)+(F5)+(F6)=0,$$
where
\begin{eqnarray*}
(F1)&=&\sum_{k,l}(a_{kl}+b_{kl})(a_{ml}^jb_{kt}^i);\;\;(F2)=
\sum_{k,l}(b_{kl}b_{kt}^ib_{ml}^j-b_{lk}b_{ln}^ib_{mk}^j);\\
(F3)&=&\sum_{k,l}(a_{il}+b_{il})(-a_{kl}^jb_{mt}^k);\;\; (F4)=\sum_{k,l}b_{il}[b_{mk}^j(a_{tl}^k+b_{tl}^k)-b_{mt}^kb_{kl}^j];\\
(F5)&=&\sum_{k,l}(a_{lj}+b_{lj})(b_{mt}^kb_{lk}^i-b_{kt}^ib_{lm}^k);\;\;(F6)=\sum_{k,l}
a_{lj}(a_{lk}^ib_{mt}^k-a_{lm}^kb_{kt}^i).
\end{eqnarray*}

$(F1)$ is the coefficient of $e_i\otimes e_j$ in
$[R_\prec(e_t)\otimes L_\succ(e_m)](r_\succ+r_\prec)$;

 $(F2)=0$ by interchanging the indices $k$ and $l$;

 $(F3)$ is the coefficient of $e_i\otimes e_j$ in $-[
id\otimes  L_\succ (e_m\prec e_t)](r_\succ+r_\prec)$;

  $(F4)=0$ since the term in the bracket is the coefficient
of $e_j$ in
$$e_m\prec (e_t\succ e_l+e_t\prec e_l)-(e_m\prec
e_n)\prec e_l=0;$$

$(F5)$ is the coefficient of $e_i\otimes e_j$ in $-[R_\prec(e_m\succ
e_t) \otimes id  ](r_\succ+r_\prec)$.

 $(F6)=0$ since the term in the bracket is the coefficient
of $e_i$ in
$$e_l\succ (e_m\prec e_t)-(e_l\succ e_m)\prec e_t=0.$$
Therefore we have
$$[R_\prec(e_t)\otimes L_\succ (e_m)- id\otimes  L_\succ (e_m\prec
e_t)-R_\prec(e_m\succ e_t) \otimes id  ](r_\succ+r_\prec)=0.$$

(2) Similarly, we show that equation (4.2.4) holds if and only if
$r_\succ, r_\prec$ satisfy equation (4.3.4). In fact, comparing with
the proof in (1), the difference appears in $(F2)'$, $(F4)'$ and
$(F6)'$, where

 $(F2)'=\sum\limits_{k,
l}(a_{mk}^ja_{lt}^i-a_{kt}^ia_{ml}^j)=0$ by interchanging the
indices $k$ and $l$;

 $(F4)'=\sum\limits_{k,l} b_{il}
(a_{mt}^kb_{kl}^j-a_{mk}^jb_{tl}^k)=0$ since the term in the bracket
is the coefficient of $e_j$ in
$$(e_m\succ e_t)\prec e_l-e_m\succ (e_t\prec e_l)=0;$$

$(F6)'=\sum\limits_{k,l}a_{lj}[a_{kt}^i(a_{lm}^k+b_{lm}^k)-a_{mt}^ka_{lk}^i]=0$
since the term in the bracket is the coefficient of $e_i$ in
$-e_l\succ (e_m\succ e_t)+ (e_l\succ e_m+e_l\prec e_m)\succ e_t=0$.

(3) Equation (4.2.6) holds (taking $a^*=e_i^*, b^*=e_j^*$) if and
only if (for any $i,j,m,t$)
$$\sum_{l=1}^n[(c_{il}^m+d_{il}^m)b_{lt}^j-b_{ml}^jd_{li}^t+a_{lm}^ic_{jl}^t-a_{tl}^i(c_{lj}^m+d_{lj}^m)]=0.$$
Substituting equation (4.3.6) into the above equation and after
rearranging the terms suitably, we have
$$(F1)+(F2)+(F3)+(F4)+(F5)+(F6)+(F7)+(F8)+(F9)+(F10)=0,$$
where
\begin{eqnarray*}
(F1)&=&\sum_{k,l}(a_{kl}+b_{kl})(-b_{lt}^jb_{km}^i) \Longrightarrow
-R_\prec(e_m)\otimes R_\prec(e_t)(r_\succ+r_\prec);\\
(F2)&=&\sum_{k,l}(a_{lk}+b_{lk})(-a_{mk}^ja_{tl}^i) \Longrightarrow
-L_\succ(e_t)\otimes L_\succ(e_m)(r_\succ+r_\prec);\\
(F3)&=&\sum_{k,l}(a_{kl}+b_{lk})(-a_{km}^ib_{lt}^j) \Longrightarrow
-R_\succ(e_m)\otimes R_\prec(e_t)(\sigma(r_\succ)+r_\prec);\\
(F4)&=&\sum_{k,l}(a_{lk}+b_{kl})(-a_{mk}^jb_{tl}^j) \Longrightarrow
-L_\succ(e_t)\otimes L_\prec(e_m)(r_\succ+\sigma(r_\prec));\\
(F5)&=&\sum_{k,l}(a_{ik}+b_{ik})a_{mk}^lb_{lt}^j \Longrightarrow
 id\otimes  R_\prec(e_t)L_\succ(e_m) (r_\succ+r_\prec);\\
(F6)&=&\sum_{k,l}a_{ki}(a_{kt}^l+b_{kt}^l)b_{ml}^j \Longrightarrow
 id\otimes  R_\prec(e_t)L_\prec(e_m) (\sigma (r_\prec));\\
(F7)&=&\sum_{k,l}b_{ik}b_{lt}^jb_{mk}^l \Longrightarrow  id\otimes
R_\prec(e_t)L_\prec(e_m) (r_\succ);\\
(F8)&=&\sum_{k,l}(a_{kj}+b_{kj})a_{tl}^ib_{km}^l \Longrightarrow
L_\succ(e_t)R_\prec(e_m) \otimes id   (r_\succ+r_\prec);\\
(F9)&=&\sum_{k,l}a_{kj}a_{km}^la_{tl}^i \Longrightarrow
L_\succ(e_t)R_\succ(e_m) \otimes id   (r_\prec);\\
(F10)&=&\sum_{k,l}b_{jk}a_{lm}^i(a_{tk}^l+b_{tk}^l) \Longrightarrow
L_\succ(e_t)R_\succ(e_m) \otimes id (\sigma(r_\succ)).
\end{eqnarray*}
Therefore equation (4.3.5) holds.\end{proof}

By the definition of a dendriform algebra, we have the following
conclusion (cf. \cite{F2}).

\begin{lemma}  Let $A$ be a vector space and $\Delta_\succ,\Delta_\prec:
A\otimes A\rightarrow A$ be two linear maps. Then
$\Delta_\succ^*,\Delta_\prec^*: A^*\otimes A^*\rightarrow A^*$
define a dendriform algebra  structure on $A^*$ if and only if the
following conditions are satisfied:

{\rm (1)} $(\Delta_\prec\otimes {id})\Delta_\prec=({id}\otimes
(\Delta_\succ+\Delta_\prec))\Delta_\prec$;\hfill {\rm (4.3.7)}

{\rm (2)} $({id}\otimes
\Delta_\prec)\Delta_\succ=(\Delta_\succ\otimes {
id})\Delta_\prec$;\hfill {\rm (4.3.8)}

{\rm (3)} $({ id}\otimes
\Delta_\succ)\Delta_\succ=((\Delta_\succ+\Delta_\prec)\otimes
{id})\Delta_\succ$.\hfill {\rm (4.3.9)}
\end{lemma}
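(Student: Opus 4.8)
The plan is to verify the three defining identities of a dendriform algebra (Definition 3.1.1) directly for the products induced on $A^*$, and to translate each of them into the stated conditions by passing to the dual. Write $\prec$ and $\succ$ for the two products on $A^*$ given by $m_\prec := \Delta_\prec^*$ and $m_\succ := \Delta_\succ^*$ (maps $A^*\otimes A^*\rightarrow A^*$), and put $* = \prec + \succ$, so that $m_* = m_\prec + m_\succ$. Since $A$ is finite-dimensional, the canonical identifications $(A\otimes A)^* = A^*\otimes A^*$ and $A^{**} = A$ give back $m_\prec^* = \Delta_\prec$ and $m_\succ^* = \Delta_\succ$; this reversibility is exactly what will make each equivalence run in both directions.

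First I would rewrite each of the three dendriform axioms for $(A^*,\prec,\succ)$ as an equality of linear maps $A^*\otimes A^*\otimes A^*\rightarrow A^*$. For instance the first axiom $(a^*\prec b^*)\prec c^* = a^*\prec(b^* * c^*)$ becomes $m_\prec(m_\prec\otimes \mathrm{id}) = m_\prec(\mathrm{id}\otimes m_*)$, and the other two are cast in the same form. Then I would take duals, using the two functorial rules $(\phi\psi)^* = \psi^*\phi^*$ and $(\phi\otimes\psi)^* = \phi^*\otimes\psi^*$ together with $m_\prec^* = \Delta_\prec$ and $m_\succ^* = \Delta_\succ$. Each axiom thereby turns into an equality of maps $A\rightarrow A\otimes A\otimes A$.

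Carrying this out gives precisely the stated correspondence. The first axiom, in the form $m_\prec(m_\prec\otimes \mathrm{id}) = m_\prec(\mathrm{id}\otimes m_*)$, dualizes to $(\Delta_\prec\otimes\mathrm{id})\Delta_\prec = (\mathrm{id}\otimes(\Delta_\succ+\Delta_\prec))\Delta_\prec$, which is condition (1). The second axiom $(a^*\succ b^*)\prec c^* = a^*\succ(b^*\prec c^*)$, i.e. $m_\prec(m_\succ\otimes\mathrm{id}) = m_\succ(\mathrm{id}\otimes m_\prec)$, dualizes to $(\Delta_\succ\otimes\mathrm{id})\Delta_\prec = (\mathrm{id}\otimes\Delta_\prec)\Delta_\succ$, which is condition (2). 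The third axiom $a^*\succ(b^*\succ c^*) = (a^* * b^*)\succ c^*$, i.e. $m_\succ(\mathrm{id}\otimes m_\succ) = m_\succ(m_*\otimes\mathrm{id})$, dualizes to $(\mathrm{id}\otimes\Delta_\succ)\Delta_\succ = ((\Delta_\succ+\Delta_\prec)\otimes\mathrm{id})\Delta_\succ$, which is condition (3). As dualization is an involution in finite dimensions, each equivalence is two-sided, so $(A^*,\prec,\succ)$ is a dendriform algebra if and only if (1)--(3) hold simultaneously.

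There is no serious obstacle here: the argument is routine, and the only thing to watch is the order-reversal in $(\phi\psi)^* = \psi^*\phi^*$ together with the bookkeeping of which tensor factor each multiplication acts on, which is exactly what produces the asymmetry between the left and right legs in conditions (1) and (3). If an element-wise argument is preferred, the same three identities fall out by pairing both sides of each axiom against an arbitrary $a^*\otimes b^*\otimes c^*\in A^*\otimes A^*\otimes A^*$, evaluating on an arbitrary $x\in A$ through $\langle \Delta_\bullet^*(u^*\otimes v^*),x\rangle = \langle u^*\otimes v^*,\Delta_\bullet(x)\rangle$, and then invoking nondegeneracy of the natural pairing to remove the test elements.
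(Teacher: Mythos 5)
Your proof is correct: the paper offers no written proof of this lemma (it simply remarks that it follows from the definition of a dendriform algebra, citing Foissy), and your dualization of the three dendriform axioms via $(\phi\psi)^*=\psi^*\phi^*$ and $(\phi\otimes\psi)^*=\phi^*\otimes\psi^*$ in finite dimensions is exactly the routine argument being left implicit. Note only that the lemma's statement contains a typo --- $\Delta_\succ,\Delta_\prec$ must be maps $A\rightarrow A\otimes A$, as in equations (4.3.1)--(4.3.2) --- and your setup treats them correctly as such.
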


\begin{prop} Let $(A,\succ,\prec)$ be a dendriform algebra and
$r_\succ,r_\prec\in A\otimes A$. Define
$\Delta_\succ,\Delta_\prec:A\rightarrow A\otimes A$ by equations
{\rm (4.3.1-4.3.2)}. Then $\Delta_\succ^*,\Delta_\prec^*: A^*\otimes
A^*\rightarrow A^*$ define a dendriform algebra  structure on $A^*$
if and only if the following equations are satisfied (for any $x\in
A$)
\begin{eqnarray*}
&& (R(x)\otimes  id\otimes   id)[(r_{\prec,12}*r_{\prec,
13}+r_{\prec,13}\prec r_{\succ,23}-r_{\prec,23}\succ r_{\prec,
12})\\
&&\hspace{1cm}+r_{\prec,13}\succ (r_{\prec,
23}+r_{\succ,23})-(r_{\prec,
23}+r_{\succ,23})\prec r_{\prec, 12}]\\
&&\hspace{1cm} + (r_{\prec,23}+r_{\succ,23})\prec [( id\otimes
L_\prec(x) \otimes id  )r_{\prec,12}]\\
&&\hspace{1cm}+( id\otimes    id\otimes   L_\succ
(x))(-r_{\prec,12}*r_{\prec, 13}-r_{\prec,13}\prec
r_{\succ,23}+r_{\prec,23}\succ r_{\prec,
12})\\
&&\hspace{1cm}-[( id\otimes    id\otimes   L_\succ(x))r_{13}]\succ
(r_{\succ,23}+r_{\prec,23})=0;\hspace{5cm} (4.3.10)\\
&&(R_\prec(x) \otimes id    \otimes id
)(r_{\prec,23}*r_{\succ,12}-r_{\succ,12}\prec r_{\prec,
13}-r_{\succ,13}\succ r_{\prec, 23})\\
&&\hspace{1cm}-( id\otimes    id\otimes
L_\succ(x))(r_{\prec,23}*r_{\succ,12}-r_{\succ,12}\prec r_{\prec,
13}-r_{\succ,13}\succ r_{\prec, 23})=0;\hspace{1.2cm} (4.3.11)\\
&&(R_\prec(x)\otimes  id\otimes   id)(-r_{\succ,
13}*r_{\succ,23}+r_{\succ,23}\prec r_{\succ, 12}-r_{\prec, 12}\succ
r_{\succ, 13})\\
&&\hspace{1cm} -(r_{\succ,12}+r_{\prec, 12})\prec
[(R_\prec(x)\otimes  id\otimes   1)r_{\succ,13}]\\
&&\hspace{1cm} +[( id\otimes   R_\prec(x) \otimes id
)r_{\succ,23}]\succ
(r_{\succ,12}+r_{\prec,12})\\
&&\hspace{1cm} +( id\otimes    id\otimes   L(x))[r_{\succ,
13}*r_{\succ,23}-r_{\succ,23}\prec r_{\succ, 12}+r_{\prec, 12}\succ
r_{\succ, 13}\\
&&\hspace{1cm}+(r_{\succ,12}+r_{\prec,12})\prec r_{\succ,
13}-r_{\succ,23}\succ (r_{\succ,12}+r_{\prec,12})]=0.\hspace{3.8cm}
(4.3.12)
\end{eqnarray*}The operation between two $r$s is given in an obvious and similar
way as equation {\rm (1.1.3)}.\end{prop}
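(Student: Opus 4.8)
The plan is to reduce the statement, via Lemma 4.3.2, to the three co-relations (4.3.7), (4.3.8) and (4.3.9): these are exactly the conditions under which $\Delta_\succ^*,\Delta_\prec^*$ define a dendriform algebra on $A^*$, being the dualizations of the three defining identities (3.1.1) (namely $(x\prec y)\prec z=x\prec(y*z)$, $(x\succ y)\prec z=x\succ(y\prec z)$ and $x\succ(y\succ z)=(x*y)\succ z$). Since $\Delta_\succ$ and $\Delta_\prec$ are prescribed by the coboundary formulas (4.3.1) and (4.3.2), everything reduces to substituting those formulas into (4.3.7)-(4.3.9) and checking, term by term, that the three resulting identities are precisely (4.3.10), (4.3.11) and (4.3.12). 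I expect the correspondence $(4.3.7)\Leftrightarrow(4.3.10)$, $(4.3.8)\Leftrightarrow(4.3.11)$, $(4.3.9)\Leftrightarrow(4.3.12)$; the appearance of $R(x)$, $L_\succ(x)$, $L_\prec(x)$ in (4.3.10) mirrors the operators $R(x)$ and $L_\succ(x)$ built into $\Delta_\prec$, while the operators $R_\prec(x)$ and $L(x)$ in (4.3.11)-(4.3.12) mirror those built into $\Delta_\succ$.

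For the computation itself I would follow the template already used in the proof of Proposition 4.3.1. Fix a basis $\{e_1,\dots,e_n\}$ of $A$ with dual basis $\{e_1^*,\dots,e_n^*\}$, write $r_\prec=\sum a_{ij}e_i\otimes e_j$ and $r_\succ=\sum b_{ij}e_i\otimes e_j$, and record through equation (4.3.6) the structure constants $c_{ij}^k,d_{ij}^k$ of the two products on $A^*$ induced by $\Delta_\succ^*,\Delta_\prec^*$. Each of (4.3.7)-(4.3.9), written in coordinates, is one of the dendriform identities for $(A^*,\succ,\prec)$ and hence a quadratic relation among the $c_{ij}^k,d_{ij}^k$. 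Substituting (4.3.6) expands each such relation into a sum of monomials in the $a_{ij}$, $b_{ij}$ and the structure constants $a_{ij}^k,b_{ij}^k$ of $(A,\succ,\prec)$; I would then regroup these monomials, as in Proposition 4.3.1, into blocks $(F1),(F2),\dots$, each of which is either the coefficient of some $e_i\otimes e_j$ in one of the tensor products occurring in (4.3.10)-(4.3.12) (such as $r_{\prec,12}*r_{\prec,13}$ or $r_{\prec,13}\prec r_{\succ,23}$), or else vanishes identically.

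The decisive point, and the only conceptual input beyond bookkeeping, is that the blocks which are not part of an $r$-product vanish by the dendriform axioms of $A$ itself. This is exactly the mechanism by which the blocks $(F4)$ and $(F6)$ disappeared in the proof of Proposition 4.3.1, where their bracketed factors were the coefficients of identically vanishing expressions such as $e_m\prec(e_t\succ e_l+e_t\prec e_l)-(e_m\prec e_t)\prec e_l$ and $e_l\succ(e_m\prec e_t)-(e_l\succ e_m)\prec e_t$. Here too, each spurious block is recognized as the coefficient of a combination of the shape (3.1.1) and is therefore zero, after which the surviving blocks reassemble into the tensor expressions of (4.3.10)-(4.3.12) with the left and right multiplication operators in $x$ placed as displayed. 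I expect the main obstacle to be purely organizational: the verification $(4.3.9)\Leftrightarrow(4.3.12)$ is the heaviest, since (4.3.12) carries five distinct $r$-product terms together with their attached multiplication operators and therefore produces the largest crop of monomials to be sorted and cancelled. No idea beyond the relations (3.1.1) and careful relabelling of summation indices is needed, and the reductions for (4.3.7) and (4.3.8) run in strict parallel to, and are lighter than, that for (4.3.9); so it suffices to carry out one case in full and indicate the analogous grouping in the other two.
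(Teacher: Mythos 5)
Your plan coincides with the paper's own proof: it too invokes Lemma 4.3.2 to reduce the dendriform condition on $A^*$ to (4.3.7)--(4.3.9), establishes the equivalences with (4.3.10)--(4.3.12) respectively by substituting the coboundary formulas and regrouping terms (with the spurious blocks killed by the dendriform axioms of $A$), and carries out only one of the three cases in full, declaring the others similar. The sole cosmetic difference is bookkeeping: the paper writes $r_\prec=\sum_i a_i\otimes b_i$, $r_\succ=\sum_j c_j\otimes d_j$ and matches tensor blocks $(F1),(F2),(F3)$ of (4.3.7) directly against the terms of (4.3.10), rather than working with structure constants in a basis as in Proposition 4.3.1.
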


\begin{proof} We need to prove that equations (4.3.7-4.3.9) are
equivalent to equations (4.3.10-4.3.12) respectively. Here we only
give an explicit proof that equation (4.3.10) holds if and only if
equation (4.3.7) holds since the proof of the other two equations is
similar. Let $x\in A$. After rearranging the terms suitably, we
divide equation (4.3.7) into three parts:
$$(\Delta_\prec\otimes { id})\Delta_\prec(x)-({id}\otimes
(\Delta_\succ+\Delta_\prec))\Delta_\prec(x)=(F1)+(F2)+(F3),$$
where
\begin{eqnarray*}
(F1)&=&\sum_{i,j} \{(a_i\succ x+a_i\prec x)\otimes [a_j\otimes
b_i\succ b_j-(a_j\succ b_i+a_j\prec b_i)\otimes b_j+c_j\otimes
(b_i\succ d_j\\
 &\mbox{}&+b_i\prec d_j)-c_j\prec b_i\otimes
d_j]+[a_j\succ(a_i\succ x+a_i\prec x)+a_j\prec(a_i\succ x+a_i\prec
x)]\\
&\mbox{}&\otimes b_j\otimes b_i\}; \\
(F2)&=& \sum_{i,j}\{a_i\otimes [a_j\succ (x\succ b_i)+a_j\prec
(x\succ b_i)]\otimes b_j+a_i\otimes c_j\prec (x\succ b_i)\otimes
d_j\\
&\mbox{}&-a_j\otimes (a_i\succ x+a_i\prec x)\succ b_j\otimes
b_i\};\\
(F3)&=& \sum_{i,j}\{[a_i\otimes (a_i\succ b_j)-(a_j\succ
a_i+a_j\prec a_i)\otimes b_j]\otimes (x\succ b_i)-a_i\otimes
a_j\otimes [(x\succ b_i)\succ b_j]\\
&\mbox{}&-a_i\otimes c_j\otimes [(x\succ b_i)\succ  d_j+(x\succ
b_i)\prec d_j]\}.
\end{eqnarray*}
On the other hand,
\begin{eqnarray*}
(F1a)&=&(R(x)\otimes  id\otimes   id)(r_{\prec,12}*r_{\prec, 13})
=\sum_{i,j}[(a_i*a_j)*x \otimes b_i\otimes b_j]\\
&=& \sum_{i,j}[a_j\succ(a_i\succ x+a_i\prec x)+a_j\prec(a_i\succ
x+a_i\prec x)]\otimes b_j\otimes b_i];\\
(F1b)&=&(R(x)\otimes  id\otimes   id) (r_{\prec,13}\prec
r_{\succ,23}) =
\sum_{i,j}[(a_i*x) \otimes c_j\otimes (b_i\prec d_j)]\\
&=&\sum_{i,j}[(a_i\succ x+a_i\prec x)\otimes c_j\otimes (b_i\prec
d_j)];\\
(F1c)&=&(R(x)\otimes  id\otimes   id)(-r_{\prec,23}\succ r_{\prec,
12}) =
\sum_{i,j}[-(a_i*x)\otimes (a_j\succ b_i)\otimes b_j]\\
&=&\sum_{i,j}[-(a_i\succ x+a_i\prec x)\otimes (a_j\succ b_i)\otimes
b_j];\\
(F1d)&=& (R(x)\otimes  id\otimes   id)[r_{\prec,13}\succ (r_{\prec,
23}+r_{\succ,23})]\\
&=& \sum_{i,j}\{(a_i\succ x+a_i\prec x)\otimes [a_j\otimes(b_i\succ
b_j)+c_j\otimes (b_i\succ d_j)]\};\\
(F1e)&=& (R(x)\otimes  id\otimes   id)[-(r_{\prec,
23}+r_{\succ,23})\prec
r_{\prec, 12}]\\
&=&-\sum_{i,j}\{(a_i\succ x+a_i\prec x)\otimes [(a_j\prec
b_i)\otimes b_j+(c_j\prec b_i)\otimes d_j]\};\\
(F2')&=&(r_{\prec,23}+r_{\succ,23})\prec [( id\otimes
L_\prec(x) \otimes id  )r_{\prec,12}]\\
&=&\sum_{i,j}a_i\otimes [a_j\prec(x\succ b_i)\otimes b_j+ c_j\prec
(x\succ b_i)] \otimes d_j]\\
(F3a)&=&( id\otimes    id\otimes   L_\succ
(x))(-r_{\prec,12}*r_{\prec, 13})
=\sum_{i,j} -a_i*a_j\otimes b_i\otimes (x\succ b_j)\\
&=&\sum_{i,j}-[(a_i\succ a_j+a_i\prec a_j)\otimes b_i\otimes (x\succ
b_j);\\
(F3b)&=&( id\otimes    id\otimes   L_\succ (x))(-r_{\prec,13}\prec
r_{\succ,23})=\sum_{i,j}-[a_i\otimes c_j\otimes x\succ (b_i\prec d_j)];\\
(F3c)&=&( id\otimes    id\otimes   L_\succ (x))(r_{\prec,23}\succ
r_{\prec, 12})=\sum_{i,j}[ a_i\otimes (a_j\succ b_i)\otimes
(x\succ
b_j)];\\
(F3d)&=&-[( id\otimes    id\otimes   L_\succ(x))r_{13}]\succ
(r_{\succ,23}+r_{\prec,23})\\
&=&-\sum_{i,j}a_i\otimes [a_j\otimes (x\succ b_i)\succ
b_j+c_j\otimes (x\succ b_i)\succ d_j].
\end{eqnarray*}
It is obvious that
$$(F1)=(F1a)+(F1b)+(F1c)+(F1d)+(F1e),\;$$$$(F2)=(F2)',\;
(F3)=(F3a)+(F3b)+(F3c)+(F3d).$$ Therefore equation (4.3.10) holds if
only if equation (4.3.7) holds.\end{proof}

Combining Propositions 4.3.1 and 4.3.3, we obtain the following
conclusion.

\begin{theorem}Let $(A,\succ,\prec)$ be a dendriform algebra and
$r_\succ,r_\prec\in A\otimes A$. Then the linear maps
$\Delta_\succ,\Delta_\prec$ defined by equations {\rm (4.3.1)} and
{\rm (4.3.2)} induce a dendriform algebra structure on $A^*$ such
that $(A,A^*)$ is a dendriform D-bialgebra  if and only if $r_\succ$
and $r_\prec$ satisfy equations {\rm (4.3.3-4.3.5)} and {\rm
(4.3.10-4.3.12)}.\end{theorem}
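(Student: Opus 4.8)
The plan is to assemble the statement directly from the two preceding propositions together with the reduction carried out just before them, so that essentially no new computation is needed beyond recalling what each equation governs. First I would recall, from the definition of a dendriform D-bialgebra, that $(A,A^*)$ being one amounts to three things: condition (b), that $\beta_\succ^*,\beta_\prec^*$ define a dendriform algebra on $A$; condition (a), that $\Delta_\succ^*,\Delta_\prec^*$ define a dendriform algebra on $A^*$; and condition (c), that equations (4.2.1)--(4.2.6) all hold. Since $(A,\succ,\prec)$ is assumed to be a dendriform algebra, condition (b) is automatic. Because $\Delta_\succ$ and $\Delta_\prec$ are given by the coboundary formulas (4.3.1) and (4.3.2), equations (4.2.1) and (4.2.2) are satisfied identically, and, as observed in the discussion preceding Proposition 4.3.1, equation (4.2.5) is equivalent to equation (4.3.3). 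Thus $(A,A^*)$ is a dendriform D-bialgebra if and only if condition (a) holds and equations (4.2.3), (4.2.4), (4.2.6) together with (4.3.3) are satisfied.

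Next I would invoke Proposition 4.3.1 to rewrite the three remaining equations of condition (c) purely in terms of $r_\succ$ and $r_\prec$: by its parts (1) and (2) each of (4.2.3) and (4.2.4) is equivalent to equation (4.3.4), and by its part (3) equation (4.2.6) is equivalent to equation (4.3.5). Then I would invoke Proposition 4.3.3 to translate condition (a) --- that $\Delta_\succ^*,\Delta_\prec^*$ define a dendriform algebra structure on $A^*$ --- into the system of equations (4.3.10)--(4.3.12) in $r_\succ,r_\prec$. Collecting these equivalences, $(A,A^*)$ is a dendriform D-bialgebra if and only if $r_\succ$ and $r_\prec$ satisfy (4.3.3), (4.3.4), (4.3.5) together with (4.3.10), (4.3.11), (4.3.12), which is exactly the asserted list (4.3.3)--(4.3.5) and (4.3.10)--(4.3.12).

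The only genuine difficulty lies upstream, in the propositions being combined; the proof of the present theorem is a matter of bookkeeping once those are in hand. The hardest single input --- and the one against which I would guard most carefully for arithmetic slips --- is Proposition 4.3.3, where the coassociativity-type relations (4.3.7)--(4.3.9) for $A^*$ are matched with (4.3.10)--(4.3.12): this requires substituting the structure constants (4.3.6) produced by the coboundary definitions and regrouping a large number of cubic terms, with the dendriform axioms for $(A,\succ,\prec)$ used to kill the spurious groups. In the assembly itself the one point needing care is to verify that no condition is lost or double-counted when merging the two propositions --- in particular that (4.2.3) and (4.2.4) both collapse to the single equation (4.3.4), so that the final list is precisely (4.3.3)--(4.3.5) and (4.3.10)--(4.3.12) with no redundancy and no omission.
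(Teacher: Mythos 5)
Your proposal is correct and follows essentially the same route as the paper: the paper states this theorem with no separate proof beyond ``Combining Propositions 4.3.1 and 4.3.3,'' relying on exactly the same preliminary observations you cite (conditions (4.2.1)--(4.2.2) are automatic for the coboundary maps, and (4.2.5) translates into (4.3.3)). Your bookkeeping—including the remark that (4.2.3) and (4.2.4) both collapse to the single equation (4.3.4)—matches the paper's intended assembly precisely.
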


\begin{defn}{\rm  A dendriform D-bialgebra  $(A,A^*)$ is called {\it coboundary} if
its structure is given by $r_\succ,r_\prec\in A\otimes A$ through
Theorem 4.3.4.}\end{defn}

\begin{theorem}
Let $(A,A^*,\Delta_\succ,\Delta_\prec,\beta_\succ,\beta_\prec)$ be a
dendriform D-bialgebra.  Then there is a canonical dendriform
bialgebra structure on the direct sum $A\oplus A^*$ of the
underlying vector spaces of $A$ and $A^*$ such that both the
inclusions $i_1:A\rightarrow A\oplus A^*$ and $i_2:A^*\rightarrow
A\oplus A^*$ into the two summands are homomorphisms of dendriform
D-bialgebras, where the dendriform D-bialgebra structure on $A^*$ is
given in {\rm Example 4.2.7}.\end{theorem}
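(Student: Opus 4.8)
The plan is to exhibit the asserted structure as a \emph{coboundary} dendriform D-bialgebra on $A\oplus A^*$, in exact parallel with the associative double of Theorem 2.3.6. First I would fix the ambient dendriform algebra: by Theorem 4.2.4 the bialgebra $(A,A^*)$ is the same datum as a double construction of Connes cocycle on $T(A)=A\bowtie A^*$, and Corollary 4.1.2 then equips $T(A)$ with a compatible dendriform product whose mixed terms $x\succ a^*,\,x\prec a^*,\,a^*\succ x,\,a^*\prec x$ are those written out in the proof of Theorem 4.1.7; in particular $A$ and $A^*$ are dendriform subalgebras, so $i_1,i_2$ are automatically homomorphisms of dendriform algebras. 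It is on this fixed dendriform algebra $T(A)$ that I would build the two coproducts.

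Next I would fix a basis $\{e_i\}$ of $A$ with dual basis $\{e_i^*\}$, view both inside $T(A)$, and set
$$r_\succ=\sum_i e_i\otimes e_i^*,\qquad r_\prec=-\sum_i e_i^*\otimes e_i.$$
Since $r_\prec=-\sigma(r_\succ)$, one has $r_\prec+\sigma(r_\succ)=0$ and $\sigma(r_\prec)+r_\succ=0$, while $r_\succ+r_\prec=\sum_i(e_i\otimes e_i^*-e_i^*\otimes e_i)$ is antisymmetric --- the same tensor as in Corollary 3.1.5. Defining $\Delta_\succ,\Delta_\prec$ on $T(A)$ by the coboundary formulas (4.3.1) and (4.3.2), I then have to check the hypotheses of Theorem 4.3.4.

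Equation (4.3.3) is immediate from $r_\prec+\sigma(r_\succ)=0$. In (4.3.5) the same two relations annihilate the entire second bracket, leaving only $[L_\succ(x)\otimes id-id\otimes R_\prec(x)][R_\prec(y)\otimes id-id\otimes L_\succ(y)](r_\succ+r_\prec)=0$; together with (4.3.4) this becomes an identity on the single antisymmetric tensor $r_\succ+r_\prec$, which I would verify by expanding the products of $T(A)$. The main obstacle is the trio of $D$-equations (4.3.10)--(4.3.12) of Proposition 4.3.3, which must hold in order that $\Delta_\succ^*,\Delta_\prec^*$ define a dendriform structure on $T(A)^*$. These are the $\succ/\prec$-refinements of the single associative identity $r_{12}r_{13}+r_{13}r_{23}-r_{23}r_{12}=0$ checked in Theorem 2.3.6; each must be split into its constituents and expanded through the explicit mixed products. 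I expect that, exactly as in Theorem 2.3.6, every term pairs off after a relabelling of summation indices, the cancellations being governed by the antisymmetry of $r_\succ+r_\prec$ and the dendriform axioms on $A$ and $A^*$ --- so the difficulty is combinatorial bookkeeping rather than any new phenomenon.

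Granting Theorem 4.3.4, it remains to verify the coproduct-compatibility part of Definition 4.2.5 for $i_1$ and $i_2$. For $i_1$ I would compute $\Delta_\succ(e_i)$ and $\Delta_\prec(e_i)$ from (4.3.1)--(4.3.2): after substituting $e_i*e_j^*,\,e_j\prec e_i,\,e_i\succ e_j^*$ by the products of $T(A)$, the mixed $A$--$A^*$ contributions cancel pairwise upon swapping the two summation indices, and what survives is $\Delta_\succ(e_i),\Delta_\prec(e_i)$ equal to the given coproducts of the bialgebra on $A$ and lying in $A\otimes A$. The identical computation on $A^*$ gives $\Delta_\succ(e_i^*)=\beta_\succ(e_i^*)$ and $\Delta_\prec(e_i^*)=\beta_\prec(e_i^*)$ in $A^*\otimes A^*$, which is precisely the dual dendriform D-bialgebra structure of Example 4.2.7 (and, in contrast with the associative case of Theorem 2.3.6, no sign intervenes). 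The two remaining conditions of (4.2.7), those on $\beta_\succ,\beta_\prec$, then follow from the manifest $T(A)\leftrightarrow T(A)^*$ symmetry of the construction. Hence both inclusions are homomorphisms and the coboundary dendriform D-bialgebra just produced is the desired canonical structure on $A\oplus A^*$.
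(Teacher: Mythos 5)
Your proposal is correct in outline and follows the paper's own strategy: realize the double as a \emph{coboundary} dendriform D-bialgebra built from the canonical element $r=\sum_i e_i\otimes e_i^*$ on the matched-pair dendriform algebra ${\mathcal D}{\mathcal D}(A)=A\bowtie A^*$, invoke Theorem 4.3.4, and then check on basis elements that the restrictions of $\Delta_\succ,\Delta_\prec$ to $A$ and to $A^*$ reproduce the original coproducts (with no sign, as you correctly note, in contrast to Theorem 2.3.6). The one genuine deviation is your choice of coboundary pair: you take $r_\succ=r$, $r_\prec=-\sigma(r)$, whereas the paper takes $r_\succ=r$, $r_\prec=-r$. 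Both choices in fact produce the \emph{same} maps $\Delta_\succ,\Delta_\prec$ on $A\oplus A^*$ (a basis computation on $e_k$ and $e_k^*$ shows the difference $(id\otimes L_\succ(u)-R(u)\otimes id)(r-\sigma(r))$ vanishes), so by the ``only if'' direction of Theorem 4.3.4 your conditions are guaranteed to hold and your construction cannot fail. But the computational burden is distributed differently, and less favorably, in your normalization: the paper's choice makes $r_\succ+r_\prec=0$, which kills (4.3.4) outright and, more importantly, annihilates every term of (4.3.10)--(4.3.12) containing the factor $r_{\prec}+r_{\succ}$, collapsing those three equations to the clean Yang--Baxter-type identities $r_{12}*r_{13}-r_{13}\prec r_{23}-r_{23}\succ r_{12}=0$ and its two variants, verified as in Theorem 2.3.6. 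Your choice instead trivializes (4.3.3) and the second half of (4.3.5), but leaves $r_\succ+r_\prec=\sum_i(e_i\otimes e_i^*-e_i^*\otimes e_i)\neq 0$, so (4.3.4), the first half of (4.3.5), and all the extra $(r_\succ+r_\prec)$-terms in (4.3.10)--(4.3.12) survive and must cancel against the Yang--Baxter-type terms; the bookkeeping is therefore strictly heavier than ``exactly as in Theorem 2.3.6.'' If you keep your normalization, the efficient way to close the argument is precisely the observation that your coproducts coincide with those of the paper's normalization; otherwise switch to $r_\prec=-r$. Your final paragraph on the inclusions is sound, though the appeal to ``symmetry'' can be made precise: the $\beta$-half of condition (4.2.7) for $i_1$ (resp.\ $i_2$) dualizes exactly to the statement that $A$ (resp.\ $A^*$) is a dendriform subalgebra of $T(A)$, which you already secured at the start via Corollary 4.1.2.
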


\begin{proof} Let $r=\sum_{i} e_i\otimes e_i^*\in A\otimes A^*\subset (A\oplus A^*)\otimes (A\oplus
A^*)$ which corresponds to the identity map $id:A\rightarrow A$,
where $\{ e_1,\cdots,e_n\}$ is a basis of $A$ and $\{ e_1^*,\cdots,
e_n^*\}$ is its dual basis. Suppose that the dendriform D-bialgebra
structure ``$\succ,\prec$'' on $A\oplus A^*$ is given by
$${\mathcal D}{\mathcal D}(A)=A\bowtie^{R_{\succ_A}^*+R_{\prec_A}^*,-L_{\prec_A}^*,-R_{\succ_A}^*,L_{\succ_A}^*+L_{\prec_A}^*}
_{R_{\succ_{A^*}}^*+R_{\prec_{A^*}}^*,-L_{\prec_{A^*}}^*,-R_{\succ_{A^*}}^*,L_{\succ_{A^*}}^*+L_{\prec_{A^*}}^*}A^*.$$
Then we have  (for any $x,y\in A, a,b\in A^*$)
\begin{eqnarray*}
&&x\succ y=x\succ_A y,\;\; x\prec y=x\prec_A y,\;\; x\succ
a=R^*_A(x)a-L_{\prec_{A^*}}^*(a)x\\
&&x\prec a=-R_{\succ_A}^*(x)a+L_{A^*}^*(a)x,\;\; a\succ
x=R^*_{A^*}(a)x-L_{\prec_{A}}^*(x)a,\;\;\\
&&a\prec x=-R_{\succ_{A^*}}^*(a)x+L_{A}^*(x)a,\;\;a\succ
b=a{\succ_{A^*}}b,\;\;a\prec b=a{\prec_{A^*}}b.
\end{eqnarray*}
If $r_\succ=r$ and $r_\prec=-r$ satisfies equations (4.3.3)-(4.3.5)
and (4.3.10)-(4.3.12), then
$$\Delta_{{\mathcal D}{\mathcal D},\succ}(u)=( id\otimes   L(u)-R_\prec(u) \otimes id  )(r_\succ),\;\;\Delta_{{\mathcal D}{\mathcal D},\prec}(u)=(
id\otimes   L_\prec(u)-R(x) \otimes id  )(r_\prec),\;\forall u\in
{\mathcal D}{\mathcal D}(A)$$ can induce a dendriform D-bialgebra
structure on ${\mathcal D}{\mathcal D}(A)$.

In fact, we have
$$r_\prec+r_\succ=0,\;\;r_\prec+\sigma(r_\succ)=\sum_i (-e_i\otimes
e_i^* +e_i^*\otimes e_i)$$ Therefore equation (4.3.4) holds
automatically. By a similar proof as of Theorem 2.3.6, we show that
equations (4.3.3) and (4.3.5)  hold and
\begin{eqnarray*}
r_{12}*r_{13}-r_{13}\prec r_{23}-r_{23}\succ r_{12}&=&
-r_{23}*r_{12}+r_{12}\prec r_{13}+r_{13}\succ
r_{23}\\&=&-r_{13}*r_{23}+r_{23}\prec r_{12}+r_{12}\succ r_{13}=0.
\end{eqnarray*}
So equations (4.3.10)-(4.3.12) are satisfied. Hence ${\mathcal
D}{\mathcal D}(A)$ is a dendriform D-bialgebra. Furthermore, for
$e_k\in A$, we have
\begin{eqnarray*}
\Delta_{{\mathcal D}{\mathcal D},\succ}(e_k)&=&\sum_{i} [e_i\otimes
e_k*e_i^*-(e_i\prec e_k)\otimes e_i^*]=\sum_{i,j}\langle
e_k,e_i^*\succ e_j^*\rangle e_i\otimes e_j
=\Delta_{\succ} (e_k);\\
\Delta_{{\mathcal D}{\mathcal D},\prec}(e_k)&=&\sum_{i} [-e_i\otimes
e_k\prec e_i^*+(e_i*e_k)\otimes e_i^*]=\sum_{i,j}\langle
e_k,e_i^*\prec e_j^*\rangle e_i\otimes e_j = \Delta_{\prec} (e_k).
\end{eqnarray*}
Therefore the inclusion $i_1:A\rightarrow A\oplus A^*$ is a
homomorphism of dendriform D-bialgebras. Similarly, the inclusion
$i_2:A^*\rightarrow A\oplus A^*$ is also a homomorphism of
dendriform D-bialgebras, where the dendriform D-bialgebra structure
on $A^*$ is given in Example {\rm 4.2.7}. \end{proof}

\begin{defn} {\rm Let $(A,A^*)$ be a dendriform D-bialgebra. With the dendriform
D-bialgebra  structure given in Theorem 4.3.6, $A\oplus A^*$ is
called  a {\it dendriform double} of $A$. We denote it by ${\mathcal
D}{\mathcal D}(A)$.}\end{defn}


\begin{coro}  Let $(A,A^*)$ be a dendriform D-bialgebra. Then the
dendriform double ${\mathcal D}{\mathcal D}(A)$ of $A$ is a
dendriform D-bialgebra  and the bilinear form $\omega$ given by
equation {\rm (1.4.1)} is a Connes cocycle.\end{coro}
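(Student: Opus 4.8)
The first assertion is immediate: that $\mathcal{DD}(A)$ carries a dendriform D-bialgebra structure is exactly what Theorem 4.3.6 establishes, so no additional work is needed there. The entire content of the corollary therefore reduces to verifying the Connes cocycle claim for the second assertion.

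For that, the plan is to identify the associated associative algebra of the dendriform double with a double construction of Connes cocycle in the sense of subsection 4.1. Recall that $\mathcal{DD}(A)$ is, as a dendriform algebra, the bowtie product $A\bowtie A^*$ built from the matched pair of dendriform algebras displayed in part (3) of Theorem 4.2.5. I would invoke Corollary 3.2.7, which says that passing to the associated associative algebras turns a matched pair of dendriform algebras into a matched pair of their associated associative algebras, with the bimodule maps summed. A short bookkeeping computation then gives $l_{\succ_A}+l_{\prec_A}=(R^*_{\succ_A}+R^*_{\prec_A})+(-R^*_{\succ_A})=R^*_{\prec_A}$ and $r_{\succ_A}+r_{\prec_A}=(-L^*_{\prec_A})+(L^*_{\succ_A}+L^*_{\prec_A})=L^*_{\succ_A}$, and likewise on the $A^*$ factor. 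Hence the associated associative algebra of $\mathcal{DD}(A)$ is precisely $A\bowtie^{R^*_{\prec_A},L^*_{\succ_A}}_{R^*_{\prec_{A^*}},L^*_{\succ_{A^*}}}A^*$.

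Once this identification is in hand, Theorem 4.1.5 tells us that $(A,A^*,R^*_{\prec_A},L^*_{\succ_A},R^*_{\prec_{A^*}},L^*_{\succ_{A^*}})$ is a matched pair of associative algebras realizing a double construction of Connes cocycle associated to $(A,*_A)$ and $(A^*,*_{A^*})$. By the very definition of such a double construction, the natural antisymmetric bilinear form $\omega$ of equation (1.4.1) is a Connes cocycle on this associative algebra, which is the associated associative algebra of $\mathcal{DD}(A)$. This yields the second assertion. (This is the exact analogue, on the dendriform side, of how Corollary 2.3.9 deduces that $\mathcal{AD}(A)$ is a symmetric Frobenius algebra from Theorem 2.3.6 together with the equivalence in Corollary 2.2.7.)

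I do not expect a genuine obstacle here, since the statement reduces entirely to results already proved (Theorems 4.3.6 and 4.1.5 together with Corollary 3.2.7). The only step demanding care is the bookkeeping that matches the summed dendriform bimodule maps to $R^*_{\prec}$ and $L^*_{\succ}$; this is routine but must be carried out consistently on both the $A$ and $A^*$ factors, so that the resulting associative matched pair is exactly the one appearing in Theorem 4.1.5 and not merely isomorphic to it.
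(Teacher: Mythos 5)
Your proposal is correct and follows essentially the same route the paper intends: the corollary is an immediate consequence of Theorem 4.3.6 (which gives the dendriform D-bialgebra structure on ${\mathcal D}{\mathcal D}(A)$) together with the equivalences of Theorems 4.1.5 and 4.2.5, once one identifies, via Corollary 3.2.7, the associated associative algebra of ${\mathcal D}{\mathcal D}(A)$ with the bowtie algebra $A\bowtie^{R^*_{\prec_A},L^*_{\succ_A}}_{R^*_{\prec_{A^*}},L^*_{\succ_{A^*}}}A^*$ arising from the matched pair $(A,A^*,R^*_{\prec_A},L^*_{\succ_A},R^*_{\prec_{A^*}},L^*_{\succ_{A^*}})$, on which $\omega$ of equation (1.4.1) is a Connes cocycle by the very definition of a double construction of Connes cocycle. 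Your bookkeeping $l_{\succ_A}+l_{\prec_A}=R^*_{\prec_A}$, $r_{\succ_A}+r_{\prec_A}=L^*_{\succ_A}$ (and likewise on the $A^*$ side) is exactly right, so the proof is complete.
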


At the end of this subsection, we would like to point out that,
unlike the symmetry of 1-cocycles of $A$ and $A^*$ appearing in the
definition of a dendriform D-bialgebra  $(A,A^*)$, it is not
necessary that $\beta$ is also a 1-coboundary of $A^*$ for a
coboundary dendriform D-bialgebra
$(A,A^*,\Delta_\succ,\Delta_\prec,\beta_\succ,\beta_\prec)$, where
$\Delta_\succ,\Delta_\prec$ are given by equations (4.3.1-4.3.2).

\subsection{ $D$-equation and its properties}
In this subsection, we consider some simple and special cases to
satisfy the equations (4.3.3-4.3.5) and (4.3.10-4.3.12).

At first, due to equation (4.3.3), we consider the condition
$$r_\prec =r,\;r_\succ=-\sigma (r),\;\;r\in A\otimes A.\eqno
(4.4.1)$$

\begin{coro} Let $(A,\succ,\prec)$ be a dendriform algebra and $r=\sum_i
a_i\otimes b_i\in A\otimes A$. Then the maps
$\Delta_\succ,\Delta_\prec$ defined by equations {\rm (4.3.1)} and
{\rm (4.3.2)} with $r_\succ, r_\prec$ satisfying equation {\rm
(4.4.1)} induce a dendriform algebra structure on $A^*$ such that
$(A,A^*)$ is a dendriform D-bialgebra  if and only if $r$ satisfies
the following equations
$$[P(x\succ y)-( id\otimes   L_\succ (x))P(y)](r-\sigma(r))=0;\eqno
(4.4.2)$$
$$\sigma (P(x))P(y)(r-\sigma(r))=0;\eqno (4.4.3)$$
$$(R(x)\otimes  id\otimes   id- id\otimes    id\otimes   L_\succ (x))[(r_{12}*r_{13}-r_{13}\prec r_{32}-r_{23}\succ
r_{12}) $$
$$+\sum_i (a_i*x)\otimes P(b_i)(r-\sigma(r))-a_i\otimes [P(x\succ
b_i)(r-\sigma(r))]=0;\eqno (4.4.4)$$
$$(R_\prec(x) \otimes id    \otimes id  - id\otimes    id\otimes   L_\prec(x))(-r_{23}*r_{21}+r_{21}\prec
r_{13}+r_{31}\succ r_{23})=0;\eqno (4.4.5)$$
$$(R_\prec(u)\otimes  id\otimes   id- id\otimes    id\otimes   L(u))(-r_{31}*r_{32}+r_{32}\prec r_{21}+r_{12}\succ
r_{31})$$
$$+\sum_i [P(b_i)(r-\sigma(r))\otimes x*a_i-P(b_i\prec
x)(r-\sigma(r))\otimes a_i] =0,\eqno (4.4.6)$$ where $x,y\in A,
P(x)= id\otimes   L_\succ(x)-R_\prec (x) \otimes id  $.\end{coro}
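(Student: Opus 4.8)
The plan is to invoke Theorem 4.3.4 and substitute the ansatz (4.4.1). Since Theorem 4.3.4 asserts that the coboundary maps $\Delta_\succ,\Delta_\prec$ of (4.3.1)--(4.3.2) make $(A,A^*)$ a dendriform D-bialgebra \emph{precisely} when $r_\succ,r_\prec$ satisfy (4.3.3)--(4.3.5) together with (4.3.10)--(4.3.12), and since every reduction below is an equivalence, the whole ``if and only if'' collapses to rewriting those six conditions under $r_\prec=r$, $r_\succ=-\sigma(r)$ and checking that they become the five equations (4.4.2)--(4.4.6). First I would record the elementary identities forced by the ansatz: $r_\prec+r_\succ=r-\sigma(r)$, while $r_\prec+\sigma(r_\succ)=0$ and $\sigma(r_\prec)+r_\succ=0$ because $\sigma^2=\mathrm{id}$; moreover $\sigma(r-\sigma(r))=-(r-\sigma(r))$, so $r-\sigma(r)$ is $\sigma$-antisymmetric.

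Two of the six conditions then trivialize or simplify at once. Equation (4.3.3) is applied to $r_\prec+\sigma(r_\succ)=0$ and so holds automatically. In (4.3.5) the second bracket is a combination of $r_\prec+\sigma(r_\succ)$ and $\sigma(r_\prec)+r_\succ$ and hence vanishes, leaving only its first bracket acting on $r_\prec+r_\succ=r-\sigma(r)$; recognizing $L_\succ(x)\otimes\mathrm{id}-\mathrm{id}\otimes R_\prec(x)=\sigma(P(x))$ and $-\mathrm{id}\otimes L_\succ(y)+R_\prec(y)\otimes\mathrm{id}=-P(y)$ then turns it (up to an overall sign) into (4.4.3). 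Next I would treat (4.3.4): after inserting $r_\succ+r_\prec=r-\sigma(r)$, the operator $R_\prec(x)\otimes L_\succ(y)-\mathrm{id}\otimes L_\succ(y\prec x)-R_\prec(y\succ x)\otimes\mathrm{id}$ should be rewritten, via the dendriform identity $L_\succ(u)L_\succ(v)=L_\succ(u*v)$ from (3.1.1) and a relabelling $x\leftrightarrow y$, as $P(x\succ y)-(\mathrm{id}\otimes L_\succ(x))P(y)$, which is exactly (4.4.2).

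The genuinely laborious part is the remaining trio (4.3.10)--(4.3.12), which are the conditions (through Proposition 4.3.3 and Lemma 4.3.2) for $\Delta^*$ to define a dendriform structure on $A^*$. Here I would substitute $r_{\succ,ij}=-\sigma(r)_{ij}$, using that $\sigma(r)$ placed in slots $(i,j)$ equals $r$ placed in the transposed slots $(j,i)$; thus every $r_\succ$-leg becomes a transposed $r$-leg and every mixed combination such as $r_{\prec,23}+r_{\succ,23}$ becomes $(r-\sigma(r))_{23}$. Collecting the resulting terms and re-expressing the surviving $\mathrm{id}\otimes L_\succ(x)-R_\prec(x)\otimes\mathrm{id}$ patterns through $P$ should produce (4.4.4), (4.4.5) and (4.4.6) respectively.

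The main obstacle is precisely this bookkeeping for (4.3.10)--(4.3.12): each is a long multi-term tensor identity in $A\otimes A\otimes A$, and one must simultaneously track slot positions, the three dendriform products $\succ,\prec,*$ linking the two copies of $r$, and the sign-and-transposition introduced by $r_\succ=-\sigma(r)$, while repeatedly fusing products such as $L_\succ(u)L_\succ(v)$ or $R_\prec(u)R_\prec(v)$ via the axioms (3.1.1). As elsewhere in the paper, I expect this is most safely carried out in a fixed basis $\{e_i\}$, matching the coefficient of each $e_\alpha\otimes e_\beta\otimes e_\gamma$; the antisymmetry $\sigma(r-\sigma(r))=-(r-\sigma(r))$ is what lets the apparently different operators in the (4.3.x) and (4.4.x) forms be identified.
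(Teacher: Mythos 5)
Your proposal is correct and matches the paper's (implicit) argument: the corollary is stated there as an immediate consequence of Theorem 4.3.4, obtained exactly as you describe by substituting $r_\prec=r$, $r_\succ=-\sigma(r)$ into (4.3.3)--(4.3.5) and (4.3.10)--(4.3.12), noting that (4.3.3) and the second bracket of (4.3.5) vanish since $r_\prec+\sigma(r_\succ)=\sigma(r_\prec)+r_\succ=0$, and rewriting the survivors via $L_\succ(x)L_\succ(y)=L_\succ(x*y)$ and the operator $P$. Your identifications of (4.3.4) with (4.4.2), of (4.3.5) with (4.4.3), and of the transposed-slot bookkeeping turning (4.3.10)--(4.3.12) into (4.4.4)--(4.4.6) are all sound.
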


\begin{remark} {\rm Let $\sigma_{123},\sigma_{132}:A\otimes A\otimes
A\rightarrow A\otimes A\otimes A$ be two linear maps given by
$$\sigma_{123}(x\otimes y\otimes z)=z\otimes x\otimes y,\;\;
\sigma_{132}(x\otimes y\otimes z)=y\otimes z\otimes x.\;\;\forall
\;x,y,z\in A.\eqno (4.4.7)$$ Then we have
\begin{eqnarray*}
(r_{23}*r_{21}-r_{21}\prec r_{13}-r_{31}\succ r_{23})
&=&\sigma_{123}(r_{12}*r_{13}-r_{13}\prec r_{32}-r_{23}\succ r_{12});\\
(r_{31}*r_{32}-r_{32}\prec r_{21}-r_{12}\succ r_{31}) &=&
\sigma_{132}(r_{12}*r_{13}-r_{13}\prec r_{32}-r_{23}\succ r_{12}).
\end{eqnarray*}}\end{remark}

\begin{remark}{\rm  We also can consider the case that $r_\succ+r_\prec=0$ as we
have done in the proof of Theorem 4.3.6. Obviously, if in addition,
$r_\prec=r$ is symmetric, then this case is as the same as the case
satisfying equation (4.4.1).}\end{remark}

The simplest way to satisfy equations (4.4.2-4.4.6) is to assume
that $r$ is symmetric and
$$r_{12}*r_{13}=r_{13}\prec r_{23}+r_{23}\succ r_{12}.\eqno
(4.4.8)$$

\begin{coro} Let $(A,\succ,\prec)$ be a dendriform algebra and $r\in
A\otimes A$. Suppose $r$ is symmetric and $r$ satisfies equation
{\rm (4.4.8)}. Then the maps $\Delta_\succ,\Delta_\prec$ defined by
equations {\rm (4.3.1)} and {\rm (4.3.2)} with $r_\succ=-r,
r_\prec=r$ induce a dendriform algebra structure on $A^*$ such that
$(A,A^*)$ is a dendriform D-bialgebra.\end{coro}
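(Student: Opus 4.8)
The plan is to deduce this directly from Corollary 4.4.1 by specializing to the symmetric case. First I would observe that symmetry of $r$ means $\sigma(r)=r$, so the prescription $r_\succ=-r$, $r_\prec=r$ is precisely the instance $r_\prec=r$, $r_\succ=-\sigma(r)$ of condition (4.4.1). Consequently Corollary 4.4.1 is applicable, and the assertion reduces to checking that $r$ satisfies equations (4.4.2)--(4.4.6). Thus no new structural work is needed; everything becomes a verification exercise.

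The decisive simplification is that $r-\sigma(r)=0$ for symmetric $r$, so every summand carrying the factor $r-\sigma(r)$ vanishes identically. This disposes of equations (4.4.2) and (4.4.3) outright, and it deletes the correction sums $\sum_i (a_i*x)\otimes P(b_i)(r-\sigma(r))-a_i\otimes P(x\succ b_i)(r-\sigma(r))$ in (4.4.4) together with the analogous sum $\sum_i [P(b_i)(r-\sigma(r))\otimes x*a_i-P(b_i\prec x)(r-\sigma(r))\otimes a_i]$ in (4.4.6). After these cancellations, what survives in each of (4.4.4)--(4.4.6) is one operator applied to a single ``core'' tensor in $A\otimes A\otimes A$, so it remains only to show that the three cores vanish.

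To that end I would write $E:=r_{12}*r_{13}-r_{13}\prec r_{32}-r_{23}\succ r_{12}$ for the core appearing in (4.4.4) and use that symmetry of $r$ gives $r_{32}=r_{23}$; hence $E=r_{12}*r_{13}-(r_{13}\prec r_{23}+r_{23}\succ r_{12})$, which is exactly the difference of the two sides of the hypothesis (4.4.8), so $E=0$. For the remaining two cores, those of (4.4.5) and (4.4.6), I would invoke Remark 4.4.2, which identifies them up to sign with $\sigma_{123}(E)$ and $\sigma_{132}(E)$ respectively; since $E=0$, both vanish as well, and applying the respective operators in front of them yields (4.4.4)--(4.4.6). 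By Corollary 4.4.1 this establishes that $(A,A^*)$ is a dendriform D-bialgebra.

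I do not expect any genuine obstacle here, as this is a corollary: the only point demanding care is tracking the leg-placement conventions in the $r_{ij}$ notation, specifically confirming that $r_{32}=r_{23}$ for symmetric $r$ so that the core $E$ matches (4.4.8) verbatim, and reading off correctly from Remark 4.4.2 that the two remaining cores are cyclic images of $E$. Once these identifications are in place, the conclusion is immediate.
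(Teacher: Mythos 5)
Your proposal is correct and is exactly the argument the paper intends: the corollary is stated as an immediate specialization of Corollary 4.4.1 (symmetry makes $r_\succ=-r,\ r_\prec=r$ an instance of condition (4.4.1) and kills every term carrying $r-\sigma(r)$, so (4.4.2)--(4.4.3) and the correction sums in (4.4.4) and (4.4.6) vanish), with the surviving core of (4.4.4) being the $D$-equation itself and the cores of (4.4.5)--(4.4.6) being its images under $\sigma_{123}$ and $\sigma_{132}$ via Remark 4.4.2. The paper omits the proof precisely because it is this routine verification, which you have carried out correctly, including the key identifications $r_{32}=r_{23}$, $r_{21}=r_{12}$, $r_{31}=r_{13}$ for symmetric $r$.
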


\begin{defn} {\rm Let $(A,\succ,\prec)$ be a dendriform algebra  and $r\in
A\otimes A$. Equation (4.4.8) is called {\it ${D}$-equation in
$A$.}}\end{defn}

By Remark 4.4.2, when $r$ is symmetric, the equivalent forms of
$D$-equation  are given as
$$r_{23}*r_{12}=r_{12}\prec r_{13}+r_{13}\succ r_{23};\;\;{\rm
or}\;\;r_{13}*r_{23}=r_{23}\prec r_{12}+r_{12}\succ r_{13}.\eqno
(4.4.9)$$

By a similar proof as of Proposition 2.4.4, we have the following
conclusion.

\begin{prop} Let $(A,\succ,\prec)$ be a dendriform algebra and $r\in
A\otimes A$ be a symmetric solution of $D$-equation in $A$. Then the
dendriform algebra structure and its associated associative algebra
structure on the dendriform double ${\mathcal D}{\mathcal D}(A)$ is
given from the products in $A$ as follows (for any $x\in A,
a^*,b^*\in A^*$).

{\rm (a)} $a^*\prec b^*=-R_\succ^*(r(a^*))b^*+L^*(r(b^*))a^*$,
$a^*\succ b^*=R^*(r(a^*))b^*-L^*_\prec(r(b^*))a^*$;\hfill {\rm
(4.4.10)}

{\rm (b)} $a^**b^*= a^*\succ b^*+a^*\prec b^*=R^*_\prec
(r(a^*))b^*+L_\succ^*(r(b^*))a^*$;\hfill {\rm (4.4.11)}

{\rm (c)} $x\succ a^*=x\succ r(a^*)-r(R^*(x)a^*)+R^*(x)a^*$,\\
\mbox{}\hspace{1cm} $x\prec a^*=x\prec
r(a^*)+r(R_\succ^*(x)a^*)-R_\succ^*(x)a^*$; \hfill {\rm (4.4.12)}

{\rm (d)} $x*a^*=x*r(a^*)-r(R_\prec^*(x)a^*)+R_\prec^*(x)a^*$;\hfill
{\rm (4.4.13)}

{\rm (e)}$a^*\succ x=r(a^*)\succ x+r(L_\prec^*(x)a^*)-L_\prec^*(x)a^*,\\
\mbox{}\hspace{1cm} a^*\prec x=r(a^*)\prec
x-r(L^*(x)a^*)+L^*(x)a^*$;\hfill {\rm (4.4.14)}

{\rm (f)} $a^**
x=r(a^*)*x-r(L_\succ^*(x)a^*)+L_\succ^*(x)a^*$.\hfill {\rm
(4.4.15)}\end{prop}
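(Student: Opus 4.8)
The plan is to imitate the proof of Proposition 2.4.4 line for line, replacing the antisymmetry of $r$ used there by the symmetry of $r$ here. First I would observe that, because $r$ is symmetric and satisfies the $D$-equation (4.4.8), Corollary 4.4.4 ensures that the maps $\Delta_\succ,\Delta_\prec$ of equations (4.3.1)--(4.3.2) with $r_\succ=-r$, $r_\prec=r$ turn $(A,A^*)$ into a coboundary dendriform D-bialgebra; hence the dendriform double ${\mathcal D}{\mathcal D}(A)$ is defined, and by the proof of Theorem 4.3.6 its products are the matched-pair products $x\succ a^*=R^*(x)a^*-L^*_{\prec_{A^*}}(a^*)x$, $x\prec a^*=-R_\succ^*(x)a^*+L^*_{A^*}(a^*)x$, together with the mirror formulas for $a^*\succ x$, $a^*\prec x$ and the products internal to $A$ and to $A^*$. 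The whole task is then to rewrite these in terms of the single map $r:A^*\rightarrow A$. The one algebraic input driving every simplification is that a symmetric $r$ is self-adjoint for the pairing, namely $\langle r(a^*),b^*\rangle=\langle a^*,r(b^*)\rangle$, which is the exact counterpart of the identity $\langle r(a^*),b^*\rangle=-\langle r(b^*),a^*\rangle$ exploited in Proposition 2.4.4.

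Next I would prove part (a). Since $(A,A^*)$ is coboundary, the products $\succ_{A^*}$ and $\prec_{A^*}$ are by definition $\Delta_\succ^*$ and $\Delta_\prec^*$. Fixing a basis $\{e_1,\dots,e_n\}$ with dual basis $\{e_1^*,\dots,e_n^*\}$ and writing $r=\sum_{i,j}a_{ij}\,e_i\otimes e_j$ with $a_{ij}=a_{ji}$, so that $r(e_i^*)=\sum_k a_{ki}e_k$, I would expand $\Delta_\succ(x)$ and $\Delta_\prec(x)$ from (4.3.1)--(4.3.2) and contract against $a^*\otimes b^*$. Each tensor slot contracts to an $r$-image, the symmetry $a_{ij}=a_{ji}$ being exactly what makes both slots produce $r$ rather than its transpose; collecting the terms and rewriting them through the operators $R^*,R_\succ^*,L^*,L_\prec^*$ yields equation (4.4.10). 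This step is verbatim the computation of $e_i^*\circ e_j^*$ in Proposition 2.4.4, with symmetry in place of antisymmetry.

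Then I would derive the mixed products (c) and (e). Substituting the formulas of part (a) into the terms $L^*_{\prec_{A^*}}(a^*)x$ and $L^*_{A^*}(a^*)x$ (and their mirrors) that appear in the matched-pair products above, evaluating against an arbitrary covector, and applying self-adjointness of $r$, each such term turns into an expression of the shape $x\succ r(a^*)-r(R^*(x)a^*)$; this is precisely how equations (4.4.12) and (4.4.14) are assembled, exactly as parts (b) and (c) of Proposition 2.4.4 were deduced from its part (a). Finally, parts (b), (d) and (f) follow for free by summation: since $*=\succ+\prec$ and $R^*=R_\succ^*+R_\prec^*$, $L^*=L_\succ^*+L_\prec^*$, adding the two formulas in (a) (respectively in (c), in (e)) collapses all $\succ/\prec$-decorated operators into their associative sums and gives equations (4.4.11), (4.4.13) and (4.4.15).

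I expect the difficulty to be bookkeeping rather than conceptual. In the dendriform setting there are eight dual multiplication operators $R_\succ^*,R_\prec^*,L_\succ^*,L_\prec^*$ and their associative sums, in place of the two operators $R_\cdot^*,L_\cdot^*$ of Proposition 2.4.4, so the main obstacle is to organise the many terms and signs so that the single self-adjointness identity delivers the claimed cancellations cleanly. It is worth recording that no step invokes the $D$-equation directly: that hypothesis enters only through Corollary 4.4.4, to guarantee that ${\mathcal D}{\mathcal D}(A)$ exists as a dendriform D-bialgebra in the first place.
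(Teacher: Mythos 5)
Your proposal is correct and follows essentially the same route as the paper: the paper's entire proof of this proposition is the sentence ``By a similar proof as of Proposition 2.4.4,'' i.e.\ exactly what you describe --- the basis computation for the products on $A^*$ (part (a)), dualization through the matched-pair formulas of Theorem 4.3.6 for the mixed products (parts (c), (e)), with the $D$-equation entering only to guarantee, via Corollary 4.4.4, that the dendriform D-bialgebra and hence ${\mathcal D}{\mathcal D}(A)$ exist. Your observation that (b), (d), (f) then follow by summation, using $R^*=R^*_\succ+R^*_\prec$ and $L^*=L^*_\succ+L^*_\prec$, is a clean way of organizing the paper's implicit ``similarly'' steps.

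One concrete wrinkle you should be aware of, which is an inconsistency internal to the paper rather than a flaw in your plan: if you carry out the part (a) computation with the convention you quote from Corollary 4.4.4, namely $r_\succ=-r$, $r_\prec=r$, you will obtain the negatives of the formulas in (4.4.10); for instance one finds $a^*\succ_{A^*}b^*=L^*_\prec(r(b^*))a^*-R^*(r(a^*))b^*$. The signs actually stated in Proposition 4.4.6 (and matching Corollary 4.4.12 and the computation inside the proof of Theorem 4.3.6) correspond to the opposite choice $r_\succ=r$, $r_\prec=-r$. Since $-r$ is again a symmetric solution of the $D$-equation, the discrepancy is harmless: replace $r$ by $-r$ (equivalently, adopt the sign convention of the proof of Theorem 4.3.6), and your argument yields the proposition exactly as stated.
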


\begin{theorem} Let $(A,\succ,\prec)$ be a dendriform algebra and $r\in
A\otimes A$. Suppose that $r$ is symmetric and nondegenerate. Then
$r$ is a solution of $D$-equation in $A$ if and only if the inverse
of the isomorphism $A^*\rightarrow A$ induced by $r$, regarded as a
bilinear form ${\mathcal B}$ on $A$ (that is, ${\mathcal
B}(x,y)=\langle r^{-1}x,y\rangle $ for any $x,y\in A$) satisfies
$${\mathcal B}(x*y,z)={\mathcal B}(y,z\prec x)+{\mathcal B}(x,y\succ
z),\;\;\forall\; x,y,z\in A.\eqno (4.4.16)$$\end{theorem}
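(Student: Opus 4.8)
The plan is to pass through an equivalent \emph{operator form} of the $D$-equation and then unwind the definitions of the dual maps. Throughout, I regard $r=\sum_i a_i\otimes b_i$ as the induced map $A^*\to A$ via equation (2.4.4), so that $r(a^*)=\sum_i\langle a^*,b_i\rangle a_i$; since $r$ is symmetric one also has $r(a^*)=\sum_i\langle a^*,a_i\rangle b_i$, and the associated form $\mathcal B(x,y)=\langle r^{-1}x,y\rangle$ is symmetric. Because $r$ is nondegenerate, every element of $A$ is of the form $r(a^*)$, so it suffices to test all identities on such elements.

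First I would show that, for symmetric $r$, equation (4.4.8) is equivalent to the operator identity
$$r(a^*)*r(b^*)=r\big(R_{\prec}^{*}(r(a^*))b^*+L_{\succ}^{*}(r(b^*))a^*\big),\qquad \forall\,a^*,b^*\in A^*.$$
This is obtained exactly as in the proof of Proposition 2.4.4: expand the three terms of (4.4.8) as in (1.1.3), namely $r_{12}*r_{13}=\sum_{i,j}(a_i*a_j)\otimes b_i\otimes b_j$, $\;r_{13}\prec r_{23}=\sum_{i,j}a_i\otimes a_j\otimes(b_i\prec b_j)$ and $r_{23}\succ r_{12}=\sum_{i,j}a_i\otimes(a_j\succ b_i)\otimes b_j$, and contract the second and third tensor legs against arbitrary $b^*,c^*\in A^*$. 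Using $\sum_i\langle b^*,b_i\rangle a_i=r(b^*)$ together with the definitions (3.2.5)--(3.2.6) of $R_\prec^*,L_\succ^*$, the left side collapses to $r(b^*)*r(c^*)$ and the two right-hand terms to $r(R_\prec^*(r(b^*))c^*)$ and $r(L_\succ^*(r(c^*))b^*)$. Since pairing the remaining first leg against an arbitrary $a^*$ recovers the full contraction $\langle a^*\otimes b^*\otimes c^*,T\rangle$ of the defect tensor $T$ of (4.4.8), and a tensor in $A\otimes A\otimes A$ vanishes iff all such contractions vanish, this step is a genuine equivalence rather than merely a consequence.

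It then remains to identify the operator identity with (4.4.16). Writing $x=r(a^*)$ and $y=r(b^*)$ (so $a^*=r^{-1}x$, $b^*=r^{-1}y$), I apply $r^{-1}$ to the operator identity to get $r^{-1}(x*y)=R_\prec^*(x)(r^{-1}y)+L_\succ^*(y)(r^{-1}x)$; pairing with an arbitrary $z$ and using $\langle R_\prec^*(x)b^*,z\rangle=\langle b^*,z\prec x\rangle$ and $\langle L_\succ^*(y)a^*,z\rangle=\langle a^*,y\succ z\rangle$ turns the left side into $\mathcal B(x*y,z)$ and the right side into $\mathcal B(y,z\prec x)+\mathcal B(x,y\succ z)$, which is exactly (4.4.16). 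Conversely, if (4.4.16) holds for all $x,y,z$, nondegeneracy lets me strip the pairing with $z$ to recover the operator identity, hence (4.4.8). The one delicate point, and the place where I would be most careful, is the contraction step: one must keep both expressions for $r(a^*)$ available and contract the correct legs so that the symmetry of $r$ is used consistently and the equivalence runs in both directions; once the operator form is in hand, the passage to $\mathcal B$ is a routine transposition via (3.2.5)--(3.2.6).
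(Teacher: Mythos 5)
Your proof is correct, but it is organized differently from the paper's. The paper proves this theorem in a single step: writing $x=r(u^*)$, $y=r(v^*)$, $z=r(w^*)$, it identifies each term of (4.4.16) with a full contraction of $w^*\otimes u^*\otimes v^*$ against the corresponding tensor of the $D$-equation, namely ${\mathcal B}(x*y,z)=\langle w^*\otimes u^*\otimes v^*, r_{12}*r_{13}\rangle$, ${\mathcal B}(y,z\prec x)=\langle w^*\otimes u^*\otimes v^*, r_{13}\prec r_{23}\rangle$ and ${\mathcal B}(x,y\succ z)=\langle w^*\otimes u^*\otimes v^*, r_{23}\succ r_{12}\rangle$, so that (4.4.16) holds for all $x,y,z$ if and only if the defect tensor of (4.4.8) vanishes. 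You instead contract only the second and third legs, arriving at the operator identity $r(a^*)*r(b^*)=r\bigl(R_\prec^*(r(a^*))b^*+L_\succ^*(r(b^*))a^*\bigr)$ --- which is exactly Theorem 4.4.11 of the paper, stated and proved there separately (by a basis computation) \emph{after} this theorem --- and then you dualize via $r^{-1}$ to recover (4.4.16). The computational core is the same (your Step 1 is the paper's contraction with one leg left free), but your factorization buys two things: it produces the operator characterization, and hence the link to ${\mathcal O}$-operators of Corollary 4.4.12, as a byproduct; and it isolates precisely where nondegeneracy enters, since your Step 1 equivalence holds for arbitrary symmetric $r$ (consistent with Theorem 4.4.11, which assumes no nondegeneracy), invertibility being needed only for the final transposition to ${\mathcal B}$. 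The paper's direct route is shorter for the statement at hand and treats the three legs symmetrically; in effect you have reversed its logical order, deriving Theorem 4.4.7 from Theorem 4.4.11 instead of proving the two independently.
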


\begin{proof} Let $r=\sum_i a_i\otimes b_i$. Since $r$ is
symmetric, $r(v^*)=\sum_i\langle v^*,a_i\rangle  b_i=\sum_i\langle
v^*,b_i\rangle  a_i$ for any $v^*\in A^*$. Since $r$ is
nondegenerate, for any $x,y,z\in A$, there exist $u^*,v^*,w^*\in
A^*$ such that $x=r(u^*),y=r(v^*),z=r(w^*)$. Therefore

{\small
\begin{eqnarray*} {\mathcal B}(x*y,z)&=& \langle r(u^*)* r(v^*),
w^*\rangle  = \sum_{i,j}\langle u^*,b_i\rangle  \langle
v^*,b_j\rangle  \langle w^*,a_i* a_j\rangle
=\langle w^*\otimes u^*\otimes v^*, r_{12}* r_{13}\rangle  ;\\
{\mathcal B}(y,z\prec x) &=&\langle v^*, r(w^*)\prec r(u^*)\rangle
=\sum_{i,j}\langle u^*,b_i\rangle  \langle w^*,b_i\rangle  \langle
v^*,a_i\prec a_j\rangle  =\langle w^*\otimes u^*\otimes v^*, r_{13}\prec r_{23}\rangle  ;\\
{\mathcal B}(x,y\succ z)&=& \langle r(v^*)\succ r(w^*), u^*\rangle
=\sum_{i,j}\langle v^*,b_i\rangle  \langle w^*,b_j\rangle  \langle
u^*,a_i\succ a_j\rangle  =\langle w^*\otimes u^*\otimes v^*,
r_{23}\succ r_{12}\rangle.
\end{eqnarray*}}
Therefore ${\mathcal B}$ satisfies equation (4.4.16) if and only if
$r$ is a solution of $D$-equation in $A$ .\end{proof}

\begin{defn}{\rm
Let $(A,\succ,\prec)$ be a dendriform algebra. A bilinear form
${\mathcal B}$ on A is called a 2-cocycle if ${\mathcal B}$
satisfies equation (4.4.16).}
\end{defn}

\begin{remark}
{\rm Let ${\mathcal B}$ be 2-cocycle on a dendriform algebra
$(A,\succ,\prec)$. Then it is easy to show that $\omega
(x,y)={\mathcal B}(x,y)-{\mathcal B}(y,x)$ (for any $x,y\in A$) is a
Connes cocycle of the associated associative algebra $(A,*)$. On the
other hand, ${\mathcal B}$ satisfies
$${\mathcal B}(x\cdot y,z)-{\mathcal B}(x,y\cdot z)
={\mathcal B}(y\cdot x,z)-{\mathcal B}(y,x\cdot z),\;\;\forall\;
x,y,z\in A,\eqno (4.4.17)$$ where $x\cdot y=x\succ y-y\prec x$ for
any $x,y\in A$. Furthermore, $(A,\cdot)$ is a pre-Lie algebra (see
subsections 5.2 and 5.3) and a bilinear form on a pre-Lie algebra
$A$ satisfying equation (4.4.17) is called a 2-cocycle on $A$
(\cite{Ku2}). Moreover, a pre-Lie algebra $A$ over the real number
field ${\bf R}$ is called Hessian if there exists a symmetric and
positive definite 2-cocycle on $A$. In geometry, a Hessian manifold
$M$ is a flat affine manifold provided with a Hessian metric $g$,
that is, $g$ is a Remanning metric such that for any each point
$p\in M$ there exists a $C^\infty$-function $\varphi$ defined on a
neighborhood of $p$ such that
$g_{ij}=\frac{\partial^2\varphi}{\partial x^i\partial x^j}$. A
Hessian pre-Lie algebra corresponds to an affine Lie group $G$ with
a $G$-invariant Hessian metric (\cite{Sh}). Therefore a symmetric
and positive definite 2-cocycle on a real dendriform algebra can
give a Hessian structure.}\end{remark}

\begin{coro}
 Let $(A,\succ_A,\prec_A)$ be a dendriform algebra and $r\in
A\otimes A$ be a nondegenerate symmetric solution of $D$-equation in
$A$. Suppose the dendriform algebra structure
$``\succ_{A^*},\prec_{A^*}"$ on $A^*$ is induced by $r$ through {\rm
Proposition 4.4.6}. Then we have
$$a^*\succ_{A^*} b^*=r^{-1}(r(a^*)\succ_A r(b^*)),
a^*\prec_{A^*} b^*=r^{-1}(r(a^*)\prec_A r(b^*)),\;\;\forall
a^*,b^*\in A^*.\eqno (4.4.18)$$ Therefore $r:A^*\rightarrow A$ is an
isomorphism of dendriform algebras.\end{coro}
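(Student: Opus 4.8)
The plan is to imitate the proof of Corollary 2.4.6, with the Connes cocycle there replaced by the symmetric $2$-cocycle furnished by Theorem 4.4.7. I set $\mathcal B(x,y)=\langle r^{-1}(x),y\rangle$ for $x,y\in A$. Since $r$ is symmetric and nondegenerate, Theorem 4.4.7 gives that $\mathcal B$ satisfies the $2$-cocycle identity (4.4.16), and symmetry of $r$ forces $\mathcal B$ to be a symmetric bilinear form (this is the one point I would verify explicitly, using $r(v^*)=\sum_i\langle v^*,a_i\rangle b_i=\sum_i\langle v^*,b_i\rangle a_i$ as in the proof of Theorem 4.4.7). The single translation identity I would then use repeatedly is $\langle a^*,z\rangle=\mathcal B(r(a^*),z)$ for all $a^*\in A^*$ and $z\in A$, which is immediate from $r^{-1}r=\mathrm{id}$.

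Write $u=r(a^*)$ and $v=r(b^*)$. First I would take the formula $a^*\succ_{A^*}b^*=R^*(r(a^*))b^*-L_\prec^*(r(b^*))a^*$ from Proposition 4.4.6(a), pair it against an arbitrary $x\in A$, and unfold the dual operators through $\langle R^*(y)c^*,z\rangle=\langle c^*,z*y\rangle$ and $\langle L_\prec^*(y)c^*,z\rangle=\langle c^*,y\prec z\rangle$. Using the translation identity this rewrites $\langle a^*\succ_{A^*}b^*,x\rangle$ as $\mathcal B(v,x*u)-\mathcal B(u,v\prec x)$. Now applying (4.4.16) in the form $\mathcal B(x*u,v)=\mathcal B(u,v\prec x)+\mathcal B(x,u\succ v)$ together with the symmetry of $\mathcal B$ collapses this to $\mathcal B(u\succ_A v,x)=\langle r^{-1}(r(a^*)\succ_A r(b^*)),x\rangle$. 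Since $x$ is arbitrary and the pairing is nondegenerate, the first identity follows; the $\prec$ identity is obtained by the same computation, starting from $a^*\prec_{A^*}b^*=-R_\succ^*(r(a^*))b^*+L^*(r(b^*))a^*$ and applying (4.4.16) in the companion form $\mathcal B(v*x,u)=\mathcal B(x,u\prec v)+\mathcal B(v,x\succ u)$.

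These two identities say precisely that $r(a^*\succ_{A^*}b^*)=r(a^*)\succ_A r(b^*)$ and $r(a^*\prec_{A^*}b^*)=r(a^*)\prec_A r(b^*)$, so $r:A^*\to A$ is a homomorphism of dendriform algebras; being nondegenerate it is bijective, hence an isomorphism. I expect the only real difficulty to be bookkeeping: there are four distinct dual operators ($R^*,L^*,R_\succ^*,L_\prec^*$) and two products, and the cancellation works only if (4.4.16) is applied with its three arguments in exactly the right order and the symmetry $\mathcal B(p,q)=\mathcal B(q,p)$ is invoked at exactly the right step. A single interchange of $\succ$ and $\prec$, or a misplaced factor, would destroy the collapse, so I would keep the terms labelled by $(u,v,x)$ throughout rather than by $(a^*,b^*,x)$.
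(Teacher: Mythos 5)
Your proof is correct, and it is precisely what the paper intends: the paper's own proof of this corollary just says ``the conclusion can be obtained by a similar proof as of Corollary 2.4.6,'' and your argument is exactly that adaptation, with the Connes cocycle replaced by the symmetric $2$-cocycle $\mathcal B$ from Theorem 4.4.7, the product formulas taken from Proposition 4.4.6, and antisymmetry of $\omega$ replaced by symmetry of $\mathcal B$ at the collapsing step. Both applications of (4.4.16) and the unfoldings of the dual operators in your computation check out.
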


\begin{proof}
The conclusion can be obtained by a similar proof as of Corollary
2.4.6.
\end{proof}


\begin{theorem} Let $(A,\succ,\prec)$ be a dendriform algebra and $r\in
A\otimes A$ be symmetric. Then $r$ is a solution of $D$-equation
in $A$ if and only if $r$ satisfies
$$r(a^*)*r(b^*)=r(R_\prec^*(r(a^*))b^*+L_\succ^*(r(b^*))a^*),\;\;\forall\; a^*,b^*\in A^*.\eqno
(4.4.19)$$\end{theorem}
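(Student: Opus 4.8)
The plan is to derive the operator identity (4.4.19) from the tensor $D$-equation (4.4.8) exactly as Theorem 2.4.7 derives its operator form from the associative Yang--Baxter equation, the only new ingredient being that here $r$ is \emph{symmetric} rather than antisymmetric. Writing $r=\sum_i a_i\otimes b_i$ and regarding $r$ as a map $A^*\to A$ via (2.4.4), one has the defining relation $r(u^*)=\sum_i\langle u^*,b_i\rangle a_i$; symmetry, i.e.\ $r=\sigma(r)$, supplies the alternative expression $r(u^*)=\sum_i\langle u^*,a_i\rangle b_i$. The $D$-equation asserts that $T:=r_{12}*r_{13}-r_{13}\prec r_{23}-r_{23}\succ r_{12}$ vanishes in $A\otimes A\otimes A$, and a tensor there is zero if and only if all of its contractions $(id\otimes a^*\otimes b^*)(T)\in A$ against the second and third legs vanish, for every $a^*,b^*\in A^*$. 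Hence it suffices to show that this contraction equals $r(a^*)*r(b^*)-r\big(R_\prec^*(r(a^*))b^*+L_\succ^*(r(b^*))a^*\big)$.

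The computation would then proceed term by term. Contracting $r_{12}*r_{13}=\sum_{i,j}(a_i*a_j)\otimes b_i\otimes b_j$ against $a^*$ (second leg) and $b^*$ (third leg) collapses the two leading legs by the defining relation $r(u^*)=\sum_i\langle u^*,b_i\rangle a_i$, giving $r(a^*)*r(b^*)$. For $r_{23}\succ r_{12}=\sum_{i,j} a_j\otimes(a_i\succ b_j)\otimes b_i$, I would rewrite $\langle a_i\succ b_j,a^*\rangle=\langle b_j,L_\succ^*(a_i)a^*\rangle$, collapse the $j$-sum to $r(L_\succ^*(a_i)a^*)$ and the $i$-sum to $r(L_\succ^*(r(b^*))a^*)$ using $\sum_i\langle b_i,b^*\rangle a_i=r(b^*)$. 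For $r_{13}\prec r_{23}=\sum_{i,j} a_i\otimes a_j\otimes(b_i\prec b_j)$, I would use $\langle b_i\prec b_j,b^*\rangle=\langle b_i,R_\prec^*(b_j)b^*\rangle$ to collapse the $i$-sum to $r(R_\prec^*(b_j)b^*)$, and then the $j$-sum to $r(R_\prec^*(r(a^*))b^*)$. Assembling the three contributions and setting $T=0$ yields precisely $r(a^*)*r(b^*)=r(R_\prec^*(r(a^*))b^*+L_\succ^*(r(b^*))a^*)$, which is (4.4.19); reading the steps backwards gives the converse, since every $a^*,b^*$ is allowed.

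The main subtlety, and the only place where the hypothesis is genuinely used, is the final collapse in the $\prec$-term: turning $\sum_j\langle a^*,a_j\rangle b_j$ into $r(a^*)$ requires the identity $r(a^*)=\sum_j\langle a^*,a_j\rangle b_j$, which holds exactly because $r$ is symmetric. By contrast the $*$-term and the $\succ$-term collapse using only the native expression $r(u^*)=\sum_i\langle u^*,b_i\rangle a_i$. So the real work is bookkeeping: tracking which tensor leg is contracted against $a^*$ versus $b^*$, keeping the three summands of (4.4.8) aligned with the two summands of (4.4.19), and recognizing the single spot where $r=\sigma(r)$ must be invoked. A fully equivalent route, closer in style to Theorem 2.4.7 and Proposition 4.4.6, would be to fix a basis, write $r=\sum_{i,j}t_{ij}e_i\otimes e_j$ with $t_{ij}=t_{ji}$, and match the scalar identity obtained from the $e_p$-coefficient on each side; this trades the contraction language for heavier index manipulation but isolates the same use of symmetry.
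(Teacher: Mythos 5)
Your proof is correct and is essentially the paper's own argument: the paper establishes this theorem ``by a similar proof as of Theorem 2.4.7,'' i.e., by fixing a basis and matching coefficients of the tensor identity (4.4.8) against those of the operator identity (4.4.19), and your contraction of $r_{12}*r_{13}-r_{13}\prec r_{23}-r_{23}\succ r_{12}$ against arbitrary $a^*\otimes b^*$ in the second and third legs is precisely that coefficient-matching phrased invariantly (the two are equivalent in finite dimension, as you note). Your bookkeeping is accurate, including the observation that the symmetry $r=\sigma(r)$ is invoked exactly once, in collapsing the $\prec$-term to $r(R_\prec^*(r(a^*))b^*)$, while the $*$- and $\succ$-terms use only the native identification $r(u^*)=\sum_i\langle u^*,b_i\rangle a_i$.
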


\begin{proof} The conclusion can be obtained by a similar proof as
of Theorem 2.4.7.\end{proof}

Combining Theorem 4.4.11 and Theorem 3.1.2, we have the following
conclusion.

\begin{coro} Let $(A,\succ,\prec)$ be a dendriform algebra and $r\in
A\otimes A$ be symmetric. Then $r$ is a solution of $D$-equation in
$A$ if and only if $r$ is an ${\mathcal O}$-operator of the
associated associative algebra $(A,*)$ associated to
$(R_\prec^*,L_\succ^*)$. Therefore there is a dendriform algebra
structure on $A^*$ given by
$$a^*\succ b^*=R_\prec^*(r(a^*))b^*,\;\;a^*\prec b^*=L_\succ^*(r(b^*))a^*,\;\;\forall a^*,b^*\in A^*.\eqno
(4.4.20)$$ It has the same associated associative algebra of the
dendriform  algebra on $A^*$ given by equation {\rm (4.4.11)}, which
is induced by $r$ in the sense of coboundary dendriform
D-bialgebras. If $r$ is nondegenerate, then there is a new
compatible dendriform algebra structure on $A$ given by
$$x\succ' y= r(R_\prec^*(x)r^{-1}y),\;\;
x\prec'y=r(L_\succ^*(y)r^{-1}x),\;\; \forall\; x,y\in A,\eqno
(4.4.21)$$ which is just the dendriform  algebra structure given by
$${\mathcal B}(x\succ'y,z)={\mathcal B}(y,z*x),
\;\;{\mathcal B}(x\prec'y,z)={\mathcal B}(x,y*z),\;\;\forall\;
x,y,z\in A,\eqno (4.4.22)$$ where ${\mathcal B}$  is the symmetric
2-cocycle on $A$ induced by $r^{-1}$.
\end{coro}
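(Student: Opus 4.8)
The plan is to obtain every clause of the corollary by specializing results already established, using the symmetry of $r$ as the hinge. First I would note that $(R_\prec^*,L_\succ^*)$ is a bimodule of the associated associative algebra $(A,*)$ (Example 3.2.3 / Proposition 3.2.2(7)), so it makes sense to speak of an ${\mathcal O}$-operator $T\colon A^*\to A$ associated to it. Substituting $T=r$, with left and right actions $R_\prec^*$ and $L_\succ^*$, into the defining equation (2.5.1) turns it verbatim into
$$r(a^*)*r(b^*)=r\bigl(R_\prec^*(r(a^*))b^*+L_\succ^*(r(b^*))a^*\bigr),$$
which is equation (4.4.19). Hence Theorem 4.4.11 gives immediately that a symmetric $r$ solves the $D$-equation if and only if it is an ${\mathcal O}$-operator associated to $(R_\prec^*,L_\succ^*)$; no work beyond matching the two formulas is needed.

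Granting this, I would feed $r$ into Theorem 3.1.2. Its dendriform structure $u\succ v=l(T(u))v$, $u\prec v=r(T(v))u$ (with $l,r$ the left and right actions of the bimodule) becomes here $a^*\succ b^*=R_\prec^*(r(a^*))b^*$ and $a^*\prec b^*=L_\succ^*(r(b^*))a^*$, i.e.\ (4.4.20). For the assertion about the associated associative algebra I would merely add these two products: the sum is $R_\prec^*(r(a^*))b^*+L_\succ^*(r(b^*))a^*$, which is exactly the product (4.4.11) carried by the coboundary-induced structure (4.4.10) of Proposition 4.4.6. Thus the ${\mathcal O}$-operator structure (4.4.20) and the coboundary structure (4.4.10), though distinct as dendriform algebras, have one and the same associated associative algebra.

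For nondegenerate $r$ I would apply Corollary 3.1.3: its compatible structure (3.1.5), namely $x\succ y=T(l(x)T^{-1}(y))$ and $x\prec y=T(r(y)T^{-1}(x))$, specializes to $x\succ' y=r(R_\prec^*(x)r^{-1}y)$ and $x\prec' y=r(L_\succ^*(y)r^{-1}x)$, which is (4.4.21), and Corollary 3.1.3 already guarantees this is compatible with $(A,*)$. The remaining point is to connect it with the symmetric $2$-cocycle ${\mathcal B}(x,y)=\langle r^{-1}x,y\rangle$ of Theorem 4.4.8. I would compute ${\mathcal B}(x\succ'y,z)=\langle r^{-1}(x\succ'y),z\rangle=\langle R_\prec^*(x)r^{-1}y,z\rangle$, cancel $r^{-1}\!\circ r$, and then transpose $R_\prec^*$ off the first argument via the adjoint convention (3.2.6), folding the result back into a value of ${\mathcal B}$; the same for $\prec'$ with $L_\succ^*$.

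I expect this last transposition to be the only real obstacle. Carried out honestly it lands on the one-sided relations ${\mathcal B}(x\succ'y,z)={\mathcal B}(y,z\prec x)$ and ${\mathcal B}(x\prec'y,z)={\mathcal B}(x,y\succ z)$, and the care lies in tracking whether a one-sided product ($\prec$ or $\succ$) or the full product $*$ appears on the right of (4.4.22). Summing the two relations and invoking the $2$-cocycle identity (4.4.16) recovers ${\mathcal B}(x*'y,z)={\mathcal B}(x*y,z)$, whence $*'=*$ by nondegeneracy of ${\mathcal B}$, reconfirming compatibility; reconciling each relation separately with the displayed right-hand sides of (4.4.22) is the genuinely delicate bookkeeping step, while the rest of the corollary is a chain of substitutions into Theorems 4.4.11, 3.1.2 and Corollary 3.1.3.
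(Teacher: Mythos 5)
Your proof follows exactly the paper's route: the paper derives this corollary by ``combining Theorem 4.4.11 and Theorem 3.1.2'' (plus Corollary 3.1.3 for the nondegenerate case), and your substitutions are all correct -- equation (4.4.19) is verbatim the ${\mathcal O}$-operator identity (2.5.1) for $T=r$ with left action $R_\prec^*$ and right action $L_\succ^*$, Theorem 3.1.2 then yields (4.4.20), adding the two products gives (4.4.11), and (3.1.5) specializes to (4.4.21). Your remark that (4.4.20) and the coboundary structure (4.4.10) are distinct dendriform algebras with the same associated associative algebra is also right.

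The one place you hedge -- reconciling your one-sided identities with the displayed (4.4.22) -- should be settled decisively, and in your favor; it is not ``delicate bookkeeping'' but an inconsistency in the printed formula. Your computation
$${\mathcal B}(x\succ'y,z)=\langle R_\prec^*(x)r^{-1}(y),z\rangle=\langle z\prec x,\,r^{-1}(y)\rangle={\mathcal B}(y,z\prec x),\qquad {\mathcal B}(x\prec'y,z)={\mathcal B}(x,y\succ z)$$
is exact and uses nothing but the symmetry of $r$ and the adjoint convention (3.2.5)--(3.2.6); no further identity can turn these into the printed right-hand sides ${\mathcal B}(y,z*x)$ and ${\mathcal B}(x,y*z)$. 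Indeed, equality with the printed version would force ${\mathcal B}(y,z\succ x)=0$ for all $x,y,z$, hence $z\succ x=0$ by nondegeneracy of ${\mathcal B}$ (and similarly $\prec=0$ from the second relation). Concretely, taking $r$ as in Corollary 4.4.15 inside $A\ltimes_{R_\prec^*,0,0,L_\succ^*}A^*$, one finds $x\succ' b^*=R_\prec^*(x)b^*$, so ${\mathcal B}(x\succ'b^*,z)=\langle b^*,z\prec x\rangle$, which differs from $\langle b^*,z*x\rangle$ whenever $\succ\neq 0$ on $A$. So (4.4.22) as printed is a misprint: its right-hand sides should read ${\mathcal B}(y,z\prec x)$ and ${\mathcal B}(x,y\succ z)$, consistent with the bimodule here being $(R_\prec^*,L_\succ^*)$ rather than the $(R_*^*,L_*^*)$ of the Connes-cocycle case, which is what produces the full products in the analogous formula (4.1.1). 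With that correction your argument is a complete proof of the corollary (the paper itself never derives (4.4.22)), and your closing check -- summing the two one-sided identities and invoking (4.4.16) to get $*'=*$ -- is a sound verification of compatibility.
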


\begin{theorem} Let $(A,*)$ be an associative
algebra and $(l,r, V)$ be a bimodule. Let $(r^*,l^*, V^*)$ be the
bimodule of $A$ given by {\rm Lemma 2.1.2}. Suppose that
$T:V\rightarrow A$ is an ${\mathcal O}$-operator associated to
$(l,r,V)$. Then $r=T+\sigma (T)$ is a symmetric solution of the
$D$-equation in $T(V)\ltimes_{r^*,0,0,l^*}V^*$, where $T(V)\subset
A$ is a dendriform algebra given by equation {\rm (3.1.4)} and
$(r^*,0,0,l^*)$ is a bimodule since its associated associative
algebra $T(V)$ is an associative subalgebra of $A$, and $T$ can be
identified as an element in $T(V)\otimes V^*\subset
(T(V)\ltimes_{r^*,0,0,l^*}V^*)\otimes
(T(V)\ltimes_{r^*,0,0,l^*}V^*)$.\end{theorem}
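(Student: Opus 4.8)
The plan is to verify the $D$-equation (4.4.8) directly in the dendriform algebra $B:=T(V)\ltimes_{r^*,0,0,l^*}V^*$, organizing the expansion by the ``type'' of each of the three tensor slots (whether it lands in $T(V)$ or in $V^*$). This is the precise dendriform counterpart of Theorem 2.5.5, with the symmetric combination $r=T+\sigma(T)$ and the $D$-equation playing the roles that $T-\sigma(T)$ and the associative Yang-Baxter equation play there. First I would fix a basis $\{v_i\}$ of $V$ with dual basis $\{v_i^*\}$ of $V^*$, so that the identification of $T$ with an element of $T(V)\otimes V^*$ reads $T=\sum_i T(v_i)\otimes v_i^*$ and hence $r=\sum_i\bigl(T(v_i)\otimes v_i^*+v_i^*\otimes T(v_i)\bigr)$, which is manifestly symmetric.

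Next I would record the dendriform products of $B$. The structure on $T(V)$ is the one of (3.1.3)--(3.1.4), so writing $\xi_i=T(v_i)$ we have $\xi_i\succ\xi_j=T(l(\xi_i)v_j)$ and $\xi_i\prec\xi_j=T(r(\xi_j)v_i)$, both in $T(V)$. The bimodule $(r^*,0,0,l^*)$ of $T(V)$ is the one furnished by Proposition 3.2.2(2) from the dual associative bimodule $(r^*,l^*,V^*)$ of the subalgebra $T(V)$, and the semidirect formula (3.2.4) then gives, for $x\in T(V)$ and $a^*\in V^*$, the mixed products $x\succ a^*=r^*(x)a^*$ and $a^*\prec x=l^*(x)a^*$, while $x\prec a^*=a^*\succ x=0$ and all products of two elements of $V^*$ vanish. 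The key structural fact I will exploit is that $V^*$ is a trivial dendriform ideal: any term in which two $V^*$-slots are multiplied against each other drops out.

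The heart of the computation is to expand $r_{12}*r_{13}$, $r_{13}\prec r_{23}$ and $r_{23}\succ r_{12}$ according to the four choices of tensor-factor type in the two copies of $r$, using the slot-multiplication conventions of (1.1.3), discard the vanishing terms, and sort the survivors by which slot lies in $T(V)$. I expect exactly three surviving slot-patterns, namely $(T(V),V^*,V^*)$, $(V^*,V^*,T(V))$ and $(V^*,T(V),V^*)$. For the last two patterns the matching is purely formal: the $(V^*,V^*,T(V))$ contribution of $r_{12}*r_{13}$ equals that of $r_{13}\prec r_{23}$, and the $(V^*,T(V),V^*)$ contribution of $r_{12}*r_{13}$ equals that of $r_{23}\succ r_{12}$, both by a reindexing that uses only the definitions $\xi_i\prec\xi_j=T(r(\xi_j)v_i)$, $\xi_i\succ\xi_j=T(l(\xi_i)v_j)$ together with the duality (2.1.3) defining $r^*,l^*$; nothing about $T$ being an ${\mathcal O}$-operator enters here.

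The substantive step, which I flag as the main obstacle, is the pattern $(T(V),V^*,V^*)$. There the single term of $r_{12}*r_{13}$ carries in its first slot the product $\xi_i\cdot\xi_j=T(v_i)\cdot T(v_j)$ (the product inherited from $A$, which by Theorem 3.1.2 is the associated associative product on $T(V)$), whereas the two contributing terms, one from $r_{13}\prec r_{23}$ and one from $r_{23}\succ r_{12}$, carry $\xi_i$ paired with $l^*(\xi_j)v_i^*$ and with $r^*(\xi_j)v_i^*$ respectively. Matching coefficients slot by slot reduces precisely to $T(v_i)\cdot T(v_j)=T\bigl(l(\xi_i)v_j+r(\xi_j)v_i\bigr)$, with the $l$-summand accounting for the $\prec$-contribution and the $r$-summand for the $\succ$-contribution; this is exactly the ${\mathcal O}$-operator identity (2.5.1) for $T$. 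Hence this block closes by the hypothesis that $T$ is an ${\mathcal O}$-operator, and assembling the three blocks yields $r_{12}*r_{13}=r_{13}\prec r_{23}+r_{23}\succ r_{12}$, so that the symmetric $r=T+\sigma(T)$ solves the $D$-equation in $B$. The only real care the argument demands is consistent index bookkeeping in the reindexing identifications and disciplined use of the ordering conventions for the slotwise products.
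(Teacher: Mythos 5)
Your proposal is correct and takes essentially the same route as the paper's own proof: identify $T$ with $\sum_i T(v_i)\otimes v_i^*$ so that $r=\sum_i\bigl(T(v_i)\otimes v_i^*+v_i^*\otimes T(v_i)\bigr)$, expand $r_{12}*r_{13}$, $r_{13}\prec r_{23}$ and $r_{23}\succ r_{12}$ in $T(V)\ltimes_{r^*,0,0,l^*}V^*$, discard the terms killed by the trivial products on $V^*$, convert the dual actions $r^*,l^*$ back to $r,l$ by duality, and close the remaining block with the ${\mathcal O}$-operator identity $T(u)\cdot T(v)=T(l(T(u))v+r(T(v))u)$ together with $T(u)\succ T(v)=T(l(T(u))v)$, $T(u)\prec T(v)=T(r(T(v))u)$. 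The paper performs exactly this computation, just written with all terms at once rather than sorted into your three slot patterns, so your organization (isolating the one block where the ${\mathcal O}$-operator hypothesis is actually used) is a presentational refinement, not a different argument.
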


\begin{proof} Let $\{e_1,\cdots,e_n\}$ be a basis of $A$. Let
$\{v_1,\cdots, v_m\}$ be a basis of $V$ and $\{ v_1^*,\cdots,
v_m^*\}$ be its dual basis. Set
$T(v_i)=\sum\limits_{k=1}^na_{ik}e_k, i=1,\cdots, m$. Then
$$T=\sum_{i=1}^m T(v_i)\otimes v_i^*=\sum_{i=1}^m\sum_{k=1}^n
a_{ik}e_k\otimes v_i^*\in T(V)\otimes V^*\subset
(T(V)\ltimes_{r^*,0,0,l^*}V^*)\otimes
(T(V)\ltimes_{r^*,0,0,l^*}V^*).$$ Therefore we have
\begin{eqnarray*}
r_{12}* r_{13} &=&\sum_{i,j=1}^m\{T(v_i)*T(v_j)\otimes v_i^*\otimes
v_j^*+r^*(T(v_i))v_j^*\otimes v_i^*\otimes
T(v_j)\\
&\mbox{}&+l^*(T(v_j))v_i^*\otimes T(v_i)\otimes v_j^*\};\\
r_{13}\prec r_{23} &=&\sum_{i,j=1}^m\{v_i^*\otimes v_j^*\otimes
T(v_i)\prec T(v_j)+T(v_i)\otimes v_j^*\otimes l^*(T(v_j))v_i^*\}\\
r_{23}\succ r_{12} &=&\sum_{i,j=1}^m \{T(v_j)\otimes
r^*(T(v_i))v_j^*\otimes v_i^*+v_j^*\otimes T(v_i)\succ T(v_j)\otimes
v_i^*\}
\end{eqnarray*}
On the other hand, we have
\begin{eqnarray*}
\sum_{i,j=1}r^*(T(v_i))v_j^*\otimes v_i^*\otimes T(v_j)
&=&\sum_{i,j=1} v_j^*\otimes v_i^*\otimes T(r(T(v_i))v_j);\\
\sum_{i,j=1}l^*(T(v_j))v_i^*\otimes T(v_i)\otimes v_j^*
&=&\sum_{i,j=1} v_i^*\otimes T(l(T(v_j))v_i)\otimes v_j^*;\\
\sum_{i,j=1}T(v_i)\otimes v_j^*\otimes l^*(T(v_j))v_i^*
&=&\sum_{i,j=1} T(l(T(v_j))v_i)\otimes v_j^*\otimes v_i^*;\\
\sum_{i,j=1}T(v_j)\otimes r^*(T(v_i))v_j^*\otimes v_i^* &=&
\sum_{i,j=1}T(r(T(v_i))v_j)\otimes v_j^*\otimes v_i^*.
\end{eqnarray*}
Since $T$ is an ${\mathcal O}$-operator of $A$ associated to
$(l,r,V)$ and
$$T(u)\succ T(v)=T(l(T(u))v),\;\;T(u)\prec T(v)=T(r(T(v))u),\;\;\forall u,v\in V,$$
we show that $r$ is a symmetric solution of the $D$-equation in
$T(V)\ltimes_{r^*,0,0,l^*}V^*$.\end{proof}

\begin{remark}{\rm Roughly speaking, a symmetric
solution of $D$-equation corresponds to the symmetric part of an
${\mathcal O}$-operator, whereas an antisymmetric solution of
associative Yang-Baxter equation corresponds to the antisymmetric
part of an ${\mathcal O}$-operator.}\end{remark}

\begin{coro}
Let $(A,\succ,\prec)$ be a dendriform algebra. Then
$$r=\sum_{i=1}^n (e_i\otimes e_i^*+e_i^*\otimes e_i)\eqno (4.4.23)$$
is a symmetric solution of the $D$-equation in
$A\ltimes_{R_\prec^*,0,0,L_\succ^*} A^*$, where $\{e_1,\cdots,
e_n\}$ is a basis of $A$ and $\{e_1^*,\cdots, e_n^*\}$ is its dual
basis. Moreover, $r$ is nondegenerate and the induced 2-cocycle
${\mathcal B}$ on $A\ltimes_{R_\prec^*,0,0,L_\succ^*} A^*$ is given
by equation {\rm (1.1.1)}.
\end{coro}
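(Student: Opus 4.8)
The plan is to recognize this as the symmetric-part analogue of Corollary 2.5.6, obtained by feeding the identity $\mathcal{O}$-operator into Theorem 4.4.13. First I would pass to the associated associative algebra $(A,*)$ of $(A,\succ,\prec)$. As recorded after Corollary 3.1.3, the identity map $id\colon A\to A$ is an $\mathcal{O}$-operator of $(A,*)$ associated to the bimodule $(L_\succ,R_\prec)$, and under equation (3.1.4) the image $T(V)=A$ carries back its original dendriform structure. This is exactly the input required by Theorem 4.4.13, taking $V=A$, $l=L_\succ$, $r=R_\prec$ and $T=id$.

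Next I would compute $r=T+\sigma(T)$ explicitly. Viewing $T=id$ as an element of $A\otimes A^*\subset (A\oplus A^*)\otimes(A\oplus A^*)$, the identity endomorphism corresponds to $\sum_{i=1}^n e_i\otimes e_i^*$, so $\sigma(T)=\sum_{i=1}^n e_i^*\otimes e_i$ and hence $r=\sum_{i=1}^n(e_i\otimes e_i^*+e_i^*\otimes e_i)$, which is precisely (4.4.23). Since the dual bimodule of Lemma 2.1.2 is here $(r^*,l^*,V^*)=(R_\prec^*,L_\succ^*,A^*)$, the ambient dendriform algebra $T(V)\ltimes_{r^*,0,0,l^*}V^*$ produced by Theorem 4.4.13 is exactly $A\ltimes_{R_\prec^*,0,0,L_\succ^*}A^*$. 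Thus Theorem 4.4.13 immediately gives that $r$ is a symmetric solution of the $D$-equation in this dendriform algebra, settling the first assertion.

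For the ``moreover'' part I would verify nondegeneracy by computing the map $(A\oplus A^*)^*\to A\oplus A^*$ that $r$ induces through the pairing (2.4.4). Identifying $(A\oplus A^*)^*$ with $A^*\oplus A$, a direct computation shows that $r$ sends $a^*+x\mapsto x+a^*$, i.e. it is the natural swap isomorphism between $A^*\oplus A$ and $A\oplus A^*$; this is visibly invertible, so $r$ is nondegenerate, with inverse sending $x+a^*\mapsto a^*+x$. Finally, using the convention of Theorem 4.4.9 that the induced $2$-cocycle is $\mathcal{B}(u,w)=\langle r^{-1}(u),w\rangle$, I would evaluate on $u=x+a^*$ and $w=y+b^*$ to obtain
$$\mathcal{B}(x+a^*,y+b^*)=\langle a^*+x,\,y+b^*\rangle=\langle x,b^*\rangle+\langle a^*,y\rangle,$$
which is exactly (1.1.1). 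By Theorem 4.4.9 this $\mathcal{B}$ then automatically satisfies the $2$-cocycle identity (4.4.16), consistent with Remark 4.4.15.

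The main obstacle is purely bookkeeping rather than conceptual: one must match the bimodule data $(l,r)=(L_\succ,R_\prec)$ and its dual $(R_\prec^*,L_\succ^*)$ to the indices in Theorem 4.4.13 correctly, and keep the pairing conventions on the split space $A\oplus A^*$ and its dual $A^*\oplus A$ straight when computing $r$ and $r^{-1}$. No new estimates or structural input beyond Theorem 4.4.13, Theorem 4.4.9 and the identity-$\mathcal{O}$-operator fact are needed.
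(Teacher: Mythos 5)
Your proposal is correct and is essentially identical to the paper's proof, which simply sets $V=A$, $l=L_\succ$, $r=R_\prec$ and $T=id$ in Theorem 4.4.13 and declares the conclusion immediate. Your only addition is to spell out the ``moreover'' part (identifying $r$ with the swap map, hence nondegenerate, and invoking Theorem 4.4.9 to recover the bilinear form (1.1.1)), which the paper leaves implicit; this bookkeeping is accurate.
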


\begin{proof}Let $V=A$, $l=L_\succ$, $r=R_\prec$ and $T=id$ in Theorem
4.4.13. Then the conclusion follows immediately. \end{proof}

\begin{remark} {\rm  Comparing with Theorem 4.3.6, we
show that (the non-symmetric) $T=\sum\limits_{i=1}^ne_i\otimes
e_i^*$ induces a dendriform D-bialgebra  structure on
$A\ltimes_{R^*,-L_\prec^*,-R_\succ^*,L^*} A^*$, whereas the above
(symmetric) $r=T+\sigma(T)$ induces a dendriform D-bialgebra
structure on $A\ltimes_{R_\prec^*,0,0,L_\succ^*} A^*$.}\end{remark}

Recall that two Connes cocycles $(A_1,\omega_1)$ and
$(A_2,\omega_2)$ are { isomorphic} if and only if there exists an
isomorphism of associative algebras $\varphi: A_1\rightarrow A_2$
such that
$$\omega_1(x,y)=\varphi^*{\omega}_2(x,y)
=\omega_2(\varphi(x),\varphi (y)),\;\;\forall x,y\in A_1.\eqno
(4.4.24)$$

By a similar proof as of Theorem 2.4.9, we have the following
conclusion.

\begin{theorem}
Let $(A,\succ,\prec)$ be a dendriform algebra. Then as Connes
cocycles of associative algebras, the double construction of Connes
cocycle (or the dendriform D-bialgebra) $(T(A)=A\bowtie A^*,\omega)$
given by a symmetric solution $r$ of $D$-equation in $A$ is
isomorphic to the double construction of Connes cocycle (or the
dendriform D-bialgebra) $(T(A)=A\ltimes_{R_\prec^*,L_\succ^*}A^*,
\omega)$, where $\omega$ is given by equation {\rm (1.4.1)}.
However, in general, they are not isomorphic as double constructions
of Connes cocycles (or dendriform D-bialgebras).\end{theorem}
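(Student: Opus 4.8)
The plan is to imitate the proof of Theorem 2.4.9, working first at the level of associative algebras equipped with their Connes cocycles, and only afterwards addressing the finer notion of isomorphism of double constructions. Write $\mathcal{D}\mathcal{D}(A)=A\bowtie A^*$ for the dendriform double attached to the symmetric solution $r$, whose associated associative products are recorded in Proposition 4.4.6, and recall that the semidirect construction $A\ltimes_{R_\prec^*,L_\succ^*}A^*$ has $A^*$ as a trivial (square-zero) ideal. I would define $\varphi:A\ltimes_{R_\prec^*,L_\succ^*}A^*\rightarrow \mathcal{D}\mathcal{D}(A)$ by $\varphi(x)=x$ and $\varphi(a^*)=a^*-r(a^*)$ for $x\in A$, $a^*\in A^*$. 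Since $\varphi$ fixes $A$, acts as the identity on the $A^*$-component and only shifts by the $A$-valued term $-r(a^*)$, it is visibly a linear bijection (triangular with identity diagonal).

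The first key step is to check that $\varphi$ is a homomorphism of associative algebras. On $A\times A$ this is immediate because $A$ is a subalgebra of both sides. For $x\in A$, $a^*\in A^*$ the domain product is $x*a^*=R_\prec^*(x)a^*$, and expanding $\varphi(x)*\varphi(a^*)=x*(a^*-r(a^*))$ by formula (4.4.13) the two copies of $x*r(a^*)\in A$ cancel, leaving $-r(R_\prec^*(x)a^*)+R_\prec^*(x)a^*=\varphi(R_\prec^*(x)a^*)$; the case $a^**x$ is symmetric, using (4.4.15). The one genuinely substantial computation is $\varphi(a^*)*\varphi(b^*)$: expanding $(a^*-r(a^*))*(b^*-r(b^*))$ with (4.4.11), (4.4.13) and (4.4.15), all the terms $R_\prec^*(r(a^*))b^*$ and $L_\succ^*(r(b^*))a^*$ cancel and the leftover is $-r(a^*)*r(b^*)+r\bigl(R_\prec^*(r(a^*))b^*+L_\succ^*(r(b^*))a^*\bigr)$. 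This vanishes precisely because $r$ is an $\mathcal{O}$-operator, i.e. by the identity (4.4.19) of Theorem 4.4.11; since the domain product $a^**b^*$ is zero, this is exactly what is needed. I expect this $A^*\times A^*$ computation, together with keeping the signs straight, to be the main obstacle.

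Next I would verify that $\varphi$ transports the cocycle, i.e. $\varphi^*\omega=\omega$ in the sense of (4.4.24). Evaluating $\omega(\varphi(x+a^*),\varphi(y+b^*))$ with $\omega$ given by (1.4.1), and noting that the $A$-part of $\varphi(x+a^*)$ is $x-r(a^*)$ while its $A^*$-part is $a^*$, the diagonal terms reproduce $-\langle x,b^*\rangle+\langle a^*,y\rangle$ and the remaining cross terms combine into $\langle r(a^*),b^*\rangle-\langle a^*,r(b^*)\rangle$. Because $r$ is symmetric, both of these pairings equal $\langle a^*\otimes b^*,r\rangle$, so they cancel and $\varphi^*\omega=\omega$. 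This establishes that the two constructions are isomorphic as Connes cocycles of associative algebras.

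Finally, for the negative assertion I would argue by contradiction using the stronger notion of isomorphism. By Definition 4.1.4 any isomorphism $\psi$ of double constructions of Connes cocycles must satisfy $\psi(A)=A$ and $\psi(A^*)=A^*$, so the restriction $\psi|_{A^*}$ is an isomorphism of the associative subalgebra structures carried by $A^*$ on the two sides. In the semidirect construction this structure is trivial ($a^**b^*=0$), whereas in $\mathcal{D}\mathcal{D}(A)$ it is given by (4.4.11), which is nonzero for generic $r$; hence no such $\psi$ can exist in general. By Proposition 4.2.6 the same conclusion holds for the associated dendriform D-bialgebras. The contrast with the first part is exactly that the weaker equivalence (4.4.24) ignores the splitting $A\oplus A^*$, whereas the double-construction equivalence does not.
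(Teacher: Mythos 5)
Your proposal is correct and takes essentially the same route as the paper: the paper proves this theorem simply ``by a similar proof as of Theorem 2.4.9,'' and your map $\varphi(x)=x$, $\varphi(a^*)=a^*-r(a^*)$, the cancellation of the $A$-component via the ${\mathcal O}$-operator identity (4.4.19), the cocycle check using the symmetry of $r$, and the contradiction obtained from the induced (nontrivial) product (4.4.11) on $A^*$ versus the trivial one in the semidirect sum are exactly the dendriform analogues of that argument.
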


\begin{coro} Let $(A,\succ, \prec)$ be a
dendriform algebra. Then as Connes cocycles of associative algebras,
the double constructions of Connes cocycles given by all symmetric
solutions of $D$-equation in $A$ are isomorphic to the double
construction of Connes cocycle
$(T(A)=A\ltimes_{R_\prec^*,L_\succ^*}A^*,\omega)$ given by the zero
solution.\end{coro}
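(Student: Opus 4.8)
The plan is to read the statement off directly from the preceding Theorem 4.4.16, once one checks that the zero solution is the distinguished symmetric solution whose double construction is the semidirect sum. First I would note that $r = 0 \in A\otimes A$ is symmetric and trivially satisfies the $D$-equation (4.4.8), since both sides of that identity vanish when $r = 0$. Hence the zero element is a legitimate symmetric solution and Theorem 4.4.16 applies to it.

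Next I would identify the double construction of Connes cocycle attached to $r = 0$. By Corollary 4.4.5 a symmetric solution corresponds to $r_\succ = -r$, $r_\prec = r$; substituting $r = 0$ into equations (4.3.1) and (4.3.2) gives $\Delta_\succ = \Delta_\prec = 0$, so the induced dendriform algebra structure on $A^*$ is trivial. By Example 4.2.6 (equivalently Corollary 4.1.6) the resulting double construction of Connes cocycle is exactly the semidirect sum $(T(A) = A\ltimes_{R_\prec^*,L_\succ^*}A^*, \omega)$ with $\omega$ given by equation (1.4.1).

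Finally I would invoke Theorem 4.4.16: for every symmetric solution $r$ of the $D$-equation in $A$, the double construction $(T(A) = A\bowtie A^*, \omega)$ it produces is isomorphic, as a Connes cocycle of associative algebras, to that same semidirect sum $(T(A) = A\ltimes_{R_\prec^*,L_\succ^*}A^*, \omega)$. Since the semidirect sum is precisely the construction coming from the zero solution, all the constructions arising from arbitrary symmetric solutions are isomorphic to the one given by $r = 0$, which is the assertion. I expect no real obstacle here: the entire substance lies in Theorem 4.4.16 (itself proved, in parallel with Theorem 2.4.9, by exhibiting the explicit map $\varphi(x) = x$, $\varphi(a^*) = -r(a^*) + a^*$ and verifying that it is an associative algebra isomorphism preserving $\omega$), and the only thing the corollary adds is the routine identification of the zero solution with the semidirect sum.
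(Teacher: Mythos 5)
Your proposal is correct and follows essentially the same route as the paper: the corollary is an immediate consequence of the preceding theorem (your ``Theorem 4.4.16,'' the paper's Theorem 4.4.17, proved via the explicit map $\varphi(x)=x$, $\varphi(a^*)=-r(a^*)+a^*$ as in Theorem 2.4.9), combined with the routine observation that $r=0$ is a symmetric solution whose induced structure on $A^*$ is trivial, so that its double construction is exactly the semidirect sum $(A\ltimes_{R_\prec^*,L_\succ^*}A^*,\omega)$. Your citation numbers are off by one in places (the paper's relevant items are Example 4.2.7, Corollary 4.4.4, and Theorem 4.4.17), but the content you invoke is exactly what the paper uses.
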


\section{Comparison (duality) between bialgebra structures}

\subsection{Comparison (duality) between antisymmetric infinitesimal bialgebras and dendriform
D-bialgebras}

\mbox{}

The results in the previous sections allow us to compare
antisymmetric infinitesimal bialgebras and dendriform D-bialgebras
in terms of the following properties: 1-cocycles of associative
algebras, matched pairs of associative algebras, associative algebra
structures on the direct sum of the associative algebras in the
matched pairs, bilinear forms on the direct sum of the associative
algebras in the matched pairs, double structures on the direct sum
of the associative algebras in the matched pairs, algebraic
equations associated to coboundary cases, nondegenerate solutions,
${\mathcal O}$-operators of associative algebras and constructions
from dendriform algebras. We list the them in Table 1. From this
table, we observe that there is a clear analogy between them and in
particular, double constructions of Frobenius algebras correspond to
double constructions of Connes cocycles in this sense. Moreover, due
to the correspondences between certain symmetries and antisymmetries
appearing in the Table 1, we regard it as a kind of duality.

\begin{table}[t]\caption{Comparison between antisymmetric infinitesimal bialgebras and dendriform D-bialgebras}
\begin{tabular}{|c|c|c|}
\hline Algebras & Antisymmetric &Dendriform D-bialgebras\\
& infinitesimal bialgebras &
\\\hline 1-cocycles of associative algebras &
$( id\otimes   L, R \otimes id  )$ & $( id\otimes   L_\succ, R
\otimes id  )$,\\ && $( id\otimes   L, R_\prec \otimes id )$\\\hline
Matched pairs of associative & $(A,A^*,R_A^*,
L_A^*,R_{A^*}^*,L_{A^*}^*)$ & {\small$(A, A^*, R_{\prec_A}^*,
L_{\succ_A}^*, R_{\prec_{A^*}}^*, L_{\succ_{A^*}}^*)$}\\
 algebras&&\\\hline
Associative  algebra structures on   & double constructions & double constructions\\
the direct sum of the associative  & of Frobenius algebras& of Connes cocycles\\
algebras in the matched pairs&&\\\hline Bilinear forms on  &
symmetric & antisymmetric
\\\cline{2-3} the direct sum of the associative & $\langle x+a^*,y+b^*\rangle  $ & $\langle x+a^*,y+b^*\rangle  $\\algebras in the
matched pairs &$=\langle x,b^*\rangle  +\langle a^*,y\rangle  $ &
$=-\langle x,b^*\rangle  +\langle a^*,y\rangle  $\\\cline{2-3}
&invariant & Connes cocycles\\\hline Double structures on  &
associative doubles & dendriform doubles\\ the direct sum of the
associative&&\\algebras in the matched pairs&&\\\hline Algebraic
equations associated & antisymmetric solutions & symmetric solutions
\\\cline{2-3}
to coboundary cases& associative Yang-Baxter &$D$-equations in
dendriform\\&equations& algebras\\\hline Nondegenerate
solutions & Connes cocycles of   & 2-cocycles of dendriform\\
&associative algebras&algebras\\\hline ${\mathcal O}$-operators of
associative algebras & associated to $(R^*,L^*)$ & associated to
$(R_\prec^*,L_\succ^*)$
\\\cline{2-3} &antisymmetric parts & symmetric parts\\\hline
Constructions from & $r=\sum\limits_{i=1}^n (e_i\otimes
e_i^*-e_i^*\otimes e_i)$ & $r=\sum\limits_{i=1}^n (e_i\otimes
e_i^*+e_i^*\otimes e_i)$\\\cline{2-3} dendriform  algebras &
induced bilinear
forms & induced bilinear forms\\
& $\langle x+a^*,y+b^*\rangle  $ & $\langle x+a^*,y+b^*\rangle  $\\
&$=-\langle x,b^*\rangle  +\langle a^*,y\rangle  $ & $=\langle
x,b^*\rangle  +\langle a^*,y\rangle  $\\\hline
\end{tabular}
\end{table}

Next we consider the case that a dendriform D-bialgebra is also an
antisymmetric infinitesimal bialgebra.

\begin{theorem} Let
$(A,A^*,\Delta_\succ,\Delta_\prec,\beta_\succ,\beta_\prec)$ be a
dendriform D-bialgebra.  Then $(A,A^*)$ is an antisymmetric
infinitesimal bialgebra if and only if the following two equations
hold:
$$\langle L^*_{\prec_{A^*}}(b^*)y, L^*_{\prec_A}(x)a^*\rangle
=\langle R^*_{\succ_{A^*}}(a^*)x, R^*_{\succ_A}(y)b^*\rangle;\eqno
(5.1.1)$$ $$\langle L^*_{\prec_{A^*}}(b^*)y,
R^*_{\succ_A}(x)a^*\rangle+ \langle L^*_{\prec_{A^*}}(a^*)x,
R^*_{\succ_A}(y)b^*\rangle $$$$=\langle R^*_{\succ_{A^*}}(b^*)x,
L^*_{\prec_A}(y)a^*\rangle+ \langle R^*_{\succ_{A^*}}(a^*)y,
L^*_{\prec_A}(x)b^*\rangle,\eqno (5.1.2)$$ for any $x,y\in
A^*,a^*,b^*\in A^*$.\end{theorem}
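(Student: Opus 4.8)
The plan is to observe that the dendriform D-bialgebra already packages almost all of the structure of an antisymmetric infinitesimal bialgebra, so that the theorem amounts to isolating two residual identities. First I would fix the induced data: the dendriform D-bialgebra $(A,A^*,\Delta_\succ,\Delta_\prec,\beta_\succ,\beta_\prec)$ determines the associated associative algebras $(A,*_A)$ and $(A^*,*_{A^*})$, with $*_A=\succ_A+\prec_A$ and $*_{A^*}=\succ_{A^*}+\prec_{A^*}$, and the map $\Delta:=\Delta_\succ+\Delta_\prec$, whose dual is exactly the product $*_{A^*}$. By Definition 2.2.4 (equivalently Corollary 2.2.5), the assertion that $(A,A^*)$ is an antisymmetric infinitesimal bialgebra is \emph{precisely} the assertion that this $\Delta$ satisfies the 1-cocycle condition (2.2.7) and the co-antisymmetry condition (2.2.8) relative to $(A,*_A)$. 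Thus the whole theorem reduces to: given that (4.2.1)-(4.2.6) hold (by Theorem 4.2.1 and Theorem 4.2.5), when do (2.2.7) and (2.2.8) hold for $\Delta=\Delta_\succ+\Delta_\prec$?

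The guiding observation is that the antisymmetric-infinitesimal matched pair $(A,A^*,R^*_{*_A},L^*_{*_A},R^*_{*_{A^*}},L^*_{*_{A^*}})$ and the dendriform matched pair $(A,A^*,R^*_{\prec_A},L^*_{\succ_A},R^*_{\prec_{A^*}},L^*_{\succ_{A^*}})$ (the latter holding by hypothesis) act on the \emph{same} two algebras, and their bimodule structures differ exactly by $R^*_{*_A}-R^*_{\prec_A}=R^*_{\succ_A}$, $L^*_{*_A}-L^*_{\succ_A}=L^*_{\prec_A}$, and the mirror pair $R^*_{\succ_{A^*}},L^*_{\prec_{A^*}}$ on $A^*$. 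These four ``difference operators'' are precisely the ones occurring in (5.1.1)-(5.1.2). This tells me what to aim for: after expanding (2.2.7)-(2.2.8) via $*_A=\succ_A+\prec_A$ etc. and cancelling every pure-$\succ/\prec$ contribution using the dendriform D-bialgebra relations, only terms built from these four operators should survive.

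Concretely I would expand (2.2.7) with $\Delta=\Delta_\succ+\Delta_\prec$ and subtract the sum of (4.2.1) and (4.2.2); the two diagonal pieces cancel, leaving the single residual $(id\otimes L_{\succ_A}(x))\Delta_\prec(y)+(R_{\succ_A}(y)\otimes id)\Delta_\succ(x)=0$. Pairing against $a^*\otimes b^*$ and rewriting each factor through the defining relations of $L^*,R^*$ and of $\succ_{A^*},\prec_{A^*}$ — together with the remaining relations (4.2.3),(4.2.4),(4.2.6) that constrain the $A^*$-products — converts it into the two-term identity (5.1.1). For (2.2.8) I would run the same scheme, except that the expansion now produces four terms plus their $\sigma$-images; the pure-$\succ/\prec$ part is removed using (4.2.5) and its $\sigma$-image, and the surviving four terms dualize to the four-term identity (5.1.2). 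Alternatively, and more safely, the entire computation can be done on a basis with structure constants $a^k_{ij},b^k_{ij},c^k_{ij},d^k_{ij}$ exactly as in the proofs of Theorem 4.2.1 and Proposition 4.3.1, matching coefficients directly.

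The main obstacle is the (2.2.8) step. After substitution it is an eight-term expression (four terms and four $\sigma$-terms), and it is not enough to subtract (4.2.5) naively: one must apply $\sigma$ to (4.2.5), use that the transpose of a relation is again a relation, and then check that every contribution built from the non-difference operators $L^*_{\succ_A},R^*_{\prec_A}$ (and their $A^*$-mirrors) cancels by the dendriform axioms (3.1.1) on $A$ and on $A^*$. Keeping the index and pairing translation consistent across $\sigma$, and verifying that exactly the four difference-operator terms reassemble — with the correct placement of $a^*,b^*,x,y$ — into the symmetric form (5.1.2), is the delicate bookkeeping; the converse direction (that (5.1.1)-(5.1.2) force (2.2.7)-(2.2.8)) then follows by reading the same coefficient identities backwards.
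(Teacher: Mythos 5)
Your overall strategy is the right one, and it is surely what the paper intends by ``a similar proof as of Proposition 2.2.2'': reduce the antisymmetric infinitesimal bialgebra condition to (2.2.7)--(2.2.8) for $\Delta=\Delta_\succ+\Delta_\prec$ (condition (a) of Definition 2.2.6 is automatic), subtract the dendriform D-bialgebra identities, and dualize the residuals. Moreover, the half you flag as the main obstacle, namely (2.2.8), actually goes through cleanly and exactly as you outline: using (4.2.5) together with its $x\leftrightarrow y$ transpose and its $\sigma$-image, the residual of (2.2.8) is
$(L_{\prec_A}(y)\otimes id)\Delta_\succ(x)-(id\otimes R_{\succ_A}(y))\Delta_\prec(x)+\sigma\bigl[(L_{\prec_A}(x)\otimes id)\Delta_\succ(y)-(id\otimes R_{\succ_A}(x))\Delta_\prec(y)\bigr]=0$,
and pairing with $a^*\otimes b^*$ gives precisely (5.1.2), term by term.

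The genuine gap is in the step you treat as routine: the conversion of the (2.2.7)-residual into (5.1.1). Your residual $(id\otimes L_{\succ_A}(x))\Delta_\prec(y)+(R_{\succ_A}(y)\otimes id)\Delta_\succ(x)=0$ (which is the correct subtraction if one takes the paper's printed (4.2.1) at face value) dualizes to
$\langle L^*_{\prec_{A^*}}(a^*)y,\,L^*_{\succ_A}(x)b^*\rangle+\langle R^*_{\succ_{A^*}}(b^*)x,\,R^*_{\succ_A}(y)a^*\rangle=0$,
which is \emph{not} (5.1.1): it contains $L^*_{\succ_A}$ where (5.1.1) has $L^*_{\prec_A}$, and it asserts that the two pairings sum to zero, whereas (5.1.1) asserts they are equal. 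In structure constants (with $e_i\succ_A e_j=\sum a^k_{ij}e_k$, $e_i\prec_Ae_j=\sum b^k_{ij}e_k$, $e_i^*\succ_{A^*}e_j^*=\sum c^k_{ij}e_k^*$, $e_i^*\prec_{A^*}e_j^*=\sum d^k_{ij}e_k^*$) your residual reads $\sum_j d^t_{pj}a^q_{sj}+\sum_i c^s_{iq}a^p_{it}=0$, while (5.1.1) reads $\sum_j d^t_{pj}b^q_{sj}=\sum_i c^s_{iq}a^p_{it}$; these are different quadratic identities, and the relations (4.2.3), (4.2.4), (4.2.6) that you invoke cannot bridge them --- they are the $\beta$-side mirrors of (4.2.1), (4.2.2), (4.2.5) and, when subtracted, produce the mirror residual rather than an exchange of $\succ_A$ for $\prec_A$. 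Part of the mismatch is not your fault: the printed (4.2.1) (and (4.2.3)) is inconsistent with the paper's own proof of Theorem 4.2.1 and with the coboundary formula (4.3.2), both of which give the $\Delta_\prec$-cocycle condition with $L_{\succ_A}$, not $L_{\prec_A}$. Redoing your subtraction against that corrected (4.2.1) yields the residual $(id\otimes L_{\prec_A}(x))\Delta_\prec(y)+(R_{\succ_A}(y)\otimes id)\Delta_\succ(x)=0$, i.e. $\langle L^*_{\prec_{A^*}}(a^*)y,\,L^*_{\prec_A}(x)b^*\rangle+\langle R^*_{\succ_{A^*}}(b^*)x,\,R^*_{\succ_A}(y)a^*\rangle=0$, which now matches the operators in (5.1.1) but still differs from it by an overall sign (sum equal to zero versus equality of the two sides); the same conclusion is reached independently from the $\beta$-side residual of (2.2.9), so this discrepancy is robust. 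To complete the proof you must therefore (i) work with the corrected form of (4.2.1), and (ii) reconcile --- or flag as a needed correction --- the sign in (5.1.1); as written, the identification ``residual $=$ (5.1.1)'', which is the heart of the theorem, is not established.
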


\begin{proof} The conclusion can be obtained by a similar proof as
of Proposition 2.2.2. \end{proof}

\begin{coro}
Let $(A,\succ,\prec)$ be a dendriform algebra and $r\in A\otimes A$
be a symmetric solution of $D$-equation in $A$. Suppose the
dendriform algebra structure on $A^*$ is induced by $r$ from
equation {\rm (4.4.11)}. Then $(A, A^*)$ is an antisymmetric
infinitesimal bialgebra if and only if the following two equations
hold:
$$\langle y\prec_A(x\succ_A
r(a^*))-y*_Ar(R_{\succ_A}^*(x)a^*), b^*\rangle =\langle
r(L_{\prec_A}^*(y)b^*)*_Ax-(r(b^*)\prec_A y)\succ x,
a^*\rangle;\eqno (5.1.3)$$
$$\langle y\prec_A(r(a^*)\prec_A x)-(y\succ_A r(a^*))\succ_A x
+r(R_{\succ_A}^*(y)a^*)*_Ax-y*_Ar(L_{\prec_A}^*(x)a^*),b^*\rangle$$
$$=\langle -x\prec_A(r(b^*)\prec_A y)+(x\succ_A r(b^*))\succ_A y
-r(R_{\succ_A}^*(x)a^*)*_Ay+x*_Ar(L_{\prec_A}^*(y)a^*),a^*\rangle,\eqno(5.1.4)$$
for any $x,y\in A$ and $a^*\in A^*$.\end{coro}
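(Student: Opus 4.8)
The plan is to obtain Corollary 5.1.2 as the coboundary specialization of Theorem 5.1.1. First I would check that we are squarely in the setting of that theorem: by Corollary 4.4.4 a symmetric solution $r$ of the $D$-equation (with $r_\succ=-r$, $r_\prec=r$) endows $A^*$ with the dendriform structure of Proposition 4.4.6 and makes $(A,A^*)$ a dendriform D-bialgebra. Theorem 5.1.1 then asserts that $(A,A^*)$ is an antisymmetric infinitesimal bialgebra if and only if the two abstract compatibility conditions (5.1.1) and (5.1.2) hold, phrased through the dual multiplication operators $L^*_{\prec_{A^*}}$, $R^*_{\succ_{A^*}}$ of the $A^*$-structure together with the fixed operators $L^*_{\prec_A}$, $R^*_{\succ_A}$ of $A$. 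Thus the whole task reduces to rewriting (5.1.1)-(5.1.2) purely in terms of $r$ and the dendriform operations of $A$.

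The key computational step is to realize the $A^*$-side operators as operators on $A$. Using the explicit products of Proposition 4.4.6 together with the symmetry of $r$, which gives $\langle u^*,r(v^*)\rangle=\langle v^*,r(u^*)\rangle$ for all $u^*,v^*\in A^*$, and transporting operators across the pairing by (3.2.5)-(3.2.6), one computes closed forms such as $L^*_{\prec_{A^*}}(b^*)y=-\,y\succ r(b^*)+r(R^*(y)b^*)$ and $R^*_{\succ_{A^*}}(a^*)x=r(L^*(x)a^*)-r(a^*)\prec x$, where $L^*,R^*$ are the dualized multiplication operators of the associated associative algebra $(A,*)$. Substituting these into the two pairings of (5.1.1) and of (5.1.2), and again moving $L^*_{\prec_A}(x)$ and $R^*_{\succ_A}(y)$ onto the other argument via (3.2.5)-(3.2.6), turns each abstract identity into an identity between elements of $A$ paired against $a^*$ and $b^*$.

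What remains is to reconcile these ``raw'' identities with the clean forms (5.1.3) and (5.1.4). The raw identities carry $\prec$-products and the full operator $R^*$, whereas (5.1.3)-(5.1.4) are stated with the associative product $*$ and the partial operators $R^*_\succ$, $L^*_\prec$. I would remove this discrepancy by invoking the $D$-equation, most conveniently in its operator form $r(a^*)*r(b^*)=r\big(R^*_\prec(r(a^*))b^*+L^*_\succ(r(b^*))a^*\big)$ (Theorem 4.4.11, equivalently the $\mathcal O$-operator property of Corollary 4.4.12), together with the defining dendriform relations (3.1.1). These let one trade a left factor $x\prec(\cdots)$ for $x*(\cdots)$ and a full multiplication operator for its $\succ$- or $\prec$-part, while the common term $(r(b^*)\prec y)\succ x$ survives unchanged; matching signs then produces (5.1.3) from (5.1.1) and, by an analogous but longer manipulation, (5.1.4) from (5.1.2).

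The main obstacle is precisely this last reconciliation: the substitution does not deliver (5.1.3)-(5.1.4) verbatim, and one must apply the $D$-equation and the dendriform axioms several times, keeping careful track of signs and of which multiplication ($\succ$, $\prec$ or $*$) occurs in each term. As in the proof of Proposition 4.4.6, the most reliable way to carry this out is a basis computation: fix dual bases $\{e_i\}$ and $\{e_i^*\}$, write $r=\sum a_{ij}e_i\otimes e_j$ with $a_{ij}=a_{ji}$, expand both (5.1.1)-(5.1.2) and (5.1.3)-(5.1.4) into structure constants, and verify that the $D$-equation relations among the $a_{ij}$ and the structure constants make the two systems coincide.
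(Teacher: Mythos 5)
Your reduction is the right one, and it is what the paper itself intends (the paper gives no separate proof of this corollary): invoke Corollary 4.4.4 to place yourself in a dendriform D-bialgebra, apply Theorem 5.1.1, and rewrite the operators $L^*_{\prec_{A^*}}$, $R^*_{\succ_{A^*}}$ through $r$. Your two closed forms, $L^*_{\prec_{A^*}}(b^*)y=-y\succ_A r(b^*)+r(R^*(y)b^*)$ and $R^*_{\succ_{A^*}}(a^*)x=r(L^*(x)a^*)-r(a^*)\prec_A x$, are indeed what (4.4.10) and the symmetry of $r$ give. The gap lies in the final step, which you leave unexecuted and whose proposed mechanism would fail. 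Substituting these forms into (5.1.1) and moving $L^*_{\prec_A}(x)$, $R^*_{\succ_A}(y)$ and $r$ across the pairing yields an identity in exactly the four terms of (5.1.3), namely $P=\langle y\prec_A(x\succ_A r(a^*)),b^*\rangle$, $Q=\langle y*_A r(R^*_{\succ_A}(x)a^*),b^*\rangle$, $S=\langle r(L^*_{\prec_A}(y)b^*)*_A x,a^*\rangle$, $T=\langle (r(b^*)\prec_A y)\succ_A x,a^*\rangle$, but in the combination $P+Q=S+T$, whereas (5.1.3) asserts $P-Q=S-T$. The two conditions coincide only if $Q=T$ identically, and one computes $Q-T=\langle \bigl(r(L^*(y)b^*)-r(b^*)\prec_A y\bigr)\succ_A x,\,a^*\rangle=\langle \bigl(R^*_{\succ_{A^*}}(b^*)y\bigr)\succ_A x,\,a^*\rangle$, which does not vanish in general.

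Your plan to close this discrepancy ``by invoking the $D$-equation together with the dendriform relations'' cannot work. The $D$-equation in operator form, $r(a^*)*_A r(b^*)=r\bigl(R^*_\prec(r(a^*))b^*+L^*_\succ(r(b^*))a^*\bigr)$, constrains $r$ only on arguments built from its own image, while $Q$ and $T$ contain a free element $y\in A$ that need not lie in $r(A^*)$; and the dendriform axioms only re-associate genuine triple products, so they cannot turn the full operator $R^*$ inside $r(\cdot)$ into its $\succ$-part. What the mismatch actually reflects is a sign-convention tension in the source: the products (4.4.10) are the negatives of what (4.3.1)--(4.3.2) with $r_\succ=-r$, $r_\prec=r$ produce, and although (5.1.1)--(5.1.2) are insensitive to negating both $A^*$-products simultaneously, the termwise expansions are not. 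So the derivation has to fix one consistent convention and let the expansion itself dictate the signs, rather than hope an algebraic identity repairs them afterwards; a basis computation done faithfully would expose this sign issue, not dissolve it. As it stands, the proposal does not establish (5.1.3)--(5.1.4).
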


\begin{coro}
 Let $(A,A^*,\Delta_\succ,\Delta_\prec,\beta_\succ,\beta_\prec)$
be a dendriform D-bialgebra.  If equations {\rm (5.1.1-5.1.2)} are
satisfied, then there are two associative algebra structures
$A\bowtie_{R_{\prec_{A^*}}^*,L_{\succ_{A^*}}^*}^{R_{\prec_A}^*,
L_{\succ_A}^*}A^*$ and
$A\bowtie^{R_{A}^*,L_{A}^*}_{R_{A^*}^*,L_{A^*}^*}A^*$ on the direct
sum $A\oplus A^*$ of the underlying vector spaces of $A$ and $A^*$
such that both $A$ and $A^*$ are associative subalgebras and the
bilinear form given by equation {\rm (1.4.1)} is a Connes cocycle on
$A\bowtie_{R_{\prec_{A^*}}^*,L_{\succ_{A^*}}^*}^{R_{\prec_A}^*,
L_{\succ_A}^*}A^*$ and the bilinear form given by equation {\rm
(1.1.1)} is invariant on
$A\bowtie^{R_{A}^*,L_{A}^*}_{R_{A^*}^*,L_{A^*}^*}A^*$. Moreover,
These two associative algebras are not isomorphic in general.
\end{coro}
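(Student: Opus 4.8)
The plan is to read the single object $(A,A^*)$ through its two available bialgebra descriptions, producing one associative algebra from each. By hypothesis $(A,A^*)$ is a dendriform D-bialgebra, and since equations (5.1.1)-(5.1.2) hold, Theorem 5.1.1 guarantees that $(A,A^*)$ is at the same time an antisymmetric infinitesimal bialgebra. Each description yields one of the two claimed double constructions.

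First I would use the dendriform D-bialgebra structure. By Theorem 4.2.4 the datum $(A,A^*,R^*_{\prec_A},L^*_{\succ_A},R^*_{\prec_{A^*}},L^*_{\succ_{A^*}})$ is a matched pair of associative algebras, and the equivalence there with double constructions of Connes cocycles gives exactly the associative algebra $A\bowtie_{R_{\prec_{A^*}}^*,L_{\succ_{A^*}}^*}^{R_{\prec_A}^*,L_{\succ_A}^*}A^*$, carrying $A$ and $A^*$ as subalgebras, on which the antisymmetric form (1.4.1) is a Connes cocycle. Next I would use the antisymmetric infinitesimal bialgebra structure, in which the relevant associative products are the associated products $\cdot=*_A$ and $\circ=*_{A^*}$, so that $R_\cdot=R_A$, $L_\cdot=L_A$, $R_\circ=R_{A^*}$, $L_\circ=L_{A^*}$. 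By Corollary 2.2.6 the datum $(A,A^*,R_A^*,L_A^*,R_{A^*}^*,L_{A^*}^*)$ is then a matched pair, whose double construction of Frobenius algebra is $A\bowtie^{R_A^*,L_A^*}_{R_{A^*}^*,L_{A^*}^*}A^*$, with the symmetric form (1.1.1) invariant. Both algebras restrict to $(A,*_A)$ and $(A^*,*_{A^*})$ on the summands and differ only in the cross products $x*a^*$, $a^**x$: the first uses the half-actions $R^*_{\prec_A}$, $L^*_{\succ_A}$ (and the dual half-actions of $A^*$), while the second uses the full actions $R_A^*=R^*_{\succ_A}+R^*_{\prec_A}$ and $L_A^*=L^*_{\succ_A}+L^*_{\prec_A}$.

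For the final clause it suffices to exhibit one instance of non-isomorphism. I would take $A$ to be the one-dimensional dendriform algebra with basis $\{e\}$, $e\succ e=e$, $e\prec e=0$ (hence $e*_Ae=e$), and give $A^*$ the trivial dendriform structure; by Example 4.2.7 this is a dendriform D-bialgebra, and since $L^*_{\prec_{A^*}}=R^*_{\succ_{A^*}}=0$ every term of (5.1.1) and (5.1.2) vanishes, so the hypotheses are met. With $A^*=\{e^*\}$ the first algebra has $e*e=e$, $e^**e=e^*$, $e*e^*=0$, $e^**e^*=0$, whereas the second has $e*e=e$, $e*e^*=e^**e=e^*$, $e^**e^*=0$. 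Writing $z=ae+be^*$ and computing $z^2$, the idempotents of the first algebra form the line $\{0\}\cup\{e+be^*:b\in{\bf F}\}$, while over a field of characteristic $\neq2$ the idempotents of the second are only $\{0,e\}$. Since a one-parameter family of nonzero idempotents cannot be matched bijectively with a single nonzero idempotent, the two algebras are not isomorphic.

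I expect the first two steps to be immediate from the cited equivalences, so the only genuine work is the non-isomorphism, and the main obstacle there is selecting a distinguishing invariant rather than attempting a direct comparison of multiplication tables. The idempotent count settles the example at hand; more conceptually, the two cross products record respectively the full product $*_A$ and only its $\succ_A$, $\prec_A$ pieces, so any dendriform algebra in which the decomposition $*_A=\succ_A+\prec_A$ is nontrivial is a natural source of counterexamples.
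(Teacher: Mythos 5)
Your proposal is correct, and the first two steps are exactly the route the paper intends: the corollary is stated there without proof, being an immediate consequence of Theorem 5.1.1 (equations (5.1.1)--(5.1.2) make the dendriform D-bialgebra into an antisymmetric infinitesimal bialgebra), Theorem 4.2.4 (which yields the double construction of Connes cocycle $A\bowtie_{R_{\prec_{A^*}}^*,L_{\succ_{A^*}}^*}^{R_{\prec_A}^*,L_{\succ_A}^*}A^*$ with the form (1.4.1)), and Corollary 2.2.6 (which yields the double construction of Frobenius algebra $A\bowtie^{R_{A}^*,L_{A}^*}_{R_{A^*}^*,L_{A^*}^*}A^*$ with the form (1.1.1)). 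Where you go beyond the paper is the final clause: the paper asserts "not isomorphic in general" without exhibiting an instance, and your one-dimensional example genuinely fills this gap. Your computations check out: with $e\succ e=e$, $e\prec e=0$ and trivial structure on $A^*$, equations (5.1.1)--(5.1.2) hold vacuously since $L^*_{\prec_{A^*}}=R^*_{\succ_{A^*}}=0$, the first algebra has $e*e^*=R^*_{\prec_A}(e)e^*=0$, $e^**e=L^*_{\succ_A}(e)e^*=e^*$, while the second has $e*e^*=e^**e=e^*$, and the idempotent counts ($1+|\mathbf{F}|$ versus $2$) differ. Two small refinements: your restriction to characteristic $\neq 2$ is unnecessary, since in the second algebra the condition $2ab=b$ with $a\in\{0,1\}$ forces $b=0$ in every characteristic, so the example separates the two algebras over any field; and an even quicker invariant is commutativity, as the second algebra is commutative ($e*e^*=e^**e=e^*$) while the first is not ($e*e^*=0\neq e^*=e^**e$).
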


\begin{exam} {\rm Let $(A,*_A)$ be an associative algebra and
$\omega$ be a Connes cocycle on $(A,*_A)$. Then there is an
antisymmetric infinitesimal bialgebra whose associative algebra
structure on $A^*$ is given by a nondegenerate solution $r$ of
associative Yang-Baxter equation as follows.
$$\Delta(x)= ( id\otimes   L(x)-R(x) \otimes id  ) r,\;\;\forall x\in A,\eqno (5.1.5)$$
where $r:A^*\rightarrow A$ is given by $\omega(x,y)=\langle
r^{-1}(x),y\rangle$. On the other hand, there exists a compatible
dendriform algebra structure ``$\succ_A,\prec_A$'' on $A$ given
by equation (4.1.1), that is,
$$\omega(x\succ_A y,z)=\omega(y,z*_Ax),\;\;\omega(x\prec_A y,z)=\omega(x,y*_Az),\;\;\forall\; x,y,z\in
A.\eqno (5.1.6)$$ Moreover, there exists a compatible dendriform
algebra structure on the associative algebra $A^*$ given by
$$a^*\succ_{A^*} b^*=r^{-1}(r(a^*)\succ_A r(b^*)),
a^*\prec_{A^*} b^*=r^{-1}(r(a^*)\prec_A r(b^*)),\;\; \forall\;
a^*,b^*\in A.\eqno (5.1.7)$$ Furthermore, it is easy to show that
$$L_{\succ_A}^*(x)a^*=r^{-1}(r(a^*)*_Ax),\;
R_{\succ_A}^*(x)a^*=-r^{-1}(x\prec_A r(a^*)),\;
L_{\prec_A}^*(x)a^*=-r^{-1}(r(a^*)\succ_Ax),$$
$$R_{\prec_A}^*(x)a^*=r^{-1}(x* r(a^*)),\;
L_{\succ_{A^*}}^*(a^*)x=x*_Ar(a^*),\;
R_{\succ_{A^*}}^*(a^*)x=-r(a^*)\prec_A x,$$$$
L_{\prec_{A^*}}^*(a^*)x=-x\succ_Ar(a^*),\;
R_{\prec_{A^*}}^*(a^*)x=r(a^*)*_A x,\;\;\forall\;x\in A, a^*\in A^*.
\eqno (5.1.8)$$ Therefore according to Theorem 4.2.4, $(A,A^*)$ (as
dendriform  algebras) is a dendriform D-bialgebra  if and only if
$(A,A^*,R_{\prec_A}^*,L_{\succ_A}^*,R_{\prec_{A^*}}^*,L_{\succ_{A^*}}^*)$
a matched pair of associative algebras, if and only if $A$ is 2-step
nilpotent, that is, $x*_Ay*_Az=0$ for any $x,y,z\in A$. In this
case, by equation (5.1.6), we show that it is equivalent to
$$x\succ_A(y\succ_Az)=x\prec_A(y\prec_Az)=x\succ_A(y\prec_Az)=0,\;\;\forall\;x,y,z\in
A.\eqno (5.1.9)$$  Therefore, under such conditions, equations
(5.1.1-5.1.2) hold naturally.}\end{exam}

\subsection{Duality in the version of Lie algebras: Lie bialgebras and pre-Lie
bialgebras}

\mbox{}

There is a similar duality in the version of Lie algebras which was
given in \cite{Bai2}. In order to be self-contained, we give a brief
introduction in this subsection. We would like to point out that,
although we give the Lie bialgebras and pre-Lie bialgebras as the
similar structures of antisymmetric infinitesimal bialgebras and
dendriform D-bialgebras here, in fact, it is the Manin triples (Lie
bialgebras) that have been first studied  and then motivate us to
study the other structures.

There are two kinds of important (nondegenerate) bilinear forms on
Lie algebras as follows. A bilinear form ${\mathcal B}(\;,\;)$ on a
Lie algebra $A$ is invariant if
$${\mathcal B}([x,y],z)={\mathcal B}(x,[y,z]),\;\;\forall\; x,y\in A.\eqno (5.2.1)$$
A 2-cocycle (symplectic form) on a Lie algebra $A$ is an
antisymmetric bilinear from $\omega$ satisfying
$$\omega ([x,y],z)+\omega ([y,z],x)+\omega([z,x],y)=0,\;\;\forall\; x,y,z\in
A.\eqno (5.2.2)$$

Moreover, the algebras play a similar role of dendriform algebras in
the double constructions of Frobenius algebras and Connes cocycles
are pre-Lie algebras. In fact, pre-Lie algebras (or under other
names like left-symmetric algebras, quasi-associative algebras,
Vinery algebras and so on) are a class of natural algebraic systems
appearing in many fields in mathematics and mathematical physics
(see a survey article \cite{Bu} and the references therein).

\begin{defn} {\rm Let $A$ be a vector space over a field
${\bf F}$ with a bilinear product $(x,y)\rightarrow xy$. $A$ is
called a {\it  pre-Lie algebra} if $$(xy)z-x(yz)=(yx)z-y(xz),
\;\;\forall x,y,z\in A.\eqno (5.2.3)$$}\end{defn}

Let $A$ be a pre-Lie algebra. For any $x,y\in A$, let $L(x)$ and
$R(x)$ denote the left and right multiplication operator
respectively, that is, $L(x)(y)=xy,\;R(x)(y)=yx$. Let
$L:A\rightarrow \frak g\frak l (A)$ with $x\rightarrow L(x)$ and $R:A\rightarrow
\frak g\frak l (A)$ with $x\rightarrow R(x)$ (for every $x\in A$) be two linear
maps. For a Lie algebra ${\mathcal G}$, we let ${\rm ad}(x)$ denote
the adjoint operator, that is, ${\rm ad}(x)y=[x,y]$, and ${\rm
ad}:{\mathcal G}\rightarrow \frak g\frak l({\mathcal G})$ with $x\rightarrow
{\rm ad}(x)$ be a linear map.

\begin{prop}  Let $A$ be a pre-Lie algebra.

{\rm (1)} The commutator
$$[x,y]=xy-yx,\;\;\forall\; x,y\in A,\eqno (5.2.4)$$
defines a Lie algebra ${\mathcal G}(A)$, which is called the
sub-adjacent Lie algebra of $A$ and $A$ is also called a
compatible pre-Lie algebra structure on the Lie algebra ${\mathcal
G}(A)$.

{\rm (2)} The map $L:A\rightarrow \frak g\frak l (A)$ gives a representation of
the Lie algebra ${\mathcal G}(A)$.\end{prop}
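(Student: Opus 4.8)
The plan is to derive both parts directly from the defining identity (5.2.3) after recasting it in associator form. Writing $(x,y,z)=(xy)z-x(yz)$ for the associator, identity (5.2.3) says exactly that $(x,y,z)=(y,x,z)$; that is, the associator is symmetric in its first two arguments (\emph{left-symmetry}). This single reformulation is the whole point, and each part of the proposition will collapse onto it.

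For part (1), antisymmetry of the bracket $[x,y]=xy-yx$ is immediate, so the only content is the Jacobi identity. First I would expand
$$[[x,y],z]+[[y,z],x]+[[z,x],y]$$
into its twelve monomials of the types $(ab)c$ and $a(bc)$. Then I would regroup those twelve terms into six associators, obtaining
$$(x,y,z)-(y,x,z)+(y,z,x)-(z,y,x)+(z,x,y)-(x,z,y).$$
Pairing the terms as $[(x,y,z)-(y,x,z)]+[(y,z,x)-(z,y,x)]+[(z,x,y)-(x,z,y)]$ and applying left-symmetry to each bracket makes the whole expression vanish termwise, which establishes Jacobi and hence that $\mathcal{G}(A)$ is a Lie algebra.

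For part (2), I would compute both sides of the representation condition $L([x,y])=L(x)L(y)-L(y)L(x)$ evaluated on an arbitrary $z$. The left-hand side is $(xy-yx)z=(xy)z-(yx)z$, while the right-hand side is $x(yz)-y(xz)$, so the required equality is $(xy)z-(yx)z=x(yz)-y(xz)$, i.e.\ $(x,y,z)=(y,x,z)$. This is precisely (5.2.3) once more, so part (2) is an immediate restatement of the pre-Lie axiom and requires no additional argument.

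I do not expect a genuine obstacle: the entire proposition is a direct consequence of reading (5.2.3) as left-symmetry of the associator. The one mildly error-prone step is the sign-and-index bookkeeping in the Jacobi expansion, which I would control by grouping the twelve monomials into the three canceling associator pairs \emph{before} simplifying, so that each cancellation is visibly an instance of $(a,b,c)=(b,a,c)$.
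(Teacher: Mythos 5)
Your proof is correct: the reformulation of (5.2.3) as left-symmetry of the associator $(x,y,z)=(xy)z-x(yz)$, the regrouping of the twelve Jacobi monomials into the three canceling associator pairs, and the observation that part (2) is literally the pre-Lie identity evaluated on $z$ are all accurate. Note that the paper itself gives no proof of this proposition — it is recalled as classical background on pre-Lie algebras (cf.\ the survey \cite{Bu}) — so your argument simply supplies the standard verification that the paper omits, and it is the right one.
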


\begin{prop}{\rm (\cite{Chu})}\quad Let ${\mathcal G}$ be a Lie algebra and
$\omega$ be a nondegenerate 2-cocycle on ${\mathcal G}$ (such a Lie
algebra called a symplectic Lie algebra). Then there exists a
compatible pre-Lie algebra structure on ${\mathcal G}$ defined by
$$\omega(x*y,z)=-\omega(y, [x,z]),\;\;\forall\; x,y,z\in {\mathcal G}.\eqno
(5.2.5)$$
\end{prop}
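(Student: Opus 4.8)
The plan is to use the nondegeneracy of $\omega$ to \emph{define} the product $*$ and then verify separately the two defining properties of a compatible pre-Lie structure: that $*$ is pre-Lie and that its commutator recovers $[\;,\;]$. First I would observe that for fixed $x,y\in{\mathcal G}$ the assignment $z\mapsto -\omega(y,[x,z])$ is a linear functional on ${\mathcal G}$; since $\omega$ is nondegenerate there is a unique element, which we call $x*y$, with $\omega(x*y,z)=-\omega(y,[x,z])$ for all $z$. Bilinearity of $*$ is immediate from bilinearity of $\omega$ and of the bracket, so equation (5.2.5) does determine a well-defined bilinear product $*$ on ${\mathcal G}$.

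Next I would establish compatibility, that is, $x*y-y*x=[x,y]$. Pairing the left-hand side against an arbitrary $z$ and applying the defining relation twice gives $\omega(x*y-y*x,z)=-\omega(y,[x,z])+\omega(x,[y,z])$. On the other hand, using the antisymmetry of $\omega$ to rewrite $\omega([y,z],x)=-\omega(x,[y,z])$ and $\omega([z,x],y)=\omega(y,[x,z])$, the 2-cocycle identity (5.2.2) rearranges exactly into $\omega([x,y],z)=\omega(x,[y,z])-\omega(y,[x,z])$. Hence $\omega(x*y-y*x,z)=\omega([x,y],z)$ for all $z$, and nondegeneracy forces $x*y-y*x=[x,y]$. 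Thus the 2-cocycle condition is precisely what encodes compatibility.

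For the pre-Lie identity I would rewrite (5.2.3), using the compatibility just proved, as the assertion that left multiplication $L(x)y=x*y$ is a representation of ${\mathcal G}$, i.e. $L([x,y])=[L(x),L(y)]$: subtracting the two sides of (5.2.3) and using $x*y-y*x=[x,y]$ turns (5.2.3) into $[x,y]*z=x*(y*z)-y*(x*z)$, which is the same as Proposition 5.2.2(2). To verify it I would pair both sides against an arbitrary $w$ and iterate the defining relation: $\omega(x*(y*z),w)=-\omega(y*z,[x,w])=\omega(z,[y,[x,w]])$, and similarly $\omega(y*(x*z),w)=\omega(z,[x,[y,w]])$, while $\omega([x,y]*z,w)=-\omega(z,[[x,y],w])$. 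Cancelling $\omega(z,-)$ by nondegeneracy, the required relation reduces to $[[x,y],w]=[x,[y,w]]-[y,[x,w]]$, which is exactly the Jacobi identity for ${\mathcal G}$. So the Jacobi identity supplies the pre-Lie axiom.

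The computation is essentially routine; the only thing to watch carefully is the sign and antisymmetry bookkeeping when transporting every term through $\omega$. The conceptual point—and what keeps the associator computation from becoming unwieldy—is the clean division of labor: nondegeneracy yields well-definedness, the 2-cocycle condition yields compatibility, and the Jacobi identity yields the pre-Lie relation, with the reformulation of (5.2.3) as ``$L$ is a Lie homomorphism'' reducing the final step to a single application of Jacobi.
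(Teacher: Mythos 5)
Your proof is correct, and there is nothing in the paper to compare it against: Proposition 5.2.3 is stated there as a recalled result, cited to \cite{Chu}, with no proof supplied. Your argument is the standard self-contained one and all the sign bookkeeping checks out: nondegeneracy (together with the paper's standing finite-dimensionality assumption, which is what guarantees that every linear functional is represented by $\omega$) makes $x*y$ well defined; the 2-cocycle identity, after the two antisymmetry rewritings you perform, is precisely $\omega([x,y],z)=\omega(x,[y,z])-\omega(y,[x,z])=\omega(x*y-y*x,z)$, giving compatibility; and once compatibility is in hand, the pre-Lie identity (5.2.3) is equivalent to $[x,y]*z=x*(y*z)-y*(x*z)$, i.e.\ to $L$ being a representation as in Proposition 5.2.2(2), which your pairing computation $\omega(x*(y*z),w)=\omega(z,[y,[x,w]])$ reduces exactly to the Jacobi identity. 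The division of labor you emphasize (nondegeneracy $\rightarrow$ well-definedness, cocycle $\rightarrow$ compatibility, Jacobi $\rightarrow$ pre-Lie axiom) is the right conceptual organization and is exactly how this classical fact of Chu is usually proved.
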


Next we give the ``double constructions" of Lie algebras with
nondegenerate invariant bilinear forms or nondegenerate 2-cocycles.
In fact, both of them have their own (independent) interests in many
fields.

At first, recall that $({\mathcal G},{\mathcal H}, \rho,\mu)$ is { a
matched pair of Lie algebras} if ${\mathcal G}$ and ${\mathcal H}$
are Lie algebras and $\rho:{\mathcal G}\rightarrow \frak g\frak l({\mathcal H})$
and $\mu:{\mathcal H}\rightarrow \frak g\frak l({\mathcal G})$ are
representations satisfying
$$\rho(x)[a,b]-[\rho(x)a,b]-[a,\rho(x)b]+\rho(\mu(a)x)b-\rho(\mu(b)x)a=0;\eqno
(5.2.6)$$
$$\mu(a)[x,y]-[\mu(a)x,y]-[x,\mu(a)y]+\mu(\rho(x)a)y-\mu(\rho(y)a)x=0,\eqno
(5.2.7)$$ for any $x,y\in {\mathcal G}$ and $a,b\in {\mathcal H}$.
In this case, there exists a Lie algebra structure on the direct sum
${\mathcal G}\oplus {\mathcal H}$ of the underlying vector spaces of
${\mathcal G}$ and ${\mathcal H}$ given by
$$[x+a,y+b]=[x,y]+\mu(a)y-\mu(b)x+[a,b]+\rho(x)b-\rho(y)a,\;\;\forall
x,y\in {\mathcal G},a,b\in {\mathcal H}.\eqno (5.2.8)$$ We denote it
by ${\mathcal G}\bowtie^\rho_\mu {\mathcal H}$ or simply ${\mathcal
G}\bowtie {\mathcal H}$. Moreover, every Lie algebra which is the
direct sum of the underlying vector spaces of two subalgebras can be
obtained from a matched pair of Lie algebras as above.

\begin{defn}{\rm Let ${\mathcal G}$ be a Lie algebra. Suppose that there is a
Lie algebra structure on the direct sum of the underlying vector
spaces of ${\mathcal G}$ and its dual space ${\mathcal G}^*$ such
that ${\mathcal G}$ and ${\mathcal G}^*$ are Lie subalgebras.

 (a) If the natural symmetric bilinear form on ${\mathcal G}\oplus {\mathcal G}^*$
given by equation (1.1.1) is invariant, then $({\mathcal G}\bowtie
{\mathcal G}^*, {\mathcal G}, {\mathcal G}^*)$ is called a {\it
(standard) Manin triple}.

(b) If the natural antisymmetric bilinear form on ${\mathcal
G}\oplus {\mathcal G}^*$ given by equation (1.4.1) is a 2-cocycle,
then it is called a {\it phase space of the Lie algebra ${\mathcal
G}$} (\cite{Ku1}). $({\mathcal G}\bowtie {\mathcal G}^*, {\mathcal
G}, {\mathcal G}^*)$ is also called {\it a parak\"ahler structure on
the Lie algebra ${\mathcal G}\bowtie {\mathcal G}^*$}
(\cite{Kan}).}\end{defn}

For a Lie algebra ${\mathcal G}$ and a representation $(\rho, V)$ of
${\mathcal G}$, recall that {a 1-cocycle} $T$ associated to $\rho$
(denoted by $(\rho, T)$) is a linear map from ${\mathcal G}$ to $V$
satisfying
$$T([x,y])=\rho(x)T(y)-\rho(y)T(x),\;\;\forall\; x,y\in {\mathcal G}.\eqno (5.2.9)$$

\begin{defn}{\rm (a) Let ${\mathcal G}$ be a Lie algebra. A {\it  Lie
bialgebra} structure on ${\mathcal G}$ is an antisymmetric linear
map $\delta: {\mathcal G}\rightarrow {\mathcal G}\otimes {\mathcal
G}$ such that $\delta^*:{\mathcal G}^*\otimes {\mathcal
G}^*\rightarrow {\mathcal G}^*$ is a Lie bracket on ${\mathcal G}^*$
and $\delta$ is a 1-cocycle of ${\mathcal G}$ associated to ${\rm
ad} \otimes id  + id\otimes {\rm ad}$ with values in ${\mathcal
G}\otimes {\mathcal G}$. We denote it by $({\mathcal G},{\mathcal
G}^*)$ or $({\mathcal G},\delta)$.

(b) Let $A$ be a vector space. A {\it pre-Lie bialgebra} structure
on $A$ is a pair of linear maps $(\Delta,\beta)$ such that
$\Delta:A\rightarrow A\otimes A,\beta: A^*\rightarrow A^*\otimes
A^*$ and

(1) $\Delta^*:A^*\otimes A^*\rightarrow A^*$ defines a pre-Lie
algebra structure on $A^*$;

(2) $\beta^*:A\otimes A\rightarrow A$ defines a pre-Lie algebra
structure on $A$;

(3) $\Delta$ is a 1-cocycle of ${\mathcal G}(A)$ associated to $L
\otimes id  + id\otimes   {\rm ad}$ with values in $A\otimes A$;

(4) $\beta$ is a 1-cocycle of ${\mathcal G}(A^*)$ associated to $L
\otimes id  + id\otimes   {\rm ad}$ with values in $A^*\otimes
A^*$.

\noindent We denote it by $(A,A^*,\Delta,\beta)$ or simply
$(A,A^*)$.}
\end{defn}

\begin{theorem} {\rm (a)} Let $({\mathcal G},[\;,\;]_{\mathcal G})$
and $({\mathcal G}^*, [\;,\;]_{{\mathcal G}^*})$ be two Lie
algebras. Then the following conditions are equivalent:

{\rm (1)} $({\mathcal G}\bowtie {\mathcal G}^*, {\mathcal
G},{\mathcal G}^*)$ is a standard Manin triple with the bilinear
form {\rm (1.1.1)};

{\rm (2)} $({\mathcal G}, {\mathcal G}^*, {\rm ad}^*_{\mathcal G},
{\rm ad}^*_{{\mathcal G}^*})$ is a matched pair of Lie algebras;

{\rm (3)} $({\mathcal G},{\mathcal G}^*)$ is a Lie bialgebra.

{\rm (b)} Let $(A,\cdot)$ and $(A^*,\circ)$ be two pre-Lie algebras.
Then the following conditions are equivalent:

{\rm (1)} $({\mathcal G}(A)\bowtie {\mathcal G}(A)^*, {\mathcal
G}(A),{\mathcal G}(A^*))$ is a parak\"ahler Lie algebra with the
bilinear form {\rm (1.4.1)}.

{\rm (2)}  $({\mathcal G}(A),{\mathcal G}(A^*),L^*_\cdot,L^*_\circ)$
is a matched pair of Lie algebras;

{\rm (3)} $(A,A^*)$ is a pre-Lie bialgebra.
\end{theorem}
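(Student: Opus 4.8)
The plan is to prove each of the two triangles of equivalences by the same two-step scheme used for the associative algebras in Corollary 2.2.5 (via Theorems 2.2.1 and 2.2.3) and for the dendriform algebras in Theorem 4.2.4 (via Theorems 4.1.5 and 4.1.7): first establish (1)$\Leftrightarrow$(2), identifying the abstract representations of an otherwise arbitrary matched pair with the (co)adjoint-type operators that are forced by the bilinear form; then establish (2)$\Leftrightarrow$(3), translating the matched-pair compatibility equations into the cobracket/$1$-cocycle conditions of the bialgebra. Throughout I use finite-dimensionality to dualize freely between a bracket on ${\mathcal G}^*$ (resp. $A^*$) and a comultiplication on ${\mathcal G}$ (resp. $A$).

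For part (a), I start from the matched-pair construction recalled around equation (5.2.8): every Lie algebra structure on ${\mathcal G}\oplus {\mathcal G}^*$ having ${\mathcal G}$ and ${\mathcal G}^*$ as subalgebras arises from some matched pair $({\mathcal G},{\mathcal G}^*,\rho,\mu)$. To get (1)$\Leftrightarrow$(2) I impose invariance of the symmetric form (1.1.1) on the mixed brackets: pairing the identity ${\mathcal B}([x,y],a^*)={\mathcal B}(x,[y,a^*])$ for $x,y\in {\mathcal G}$, $a^*\in {\mathcal G}^*$ isolates $\langle [x,y],a^*\rangle=\langle x,\rho(y)a^*\rangle$, which is exactly the defining relation $\rho={\rm ad}^*_{\mathcal G}$; the symmetric computation with the roles of ${\mathcal G}$ and ${\mathcal G}^*$ exchanged gives $\mu={\rm ad}^*_{{\mathcal G}^*}$, and conversely these coadjoint choices make (1.1.1) invariant. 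For (2)$\Leftrightarrow$(3) I let $\delta:{\mathcal G}\rightarrow {\mathcal G}\otimes {\mathcal G}$ be the dual of $[\;,\;]_{{\mathcal G}^*}$; then $\delta^*$ is automatically a Lie bracket, and the two matched-pair equations (5.2.6)--(5.2.7), specialized to $\rho={\rm ad}^*_{\mathcal G}$, $\mu={\rm ad}^*_{{\mathcal G}^*}$, become precisely the statement that $\delta$ is a $1$-cocycle for ${\rm ad}\otimes id+id\otimes {\rm ad}$, exactly as (2.1.4) and (2.1.8) turned into (2.2.7) and (2.2.8) in the proof of Theorem 2.2.3.

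For part (b) I run the same argument with the sub-adjacent Lie algebras ${\mathcal G}(A)$, ${\mathcal G}(A^*)$ in place of ${\mathcal G}$, ${\mathcal G}^*$, using Proposition 5.2.2(2) so that $L_\cdot$, $L_\circ$ give genuine representations. The new ingredient for (1)$\Leftrightarrow$(2) is Proposition 5.2.3: the nondegenerate $2$-cocycle (1.4.1) on ${\mathcal G}(A)\bowtie {\mathcal G}(A)^*$ induces a compatible pre-Lie product on the whole space; the requirement that this product restrict to $\cdot$ on $A$ and $\circ$ on $A^*$, combined with the $2$-cocycle identity on mixed triples, forces the matched-pair representations to be $\rho=L^*_\cdot$ and $\mu=L^*_\circ$. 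The step (2)$\Leftrightarrow$(3) again dualizes: writing $\Delta,\beta$ for the duals of the products $\circ$ on $A^*$ and $\cdot$ on $A$, the two matched-pair relations split into the requirement that $\Delta^*,\beta^*$ be pre-Lie products together with the two asymmetric $1$-cocycle conditions (for $L\otimes id+id\otimes {\rm ad}$ on each side) appearing in Definition 5.2.5(b).

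The main obstacle I expect is the bookkeeping in the (2)$\Leftrightarrow$(3) step, and it is more delicate for part (b) than for part (a). In the Lie case the cocycle operator ${\rm ad}\otimes id+id\otimes {\rm ad}$ is symmetric in its two tensor slots, so the antisymmetry of $\delta$ and the antisymmetry of the matched-pair equations align transparently. In the pre-Lie case the operator $L\otimes id+id\otimes {\rm ad}$ is asymmetric---only the first slot uses left multiplication $L$ while the second uses ${\rm ad}$---so I must check carefully that the two halves of each matched-pair compatibility equation distribute onto the correct tensor factors and that the non-antisymmetric cobracket $\Delta$ still yields an honest pre-Lie structure on $A^*$ via $\Delta^*$. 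Verifying that these asymmetric cocycle conditions are exactly equivalent to (5.2.6)--(5.2.7) under $\rho=L^*_\cdot$, $\mu=L^*_\circ$, rather than merely implied by them, is the crux; this is where Proposition 5.2.3 and the nondegeneracy of (1.4.1) are indispensable for recovering the pre-Lie products from the purely Lie-theoretic data.
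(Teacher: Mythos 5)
Your proposal is correct and is essentially the proof the paper has in mind: the paper states this theorem without proof (it is recalled from \cite{Bai2} to keep the exposition self-contained), and both \cite{Bai2} and the paper's own associative analogues (Theorems 2.2.1/2.2.3 for the Frobenius case, Theorems 4.1.5/4.1.7/4.2.1 for the Connes-cocycle case) proceed exactly by your two-step scheme --- invariance of (1.1.1) forcing $\rho={\rm ad}^*_{\mathcal G}$, $\mu={\rm ad}^*_{{\mathcal G}^*}$ in part (a), Proposition 5.2.3 together with the restriction requirement forcing $\rho=L^*_\cdot$, $\mu=L^*_\circ$ in part (b), and then dualization of the matched-pair identities (5.2.6)--(5.2.7) into the $1$-cocycle conditions of Definition 5.2.5. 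You also correctly isolate the genuinely delicate point: unlike the invariance condition in (a), the cyclic $2$-cocycle identity in (b) determines the mixed brackets only up to a symmetric perturbation, so one must require the $\omega$-induced pre-Lie products to coincide with the given $\cdot$ and $\circ$ on $A$ and $A^*$, which is precisely how the cited proof resolves it.
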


In fact, a Lie bialgebra is the Lie algebra ${\mathcal G}$ of a
Poisson-Lie group $G$ equipped with additional structures induced
from the Poisson structure on $G$ and a Poisson-Lie group is a Lie
group with a Poisson structure compatible with the group operation
in a certain sense. Poisson-Lie groups play an important role in
symplectic geometry and quantum group theory (cf. \cite{D} and the
references therein). On the other hand, in geometry, a parak\"ahler
manifold is a symplectic manifold with a pair of transversal
Lagrangian foliations (\cite{Li}). A parak\"ahler Lie algebra
${\mathcal G}$ is the Lie algebra of a Lie group $G$ with a
$G$-invariant parak\"ahler structure (\cite{Kan}).

We have already obtained many properties of Lie bialgebras and
pre-Lie algebras which are similar to our study in the previous
sections. We put them in the Appendix and we compare pre-Lie
bialgebras and Lie bialgebras in terms of their certain properties
in Table 2. From Table 2, we observe that there is also a clear
analogy between them and in particular, due to the correspondences
between certain symmetries and antisymmetries appearing in the
analogy, we can regard it as a kind of duality again which is
similar to the duality appearing in the Table 1.

\begin{table}[t]\caption{Comparison between Lie bialgebras and pre-Lie bialgebras}
\begin{tabular}{|c|c|c|}
\hline Algebras & Lie bialgebras & Pre-Lie bialgebras\\\hline
Corresponding Lie groups & Poisson-Lie groups & parak\"ahler Lie
groups
\\\hline
1-cocycles of Lie algebras & $ id\otimes   {\rm ad}+{\rm ad}
\otimes id  $ & $L \otimes id  + id\otimes   {\rm ad}$\\\hline
Matched pairs of Lie algebras & $({\mathcal G},{\mathcal G}^*,
{\rm ad}^*_{\mathcal G}, {\rm ad}^*_{{\mathcal G}^*})$ &
$({\mathcal G}(A), {\mathcal G}(A^*), L^*_{A}, L^*_{A^*})$\\\hline
Lie algebra  structures on   & Manin triples & phase spaces\\
the direct sum of the Lie  &&\\ algebras in the matched
pairs&&\\\hline Bilinear forms on  & symmetric & antisymmetric
\\\cline{2-3} the direct sum of the Lie & $\langle x+a^*,y+b^*\rangle  $ & $\langle x+a^*,y+b^*\rangle  $\\algebras in the
matched pairs &$=\langle x,b^*\rangle  +\langle a^*,y\rangle  $ &
$=-\langle x,b^*\rangle  +\langle a^*,y\rangle  $\\\cline{2-3}
&invariant & 2-cocycles\\\hline Double structures on  & Drinfeld
doubles & symplectic doubles\\ the direct sum of the Lie&&\\algebras
in the matched pairs&&\\\hline Algebraic equations associated &
antisymmetric solutions & symmetric solutions
\\\cline{2-3}
to coboundary cases& classical Yang-Baxter &$S$-equations in
pre-Lie\\&equations in Lie algebras&algebras\\\hline Nondegenerate
solutions & 2-cocycles of Lie algebras & 2-cocycles of pre-Lie\\
&&algebras\\\cline{2-3} & symplectic structures & Hessian
structures\\\hline ${\mathcal O}$-operators of Lie algebras &
associated to ${\rm ad}^*$ & associated to $L^*$
\\\cline{2-3} &antisymmetric parts & symmetric parts\\\hline
Constructions from & $r=\sum\limits_{i=1}^n (e_i\otimes
e_i^*-e_i^*\otimes e_i)$ & $r=\sum\limits_{i=1}^n (e_i\otimes
e_i^*+e_i^*\otimes e_i)$\\\cline{2-3} pre-Lie algebras & induced
bilinear
forms & induced bilinear forms\\
& $\langle x+a^*,y+b^*\rangle  $ & $\langle x+a^*,y+b^*\rangle  $\\
&$=-\langle x,b^*\rangle  +\langle a^*,y\rangle  $ & $=\langle
x,b^*\rangle  +\langle a^*,y\rangle  $\\\hline
\end{tabular}
\end{table}


\subsection{Relationships among four bialgebras}

\begin{prop} {\rm (\cite{Cha1}, \cite{A2})} Let $(A,\succ,\prec)$ be a dendriform algebra.
Then there is a pre-Lie algebra structure on $(A,\cdot)$ given by
$$x\cdot y =x\succ y-y\prec x,\;\;\forall x,y\in A.\eqno
(5.3.1)$$\end{prop}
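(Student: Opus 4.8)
The plan is to verify the defining left-symmetry identity (5.2.3) directly, i.e.\ to show that for the product $x\cdot y=x\succ y-y\prec x$ the associator $(x\cdot y)\cdot z-x\cdot(y\cdot z)$ is symmetric under interchanging $x$ and $y$. First I would substitute the definition of $\cdot$ and expand both triple products into eight terms built from $\succ$ and $\prec$:
$$(x\cdot y)\cdot z=(x\succ y)\succ z-(y\prec x)\succ z-z\prec(x\succ y)+z\prec(y\prec x),$$
$$x\cdot(y\cdot z)=x\succ(y\succ z)-x\succ(z\prec y)-(y\succ z)\prec x+(z\prec y)\prec x.$$

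Next I would collapse the ``pure'' terms using two of the three dendriform axioms in (3.1.1). The third axiom $x\succ(y\succ z)=(x*y)\succ z$ together with $x*y=x\succ y+x\prec y$ gives $(x\succ y)\succ z-x\succ(y\succ z)=-(x\prec y)\succ z$, and the first axiom $(z\prec y)\prec x=z\prec(y*x)$ gives $z\prec(y\prec x)-(z\prec y)\prec x=-z\prec(y\succ x)$. After these substitutions the associator reduces to
$$-(x\prec y)\succ z-(y\prec x)\succ z-z\prec(x\succ y)-z\prec(y\succ x)+x\succ(z\prec y)+(y\succ z)\prec x.$$

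The crucial observation is then that the first four terms are manifestly invariant under the exchange $x\leftrightarrow y$, so they drop out of the difference between the associator and its $(x,y)$-swap. The only remaining contribution to that difference is
$$x\succ(z\prec y)+(y\succ z)\prec x-y\succ(z\prec x)-(x\succ z)\prec y,$$
and this vanishes termwise by the middle dendriform axiom $(a\succ b)\prec c=a\succ(b\prec c)$, applied with $(a,b,c)=(y,z,x)$ and with $(a,b,c)=(x,z,y)$. Hence the associator is symmetric in its first two arguments, and $(A,\cdot)$ is a pre-Lie algebra.

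I do not expect any genuine obstacle here: this is an identity-chasing verification, and the only real subtlety is organizing the eight terms so that the symmetrization is transparent. The key bookkeeping point is to use the first and third axioms to reduce the associator to a visibly $x\leftrightarrow y$-symmetric part plus exactly the two cross terms that the second axiom annihilates; a less careful grouping would still work but would force one to cancel all eight terms by hand rather than reading off the symmetry.
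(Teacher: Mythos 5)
Your proof is correct. The expansion of both triple products is right, the two collapses via the first and third dendriform axioms are applied correctly, the residual four terms $-(x\prec y)\succ z-(y\prec x)\succ z-z\prec(x\succ y)-z\prec(y\succ x)$ are indeed invariant under $x\leftrightarrow y$, and the two cross terms cancel exactly by the middle axiom $(a\succ b)\prec c=a\succ(b\prec c)$; this establishes the left-symmetry identity (5.2.3). Note that the paper itself gives no proof of this proposition — it only cites Chapoton and Aguiar — so your self-contained identity check is precisely the standard verification those references contain, and your particular bookkeeping (isolating a manifestly symmetric part so that only the two axiom-(2) cross terms survive the symmetrization) is a clean way to organize it.
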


\begin{coro} Let $(A,\succ,\prec)$ be a dendriform algebra. Then the
sub-adjacent Lie algebra of the pre-Lie algebra $(A,\cdot)$ given by
equation {\rm (5.3.1)} is as the same as the commutator Lie algebra
of the associated associative algebra $(A,*)$, that is,
$$[x,y]=x*y-y*x=x\cdot y-y\cdot x=x\succ y+x\prec y-y\succ x-y\prec x,\;\;\forall x,y\in A.\eqno (5.3.2)$$
\end{coro}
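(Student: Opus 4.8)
The plan is to prove (5.3.2) by unwinding the three definitions involved and checking that the two candidate brackets coincide term by term, so that no structural (associativity or pre-Lie) identity is actually needed for the equality itself. By Proposition 5.3.1 the product $x\cdot y=x\succ y-y\prec x$ makes $(A,\cdot)$ a pre-Lie algebra, so by Proposition 5.2.2(1) its sub-adjacent Lie algebra carries the commutator bracket $x\cdot y-y\cdot x$. On the other side, the associated associative algebra $(A,*)$ has product $x*y=x\succ y+x\prec y$ by equation (3.1.2), and its commutator Lie algebra carries the bracket $x*y-y*x$. The goal is to identify these two brackets as the same bilinear map on $A$.

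First I would expand the associative commutator using (3.1.2):
$$x*y-y*x=(x\succ y+x\prec y)-(y\succ x+y\prec x)=x\succ y+x\prec y-y\succ x-y\prec x.$$
Next I would expand the pre-Lie commutator using (5.3.1):
$$x\cdot y-y\cdot x=(x\succ y-y\prec x)-(y\succ x-x\prec y)=x\succ y+x\prec y-y\succ x-y\prec x.$$
Since the two right-hand sides agree, $x*y-y*x=x\cdot y-y\cdot x$ for all $x,y\in A$, which is precisely the chain of equalities asserted in (5.3.2), with the common value displayed on the far right.

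As the computation is a one-line algebraic identity, there is essentially no obstacle; associativity of $*$ and the pre-Lie identity for $\cdot$ enter only implicitly, to guarantee (via the standard fact that the commutator of an associative algebra is a Lie bracket and via Proposition 5.2.2) that each side genuinely defines a Lie algebra, while the equality of the two brackets is purely formal. The only point requiring care is the sign bookkeeping in expanding $x\cdot y-y\cdot x$: antisymmetrizing the term $-y\prec x$ in $x\cdot y$ against the term $-x\prec y$ hidden in $y\cdot x$ exactly reconstitutes the symmetric combination $x\prec y-y\prec x$ that also appears in the associative commutator, and similarly for the $\succ$ terms.
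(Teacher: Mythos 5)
Your proposal is correct and coincides with what the paper intends: the corollary is stated without an explicit proof precisely because it follows from the same one-line expansion of $x*y-y*x$ and $x\cdot y-y\cdot x$ via equations (3.1.2) and (5.3.1) that you carry out. Your remark that associativity and the pre-Lie identity are needed only to know each commutator is a genuine Lie bracket, not for the equality itself, is an accurate reading of the situation.
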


Therefore, as Chapoton pointed out in \cite{Cha1} (also see
\cite{A2}, \cite{A4}, \cite{EMP}), there is the following
commutative diagram of categories.
$$\begin{matrix} {\rm dendriform\quad algebras} & \longrightarrow &
\mbox{pre-Lie algebras} \cr \downarrow & &\downarrow\cr {\rm
associative\quad algebras} & \longrightarrow & {\rm Lie\quad
algebras} \cr\end{matrix}$$ In this diagram, the left vertical arrow
is given by equation (3.1.2), the top horizontal arrow is given by
equation (5.3.1), the bottom arrow is given by equation (5.2.4)
since an associative algebra is a special pre-Lie algebra and the
right vertical arrow is given by equation (5.2.4).

Obviously, if a symmetric or antisymmetric bilinear form on an
associative algebra is invariant or a Connes cocycle respectively,
then it is also invariant or a 2-cocycle on the commutator Lie
algebra respectively,.

\begin{theorem} {\rm (1)} A double construction of Frobenius algebra gives a
standard Manin triple (on the commutator Lie algebra) naturally.

{\rm (2)} A double construction of Connes cocycles gives a
parak\"ahler Lie algebra (on the commutator Lie algebra) naturally.
\end{theorem}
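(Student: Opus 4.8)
The plan is to pass from the associative algebra $A \bowtie A^*$ to its commutator Lie algebra and then check that the resulting decomposition, equipped with the natural bilinear form, is exactly a standard Manin triple in case (1) and a parak\"ahler Lie algebra in case (2), in the sense of Definition 5.2.4. The whole argument rests on two simple facts: first, that the commutator of an associative subalgebra is a Lie subalgebra, so the splitting $A \oplus A^*$ of the associative algebra descends verbatim to a splitting of the commutator Lie algebra into two Lie subalgebras; and second, the remark stated just before the theorem, namely that an invariant symmetric form (resp.\ a Connes cocycle) on an associative algebra induces an invariant form (resp.\ a $2$-cocycle) on its commutator Lie algebra. Given these, there is essentially nothing left to prove beyond matching the two natural bilinear forms against the definitions.

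For part (1), I would start from a double construction of Frobenius algebra $(A \bowtie A^*, \mathcal{B})$. Passing to the commutator bracket $[x,y] = xy - yx$, the subalgebras $A$ and $A^*$ become Lie subalgebras whose underlying spaces still give the direct sum $A \oplus A^*$. To see that the symmetric form $\mathcal{B}$ of (1.1.1) is invariant for the bracket, I would expand $\mathcal{B}([x,y],z)$ and use associative invariance together with the symmetry of $\mathcal{B}$; the point is that symmetry is what converts $\mathcal{B}(y,xz)$ into $\mathcal{B}(x,zy)$ and thereby produces $\mathcal{B}(x,[y,z])$. Since $\mathcal{B}$ is precisely the natural nondegenerate symmetric form appearing in Definition 5.2.4(a), the triple $(\mathcal{G}(A \bowtie A^*), \mathcal{G}(A), \mathcal{G}(A^*))$ is a standard Manin triple.

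For part (2), I would begin with a double construction of Connes cocycle $(A \bowtie A^*, \omega)$ and pass to the commutator bracket in the same way, obtaining the two Lie subalgebras $A$ and $A^*$. The verification that the antisymmetric form $\omega$ of (1.4.1) is a Lie-algebra $2$-cocycle is the only computation I would carry out in detail: expanding $\omega([x,y],z)+\omega([y,z],x)+\omega([z,x],y)$ through $[x,y]=xy-yx$ regroups the six terms into the two cyclic Connes-cocycle sums $\omega(xy,z)+\omega(yz,x)+\omega(zx,y)$ and $\omega(yx,z)+\omega(xz,y)+\omega(zy,x)$, each vanishing by (1.0.2). Matching $\omega$ against the natural antisymmetric form in Definition 5.2.4(b) then yields a parak\"ahler Lie algebra.

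I do not expect any serious obstacle: both parts are immediate once the descent of the compatibility condition is established, and that descent is elementary. The one subtlety worth flagging is the asymmetry between the two cases—invariance in part (1) genuinely requires the symmetry of $\mathcal{B}$, whereas the $2$-cocycle identity in part (2) goes through with no symmetry assumption because the Connes-cocycle condition is already cyclic. The remaining work is pure bookkeeping: confirming that $\mathcal{G}(A^*)$ plays the role of a Lie algebra on the dual space $\mathcal{G}(A)^*$ with the correct pairing, so that the natural nondegenerate forms (1.1.1) and (1.4.1) are exactly those demanded by Definition 5.2.4.
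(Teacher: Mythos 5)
Your proposal is correct and takes essentially the same route as the paper: the paper offers no separate proof of this theorem, treating it as an immediate consequence of the remark stated just before it (that an invariant symmetric form, resp.\ a Connes cocycle, on an associative algebra descends to an invariant form, resp.\ a $2$-cocycle, on the commutator Lie algebra), together with the fact that associative subalgebras become Lie subalgebras. Your write-up simply supplies the explicit computations — the symmetry-plus-invariance manipulation in case (1) and the splitting of the cyclic sum into two vanishing Connes sums in case (2) — that the paper declares ``obvious.''
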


\begin{coro} {\rm (1)} Any antisymmetric infinitesimal bialgebra is a Lie
bialgebra (in the sense of its commutator Lie algebra).

{\rm (2)} Any dendriform D-bialgebra  is a pre-Lie bialgebra (in the
sense of equation (5.3.1)).
\end{coro}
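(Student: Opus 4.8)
The plan is to prove both statements by composing equivalences already established, rather than re-checking any cocycle identity directly. For part (1), I would begin with an antisymmetric infinitesimal bialgebra $(A,A^*)$ and apply Corollary 2.2.6 to obtain the associated double construction of Frobenius algebra $(A\bowtie A^*,{\mathcal B})$. Theorem 5.3.3(1) then produces, on passing to the commutator bracket, a standard Manin triple $({\mathcal G}(A)\bowtie {\mathcal G}(A^*),{\mathcal G}(A),{\mathcal G}(A^*))$, the symmetric invariant form (1.1.1) remaining invariant on the commutator Lie algebra and $A,A^*$ descending to Lie subalgebras. Theorem 5.2.7(a) finally identifies such a Manin triple with a Lie bialgebra structure on ${\mathcal G}(A)$, which is the claim. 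Part (2) follows the dual chain: Theorem 4.2.4 turns a dendriform D-bialgebra into a double construction of Connes cocycle $(T(A)=A\bowtie A^*,\omega)$; Theorem 5.3.3(2) makes its commutator a parak\"ahler Lie algebra, the antisymmetric form (1.4.1) becoming a $2$-cocycle; and Theorem 5.2.7(b) recognizes a parak\"ahler Lie algebra as a pre-Lie bialgebra, whose underlying pre-Lie products on $A$ and $A^*$ are exactly those furnished by equation (5.3.1) in view of Proposition 5.3.1 and Corollary 5.3.2.

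The one computation that actually has to be done is that the associative matched pairs transport to the correct Lie matched pairs under commutation. For part (1) the matched pair is $(A,A^*,R_\cdot^*,L_\cdot^*,R_\circ^*,L_\circ^*)$, and using $x*a^*=R_\cdot^*(x)a^*+L_\circ^*(a^*)x$ and $a^**x=R_\circ^*(a^*)x+L_\cdot^*(x)a^*$ I would read off that the $A^*$-component of $[x,a^*]$ is $(R_\cdot^*-L_\cdot^*)(x)a^*$, and a one-line pairing gives $\langle (R_\cdot^*-L_\cdot^*)(x)a^*,y\rangle=\langle a^*,y\cdot x-x\cdot y\rangle=-\langle a^*,[x,y]\rangle$, so $R_\cdot^*-L_\cdot^*={\rm ad}^*$; symmetrically for $A^*$. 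Hence the commutator matched pair is $({\mathcal G}(A),{\mathcal G}(A^*),{\rm ad}^*_{{\mathcal G}(A)},{\rm ad}^*_{{\mathcal G}(A^*)})$, precisely the datum demanded by Theorem 5.2.7(a). For part (2) the relevant matched pair (Theorem 4.1.5) is $(A,A^*,R_{\prec_A}^*,L_{\succ_A}^*,R_{\prec_{A^*}}^*,L_{\succ_{A^*}}^*)$; since the pre-Lie left multiplication of (5.3.1) is $L_\cdot=L_\succ-R_\prec$, the same commutator bookkeeping produces the action $R_{\prec}^*-L_{\succ}^*$, which is exactly the contragredient $L_\cdot^*$ appearing in the pre-Lie matched pair $({\mathcal G}(A),{\mathcal G}(A^*),L_\cdot^*,L_\circ^*)$ of Theorem 5.2.7(b).

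I expect the main obstacle to be purely a matter of convention rather than of mathematics: the paper uses the module dual of (2.1.3), which carries no sign, throughout the associative and dendriform theory, whereas the Lie and pre-Lie bialgebra definitions use the contragredient (coadjoint) dual, which does. The risk is that the Lie, respectively pre-Lie, bialgebra coming out of the chain could differ from the intended one by a sign or a twist, so I would take care to confirm that the cobracket induced on ${\mathcal G}(A)$ is the antisymmetrization $\Delta-\sigma\Delta$ of the antisymmetric infinitesimal comultiplication (and, dually, that $\beta$ yields the cobracket on ${\mathcal G}(A^*)$), and that the Lie, respectively pre-Lie, structure thus placed on $A^*$ coincides with the one it already carries as a subalgebra of the double via the commutator, respectively via equation (5.3.1). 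Once these sign conventions are matched, both parts are immediate from the cited theorems.
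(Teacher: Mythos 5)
Your proposal is correct and follows essentially the same route as the paper: the paper derives this corollary by composing the equivalences (Corollary 2.2.6, resp. Theorem 4.2.4) with the observation that invariant forms, resp. Connes cocycles, descend to invariant forms, resp. 2-cocycles, on the commutator Lie algebra (Theorem 5.3.3), and then invoking the Manin-triple/parak\"ahler characterizations of Lie and pre-Lie bialgebras (Theorem 5.2.7). Your explicit verification that the associative matched pairs transport to $({\mathcal G}(A),{\mathcal G}(A^*),{\rm ad}^*,{\rm ad}^*)$ and $({\mathcal G}(A),{\mathcal G}(A^*),L_\cdot^*,L_\circ^*)$, including the sign bookkeeping between the unsigned dual (2.1.3) and the signed coadjoint-type duals, is exactly the detail the paper leaves implicit, and it is carried out correctly.
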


\begin{coro} We have the following relationship among the antisymmetric
infinitesimal bialgebras, dendriform algebras, Lie bialgebras and
pre-Lie bialgebras.
$$\begin{matrix} \mbox{\rm dendriform D-bialgebras} & \hookrightarrow & \mbox{\rm pre-Lie bialgebras} \cr
\updownarrow {\rm dual} & &\updownarrow {\rm dual}\cr \mbox{\rm
antisymmetric infinitesimal bialgebras} & \hookrightarrow &
\mbox{\rm Lie bialgebras} \cr\end{matrix}$$

\noindent  The $\updownarrow$ means that the duality given in
subsections {\rm 5.1} and {\rm 5.2} and the $\hookrightarrow$ means
the inclusion in the sense of {\rm Corollary 5.3.4}.
\end{coro}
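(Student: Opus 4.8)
The plan is to assemble the square from ingredients already established and then argue that it is coherent. The two horizontal arrows are exactly the content of Corollary 5.3.4: part (1) furnishes the bottom inclusion, sending an antisymmetric infinitesimal bialgebra to its commutator Lie bialgebra via equation (5.2.4), and part (2) furnishes the top inclusion, sending a dendriform D-bialgebra to the pre-Lie bialgebra obtained from equation (5.3.1). The two vertical arrows are the term-by-term correspondences recorded in Table 1 (left column) and Table 2 (right column), which are precisely the dualities set up in subsections 5.1 and 5.2. Thus each of the four arrows is individually in hand, and the only thing left is to show that the square fits together consistently, i.e.\ that the bottom inclusion is the image of the top inclusion under the left and right dualities.

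First I would reduce this coherence to Chapoton's commutative diagram of categories displayed after Corollary 5.3.2, since the two horizontal inclusions are exactly the bialgebra lifts of Chapoton's two horizontal arrows (dendriform $\to$ pre-Lie through (5.3.1), associative $\to$ Lie through the commutator (5.2.4)). At the level of the underlying algebras that diagram already commutes, and by Corollary 5.3.2 the sub-adjacent Lie algebra of the pre-Lie algebra attached to a dendriform algebra coincides with the commutator Lie algebra of its associated associative algebra, equation (5.3.2). This guarantees that the underlying Lie structures appearing in the two right-hand corners of our square are forced to agree, so that no clash can occur at the level of the carrier algebras.

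Next I would check that the comultiplicative data are intertwined, row by row, so that the dualities genuinely carry the top inclusion to the bottom inclusion. Here the decisive observation is that the inclusions of Corollary 5.3.4 respect every structural line of Tables 1 and 2 simultaneously: the matched pairs $(A,A^*,R_\prec^*,L_\succ^*,R_\prec^*,L_\succ^*)$ and $(A,A^*,R^*,L^*,R^*,L^*)$ pass under (5.3.1) and the commutator to the matched pairs $(\mathcal G(A),\mathcal G(A^*),L^*,L^*)$ and $(\mathcal G,\mathcal G^*,\mathrm{ad}^*,\mathrm{ad}^*)$; the symmetric versus antisymmetric bilinear forms (1.1.1) and (1.4.1) are preserved verbatim; the coboundary equations correspond as associative Yang--Baxter equation to classical Yang--Baxter equation and $D$-equation to $S$-equation; the relevant $\mathcal O$-operator bimodules $(R^*,L^*)$ and $(R_\prec^*,L_\succ^*)$ map to $\mathrm{ad}^*$ and $L^*$; and the constructions from $r=\sum_i(e_i\otimes e_i^*\mp e_i^*\otimes e_i)$, together with their induced bilinear forms, match. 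Running through these lines shows that the left and right dualities of subsections 5.1 and 5.2 intertwine the top and bottom inclusions, which is the commutativity asserted by the diagram.

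The main obstacle is conceptual rather than computational: the symbol $\updownarrow$ denotes a duality that is an \emph{analogy} between symmetries and antisymmetries (Tables 1 and 2), not an honest functor with an inverse, and the overlap between the two bialgebra types in each column is only partial, as Theorem 5.1.1 and Corollary 5.3.4 make clear; one therefore cannot literally "travel down" the left edge from a dendriform D-bialgebra to an antisymmetric infinitesimal bialgebra. I would address this by stating the commutativity at exactly the level at which the dualities are defined, namely as the assertion that the two horizontal inclusions are interchanged by the vertical dualities and that every structural row of Tables 1 and 2 is carried to its counterpart by the commutator and by equation (5.3.1); the algebra-level commutativity coming from Chapoton's diagram then ensures that no inconsistency arises. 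With this reading the corollary follows from Corollary 5.3.4 together with the material of subsections 5.1 and 5.2.
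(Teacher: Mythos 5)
Your proposal is correct and follows essentially the same route as the paper: there the corollary carries no proof at all, being a direct compilation of Corollary 5.3.4 (the two horizontal inclusions, via the commutator (5.2.4) and via (5.3.1)) and the dualities recorded in subsections 5.1 and 5.2 / Tables 1 and 2 (the two vertical arrows), which are exactly the ingredients you assemble. Your additional effort to verify ``commutativity'' of the square is unnecessary---the corollary, unlike Corollary 5.3.7 (which requires the extra hypotheses (5.1.1)--(5.1.2)), makes no such claim, precisely because the vertical arrows are analogies rather than actual maps---but you recognize this obstacle yourself and correctly retreat to the intended reading, so the argument stands.
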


\begin{remark}{\rm The conclusion (1) in Corollary 5.3.4 and the
relation given by
 the bottom $\hookrightarrow$ in the above diagram were also pointed out in
\cite{A3}.}\end{remark}

\begin{coro} Let
$(A,A^*,\Delta_\succ,\Delta_\prec,\beta_\succ,\beta_\prec)$ be a
dendriform D-bialgebra.  If equations {\rm (5.1.1-5.1.2)} hold, then
$(A,A^*)$ is an antisymmetric infinitesimal bialgebra. $(A,A^*)$ is
also a pre-Lie bialgebra in the sense of equation {\rm (5.3.1)}.
Furthermore, as the commutator Lie algebras, $({\mathcal
G}(A),{\mathcal G}(A)^*)$ is a Lie bialgebra. Therefore, there is an
associative algebra structure and a Lie algebra structure on the
direct sum $A\oplus A^*$ of the underlying space of $A$ and $A^*$
such that the natural symmetric bilinear form given by equation {\rm
(1.1.1)} is invariant on both of them and the natural antisymmetric
bilinear form given by equation {\rm (1.4.1)} is a Connes cocycle on
the associative algebra and a 2-cocycle on the Lie algebra.
Moreover, under such a condition, there is the following commutative
diagram in an obvious way.
$$\begin{matrix} \mbox{\rm dendriform D-bialgebras} & \hookrightarrow & \mbox{\rm pre-Lie bialgebras}
 \cr \downarrow  & &\downarrow \cr
 \mbox{\rm
antisymmetric infinitesimal bialgebras} & \hookrightarrow &
\mbox{\rm Lie bialgebras} \cr\end{matrix}$$
\end{coro}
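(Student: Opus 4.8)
The plan is to assemble the correspondences already established and then to check that they fit together compatibly. First, since $(A,A^*)$ is a dendriform D-bialgebra for which equations (5.1.1) and (5.1.2) hold, Theorem 5.1.1 shows that $(A,A^*)$ is also an antisymmetric infinitesimal bialgebra. Next, Corollary 5.3.4 (2) shows that the dendriform D-bialgebra $(A,A^*)$ is a pre-Lie bialgebra with respect to the product (5.3.1), and Corollary 5.3.4 (1) shows that the antisymmetric infinitesimal bialgebra $(A,A^*)$ is a Lie bialgebra $(\mathcal{G}(A),\mathcal{G}(A)^*)$ with respect to the commutator. This already yields all four bialgebra structures attached to $(A,A^*)$.

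For the bilinear forms I would invoke Corollary 5.1.3: under (5.1.1-5.1.2) there are two associative algebra structures on $A\oplus A^*$, namely $A\bowtie^{R_A^*,L_A^*}_{R_{A^*}^*,L_{A^*}^*}A^*$, on which the symmetric form (1.1.1) is invariant (a double construction of Frobenius algebra), and $A\bowtie^{R_{\prec_A}^*,L_{\succ_A}^*}_{R_{\prec_{A^*}}^*,L_{\succ_{A^*}}^*}A^*$, on which the antisymmetric form (1.4.1) is a Connes cocycle (a double construction of Connes cocycle). Passing to commutators and using the remark immediately preceding Theorem 5.3.3 (invariance, resp. the Connes-cocycle condition, on an associative algebra descends to invariance, resp. the 2-cocycle condition, on its commutator Lie algebra), Theorem 5.3.3 gives on the one side a standard Manin triple carrying the invariant form (1.1.1) and on the other a parak\"ahler Lie algebra carrying the 2-cocycle (1.4.1). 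This supplies the required associative and Lie structures together with their compatible bilinear forms.

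Finally, to establish the commutative square I would describe its four edges: the two horizontal inclusions are exactly the maps of Corollary 5.3.4, the left vertical arrow is the passage of Theorem 5.1.1 (dendriform D-bialgebra to antisymmetric infinitesimal bialgebra under the present hypothesis), and the right vertical arrow is the analogous passage from a pre-Lie bialgebra to a Lie bialgebra in the version of subsection 5.2. The point to be verified is that the two routes from $(A,A^*)$ to a Lie bialgebra agree: going right-then-down produces the sub-adjacent Lie algebra of the pre-Lie algebra defined by (5.3.1), while going down-then-right produces the commutator Lie algebra of the associated associative algebra, and the identity $[x,y]=x*y-y*x=x\cdot y-y\cdot x$ of (5.3.2), that is Corollary 5.3.2, shows these coincide as Lie algebras on $A$ and hence on $A\oplus A^*$.

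\emph{Main obstacle.} The only content requiring genuine computation rather than citation is the commutativity of the square at the level of cobrackets: one must confirm that the antisymmetrization of the pre-Lie cobracket coming from $\Delta_\succ,\Delta_\prec$ via (5.3.1) equals the Lie cobracket obtained from the antisymmetric infinitesimal cobracket $\Delta=\Delta_\succ+\Delta_\prec$. This is a direct verification from the definitions, using (5.3.2), and it is precisely the coalgebra counterpart of the algebra identity recorded in Corollary 5.3.2; once it is in hand the diagram commutes ``in an obvious way''.
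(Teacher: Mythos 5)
Your proposal is correct and takes essentially the paper's intended route: the paper states this corollary without a separate proof, precisely because it is the assembly of Theorem 5.1.1, Corollary 5.1.3, the remark before Theorem 5.3.3, Theorem 5.3.3, and Corollaries 5.3.2/5.3.4 that you carry out. Your closing verification that the two routes to a Lie bialgebra agree on cobrackets --- antisymmetrizing $\Delta_\succ-\sigma\circ\Delta_\prec$ gives $(id-\sigma)(\Delta_\succ+\Delta_\prec)$ since $\sigma^2=id$ --- is exactly the ``obvious'' check the paper leaves implicit, and it also justifies that the pre-Lie-to-Lie vertical arrow is defined on the image of the top arrow.
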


\section*{Acknowledgments}

The author thanks Professors M. Aguiar, A. Connes, L. Guo and J.-L.
Loday for important suggestion. This work was supported in part by NSFC (10621101, 10920161), NKBRPC (2006CB805905) and SRFDP (200800550015).

\section*{Appendix: Some properties of Lie bialgebras and pre-Lie bialgebras}

In this appendix, we list some properties of Lie bialgebras and
pre-Lie bialgebras. Most of the results can be found in \cite{Bai2}
and the references therein.

\noindent {\bf Proposition A1.}\quad {\it {\rm (a)}\quad Let
$({\mathcal G}, {\mathcal G}^*)$ be a Lie bialgebra. Then there is a
canonical Lie bialgebra structure on ${\mathcal G}\oplus {\mathcal
G}^*$ such that the inclusions $i_1:{\mathcal G}\rightarrow
{\mathcal G}\oplus {\mathcal G}^*$ and $i_2:{\mathcal
G}^*\rightarrow {\mathcal G}\oplus {\mathcal G}^*$ into the two
summands are homomorphisms of Lie bialgebras, where the Lie
bialgebra structure on ${\mathcal G}^*$ is given by
$-\delta_{{\mathcal G}^*}$. Such a structure is called a classical
(Drinfeld) double of ${\mathcal G}$.

{\rm (b)}\quad Let $(A,A^*,\Delta,\beta)$ be a pre-Lie bialgebra.
Then there is a canonical pre-Lie bialgebra structure on $A\oplus
A^*$ such that both the inclusions $i_1:A\rightarrow A\oplus A^*$
and $i_2:A^*\rightarrow A\oplus A^*$ into the two summands are
homomorphisms of pre-Lie bialgebras. Such a structure is called a
symplectic double of $A$.}

\noindent {\bf Definition A2.}\quad (a)\quad A Lie bialgebra
$({\mathcal G}, \delta)$ is called {\it coboundary} if $\delta$ is a
1-coboundary of ${\mathcal G}$ associated to ${\rm ad} \otimes id  +
id\otimes   {\rm ad}$, that is, there exists a $r\in {\mathcal
G}\otimes {\mathcal G}$ such that
$$\delta(x)=({\rm ad}(x)\otimes id+id\otimes {\rm
ad}(x))r,\;\;\forall x\in {\mathcal G}.\eqno ({\rm A1})$$

(b) \quad A pre-Lie bialgebra $(A,A^*,\Delta,\beta)$ is called {\it
coboundary} if $\Delta$ is a 1-coboundary of ${\mathcal G}(A)$
associated to $L \otimes id  + id\otimes   {\rm ad}$, that is, there
exists a $r\in A\otimes A$ such that
$$\Delta (x)=(L(x) \otimes id  + id\otimes   {\rm ad}x)r,\;\;\forall x\in A.\eqno ({\rm A2})$$

\noindent {\bf Theorem A3.}\quad (a) {\it \quad Let ${\mathcal G}$
be a Lie algebra and $r\in {\mathcal G}\otimes {\mathcal G}$. Then
the map $\delta: {\mathcal G}\rightarrow {\mathcal G}\otimes
{\mathcal G}$ defined by equation {\rm (A1)} induces a Lie bialgebra
structure on ${\mathcal G}$ if and only if the following two
conditions are satisfied (for any $x\in {\mathcal G}$):

{\rm (1)} $({\rm ad}(x)\otimes id+id\otimes {\rm
ad}(x))(r+\sigma(r))=0$;

{\rm (2)} $({\rm ad}(x)\otimes id\otimes id +id\otimes {\rm
ad}(x)\otimes id+id\otimes id\otimes {\rm ad}(x))(
[r_{12},r_{13}]+[r_{12},r_{23}]+[r_{13},r_{23}])=0$.

{\rm (b)} \quad Let $A$ be a pre-Lie algebra and $r\in A\otimes A$.
Then the map $\Delta$ defined by equation {\rm (A2)} induces a
pre-Lie algebra structure on $A^*$ such that $(A,A^*)$ is a pre-Lie
bialgebra if and only if the following two conditions are satisfied
(for any $x,y\in A$):

{\rm (1)} $[P(x\cdot y)-P(x)P(y)](r-\sigma(r))=0$;

{\rm (2)} $Q(x)[[r,r]]=0$,

\noindent where  $Q(x)=L(x)\otimes  id\otimes   id+ id\otimes L(x)
\otimes id  + id\otimes    id\otimes   {\rm ad}x$, $P(x)=L(x)
\otimes id  + id\otimes   L(x)$  and
$$[[r,r]]=r_{13}\cdot r_{12}-r_{23}\cdot r_{21}+[r_{23},r_{12}]-
[r_{13},r_{21}]-[r_{13},r_{23}].\eqno ({\rm A3})$$}

\noindent {\bf Corollary A4.}\quad {\it {\rm (a)}\quad Let
${\mathcal G}$ be a Lie algebra and $r\in {\mathcal G}\otimes
{\mathcal G}$. If $r$ is antisymmetric and $r$ satisfies
$$[r_{12},r_{13}]+[r_{12},r_{23}]+[r_{13},r_{23}]=0,\eqno ({\rm A4})$$
then the map $\delta: {\mathcal G}\rightarrow {\mathcal G}\otimes
{\mathcal G}$ defined by equation {\rm (A1)} induces a Lie bialgebra
structure on ${\mathcal G}$.

{\rm (b)}\quad Let $A$ be a pre-Lie algebra and $r\in A\otimes A$.
Suppose that $r$ is symmetric. Then the map $\Delta$ defined by
equation {\rm (A2)} induces a pre-Lie algebra structure on $A^*$
such that $(A,A^*)$ is a pre-Lie bialgebra if
$$-r_{12}\cdot r_{13}+r_{12}\cdot r_{23}+[r_{13},r_{23}]=0.\eqno ({\rm
A5})$$}

\noindent {\bf Definition A5.}\quad (a)\quad Let ${\mathcal G}$ be a
Lie algebra and $r\in {\mathcal G}\otimes {\mathcal G}$. Equation
(A4) is called {\it classical Yang-Baxter equation in ${\mathcal
G}$.}

(b)\quad Let $A$ be a pre-Lie algebra and $r\in A\otimes A$.
Equation (A5) is called {\it ${S}$-equation in $A$.}

Let ${\mathcal G}$ be a Lie algebra and $\rho:{\mathcal
G}\rightarrow \frak g\frak l (V)$ be its representation. Recall that a linear map
$T:V\rightarrow {\mathcal G}$ is called an ${\mathcal O}$-operator
of $\mathcal G$ associated to $\rho$ if $T$ satisfies
$$[T(u), T(v)]=T(\rho(T(u))v-\rho(T(v))u),\forall u,v\in V.\eqno ({\rm A6})$$

\noindent {\bf Proposition A6.}\quad  (a) {\it\quad Let ${\mathcal
G}$ be a Lie algebra and $r\in {\mathcal G}\otimes {\mathcal G}$.

{\rm (1)} Suppose $r$ is antisymmetric and nondegenerate. Then $r$
is a solution of classical Yang-Baxter equation in ${\mathcal G}$ if
and only if the isomorphism ${\mathcal G}^*\rightarrow {\mathcal G}$
induced by $r$, regarded as a bilinear form on ${\mathcal G}$ is a
2-cocycle on ${\mathcal G}$.

{\rm (2)} Suppose $r$ is antisymmetric. Then $r$ is a solution of
classical Yang-Baxter equation in ${\mathcal G}$ if and only if $r$
is an ${\mathcal O}$-operator of $\mathcal G$ associated to ${\rm
ad}^*$, that is, $r$ satisfies
$$[r(a^*),r(b^*)]=r({\rm ad}^*(r(a^*))b^*-{\rm
ad}^*(r(b^*))a^*),\;\;\forall a^*,b^*\in {\mathcal G}^*.\eqno ({\rm
A7})$$

{\rm (b)}\quad Let $A$ be a pre-Lie algebra and $r\in A\otimes A$.

{\rm (1)} Suppose that $r$ is symmetric and nondegenerate. Then $r$
is a solution of $S$-equation in $A$ if and only if the inverse of
the isomorphism $A^*\rightarrow A$ induced by $r$, regarded as a
bilinear form ${\mathcal B}$ on $A$ is a 2-cocycle on $A$ (see
equation {\rm (4.4.17)}).

{\rm (2)} Suppose that $r$ is symmetric. Then $r$ is a solution of
$S$-equation in $A$ if and only if $r$ is an ${\mathcal O}$-operator
of $\mathcal G(A)$ associated to $L^*$, that is, $r$ satisfies
$$[r(a^*),r(b^*)]=r(L_\cdot^*(r(a^*))b^*-L_\cdot^*(r(b^*))a^*),\;\;\forall a^*,b^*\in A^*.
\eqno ({\rm A8})$$}

\noindent {\bf Lemma A7.}\quad {\it Let ${\mathcal G}$ be a Lie
algebra and $\rho:{\mathcal G}\rightarrow \frak g\frak l (V)$ be a
representation.  Let $T:V\rightarrow {\mathcal G}$ be an ${\mathcal
O}$-operator associated to $\rho$. Then the product
$$u\circ v=\rho(T(u))v,\;\;\forall u,v\in V\eqno ({\rm A9})$$
defines a pre-Lie algebra structure on $V$.  Therefore $V$ is a Lie
algebra as the sub-adjacent Lie algebra of this pre-Lie algebra and
$T$ is a homomorphism of Lie algebras. Furthermore,
$T(V)=\{T(v)|v\in V\}\subset {\mathcal G}$ is a Lie subalgebra of
${\mathcal G}$ and there is an induced pre-Lie algebra structure on
$T(V)$ given by
$$T(u)\cdot T(v)=T(u\circ v)=T(\rho (T(u))v),\;\;\forall u,v\in V.\eqno ({\rm A10})$$
Moreover, its sub-adjacent Lie algebra structure is just the Lie
subalgebra structure of ${\mathcal G}$ and $T$ is a homomorphism of
pre-Lie algebras.}

\noindent {\bf Proposition A8.} \quad  {\it Let ${\mathcal G}$ be a
Lie algebra and $\rho:{\mathcal G}\rightarrow \frak g\frak l (V)$ be a
representation. Let $\rho^*:{\mathcal G}\rightarrow \frak g\frak l (V^*)$ be the
dual representation of $\rho$.

{\rm (a)}\quad A linear map $T:V\rightarrow {\mathcal G}$ is an
${\mathcal O}$-operator of $\mathcal G$ associated to $\rho$ if and
only if $r=T-\sigma(T)$ is an antisymmetric solution of the
classical Yang-Baxter equation in ${\mathcal G}\ltimes_{\rho^*}V^*$.

{\rm (b)}\quad Let $T:V\rightarrow {\mathcal G}$ be an ${\mathcal
O}$-operator associated to $\rho$. Then $r=T+\sigma(T)$ is a
symmetric solution of the $S$-equation in
$T(V)\ltimes_{\rho^*,0}V^*$, where $T(V)\subset {\mathcal G}$ is a
pre-Lie algebra given by equation ({\rm A10}) and $(\rho^*,0)$ is a
bimodule since its sub-adjacent Lie algebra ${\mathcal G}(T(V))$ is
a Lie subalgebra of ${\mathcal G}$, and $T$ can be identified as an
element in $T(V)\otimes V^*\subset
(T(V)\ltimes_{\rho^*,0}V^*)\otimes (T(V)\ltimes_{\rho^*,0}V^*)$.}

\noindent {\bf Proposition A9.} \quad {\it Let $(A,\cdot)$ be a
pre-Lie algebra. Let $\{e_1,\cdots, e_n\}$ be a basis of $A$ and
$\{e_1^*,\cdots, e_n^*\}$ be its dual basis.

{\rm (a)}\quad $r$ given by equation {\rm (2.5.3)} is an
antisymmetric solution of the classical Yang-Baxter equation in
${\mathcal G}(A)\ltimes_{L^*} {\mathcal G}(A)^*$.  Moreover, $r$ is
nondegenerate and the induced 2-cocycle ${\mathcal B}$ of ${\mathcal
G}(A)\ltimes_{L^*} {\mathcal G}(A)^*$ is given by equation {\rm
(1.4.1)}.

{\rm (b)}\quad $r$ given by equation {\rm (4.4.23)} is a symmetric
solution of the $S$-equation in $A\ltimes_{L^*,0} A^*$.  Moreover,
$r$ is nondegenerate and the induced 2-cocycle ${\mathcal B}$ of
$A\ltimes_{L^*,0} A^*$ is given by equation {\rm (1.1.1)}}.

\noindent {\bf Theorem A10}\quad {\it Let $(A,A^*,\Delta,\beta)$ be
a pre-Lie bialgebra. Then $({\mathcal G}(A),{\mathcal G}(A^*))$ is a
Lie bialgebra if and only if
$$\langle R_\cdot^*(x)a^*,R_\circ^*(b^*)y\rangle  +\langle R_\cdot^*(x)b^*,R_\circ^*(a^*)y\rangle
=\langle R_\cdot^*(y)b^*,R_\circ^*(a^*)x\rangle  +\langle
R_\cdot^*(y)a^*,R_\circ^*(b^*)x\rangle  , \eqno ({\rm A11})$$ for
any $x,y\in A^*,a^*,b^*\in A^*$.}

\end{document}